\newtheorem{Fact}{Fact}
\newtheorem{lemma}{Lemma}
\newtheorem{theorem}[lemma]{Theorem}
\newtheorem{remark}[lemma]{Remark}
\newtheorem{definition}[lemma]{Definition}
\newtheorem{example}[lemma]{Example}
\numberwithin{equation}{section} \numberwithin{lemma}{section}
\newcommand{\bbZ}{{\mathbb Z}}
\newcommand{\bbR}{{\mathbb R}}
\newcommand{\vx}{\mathbf{x}}
\newcommand\Vcalbar{\bar{\cal V}}
\newcommand\Vbar{{\bar V}}
\newcommand{\half}{{\textstyle \frac{1}{2}}}
\newcommand\Cbar{{\bar P}}
\newcommand\Gbar{{\bar \Gamma}}
\newcommand\Omegabar{{\bar \Omega}}
\begin{document}

\author{Adam Sawicki}

\title{Topology of graph configuration spaces and quantum statistics}

\titlepage

\pagenumbering{roman}

\prefacesection{Abstract}

In this thesis we develop a full characterization of abelian quantum statistics on graphs. We explain how the number of anyon phases is related to connectivity. For 2-connected graphs the independence of quantum statistics with respect to the number of particles is proven. For non-planar 3-connected graphs we identify bosons and fermions as the only possible statistics, whereas for planar 3-connected graphs we show that one anyon phase exists. Our approach also yields an alternative proof of the structure theorem for the first homology group of n-particle graph configuration spaces. Finally, we determine the topological gauge potentials for 2-connected graphs. Moreover we present an alternative application of discrete Morse theory for two-particle graph configuration spaces. In contrast to previous constructions, which are based on discrete Morse vector fields, our approach is through Morse functions, which have a nice physical interpretation as two-body potentials constructed from one-body potentials. We also give a brief introduction to discrete Morse theory. 

\prefacesection{Dedication}

I dedicate this thesis to my family and friends. 

\prefacesection{Acknowlegdements}

Foremost, I would like to express my sincere gratitude to my advisors Prof. Jon P. Keating and Dr Jonathan M. Robbins for discussions we had, their patience and constant encouragement.  

Besides my advisors, I would like to thank my fellow officemates in Mathematics Department: Jo Dwyer, Orestis Georgiou, Jo Hutchinson, Andy Poulton and James Walton (in fact a chemist) for all the fun we have had in the last three years. I hope our friendship will continue! I also wish to thank my friend Rami Band for many stimulating discussions we had on the subject of this thesis and the experience of the 2011 Royal Society Summer Science Exhibition.   

Finally, I would like to thank my family, in particular my sister Ania for being so positively crazy!

\declarationpage

% The \declarationpage replaced the following, DJP
\comment{
\prefacesection{Author's Declaration}
I declare that the work in this dissertation was carried out in accordance with the requirements of the University's Regulations and Code of Practice for Research Degree Programmes and that it has not been submitted for any other academic award. Except where indicated by specific reference in the text, the work is the candidate's own work. Work done in collaboration with, or with the assistance of, others, is indicated as such. Any views expressed in the dissertation are those of the author.
\vspace*{20mm}

\begin{flushright}
\begin{tabular*}{0.75\textwidth}{cccr}
  \hline
  \empty & \empty & \empty & \empty \\
\end{tabular*}
\end{flushright}

\vspace*{-10mm}
\hspace*{65mm}A.N. Other

\vspace*{15mm}

\hspace*{27mm} Date:}

\tableofcontents

\listoffigures

%\prefacesection{Notation}

\chapter{Introduction}
\pagenumbering{arabic}
This thesis concerns the characterization of quantum statistics on graphs. Naturally, one should first explain what it means. As with many problems in mathematical physics it is hard to do it in a one sentence. However, in the subsequent sections of the introduction it is done. The subject, as I see it, is inevitably connected to some basic concepts in algebraic topology and graph theory. The main purpose of this, rather short, introduction is to persuade the reader that quantum statistics and the first homology group of an appropriate configuration space are one and the same thing. I knowingly avoid using the full formalism of quantum mechanics on non-simply connected spaces. This can be found in many textbooks and in my opinion is not relevant to understand the problem and the main results of the thesis. Writing this text I tried to minimize the number of irrelevant details so that the key ideas and concepts were clearly visible. Therefore, for example, I do not prove theorems whose proofs do not contribute to the understanding of the main flow of the text. The interested reader is asked to consult the cited references. On the other hand, in order to make the manuscript available to a reader not familiar with homology groups and graph theory I include a basic discussion of the relevant facts. Although one can find it unnecessary, from time to time, I repeat definitions and key properties of some important objects. I believe that it is better to do this rather than to send the reader to a distant page where they were discussed for the first time.

The chapter is organized as follows: In section \ref{WhatIsStatistics} I shortly explain the concept of quantum statistics describing two approaches. The first one is standard and the second topological. Then in section \ref{AB-top-ex} the Aharonov-Bohm effect is discussed as an example of a topological phase. The subsequent five sections contain the discussion of basic properties of graphs, cell complexes and their homotopy and homology groups. Next, in section \ref{topologyAndStaistics} I define the many-particle configuration space and explain that its first homology group encodes the information about quantum statistics. The calculation for the case of particles living in $\mathbb{R}^2$ and $\mathbb{R}^n$, where $n\geq 3$ is included. Then in section \ref{graph-statistics} I generalize the above concept to graphs and introduce the basic mathematical object of this thesis, i.e. the discrete configuration space of $n$-particles, $\mathcal{D}^n(\Gamma)$. This space has the structure of a cell complex and is topologically equivalent to the configuration space of $n$-particles on a graph, $C_n(\Gamma)$. In the last section of this chapter I discuss the tight-binding model of $n$-particles on a graph, define the topological gauge potentials and explain the connection between them, the first homology and quantum statistics. The background material of the introduction is mostly based on \cite{Nakahara} and \cite{tutte01}.

\section{Quantum statistics } \label{WhatIsStatistics}

In this section I describe two approaches to quantum statistics.
The first one introduces it as an additional postulate of quantum
mechanics. The second, which I will follow throughout the thesis,
is topological in its nature.

\subsection{Standard approach to quantum statistics }

In quantum mechanics any quantum system is described by its underlying
Hilbert space. Let us denote by $\mathcal{H}_{1}$ the one-particle Hilbert
space, i.e. the Hilbert space of a single particle. By one of
the postulates of quantum mechanics the Hilbert space of $n$ distinguishable
particles, $\mathcal{H}_{n}$, is the tensor product of the Hilbert
spaces of the constituents, i.e.
\begin{gather*}
\mathcal{H}_{n}=\underbrace{\mathcal{H}_{1}\otimes\ldots\otimes\mathcal{H}_{1}}_{n}.
\end{gather*}
If we want to treat particles as indistinguishable some additional
modifications of $\mathcal{H}_n$ are required. First, the indistiguishability implies
that all observables need to commute with permutations of the particle
labels. Therefore, one decomposes $\mathcal{H}_{n}$ into irreducible
representations of the permutation group $S_{n}$:
\begin{gather*}
\mathcal{H}=\bigoplus_{\lambda}\mathcal{H}_{\lambda},
\end{gather*}
where $\lambda$ labels those representations. The components
$\mathcal{H}_{\lambda}$ represent essentially different permutation
symmetries. Note that {\it a priori} all components $\mathcal{H}_{\lambda}$
are equally good, i.e. none of them is distinguished in any way. The
distinction between them is due to symmetrization postulates of quantum
mechanics, i.e. physically realizable components $\mathcal{H}_{\lambda}$
are only
\begin{gather*}
\mbox{1. symmetric tensors}:\,\,\,\mathcal{H}_{\lambda}=S^{n}\mathcal{H}_{1},\\
\mbox{2. antisymmetric tensors}:\,\,\,\mathcal{H}_{\lambda}=\bigwedge^{n}\mathcal{H}_{1},
\end{gather*}
which are trivial and sign representations of the permutation group
$S_{n}$, respectively. The first one corresponds to bosons and the second to fermions.
Other components or equivalently other representations of $S_{n}$
are physically excluded. In order to decide if the considered
particles obey Bose or Fermi statistics one looks at the spin. The
spin-statistics theorem \cite{P40} says that particles with integer spin
are bosons and with half-integer, fermions. It is worth mentioning 
that at the level of non-relativistic quantum mechanics the spin-statistics
theorem is actually a postulate as it is proved only in the framework of
quantum filed theory. Nevertheless, there were attempts to deduce
it on the level of QM (see for example \cite{BR00}). The antisymmetric property of fermionic states
is also known as the Pauli exclusion principle which says that no two identical
fermions may occupy the same quantum state simultaneously. Finally,
let us mention that symmetrization postulate has an important consequences
if one looks at the energy distribution of many non-interacting particles.
More precisely, assume that we have a collection of non-interacting
indistinguishable particles and ask how they occupy a set of available
discrete energy states. Then the expected number of particles
in the $i$-th energy state is given by:
\begin{gather*}
n_{i}=\frac{g_{i}}{e^{\left(E_{i}-\mu\right)/kT}-1},\,\,\,\mbox{for bosons,}\\
n_{i}=\frac{g_{i}}{e^{\left(E_{i}-\mu\right)/kT}+1},\,\,\,\mbox{for fermions,}
\end{gather*}
where, $T$ is temperature, $k$ is Boltzmann constant and $g_{i}$ is
the degeneracy of the $E_{i}$ energy state.

\subsection{Topological approach to quantum statistics}

After discussing the standard way of introducing quantum statistics
we switch to the topological approach. Interestingly, it is based on
the topological properties of the classical configuration space.

In classical mechanics, particles are considered distinguishable.
Therefore, the $n$-particle configuration space is the Cartesian
product, $M^{\times n}$, where $M$ is the one-particle configuration
space. By contrast, in quantum mechanics elementary particles may
be considered indistinguishable. This conceptual difference in the
description of many-body systems prompted Leinaas and Myrheim \cite{LM77}
(see also \cite{S70,W90}) to study classical configuration spaces
of indistinguishable particles, $C_{n}(M)$ which led to the discovery
of anyon statistics. We first briefly describe the work of Leinaas and Myrheim.

As noted by the authors of \cite{LM77} indistinguishability of classical
particles places constraints on the usual configuration space, $M^{\times n}$.
Configurations that differ by particle exchange must be identified.
One also assumes that two classical particles cannot occupy the same
configuration. Consequently, the classical configuration space of
$n$ indistinguishable particles is the orbit space
\begin{gather*}
C_{n}(M)=(M^{\times n}-\Delta)/S_{n},
\end{gather*}
where $\Delta$ corresponds to the configurations for which at least
two particle are at the same point in $M$, and $S_{n}$ is the permutation
group. Significantly, the space $C_{n}(M)$ may have non-trivial topology.
As permuted configurations are identified in $C_{n}(M)$ any closed
curve in $C_{n}(M)$ corresponds to a process in which particles start
at some configuration and then return to the same configuration modulo
they might have been exchanged. Some of these curves are non-contractible
and therefore the space $C_{n}(M)$ has nontrivial fundamental group
$\pi_{1}(C_{n}(M))$.

\paragraph{Quantum mechanics on non-simply connected configuration spaces }

For many (or just one particle) whose classical configuration space
$\mathcal{C}$ is non-simply connected quantum mechanics allows an
additional freedom stemming from the non-triviality of the fundamental
group $\pi_{1}(\mathcal{C})$. In order to describe this freedom we
assume in the following that all particles are free, i.e. there are no
external fields and on the classical level they do not interact. In
the subsequent section we discuss in details the Aharonov-Bohm effect
which is an example of the general concept we describe here.

Let $A$ be a connection $1$-form of a $d$-dimensional vector bundle
over $\mathcal{C}$ with the structure group $U(d)$ (see \cite{Nakahara} for more details). As we do not
want to affect classical mechanics, we assume that the curvature $2$-form
$F=DA$ vanishes. In the following we will need the notion of the
holonomy group. Let $\gamma:[0,1]\rightarrow\mathcal{C}$ be a closed
curve. As we consider $d$-dimensional vector bundle, over any point
of $\gamma$ there is a $d$-dimensional vector space $V_{d}$. For
any vector $v_{0}$ over the point $\gamma(0)$ we consider the parallel
transport through $\gamma$. The result of this process is vector
$v_{1}$. Notably $v_{0}$ and $v_{1}$ need not to be the same. Therefore
to each loop one can assign a matrix $M_{\gamma}$ which depends only
on the loop and
\begin{gather*}
v_{1}=M_{\gamma}v_{0},\,\,\,\mbox{\ensuremath{\forall}}v_{0}\in V_{d}.
\end{gather*}
The collection of all matrices $M_{\gamma}$ for all loops based at
some fixed point $p\in\mathcal{C}$ is called the holonomy group. Moreover,
when $F=0$, $M_{\gamma}$ depends only on the homotopy type of
the loop. Therefore holonomy group is a $d$-dimensional representation
of the fundamental group (see section \ref{fund-group} for definition of fundamental group). When $d=1$ this representation is abelian
and assigns phase factors to non-contractible loops in $\mathcal{C}$.
When $d>1$ it assigns in general non-commuting unitary matrices to
non-contractible loops in $\mathcal{C}$. Finally, these matrices act
on $d$-component wavefunction.

\paragraph{Classical configuration spaces and quantum statistics}

In 1977 Leinaas and Myrheim \cite{LM77} considered the classical configuration
space of $n$ indistinguishable particles, $C_{n}(M)$ in the above
described context. Their work showed that the representations of the
fundamental group $\pi_{1}(C_{n}(M))$ determine all possible quantum
statistics. In particular they described in details the cases when
$M=\mathbb{R}^{2}$ and $M=\mathbb{R}^{k}$, where $k\geq3$. Notably
for $M=\mathbb{R}^{2}$ they found that the fundamental group is the
braid group which led to the discovery of anyon statistics. Similar
results were obtained by Laidlaw and  DeWitt \cite{LD71} who considered the problem
of quantum statistics using the language of path integrals. As
clearly pointed out by Dowker \cite{D85} when one is interested in the abelian
quantum statistics only, determination of the fundamental group is
not actually necessary. Instead, the first homology group which is the abelianized version
of $\pi_{1}(C_{n}(M))$ plays the major role. In this thesis we determine
it for graph configuration spaces. 

\section{Aharonov-Bohm Effect as an example of topological phase}\label{AB-top-ex}

In this section we discuss the Aharonov-Bohm effect. In particular we
explain the topological nature of the phase gained by the wavefunction
when it goes around the magnetic flux. Our exposition mainly follows
\cite{geoemtric-phase}.

In non-relativistic quantum mechanics the canonical commutation relations
for a free particle living in $n$-dimensional space $M$ are given
by:
\begin{gather}
[x_{i},x_{j}]=0=[p_{i},p_{j}],\quad[x_{i},p_{j}]=i\delta_{ij},\label{eq:commutation-relations}
\end{gather}
where $i,j\in\{1,\ldots,n\}$. The standard representation of position
and momenta operators satisfying (\ref{eq:commutation-relations})
is given by:
\begin{gather}
(x_{i}\Psi)(x)=x_{i}f(x),\quad(p_{i}\Psi)(x)=-i\frac{\partial}{\partial x_{i}}f(x).\label{eq:first-defi-momentum}
\end{gather}
It was perhaps first noticed\footnote{According to authors of \cite{geoemtric-phase}.
} by Dirac \cite{Drirac58}, that operators:
\begin{gather}
p_{i}^{\omega}=-i\frac{\partial}{\partial x_{i}}+\omega_{i}(x),\label{eq:secnod-momentum-defi}
\end{gather}
where
\begin{gather}
\omega=\sum_{i}\omega_{i}dx^{i},\,\,\,\, d\omega=0,
\end{gather}
satisfy the canonical commutation relations, i.e.
\begin{gather}
[x_{i},x_{j}]=0=[p_{i}^{\omega},p_{j}^{\omega}],\quad[x_{i},p_{j}^{\omega}]=\delta_{ij},
\end{gather}
as well. When the configuration space $M$ has the trivial topology, e.g. $M=\mathbb{R}^{n}$
\begin{gather}
d\omega=0\Rightarrow\exists f\,\,\omega=df.\label{eq:implication}
\end{gather}
Therefore, using gauge freedom, i.e. $\Psi^{\prime}(x)=e^{-if(x)}\Psi(x)$
it is possible to remove $\omega$ from $p_{i}$. To this end, note
that
\begin{gather*}
p_{i}^{\omega}\Psi^{\prime}(x)=\left(-i\frac{\partial}{\partial x_{i}}+\omega_{i}(x)\right)e^{-if(x)}\Psi(x)=\\
=-ie^{-if(x)}\frac{\partial}{\partial x_{i}}\Psi(x)-\frac{\partial}{\partial x_{i}}f(x)e^{-if(x)}\Psi(x)+\omega_{i}e^{-if(x)}\Psi(x)=\\
=-ie^{-if(x)}\frac{\partial}{\partial x_{i}}\Psi(x)=e^{-if(x)}p_{i}\Psi(x).
\end{gather*}
On the other hand, when configuration space has a non-trivial topology
the implication given by (\ref{eq:implication}) does not hold and
it is not possible to use the above argument. Before discussing the Aharonov-Bohm
effect which is, in some sense, a manifestation of this phenomenon we
first focus on a more general situation. The operators $p_{i}$ are
generators of translation and when $\omega=0$ one has
\begin{gather}
(e^{\epsilon^{i}p_{i}}\Psi)(x)=\Psi(x+\epsilon).
\end{gather}
It is easy to verify that when transporting the state vector $\Psi$
along curve $C:[0,T]\rightarrow M$ we get
\begin{gather}
(e^{i\int_{C}dx^{i}p_{i}^{\omega}}\Psi)(x)=\Psi(x+\Delta x)e^{\int_{C}\omega}.
\end{gather}
Therefore for a closed loop $C$
\begin{gather}
(e^{i\int_{C}dx^{i}p_{i}}\Psi)(x)=\Psi(x)e^{\oint_{C}\omega}.\label{eq:phase-shift}
\end{gather}
Let us consider two situations when $M=\mathbb{R}^{2}$ and when $M=\mathbb{R}^{2}-D(0,\rho)$, where $D(0,\rho)$ is a disk of radius $\rho$ (see figures \ref{AB111}(a) and \ref{AB111}(b), respectively). For the first case the loop $C$
is contractible and we have
\begin{gather}
\oint_{C}\omega=\int_{S}d\omega=0.
\end{gather}
For figure \ref{AB111}(b), that is, when the disk $D(0,\rho)$
is removed from the domain contained inside the loop $C$, i.e. when
$M=\mathbb{R}^{2}-D(0,\rho)$ we have
\begin{gather}
0=\int_{S}d\omega=\oint_{C}\omega-\oint_{\partial D}\omega\Rightarrow\oint_{C}\omega=\oint_{\partial D}\omega,
\end{gather}
and hence the phase $\phi=\oint_{C}\omega$ in equation (\ref{eq:phase-shift})
might be non-zero. For a general loop $C$ which goes around the disk
$D$ clockwise $n_{+}$ times and anticlockwise $n_{-}$ times one
gets
\begin{gather}
\oint_{C}\omega=(n_{+}-n_{-})\oint_{\partial D}\omega=(n_{+}-n_{-})\phi.
\end{gather}

\begin{figure}
\includegraphics[scale=0.48]{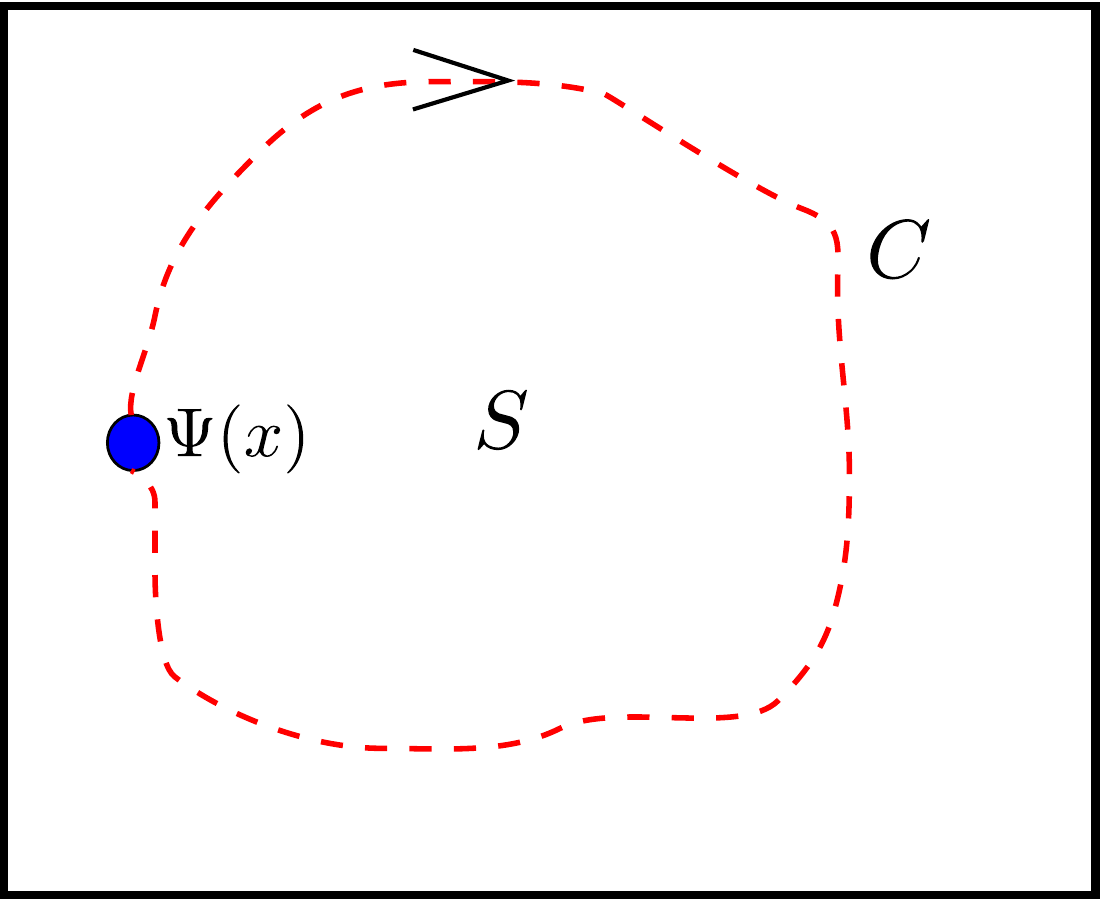}~~~~~~~~~~~~~~~~~\includegraphics[scale=0.48]{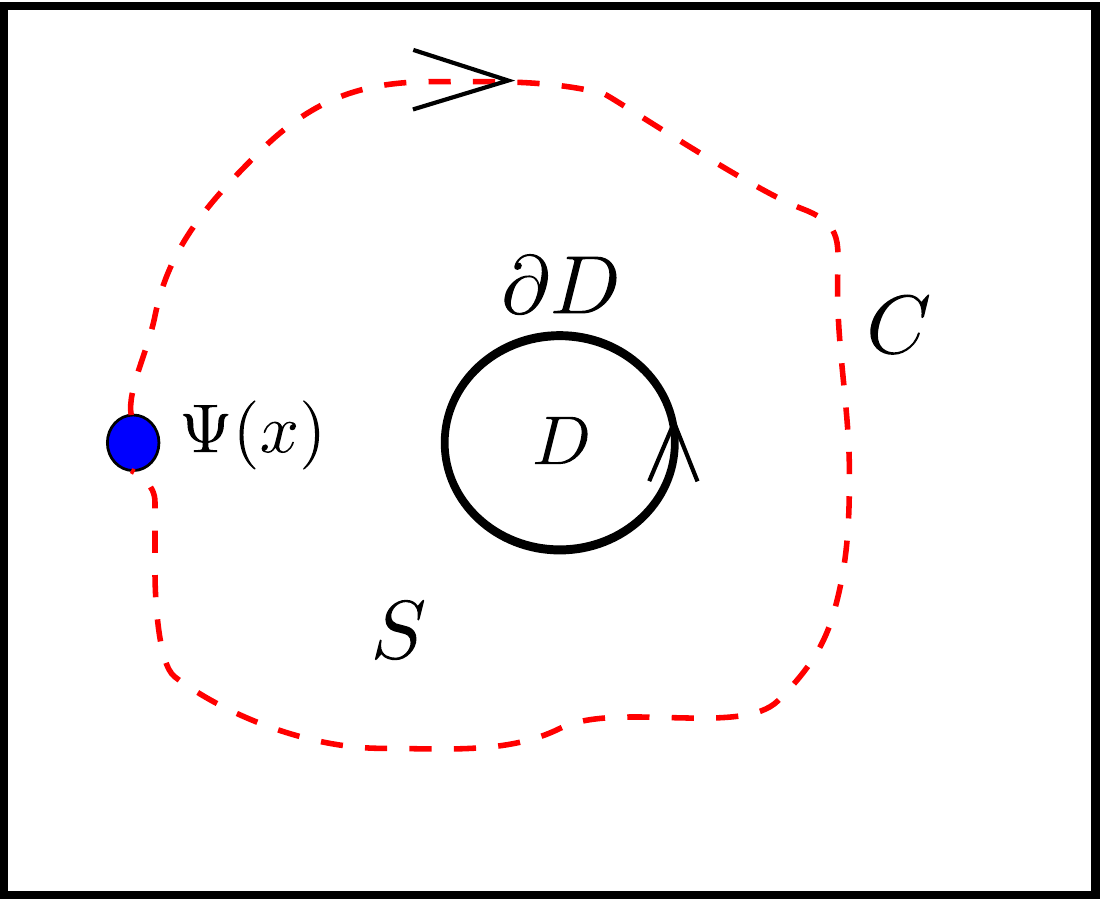}\caption{\label{AB111} (a) The contractible loop in $\mathbb{R}^{2}$ (b) the non-contractible
loop in $\mathbb{R}^{2}-D(0,\rho)$}
\end{figure}
\noindent As a conclusion we see that in certain topologies it does matter which
definition of momentum operators (\ref{eq:first-defi-momentum}) or
(\ref{eq:secnod-momentum-defi}) we use. On the other hand, it is
also clear that the differential $1$-form $\omega$ should be taken into
account only if it has some physical meaning. From a physics perspective
the simplest example of such physical realization is the magnetic
field whose potential $A$ is a connection $1$-form. Recall, that by
the minimal coupling principle, in the presence of a magnetic field
$B=dA$ all derivatives in all equations of physics should be substituted
by covariant derivatives. Thus
\[
p_{i}\rightarrow p_{i}-eA_{i},\quad B=dA.
\]
Therefore the magnetic potential $eA$ plays the role of $\omega$ from
the previous considerations. Let us next consider the situation shown
in figure \ref{AB11}. We assume
\[
dA(x)=0\,\,\mathrm{if}\, x\in M.
\]
\begin{figure}
\begin{center}\includegraphics[scale=0.5]{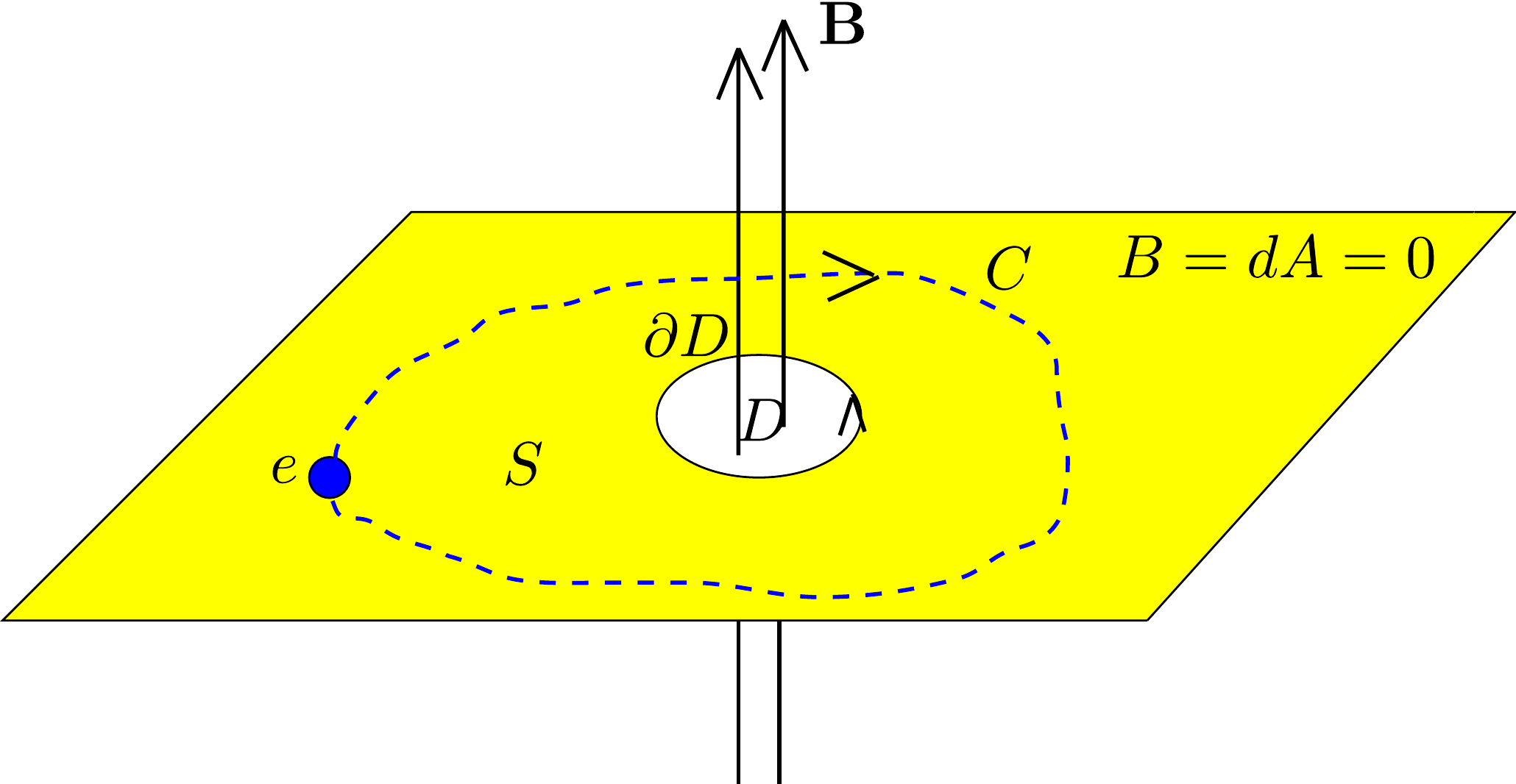}\end{center}
\caption{\label{AB11} The magnetic flux through a disk in $M=\mathbb{R}^{2}-D(0,\rho)$}
\end{figure}
\noindent The phase $\phi=\oint_{C}eA$ is called the Aharonov-Bohm phase and is
given by
\begin{gather*}
\oint_{C}eA=(n_{+}-n_{-})\oint_{\partial D}eA=(n_{+}-n_{-})e\int_{D}dA=(n_{+}-n_{-})e\int_{D}B=\\
=(n_{+}-n_{-})e\phi_{B},
\end{gather*}
i.e. it is proportional to the flux $\phi_{B}=\int_{D}B$ of the magnetic
field $B$ through the disk $D$. The existence of this phase was
experimentally verified in 1960 \cite{Chambers60} by measuring the
shift in the interference pattern of the electrons in the geometry
which is almost identical with the one shown in figure \ref{AB11}. Finally,
let us note that the existence of the Aharonov-Bohm phase is inevitably
related to the non-contractibility of the loop through which electrons
travel. Thus for graphs, we will call the phase gained by a particle
going around the cycle an Aharonov-Bohm phase (see chapter \ref{chQS} for
more details).

\section{Graphs}\label{graphs}
In this section we introduce the notion of graphs and discuss their basic properties. A graph $\Gamma=(E,V)$, where $V$ and $E$ are finite sets,
%  \begin{eqnarray}
% % \nonumber to remove numbering (before each equation)
%   V &=& \{v_1,v_2,\ldots,v_{|V|}\} \\
%   E &=& \{e_1,e_2,\ldots,e_{|E|}\}
% \end{eqnarray}
is a collection of $|V|$ points called vertices and $|E|$ edges which connect some of the vertices. We will write $v_1\sim v_2$ ($v_1\nsim v_2$) if two vertices $v_1,v_2\in V$ are connected (not connected) by an edge, respectively. An undirected edge between $v_1$ and $v_2$ will be denoted by $e=v_1\leftrightarrow v_2$. Similarly a directed edge from $v_1$ to $v_2$ ($v_2$ to $v_1$) will be denoted by $v_1\rightarrow v_2$ ($v_2\rightarrow v_1$). In the following we will consider only simple graphs, i.e. graphs for which any pair of vertices is connected by at most one edge (there are no multiple edges) and each edge is connected to exactly two different vertices (there are no loops). A typical way to encode the information about connections between vertices of $\Gamma$ is by means of the so-called adjacency matrix. The adjacency matrix of $\Gamma$ is a $|V|\times |V|$ matrix such that
\begin{equation}\label{adjacency}
A_{i,j}:=A_{v_i,v_j}=1\,\, \mathrm{if}\, v_i \sim v_j\in E,
\end{equation}
and $A_{ij}=0$ otherwise. If vertices $v_i$ and $v_j$ are connected by an edge, i.e. if $A_{ij}\neq 0$, we say they are adjacent. It is straightforward to see that $|E|=\frac{1}{2}\sum_{jk}A_{jk}$.

\subsection{Subgraphs, paths, trees and cycles}\label{PTC}
Here we assume that $\Gamma=(V, E)$ is a simple connected graph. A subgraph $\Gamma^\prime=(V^\prime,E^\prime)$ of the graph $\Gamma$ is a graph such that $V^\prime \subset V$, $E^\prime \subset E$ and edges from $E^\prime$ connect vertices from $V^\prime$. There are two elementary methods for constructing a subgraph out of the given graph. For $e\in E$ one can consider a graph with $|E|-1$ edges obtained from $\Gamma$ by deletion of the edge $e$. It will be denoted by $\Gamma\setminus e$. Similarly for a vertex $v\in V$ one defines graph $\Gamma\setminus v$ which is a result of deleting vertex $v$ together with all the edges connected to $v$. The generalization of these procedures to many edges or vertices is straightforward.

\noindent We proceed with definitions of other important subgraphs: paths, cycles and trees.
\begin{definition}
A path $P=(v_1,v_2,\ldots,v_k)$ on $\Gamma$ is a subgraph of $\Gamma$ such that $v_i\sim v_{i+1}$.\end{definition}
\noindent We will call the vertices $v_2,\ldots,v_{k-1}$ of path $P=(v_1,v_2,\ldots,v_k)$ the internal vertices. A path $P=(v_1,v_2,\ldots,v_k)$ will be called a {\it simple path} if $v_i\neq v_j$ for any $i,j\in \{1,\ldots,k\}$.
\begin{definition}
 A cycle $C=(v_1,v_2,\ldots,v_k)$ on $\Gamma$ is a subgraph of $\Gamma$ such that $v_i\sim v_{i+1}$ and $v_k=v_1$.
\end{definition}
\begin{definition}
A tree $T$ of $\Gamma$ is a subgraph of $\Gamma$ such that any pair of vertices is connected by exactly one path.
\end{definition}
\noindent Equivalently, $T\subset \Gamma$ is a tree if it contains no cycles. Among all trees of $\Gamma$ we distinguish the so-called {\it spanning trees}. A spanning tree $T=\{V_T,E_T\}$ of $\Gamma$ is a tree such that its set of vertices is exactly the set of vertices of $\Gamma$, i.e. $V_T=V$. Therefore, a spanning tree is a maximal subgraph of $\Gamma$ without cycles.
In order to calculate the number of cycles of a given graph $\Gamma=\{V,E\}$ we note that any spanning tree of $\Gamma$ has $|V|-1$ edges. Therefore, the number of cycles is
\begin{equation}\label{adjacency}
\beta_1(\Gamma)=|E|-(|V|-1)=|E|-|V|+1.
\end{equation}
This number, which is called the first Betti number, will play a major role in the next chapters.

\begin{figure}[h]
\begin{center}
\includegraphics[scale=0.55]{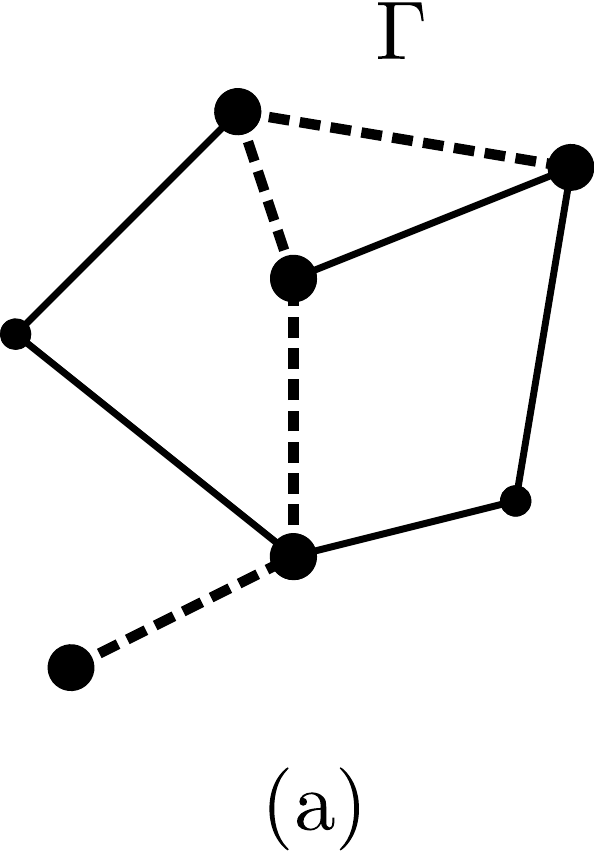}~~~~~~~\includegraphics[scale=0.55]{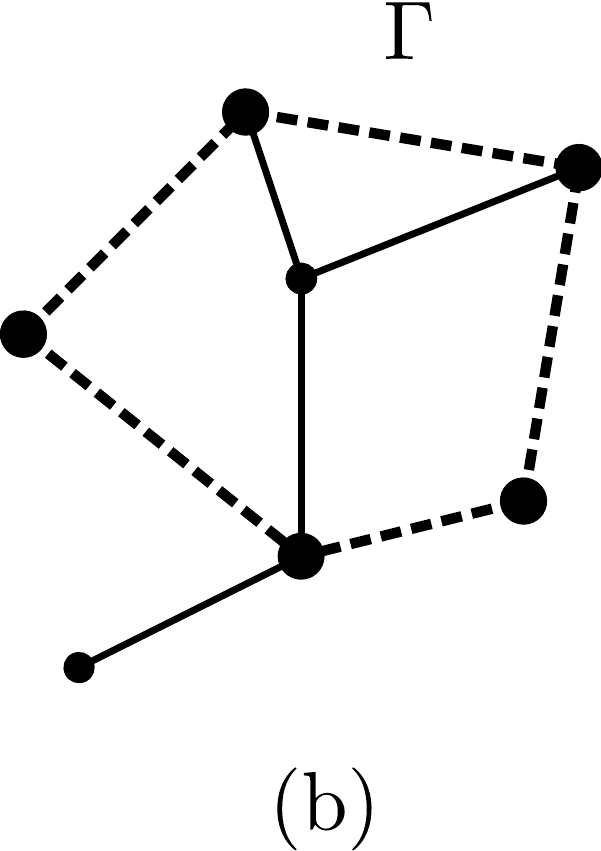}~~~~~~~\includegraphics[scale=0.55]{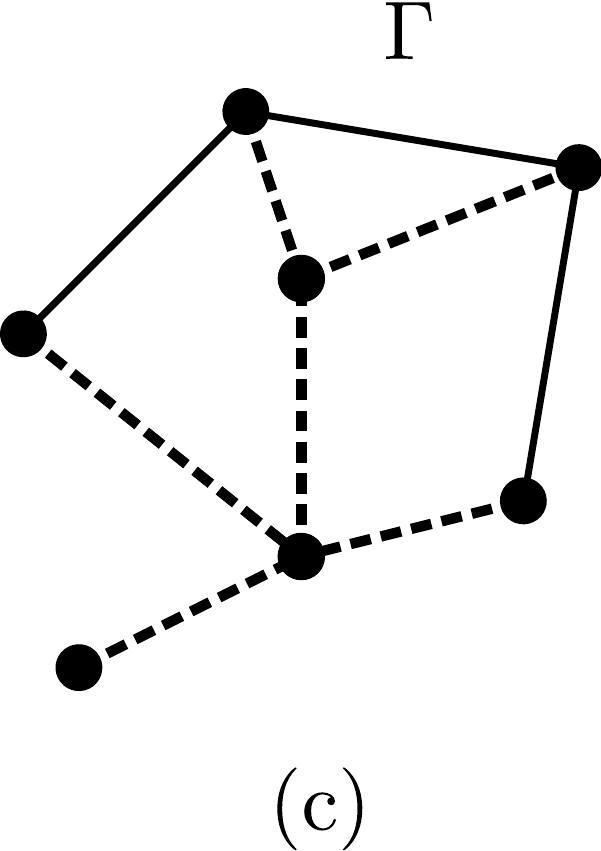}
\end{center}
\caption{The dashed edges represent examples of (a) a path, (b) a cycle, (c) a spanning tree, of graph $\Gamma$.}
\end{figure}

\subsection{Connectivity}\label{connectivity}
In this section we discuss the notion of connectivity of a graph. We start with the definition of a connected graph.
\begin{definition}
A graph is connected, if any pair of its vertices is connected by a path.
\end{definition}
\noindent In the following we will need the notion of $k$-connected graphs. Note at the beginning that after the removal of a vertex (or vertices) from $\Gamma$, the graph $\Gamma$ can split into several disjoint connected components (see figure \ref{comp} (b)). The topological closures of connected components of $\Gamma\setminus X$ will be called topological components or, if it does not cause ambiguity, just components (see figure \ref{comp} (c) for an intuitive definition of the topological closure).
\begin{figure}[h]
\begin{center}
\includegraphics[scale=0.55]{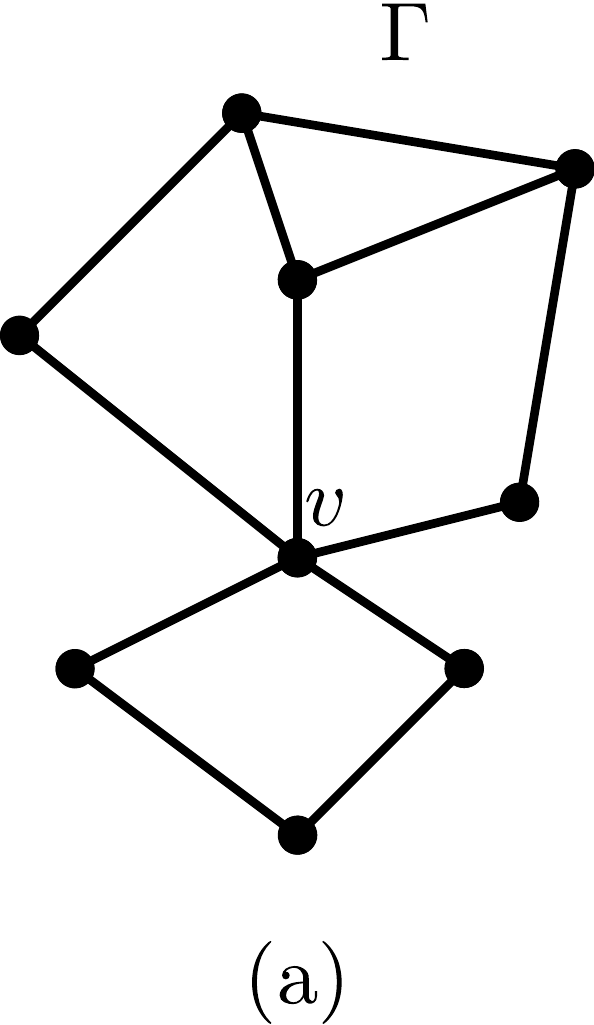}~~~~~~~\includegraphics[scale=0.55]{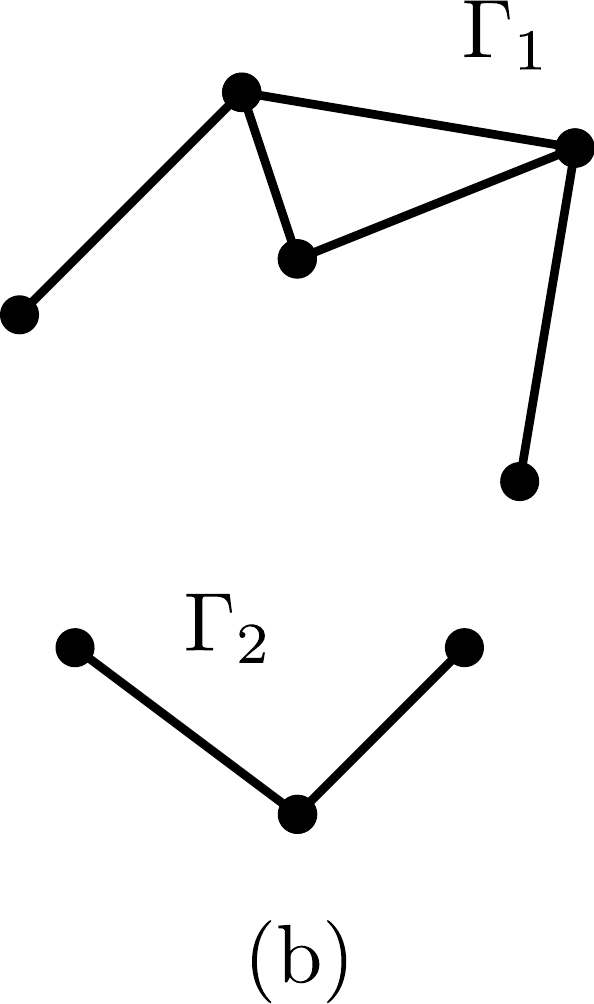}~~~~~~~\includegraphics[scale=0.55]{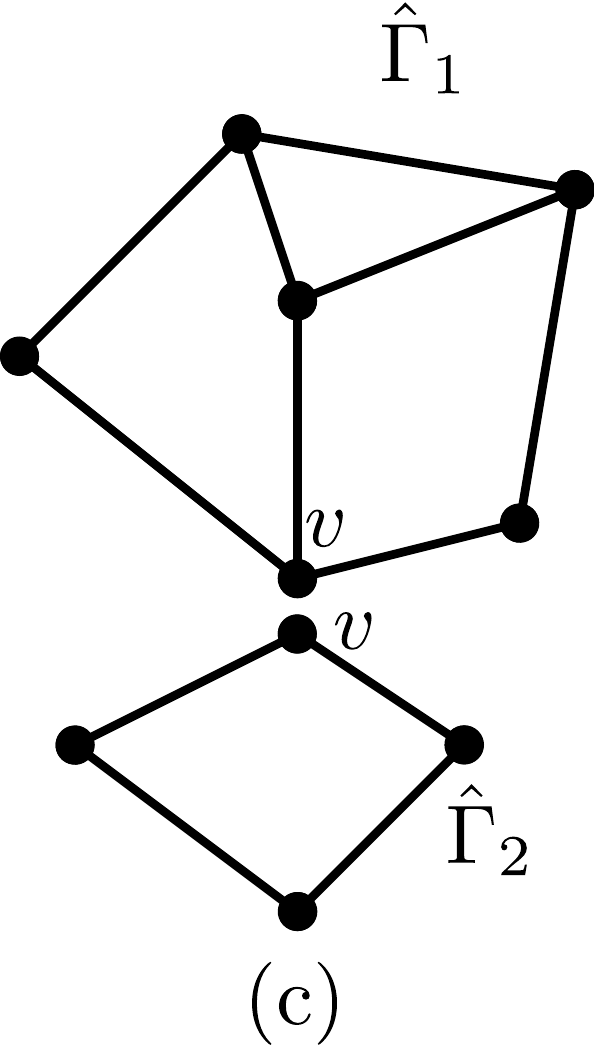}
\end{center}
\caption{\label{comp} (a) A graph $\Gamma$, (b) Components of $\Gamma$, (c) Topological components of $\Gamma$.}
\end{figure}
\noindent The definition of $k$-connected graph is closely related to this notion, i.e.
\begin{definition}\label{k-connected-def}
A graph $\Gamma=\{V,E\}$ is $k$-connected, where $k\in\mathbb{N}$, if $|V|>k$ and $\Gamma \setminus X$ is connected for any set $X\subset V$ with $|X|<k$.
\end{definition}
\noindent In definition \ref{k-connected-def} the graph $\Gamma \setminus X$ should be understood as a graph obtained from $\Gamma$ by removal of vertices $X$ as explained in section \ref{PTC}. Moreover, we assume that every graph is $0$-connected. Note also that by definition \ref{k-connected-def} all connected graphs are $1$-connected. We next define the connectivity of a graph.
\begin{definition}\label{k-connectivity}
The connectivity of $\Gamma$, $\kappa(\Gamma)$, is the greatest integer $k$ such that $\Gamma$ is $k$-connected.
\end{definition}
\noindent Note that the definition of $k$-connected graph is phrased in terms of vertex removals rather than in terms of paths joining pairs of vertices. In order to link it with paths we first define independent paths between pairs of vertices.
\begin{definition}\label{paths}
Two (or more) paths between vertices $v_1$ and $v_2$ of $\Gamma$ are independent if they do not have common inner vertices (see figure \ref{paths}).
\end{definition}
\noindent The following Menger's theorem \cite{tutte01} gives the characterization of $k$-connected graphs in terms of independent paths.
\begin{theorem}
A graph is $k$-connected if and only if there are $k$ independent paths between any two of its vertices.
\end{theorem}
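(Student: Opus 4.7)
The plan is to prove the two directions separately, with the sufficiency (``if'' part) being an elementary counting argument and the necessity (``only if'' part) resting on the local form of Menger's theorem, applied to a pair of vertices at a time.

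For the sufficiency, suppose that every pair of vertices of $\Gamma$ is connected by $k$ independent paths. To show that $\Gamma$ is $k$-connected, pick any vertex subset $X \subset V$ with $|X| < k$ and any two vertices $u, v \in V \setminus X$. By definition \ref{paths}, the $k$ independent $u$-$v$ paths have pairwise disjoint sets of internal vertices, so $X$ can intersect the internal vertices of at most $|X| < k$ of these paths. Hence at least one of the $k$ paths survives in $\Gamma \setminus X$, which shows that $u$ and $v$ remain in the same component. Since $u, v$ were arbitrary, $\Gamma \setminus X$ is connected.

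For the necessity, the key input is the local (two-vertex) form of Menger's theorem: for any pair of non-adjacent vertices $u, v$ in a connected graph, the maximum number of internally vertex-disjoint $u$-$v$ paths equals the minimum cardinality of a set $S \subset V \setminus \{u,v\}$ whose removal separates $u$ from $v$. I would cite this as a standard result (the proof proceeds by induction on $|E|$, or equivalently via vertex-splitting and the max-flow min-cut theorem) rather than redo it from scratch. Assuming $\Gamma$ is $k$-connected and $u \nsim v$, no $u$-$v$ separator has fewer than $k$ vertices, since otherwise $\Gamma$ minus such a separator would be disconnected. The local theorem therefore yields $k$ internally disjoint paths from $u$ to $v$.

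It remains to treat the case of adjacent $u, v$, which does not fall directly under the local theorem because no vertex set of $V \setminus \{u, v\}$ can separate adjacent vertices. My approach is to consider the auxiliary graph $\Gamma' = \Gamma \setminus e$ where $e = u \leftrightarrow v$, to show that $\Gamma'$ is $(k-1)$-connected, and then to apply the non-adjacent case in $\Gamma'$ to produce $k-1$ independent $u$-$v$ paths; adjoining the edge $e$ itself yields the $k$-th path, independent from the rest since it has no internal vertices. The $(k-1)$-connectivity of $\Gamma'$ follows by contradiction: any separator of $\Gamma'$ with fewer than $k-1$ vertices, augmented by either $u$ or $v$, would give a set of fewer than $k$ vertices whose removal disconnects $\Gamma$, contradicting the assumed $k$-connectivity.

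The main obstacle is the local Menger theorem itself, which is genuinely nontrivial; the subsidiary technical point is verifying that deleting the edge $u \leftrightarrow v$ decreases connectivity by at most one, which is a short but essential check in the adjacent case.
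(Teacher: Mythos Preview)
The paper does not actually prove this theorem; it is stated as Menger's classical result and referred to Tutte's textbook \cite{tutte01}, in keeping with the author's stated policy of omitting proofs that do not contribute to the main flow. So there is no ``paper's own proof'' to compare against.

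Your argument is the standard one and is essentially correct. Two small points deserve tightening. First, in the sufficiency direction, Definition~\ref{k-connected-def} also requires $|V|>k$; you should remark that this follows from the existence of $k$ internally disjoint paths in a simple graph (at most one path can be a single edge, and the remaining $k-1$ contribute pairwise distinct internal vertices). Second, in the adjacent case, the assertion that $S\cup\{u\}$ or $S\cup\{v\}$ separates $\Gamma$ needs a brief case check: since $\Gamma\setminus S$ is connected (as $|S|<k$), the graph $\Gamma'\setminus S$ has exactly two components $C_u\ni u$ and $C_v\ni v$; if $C_u=\{u\}$ one must use $S\cup\{v\}$ instead, and both components cannot be singletons because $|V|>k\ge |S|+2$. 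With these clarifications your proof goes through.
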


\begin{figure}[h]
\begin{center}
\includegraphics[scale=0.65]{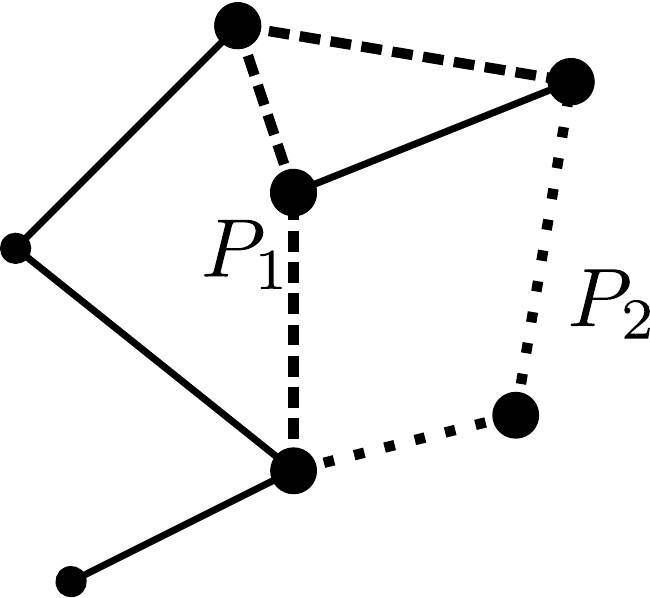}
\end{center}
\caption{\label{paths} Example of two independent paths between vertices $v_1$ and $v_2$, dashed and dotted subgraphs of $\Gamma$.}
\end{figure}

\subsection{Decomposition of a graph into $3$-connected components}
As we will see in the next sections, the characterisation of quantum statistics on graphs requires an understanding of the decomposition of a graph into $3$-connected components. We start with the definition of a cut.
\begin{definition}
A cut $X$ of a graph $\Gamma$ is a set of vertices such that $\Gamma\setminus X$ is disconnected, that is, it consists of at least two components.
\end{definition}
\noindent We will say that $X$ is an $n$-cut if $|X|=n$. We next introduce the notion of a block of the graph.
\begin{definition}\label{block}
A block is a maximal connected subgraph without a $1$-cut.
\end{definition}

\noindent Note that by definition \ref{block} any block is either a maximal $2$-connected subgraph or a single edge. For example, if $\Gamma$ is a tree then its blocks are precisely the edges. Having a $1$-connected simple graph one can consider the set of its $1$-cuts. For each $1$-cut we can further consider its topological components. Next for each component, if possible we apply the remaining $1$-cuts. Repeating this process iteratively we arrive with topological components which are either $2$-connected or given by edges. This way we decompose a $1$-connected graph into the set of $2$-connected components and edges, which are in fact the blocks of the considered graph (see figure \ref{decomp1-2} for an example of this kind of decomposition). It can be shown that the decomposition is unique \cite{tutte01}.
\begin{figure}[h]
\begin{center}
\includegraphics[scale=0.5]{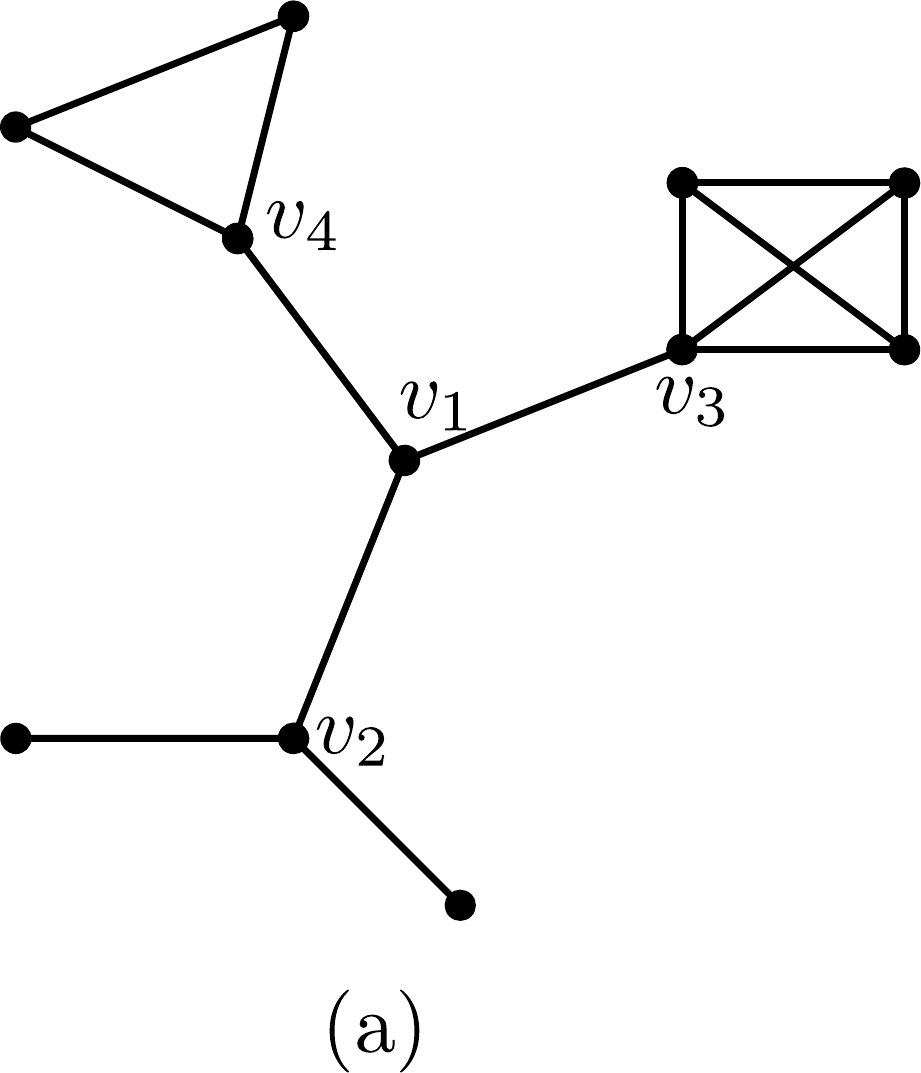}~~~~~~~\includegraphics[scale=0.5]{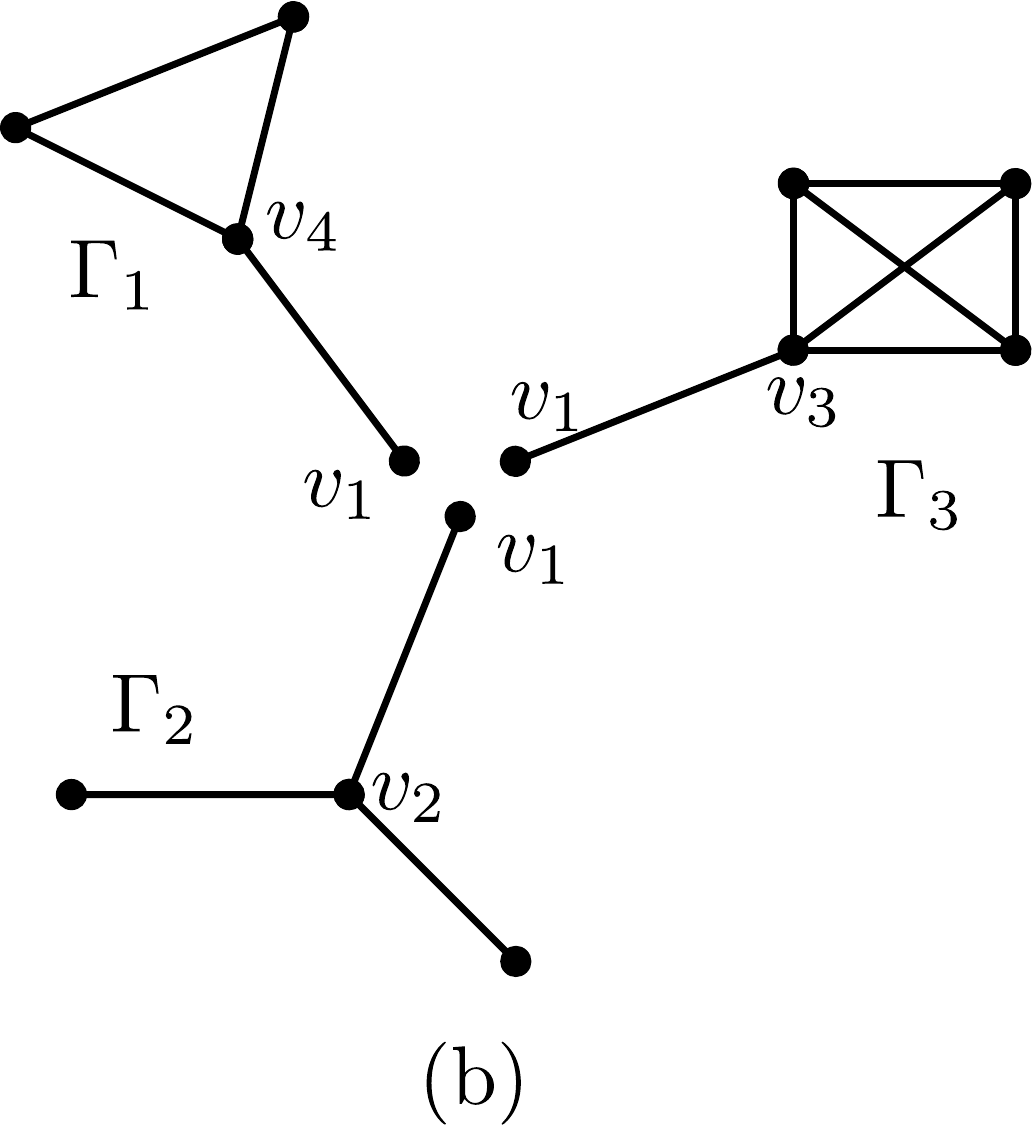}
\end{center}
\begin{center}
\includegraphics[scale=0.5]{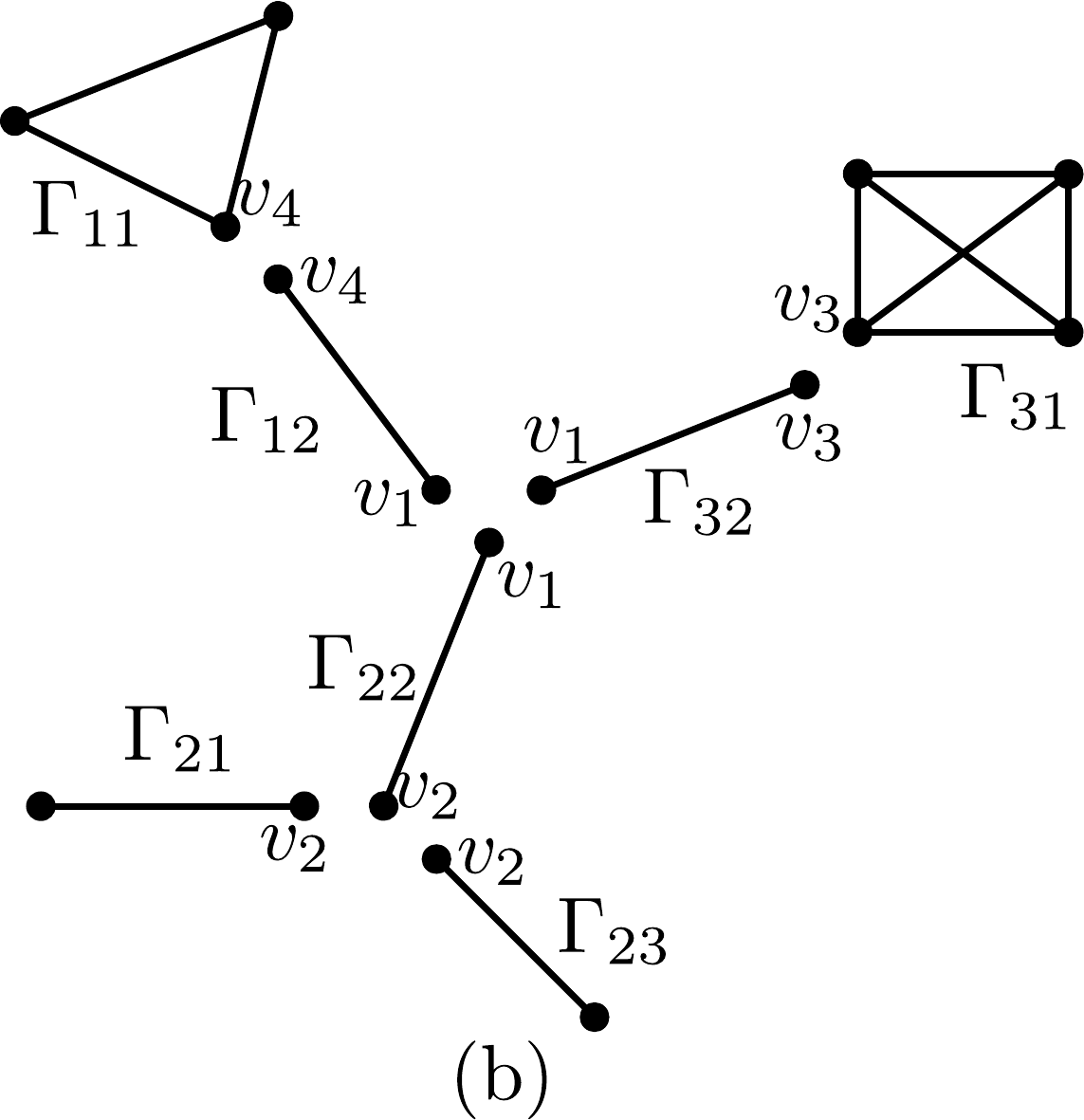}
\end{center}
\caption{\label{decomp1-2} (a) A graph, (b) Its topological components resulting from $v_1$-cut, (c) The full block decomposition.}
\end{figure}
If the $2$-connected components obtained from the above decomposition are not $3$-connected they can be further decomposed into the set of $3$-connected components and perhaps cycles. This is done by considering the set of $2$-cuts. For each $2$-cut $\{x,y\}\subset V$ we take all its topological components. In order to ensure that these components are $2$-connected we add an additional edge between vertices $\{x,y\}$ and call them the {\it marked components}. Repeating this process iteratively we arrive at marked components which are either $3$-connected or topological cycles (see figure \ref{decomp2-3} for an example of this kind of decomposition). Although the final set of marked components is typically not unique, one can show that the numbers of $3$-connected components and cycles do not depend on the order in which one applies $2$-cuts.

\begin{figure}[h]
\begin{center}
\includegraphics[scale=0.5]{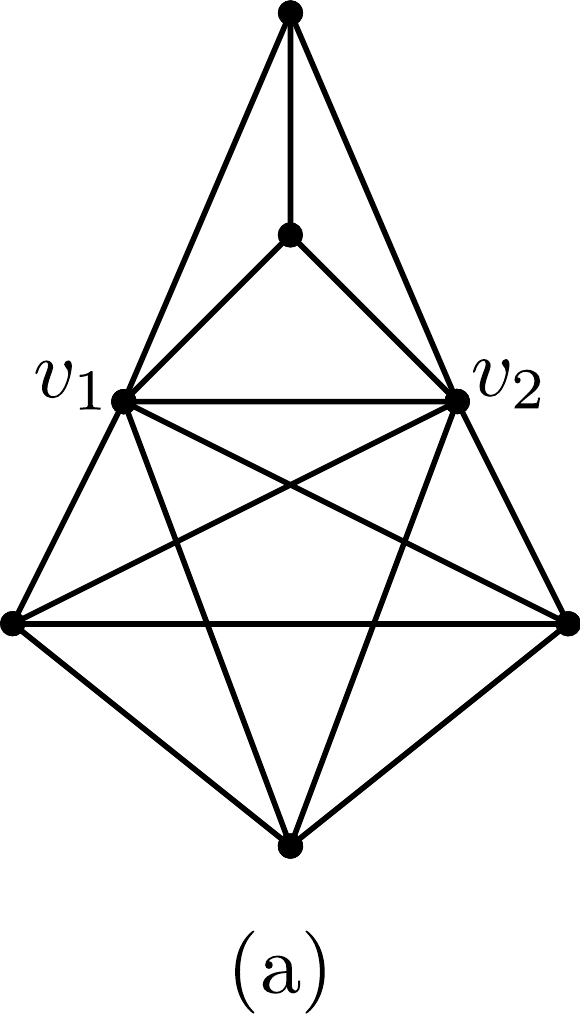}~~~~~~~\includegraphics[scale=0.5]{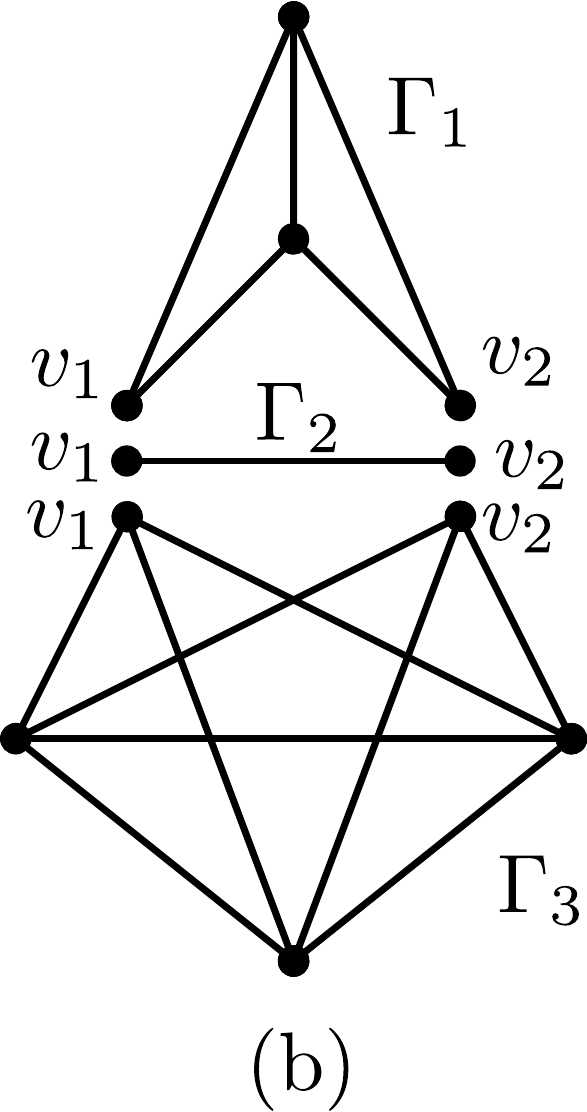}~~~~~~~\includegraphics[scale=0.5]{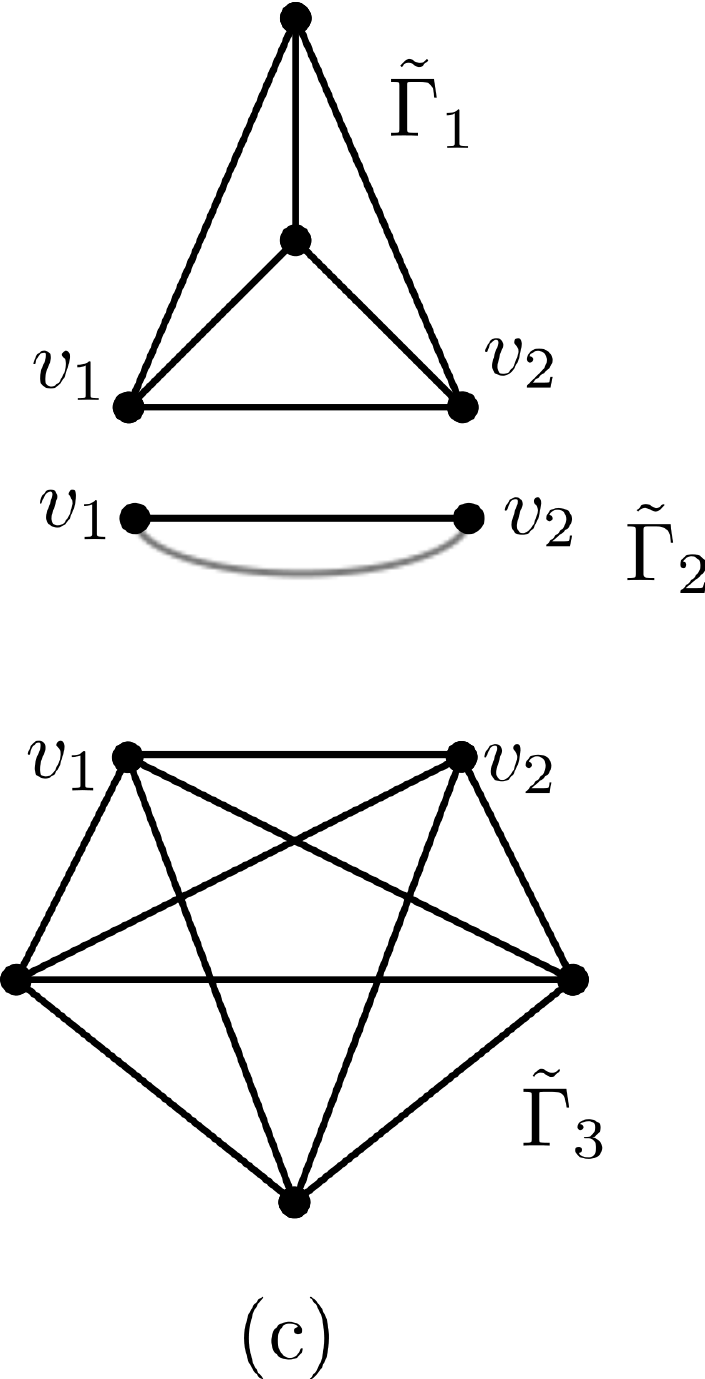}
%\end{center}
%\begin{center}
\end{center}
\caption{\label{decomp2-3} (a) A $2$-connected graph, (b) Components resulting from $\{v_1,v_2\}$-cut, (c) Marked components resulting from $\{v_1,v_2\}$-cut.}
\end{figure}

\section{The fundamental group}\label{fund-group}
In this section we introduce the fundamental group of a topological space. As we will see, up to continuous deformations, the elements of this group are loops of the considered space.

Let $X$ be a topological space. A path in $X$ is a continuous map $f:[0,1]\rightarrow X$. Consider the family of paths $f_t:[0,1]\rightarrow X$, where $t\in [0,1]$ and:
\begin{itemize}
  \item the endpoints $f_t(0)=x_0$ and $f_t(1)=x_1$ are fixed, i.e. do not depend on $t$,
  \item the map $F(s,t):=f_t(s)$, that is the map $F:[0,1]\times [0,1]\rightarrow X$ is continuous.
\end{itemize}
The family satisfying these conditions is called a {\it homotopy of paths} in $X$.
\begin{definition}
Two paths $f_0$ and $f_1$ with fixed endpoints, i.e. $f_0(0)=f_1(0)$ and $f_0(1)=f_1(1)$, are homotopic if they can be connected by a homotopy of paths.
\end{definition}
\noindent The homotopy equivalent paths will be denoted by $f_0\simeq f_1$. One can show that the relation of homotopy of paths with fixed points, i.e. $\simeq$ is an equivalence relation and therefore divides paths into disjoint classes. We next define the product of two paths.

\begin{definition}
Let $f,g:[0,1]\rightarrow X$ be such that $f(1)=g(0)$. The product path $f\cdot g$ is the path given by
\begin{displaymath}
f\cdot g(t)=\left\{
              \begin{array}{cc}
                f(2t) & 0\leq t\leq \frac{1}{2}, \\
                g(2t-1) & \frac{1}{2} \leq t\leq 1.  \\
              \end{array}\right.
\end{displaymath}
\end{definition}
\noindent It is easy to see that the product of paths behaves well with respect to homotopy classes of paths, i.e. if $f_0\simeq f_1$ and $g_0\simeq g_1$ then $f_0\cdot g_0\simeq f_1\cdot g_1$.

Let us next consider loops, that is paths whose starting and ending points are the same (we call it basepoint). We denote by $\pi_1(X,x_0)$ the set of all homotopy classes of loops with basepoint $x_0$. One can show that $\pi_1(X,x_0)$ is a group with respect to the product of homotopy classes of loops defined by $[f][g]=[f\cdot g]$ called {\it the fundamental group} of $X$ at the basepoint $x_0$. Moreover, if two basepoints $x_0$ and $x_1$ lie in the same path-component of $X$ the groups $\pi_1(X,x_0)$ and $\pi_1(X,x_1)$ are isomorphic. Therefore for path-connected spaces we often write $\pi_1(X)$ instead of $\pi(X,x_0)$.

We next describe the fundamental group of a simple connected graph\footnote{The topology we use is a topology of a cell complex which is defined in section \ref{cell-complexes}}. Let $T\subset\Gamma$ be a spanning tree of $\Gamma$. Choose $v_0$ to be any vertex of $\Gamma$ (hence of $T$). Each edge $e_i$ of $\Gamma \setminus T$, which we will call a deleted edge, defines a loop in $\Gamma$. To see this note that there is a unique simple path joining $v_0$ with each of the endpoints of $e_i$. The announced loop, which we denote by $e_i$, starts from $v_0$ goes through the path in $T$ to one of the endpoints of $e_i$, then through $e_i$ and then returns to $v_0$ across the path in $T$. The homotopy classes of these loops generate $\pi_1(\Gamma)$. More precisely:
\begin{theorem}\label{fundgroupgraph}
Let $\Gamma$ be a connected simple graph and $T$ its spanning tree. Then the fundamental group $\pi_1(\Gamma)$ is a free group whose basis is given by classes $[e_i]$ corresponding to deleted edges $e_i\in \Gamma\setminus T$.
\end{theorem}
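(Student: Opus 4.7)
The plan is to reduce the computation of $\pi_1(\Gamma)$ to that of a wedge of circles by collapsing the spanning tree to a point. First I would observe that a finite tree $T$ is contractible, which one can prove by induction on the number of vertices: any finite tree contains at least one leaf (a vertex of degree one), and removing this leaf together with its incident edge yields a smaller tree that is contractible by the inductive hypothesis; the original tree deformation retracts onto this smaller subtree by sliding the leaf along its edge to its unique neighbor. Iterating, $T$ deformation retracts onto the single vertex $v_0$.

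Next I would use the fact that $(\Gamma,T)$ is a CW-pair with $T$ contractible, so that the quotient map $q:\Gamma\to\Gamma/T$ is a homotopy equivalence. The quotient $\Gamma/T$ has a single $0$-cell (the image of $T$) together with one $1$-cell for each deleted edge $e_i\in\Gamma\setminus T$; since both endpoints of every $1$-cell of $\Gamma\setminus T$ lie in $T$, after collapsing each becomes a loop attached at the basepoint. Consequently $\Gamma/T$ is homeomorphic to a wedge $\bigvee_i S^1_i$ of $\beta_1(\Gamma)=|E|-|V|+1$ circles, one for each deleted edge.

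I would then invoke the standard computation that the fundamental group of a wedge of circles is free on its circle factors. The cleanest way is via van Kampen's theorem applied to an open cover in which each open set contains a neighborhood of the basepoint plus one circle (slightly thickened), with pairwise intersections being contractible; since each piece has fundamental group $\mathbb{Z}$ and the intersections are simply connected, the theorem gives the free product $\pi_1(\Gamma/T)=\ast_i\mathbb{Z}$, i.e. the free group on $|\Gamma\setminus T|$ generators.

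Finally I would translate the generators back through the homotopy equivalence. For each deleted edge $e_i$, the loop defined in the theorem statement, which runs from $v_0$ through $T$ to one endpoint of $e_i$, traverses $e_i$, and returns through $T$, projects under $q$ to a loop that traverses the $i$-th circle of $\Gamma/T$ exactly once, i.e.\ to a chosen generator of $\pi_1(\Gamma/T)$. Since $q_\ast$ is an isomorphism, the classes $[e_i]$ form a free basis of $\pi_1(\Gamma,v_0)$. The main obstacle I anticipate is justifying the homotopy equivalence $\Gamma\simeq\Gamma/T$ rigorously; this can either be done by appealing to the general fact about collapsing contractible subcomplexes (which rests on the homotopy extension property for CW pairs), or, more concretely, by constructing the deformation retraction of $T$ onto $v_0$ explicitly and extending it via the identity on the deleted edges to a homotopy $\Gamma\times I\to\Gamma$ ending in a map whose image of each loop $e_i$ is precisely the loop described in the theorem.
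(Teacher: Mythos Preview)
Your proposal is correct and follows the standard argument (essentially the one in Hatcher's \emph{Algebraic Topology}, which the paper cites): collapse the contractible spanning tree to obtain a wedge of circles, compute $\pi_1$ of the wedge via van Kampen, and pull the generators back through the homotopy equivalence. The paper itself does not prove this theorem; it is stated as a background result with the reader referred to the literature, so there is no approach to compare against beyond noting that your argument is the expected one.
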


\section{Cell complexes}\label{cell-complexes}
An example of a topological space is a cell complex which we discuss in the following. 

Let $B_{n}=\{x\in\mathbb{R}^{n}\,:\,\|x\|\leq1\}$
be the standard unit-ball. The boundary of $B_{n}$ is the unit-sphere
$S^{n-1}=\{x\in\mathbb{R}^{n}\,:\,\|x\|=1\}$. A cell complex $X$
is a nested sequence of topological spaces
\begin{eqnarray}
X^{0}\subseteq X^{1}\subseteq\dots\subseteq X^{n},
\end{eqnarray}
where the $X^{k}$'s are the so-called $k$ - skeletons defined as follows:
\begin{itemize}
\item The $0$ - skeleton $X^{0}$ is a discrete set of points.
\item For $\mathbb{N}\ni k>0$, the $k$ - skeleton $X^{k}$ is the result
of attaching $k$ - dimensional balls $B_{k}$ to $X^{k-1}$ by gluing
maps
\begin{eqnarray}
\sigma:S^{k-1}\rightarrow X^{k-1}.
\end{eqnarray}

\end{itemize}
By $k$-cell $\alpha^{(k)}$ we understand the interior of the ball $B_{k}$ attached
to the $(k-1)$-skeleton $X^{k-1}$. We will denote by $\overline{\alpha}^{(k)}$ the cell $\alpha^{(k)}$ together with its boundary. The $k$-cell is regular if its
gluing map is an embedding (i.e., a homeomorphism onto its image). Finally we say that $X$ is $n$-dimensional if $n$ is the highest dimension of the cells in $X$.

Notice that every simple graph $\Gamma$ can be treated as a regular cell complex with
vertices as $0$-cells and edges as $1$-cells. If a graph contains
loops, these loops are irregular $1$-cells (the two points that comprise the boundary of
$B_{1}$ are attached to a single vertex of  the $0$-skeleton). The product
$\Gamma^{\times n}$ inherits a cell-complex structure; its cells
are cartesian products of cells of $\Gamma$.

%However, the spaces $F_{n}(\Gamma)$
%and $C_{n}(\Gamma)$ are not cell complexes, as the points $\Delta=\{(x_{1},x_{2},\ldots,x_{n}):\exists_{i,j}\, x_{i}=x_{j}\}$
%have been excised from them. Fortunately, there exists a cell complex which can be
%obtained directly from $C_{n}(\Gamma)$ and which has the same homotopy
%type.

\begin{figure}[h]
~~~~~~~~~~~~~~~~~\includegraphics[scale=0.5]{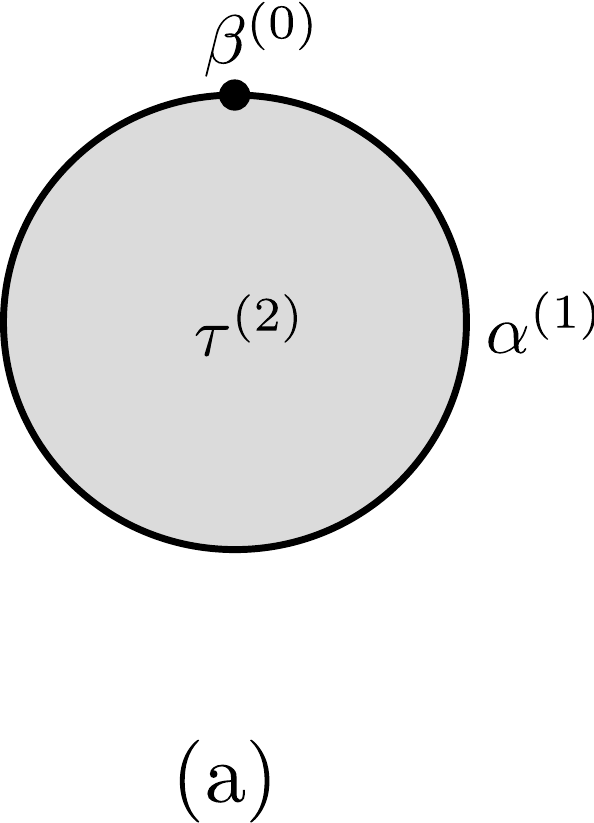}~~~~~~~~~~~~~~~~~~~~~~\includegraphics[scale=0.5]{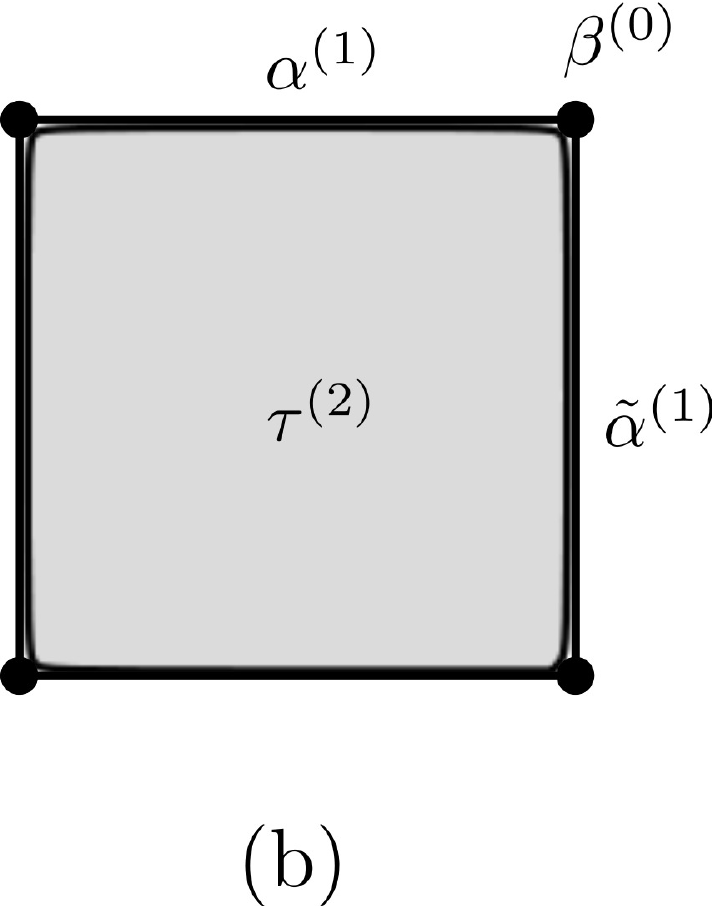}

\caption{Examples of (a) an irregular cell complex.  $\alpha^{(1)}$ is an irregular
1-cell and $\beta^{(0)}$ is an irregular face of $\alpha^{(1)}$, $\tau^{(2)}$ is a regular $2$-cell. (b) A
regular cell complex.}
\end{figure}

\section{Homology groups}
In this section we define homology groups, $H_k(X,\mathbb{Z})$ over the integers of the $n$-dimensional cell complex $X$.  As we will always speak about integer homology we will often write $H_k(X)$ instead of $H_k(X,\mathbb{Z})$.

The construction of homology groups goes in several steps which we now describe.  First, we assign an arbitrary orientation on the cells of $X$. Let $m_p$ be the number of $p$-cells in $X$. The oriented $p$-cells will be denoted by $\{e_1^{(p)},e_2^{(p)}\ldots e_{m_p}^{(p)}\}$. The $p$-chain is a formal linear combination
\begin{equation}\label{p-chain}
c=a_1e_1^{(p)}+a_2e_2^{(p)}+\ldots+ a_{m_p}e_{m_p}^{(p)},
\end{equation}
where coefficients $a_i$ are integers. We denote by $C_p(X)$ the set of all $p$-chains. This set can be given the structure of an abelian group. The addition in this group is defined by
\begin{eqnarray*}
% \nonumber to remove numbering (before each equation)
  c_1&=&a_1e_1^{(p)}+a_2e_2^{(p)}+\ldots+ a_{m_p}e_{m_p}^{(p)}, \\\nonumber
  c_2 &=&b_1e_1^{(p)}+b_2e_2^{(p)}+\ldots+ b_{m_p}e_{m_p}^{(p)}, \\\nonumber
  c_1+c_2 &=& (a_1+b_1)e_1^{(p)}+a_2e_2^{(p)}+\ldots+ (a_{m_p}+b_{m_p})e_{m_p}^{(p)}.\\\nonumber
\end{eqnarray*}
In fact $C_p(X)$ is isomorphic to the direct sum of $m_p$ copies of $\mathbb{Z}$, i.e. $C_p(X)\simeq \mathbb{Z}^{m_p}$. We next consider the boundary map
\begin{equation}
\partial_p:C_p(X)\rightarrow C_{p-1}(X),
\end{equation}
which assigns to an oriented $p$-cell $e_i^{p}$ its boundary (see \cite{Hatcher} for a discussion on boundary maps). The boundary map satisfies $\partial_{p-1}\partial_{p}=0$, that is the boundary of the boundary is zero. This way we arrive at the {\it chain complex} $(C(X)_{\bullet},\partial_{\bullet})$, that is a sequence of abelian groups $C_n(X), \ldots ,C_0(X)$ connected by boundary homomorphisms $\partial_p:C_p(X)\rightarrow C_{p-1}(x)$, such that that composition of any two consecutive maps is zero $\partial_{p-1}\partial_{p}=0$. The standard way to denote a chain complex is the following:
\begin{equation}
C_n(X)\xrightarrow{\partial_n} C_{n-1}\xrightarrow{\partial_{n-1}}\cdots C_2(X)\xrightarrow{\partial_{2}}C_1(X)\xrightarrow{\partial_{1}}C_0(X)\xrightarrow{\partial_{0}}{0}.
\end{equation}
Let us denote by $\mathrm{Ker}(\partial_p)$ and $\mathrm{Im}(\partial_{p})$ the kernel and the image of the boundary map $\partial_p$. The elements of $\mathrm{Ker}(\partial_p)$ are called $p$-cylces and the elements of $\mathrm{Im}(\partial_{p})$ are called $p$-boundaries. The $p$-th homology group is defined as
\begin{equation}\label{pth-homology}
H_p(X)=\mathrm{Ker}(\partial_p)/\mathrm{Im}(\partial_{p+1}),
\end{equation}
so it is a quotient space of $p$-cycles by $p$-boundaries. 

We next present an example of a calculation of homology groups, i.e. we calculate the first homology group of the $2$-torus, $T^2$. To this end we consider the cell complex $X_T$ shown in figure \ref{cell-torus}. In fact, this is the simplest triangularization of $T^2$. The cell complex $X_T$ consists of two $2$-cells, three $1$-cells and one $0$-cell, i.e.
\begin{gather}
C_2(X_T)=\mathrm{Span}_{\mathbb{Z}}\{f_1,f_2\},\\\nonumber
C_1(X_T)=\mathrm{Span}_{\mathbb{Z}}\{e_1,e_2,e_3\},\\\nonumber
C_0(X_T)=\mathrm{Span}_{\mathbb{Z}}\{v\}.
\end{gather}
Therefore we have the following chain complex
\begin{equation}
C_2(X_T)\xrightarrow{\partial_{2}}C_1(X)\xrightarrow{\partial_{1}}C_0(X)\xrightarrow{\partial_{0}}{0}.
\end{equation}
In order to calculate $H_1(X_T,\mathbb{Z})$ we need to calculate the kernel and the image of boundary maps $\partial_1$ and $\partial_2$ respectively. Taking into account the orientation denoted in figure \ref{cell-torus} we have
\begin{gather}
\partial_1 e_1=\partial_1 e_2=\partial_1 e_3 = v-v=0, \\\nonumber
\partial_2 f_1=\partial_2 f_2=e_1-e_2+e_3.
\end{gather}
Therefore
\begin{gather}
\mathrm{Ker}\partial_1 = \mathrm{Span}_{\mathbb{Z}}\{e_1,e_2,e_3\}=\mathrm{Span}_{\mathbb{Z}}\{e_1,e_2,e_1-e_2+e_3\}, \\\nonumber
\mathrm{Im}\partial_2=\mathrm{Span}_{\mathbb{Z}}\{e_1-e_2+e_3\}.
\end{gather}
Using (\ref{pth-homology}) one easily sees that
\begin{equation}
H_1(X_T,\mathbb{Z})=\mathrm{Ker}(\partial_1)/\mathrm{Im}(\partial_{2})=\mathrm{Span}_{\mathbb{Z}}\{[e_1],[e_2]\}\simeq\mathbb{Z}\oplus\mathbb{Z}.
\end{equation}
\begin{figure}[H]
~~~~~~~~~~~~~~~~~\includegraphics[scale=0.5]{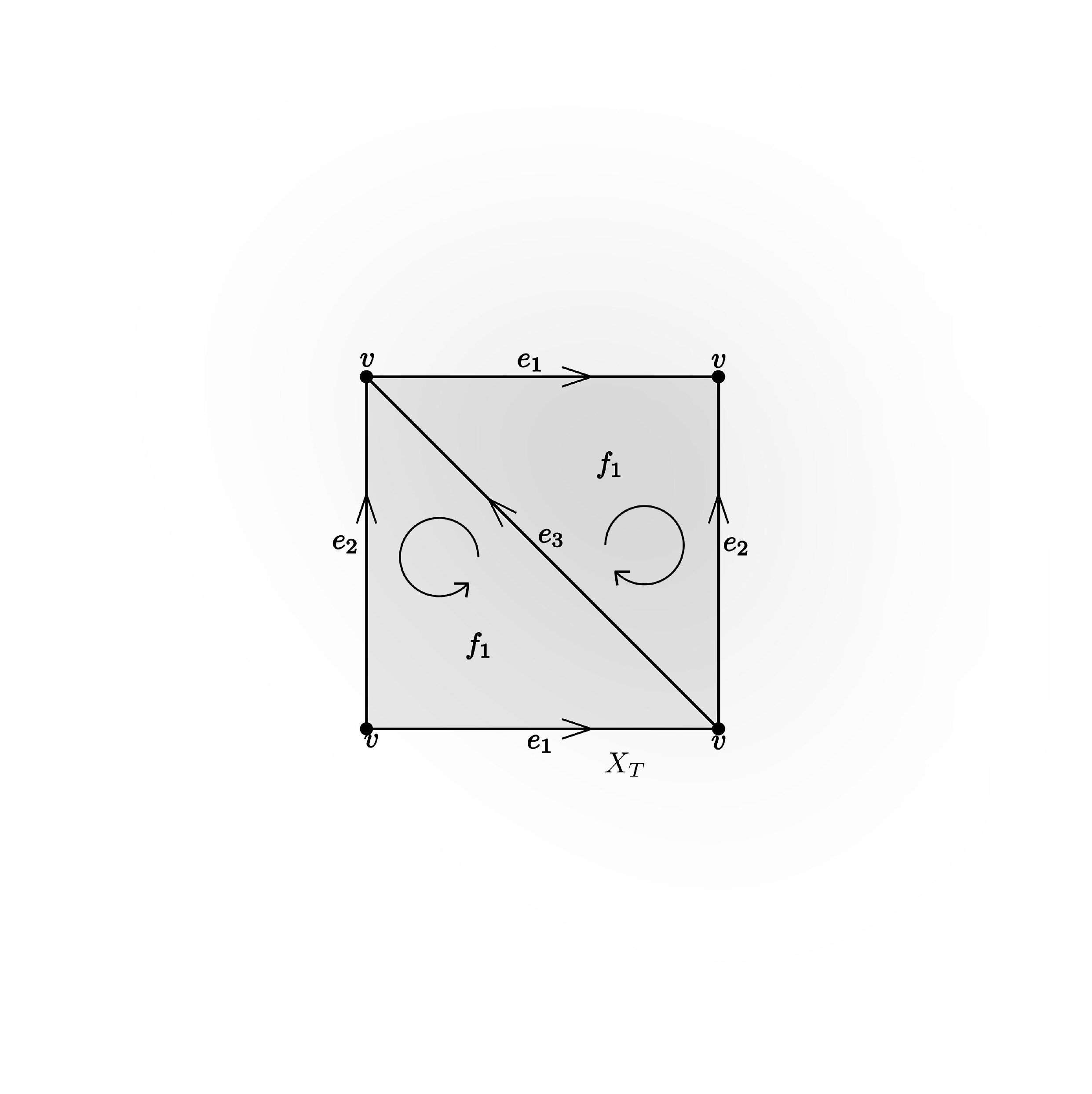}
\caption{\label{cell-torus} The cell complex $X_T$ of the $2$-torus $T^2$}
\end{figure}
\noindent We finish this section by making a statement about the connection between the first homology group $H_1(X)$ and the fundamental group $\pi_1(X)$. This connection is based on the fact that the map $f:[0,1]\rightarrow X$ can be viewed as either a path or a $1$-cell. Moreover if $f$ represents a loop the considered $1$-cell is a cycle. We need to first define the commutator subgroup of $\pi_1(X)$.

Let $G$ be a group and denote by $e$ its identity element. The commutator of two elements $g,h\in G$ is defined as
\begin{equation}
[g,h]=g^{-1}h^{-1}gh.
\end{equation}
Note that $[g,h]=e$ if and only $gh=hg$. The subgroup generated by all commutators of $G$, which we denote by $[G,G]$, is a normal subgroup of $G$. To se this note that if $u\in [G,G]$ and $g\in G$ then $g^{-1}ug=u(u^{-1}g^{-1}ug)=u[u,g]$. By normality of $[G, G]$ the quotient $G/[G,G]$ is a group. This group is called {\it abelianization} of $G$ as it is necessarily abelian. In fact if group $G$ is abelian itself then $G/[G,G]=G$.

The following theorem relates $H_1(X)$ to $\pi_1(X)$
\begin{theorem}\label{abelianization1}
The first homology group $H_1(X)$ is the abelianization of the fundamental group $\pi_1(X)$
\end{theorem}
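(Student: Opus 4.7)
The plan is to construct the Hurewicz homomorphism $h : \pi_1(X,x_0) \to H_1(X)$ explicitly, show it factors through the abelianization $\pi_1(X)/[\pi_1(X),\pi_1(X)]$, and then prove that the induced map is an isomorphism. Throughout I would assume $X$ is path-connected (otherwise $H_1$ splits over the path-components and one argues componentwise).

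First I would set up the map. A loop $\gamma : [0,1] \to X$ based at $x_0$ can be viewed as a singular $1$-simplex, and since $\gamma(0)=\gamma(1)=x_0$ its boundary is zero, so it is a $1$-cycle. I would define $h([\gamma]) = [\gamma] \in H_1(X)$ and then check three compatibility properties: (i) well-definedness, i.e.\ a path-homotopy $F : [0,1]\times[0,1] \to X$ between $\gamma_0$ and $\gamma_1$ yields a $2$-chain whose boundary (after subdividing the square into two triangles and accounting for the degenerate edges that map to $x_0$) equals $\gamma_1 - \gamma_0$; (ii) the identity $h([f\cdot g]) = h([f]) + h([g])$, obtained by exhibiting a $2$-simplex whose boundary is $f + g - (f\cdot g)$, again via a standard triangle-subdivision construction; (iii) compatibility with inverses, which follows from (ii) and the fact that constant loops are null-homologous.

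Since $H_1(X)$ is abelian, $h$ must vanish on every commutator, so it descends to $\bar h : \pi_1(X)/[\pi_1(X),\pi_1(X)] \to H_1(X)$. For surjectivity I would choose, for every point $x \in X$, a fixed path $\lambda_x$ from $x_0$ to $x$ (with $\lambda_{x_0}$ constant). Given any singular $1$-cycle $c = \sum a_i \sigma_i$, I would replace each $\sigma_i : [0,1]\to X$ by the loop $\tilde\sigma_i = \lambda_{\sigma_i(0)} \cdot \sigma_i \cdot \lambda_{\sigma_i(1)}^{-1}$ based at $x_0$. A small $2$-chain argument shows $c$ is homologous to $\sum a_i \tilde\sigma_i$, and the latter lies in the image of $\bar h$. (For a graph or cell complex one can equivalently use Theorem \ref{fundgroupgraph} and run the argument on the $1$-skeleton.)

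The main obstacle is injectivity: one must show that if $\gamma$ is a loop at $x_0$ that bounds some singular $2$-chain $D = \sum b_j \tau_j$, then $[\gamma]$ lies in the commutator subgroup of $\pi_1(X,x_0)$. The plan is to read off $D$ combinatorially: each $2$-simplex $\tau_j$ has three edges, and after conjugating each edge by a chosen path $\lambda$ from $x_0$ to one of its endpoints, the relation $\partial \tau_j = 0$ inside $D$ translates into a word of the form $(\alpha_j \beta_j \gamma_j)$ in $\pi_1(X,x_0)$ that is null-homotopic (because $\tau_j$ itself provides the homotopy). Multiplying the resulting relations together modulo commutators — here one uses that in the abelianization the edges appearing on internal faces cancel in pairs regardless of order — yields $\gamma$ as a product of elements that are trivial in $\pi_1(X)/[\pi_1(X),\pi_1(X)]$. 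Assembling these three steps gives that $\bar h$ is a bijective homomorphism, hence the claimed isomorphism $H_1(X) \cong \pi_1(X)/[\pi_1(X),\pi_1(X)]$.
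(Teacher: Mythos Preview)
The paper does not actually prove Theorem~\ref{abelianization1}; it simply states it as a standard fact from algebraic topology and immediately uses it (combined with Theorem~\ref{fundgroupgraph}) to compute $H_1$ of a graph. This is consistent with the author's stated policy in the introduction of not proving theorems whose proofs are not essential to the main thread.

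Your proposal is the standard proof of the Hurewicz theorem in degree one (essentially the argument in Hatcher~\cite{Hatcher}, \S2.A), and the outline is correct. The three steps --- constructing the Hurewicz map and checking it is a well-defined homomorphism, surjectivity via the path-to-basepoint trick, and injectivity by reading off the boundary relation $\partial D = \gamma$ as a product of null-homotopic words modulo commutators --- are all sound. One small point worth tightening in the injectivity step: when you say the internal edges ``cancel in pairs regardless of order,'' you should make explicit that you are passing to the abelianization \emph{before} doing the cancellation, since in $\pi_1$ itself the edge-loops need not be adjacent in the word; this is exactly where the commutator subgroup is used, and it is good to flag it clearly. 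Otherwise your argument is complete and more than the paper itself offers.
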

By theorem \ref{fundgroupgraph} the fundamental group of a connected graph $\Gamma$ is a free group generated by loops through the deleted edges. Combining this result with theorem \ref{abelianization1} one easily obtains
\begin{theorem}\label{abelianization}
The first homology group $H_1(X)$ of a graph $\Gamma$ is $H_1(\Gamma,\mathbb{Z})=\mathbb{Z}^{\beta_1(\Gamma)}$, where $\beta_1(\Gamma)$ is the number of independent cycles of $\Gamma$ given by formula \ref{adjacency}.
\end{theorem}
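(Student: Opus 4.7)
The plan is to combine the two theorems that immediately precede the statement: Theorem \ref{fundgroupgraph}, which describes $\pi_1(\Gamma)$ explicitly in terms of a spanning tree, and Theorem \ref{abelianization1}, which identifies $H_1$ with the abelianization of $\pi_1$. Since the statement depends only on the isomorphism type of these groups, the entire argument reduces to a routine algebraic computation.

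First I would fix a spanning tree $T \subset \Gamma$ and label the deleted edges $e_1, \ldots, e_r$, where $r = |E| - (|V|-1) = \beta_1(\Gamma)$ by the counting identity $|E_T| = |V|-1$ noted in Section \ref{PTC}. By Theorem \ref{fundgroupgraph}, the homotopy classes $[e_1], \ldots, [e_r]$ form a free basis for $\pi_1(\Gamma)$, so $\pi_1(\Gamma) \cong F_r$, the free group on $r$ generators.

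Next I would invoke Theorem \ref{abelianization1} to write
\begin{equation*}
H_1(\Gamma, \mathbb{Z}) \;\cong\; \pi_1(\Gamma)/[\pi_1(\Gamma), \pi_1(\Gamma)] \;\cong\; F_r/[F_r, F_r].
\end{equation*}
The remaining content is then the standard fact that the abelianization of a free group of rank $r$ is the free abelian group of rank $r$, namely $F_r/[F_r,F_r] \cong \mathbb{Z}^r$. One way to see this cleanly is via the universal property: any map from the generators of $F_r$ to an abelian group $A$ extends uniquely to a homomorphism $F_r \to A$ factoring through $F_r/[F_r,F_r]$, and the same universal property characterizes $\mathbb{Z}^r$ among abelian groups generated by $r$ elements, so the two quotients must be isomorphic.

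There is no real obstacle here beyond bookkeeping: all the topological content has been outsourced to the two cited theorems, and the only step with any mathematical substance is identifying $F_r^{\mathrm{ab}}$ with $\mathbb{Z}^r$, which is a one-line application of the universal property of free groups and free abelian groups. I would close the argument by substituting $r = \beta_1(\Gamma) = |E|-|V|+1$ to recover the stated formula.
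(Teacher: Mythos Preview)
Your proposal is correct and follows exactly the approach the paper takes: the text immediately preceding the theorem explicitly says to combine Theorem~\ref{fundgroupgraph} (that $\pi_1(\Gamma)$ is free on the deleted edges) with Theorem~\ref{abelianization1} (that $H_1$ is the abelianization of $\pi_1$). Your write-up simply spells out the one remaining algebraic step, $F_r^{\mathrm{ab}}\cong\mathbb{Z}^r$, that the paper leaves implicit.
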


\section{Structure theorem for finitely generated Abelian groups}
Homology groups of a finite cell complex $X$ are finitely generated Abelian groups. Therefore, in this section we discuss the structure theorem for these kind of groups.
\subsection{Finitely generated Abelian groups}
Let $\{G,+\}$ be an Abelian group. We put $0$ as the neutral element of $G$ and $-x$ denotes the inverse of $x$. For $x\in G$ and $n\in \mathbb{Z}$ we put
\begin{displaymath}
nx=\left\{
              \begin{array}{cc}
               \underbrace{ x+\ldots+x}_{n} &  n>0, \\
                \underbrace{ -x+\ldots+(-x)}_{n} & n<0, \\
                nx=0 & n=0.\\
              \end{array}\right.
\end{displaymath}
It is easy to see that for any chosen $x_1,\ldots,x_k \in G$ the set of elements
\begin{equation}
\{n_1x_1+\ldots +n_k x_k:\,\forall i\, n_i\in \mathbb{Z}\},
\end{equation}
is the subgroup of $G$ generated by $x_1,\ldots,x_k \in G$. We say that $x_1,\ldots,x_k \in G$ are linearly independent if $n_1x_1+\ldots +n_k x_k=0$ if and only if $\forall i\, n_i=0$.
\begin{definition}
A group $G$ is a finitely generated Abelian group if and only if there are elements $x_1,\ldots,x_k \in G$ such that $G=\{n_1x_1+\ldots +n_k x_k:\,\forall i\, n_i\in \mathbb{Z}\}$. Moreover, if the generating elements can be chosen to be linearly independent the group $G$ is a finitely generated {\it free} abelian group isomorphic to
\begin{equation}
G\simeq\underbrace{\mathbb{Z}\oplus\ldots \oplus\mathbb{Z}}_k=\mathbb{Z}^{k}. 
\end{equation} 
The number $k$ is called the rank of $G$. 
\end{definition}
\noindent For a finitely generated Abelian group which is not free there are some relations between generating elements. In order to describe these relations we will need the following
\begin{theorem}\label{subgroupOfZ}
Let $G$ be a finitely generated free abelian group of rank $p$ and $H$ a subgroup. There always exists choice of generators $x_1,\ldots,x_l$, $l\leq p$ in $G$ such that $H=k_1x_1+\dots+k_lx_l$, i.e. $H$ can be expressed in the form:
\begin{equation}
H=k_1\mathbb{Z}\oplus\ldots\oplus k_l\mathbb{Z},
\end{equation}
where $k_1\mathbb{Z}=\{x\in \mathbb{Z}:\,k_1|x\}$ and $\forall i\, k_i|k_{i+1}$ and $l\leq p$.
\end{theorem}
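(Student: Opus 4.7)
The plan is to reduce the theorem to the computation of the Smith normal form of an integer matrix, together with a preliminary step establishing that a subgroup of $\mathbb{Z}^p$ is itself free of rank at most $p$.

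First, I would show by induction on $p$ that every subgroup $H$ of $\mathbb{Z}^p$ is finitely generated and free of rank $m\le p$. The base case $p=1$ is the elementary fact that every subgroup of $\mathbb{Z}$ is of the form $k\mathbb{Z}$. For the inductive step, consider the projection $\pi:\mathbb{Z}^p\to\mathbb{Z}$ onto the last coordinate. Then $\pi(H)$ is either trivial or $k\mathbb{Z}$ for some $k\ge 1$, and $H\cap\ker\pi$ sits inside $\mathbb{Z}^{p-1}$, hence is free of rank at most $p-1$ by the inductive hypothesis. Choosing $h\in H$ with $\pi(h)=k$ (if $\pi(H)\ne 0$) and a basis of $H\cap\ker\pi$, one checks that their union is a $\mathbb{Z}$-basis of $H$. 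This yields $H\cong\mathbb{Z}^m$ with $m\le p$.

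Next, once $H$ is known to be free of rank $m$, pick any basis $y_1,\ldots,y_p$ of $G$ and any basis $h_1,\ldots,h_m$ of $H$. Writing each $h_j$ in the basis of $G$ produces a $p\times m$ integer matrix $A$ whose columns express the generators of $H$. The heart of the proof is then the Smith normal form algorithm applied to $A$. Left multiplication by $\mathrm{GL}_p(\mathbb{Z})$ corresponds to a change of basis in $G$, and right multiplication by $\mathrm{GL}_m(\mathbb{Z})$ corresponds to a change of basis in $H$; both operations preserve the pair $(G,H)$ up to isomorphism. The algorithm proceeds as follows: locate the nonzero entry of smallest absolute value, move it to position $(1,1)$ by row/column swaps, and use the Euclidean algorithm (subtracting integer multiples of row $1$ from other rows, and similarly for columns) to clear the rest of the first row and first column. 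A standard minimality argument shows that if some remaining entry $a_{ij}$ is not divisible by the pivot $k_1$, adding row $i$ to row $1$ and performing division produces a strictly smaller pivot, contradicting minimality; hence $k_1$ divides every entry of the resulting block matrix. Iterating on the $(p-1)\times(m-1)$ submatrix in the lower right yields the diagonal form $\mathrm{diag}(k_1,k_2,\ldots,k_l,0,\ldots,0)$ with $k_1\mid k_2\mid\cdots\mid k_l$ and $l\le\min(p,m)\le p$.

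Reading off the columns of the transformed basis of $G$ gives generators $x_1,\ldots,x_p$ of $G$ such that $k_1x_1,\ldots,k_lx_l$ is a basis of $H$, which is the asserted decomposition $H\cong k_1\mathbb{Z}\oplus\cdots\oplus k_l\mathbb{Z}$. The main technical obstacle is the justification that the reduction procedure terminates and produces the divisibility chain $k_i\mid k_{i+1}$; the termination follows because each pivot-reduction step strictly decreases the minimum absolute value of the nonzero entries, while the divisibility at each stage is forced by the minimality of the chosen pivot, as sketched above. Everything else is bookkeeping: verifying that elementary integer row and column operations are invertible over $\mathbb{Z}$ and therefore correspond to genuine changes of basis.
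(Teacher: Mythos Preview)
The paper does not actually prove this theorem; it is stated without proof as a standard algebraic fact and then invoked in the proof of the structure theorem for finitely generated abelian groups that follows. Your proposal via Smith normal form, preceded by the inductive argument that a subgroup of $\mathbb{Z}^p$ is free of rank at most $p$, is a correct and entirely standard proof of the result, so there is nothing to compare against.
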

\noindent The structure theorem of an arbitrary finitely generated Abelian group reads:
\begin{theorem}
Let $G$ be a finitely generated Abelian group. Then
\begin{equation}
G\simeq\underbrace{\mathbb{Z}\oplus\ldots \oplus\mathbb{Z}}_k\oplus \mathbb{Z}_{k_1}\oplus\ldots\oplus\mathbb{Z}_{k_l}=\mathbb{Z}^{k}\oplus\mathbb{Z}_{k_1}\oplus\ldots\oplus\mathbb{Z}_{k_l},
\end{equation}
where $k_i|k_{i+1}$ for all $i\in\{1,\ldots,l\}$.
\end{theorem}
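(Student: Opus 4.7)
The plan is to reduce the theorem to Theorem \ref{subgroupOfZ} by realizing any finitely generated abelian group $G$ as a quotient of a free abelian group, and then applying the classification of subgroups of $\mathbb{Z}^p$ to the kernel.

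First, I would pick a generating set $g_1,\dots,g_p$ for $G$ and define a homomorphism $\varphi:\mathbb{Z}^p\to G$ by sending the $i$-th standard generator $e_i$ to $g_i$ and extending $\mathbb{Z}$-linearly. Since the $g_i$ generate $G$, the map $\varphi$ is surjective, so by the first isomorphism theorem
\begin{equation*}
G\simeq\mathbb{Z}^p/H,\qquad H:=\ker\varphi\subseteq\mathbb{Z}^p.
\end{equation*}
The subgroup $H$ encodes precisely the relations among the chosen generators, so the task reduces to understanding quotients of $\mathbb{Z}^p$ by arbitrary subgroups.

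Next I would apply Theorem \ref{subgroupOfZ} to the pair $(\mathbb{Z}^p,H)$. This gives a new basis $x_1,\dots,x_p$ of $\mathbb{Z}^p$ and integers $k_1\mid k_2\mid\dots\mid k_l$ with $l\leq p$ such that $H=k_1\mathbb{Z} x_1\oplus\dots\oplus k_l\mathbb{Z} x_l$, with the remaining basis directions $x_{l+1},\dots,x_p$ contributing nothing to $H$. Because the decomposition of $\mathbb{Z}^p$ along the basis $x_1,\dots,x_p$ is compatible with the decomposition of $H$, the quotient splits as a direct sum, giving
\begin{equation*}
G\simeq\mathbb{Z}^p/H\simeq\bigoplus_{i=1}^{l}\mathbb{Z}/k_i\mathbb{Z}\;\oplus\;\mathbb{Z}^{p-l}.
\end{equation*}
Setting $k:=p-l$ and discarding those indices $i$ for which $k_i=1$ (which contribute trivial summands $\mathbb{Z}/\mathbb{Z}=0$), and relabeling the remaining $k_i$ in increasing order so that the divisibility chain $k_i\mid k_{i+1}$ is preserved, produces exactly the desired form.

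The main obstacle in this approach is Theorem \ref{subgroupOfZ} itself; everything else is a clean application of the first isomorphism theorem and direct-sum compatibility. Proving Theorem \ref{subgroupOfZ} is essentially the content of Smith normal form, obtained by performing row and column operations on a matrix of generators of $H$ with respect to some initial basis of $\mathbb{Z}^p$, repeatedly using the Euclidean algorithm to clear the $(1,1)$ entry down to the gcd of the matrix, and then inducting on the dimension. Once the divisibility $k_i\mid k_{i+1}$ is built into the statement, the uniqueness of the $k_i$ (not needed for existence) follows from computing the invariant factors $\gcd$-wise from minors, but for the present theorem only existence is required and the above reduction suffices.
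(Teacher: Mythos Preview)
Your proposal is correct and follows essentially the same approach as the paper: realize $G$ as a quotient $\mathbb{Z}^p/\ker\varphi$ via a surjection from a free abelian group, then invoke Theorem~\ref{subgroupOfZ} to put the kernel in the form $k_1\mathbb{Z}\oplus\cdots\oplus k_l\mathbb{Z}$ with $k_i\mid k_{i+1}$, and read off the direct-sum decomposition of the quotient. Your write-up is in fact slightly more careful than the paper's, noting explicitly the discarding of trivial $k_i=1$ summands and the Smith-normal-form underpinning of Theorem~\ref{subgroupOfZ}.
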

\begin{proof}
Let $x_1,\ldots,x_p$ be generating elements of $G$. The map
\begin{gather}
f:\underbrace{\mathbb{Z}\oplus\ldots \oplus\mathbb{Z}}_p\rightarrow G\\
f(n_1,\ldots,n_p)=n_1x_1+\ldots+n_px_p,
\end{gather}
is a surjective homomorphism between Abelian groups and therefore by the first isomorphism theorem $G=\underbrace{\mathbb{Z}\oplus\ldots \oplus\mathbb{Z}}_p/\mathrm{Ker}(f)$. But the kernel of $f$ is a subgroup of $\underbrace{\mathbb{Z}\oplus\ldots \oplus\mathbb{Z}}_p$. By theorem \ref{subgroupOfZ}
\begin{equation}
\mathrm{Ker}(f)=k_1\mathbb{Z}\oplus\ldots\oplus k_l\mathbb{Z},
\end{equation}
for some $l\leq p$. Hence,
\begin{gather}
G=\underbrace{\mathbb{Z}\oplus\ldots \oplus\mathbb{Z}}_p/k_1\mathbb{Z}\oplus\ldots\oplus k_l\mathbb{Z}=\\
\underbrace{\mathbb{Z}\oplus\ldots \oplus\mathbb{Z}}_k\oplus \mathbb{Z}_{k_1}\oplus\ldots\oplus\mathbb{Z}_{k_l}=\\
\mathbb{Z}^{k}\oplus\mathbb{Z}_{k_1}\oplus\ldots\oplus\mathbb{Z}_{k_l},
\end{gather}
where $k=p-l$.
\end{proof}

\section{Topology of configuration spaces and quantum statistics}\label{topologyAndStaistics}
In this section we define configuration spaces, discuss their basic properties and relate them to quantum statistics.

Let us denote by $M$ the one-particle classical configuration space (e.g., an $m$-dimensional manifold)
and by
\begin{eqnarray}
F_{n}(M)=\{(x_{1},\, x_{2},\ldots,\, x_{n})\,:x_{i}\in X,\, x_{i}\neq x_{j}\},
\end{eqnarray}
the space of $n$ distinct points in $M$. The $n$-particle
configuration space is defined as an orbit space
\begin{eqnarray}
C_{n}(M)=\nicefrac{F_{n}(M)}{S_{n}},
\end{eqnarray}
where $S_{n}$ is the permutation group of $n$ elements and the action
of $S_{n}$ on $F_{n}(M)$ is given by
\begin{eqnarray}
\sigma(x_{1}\,,\ldots\, ,x_{n})=(x_{\sigma^{-1}(1)}\,,\ldots\,, x_{\sigma^{-1}(n)}),\,\,\,\,\forall\sigma\in S_{n}.
\end{eqnarray}
Any closed loop in $C_{n}(M)$ represents a process in which particles start at some particular configuration and end up in the same configuration modulo that they might have been exchanged. As explained in section \ref{fund-group} the space of all loops up to continuous deformations equipped with loop composition is the fundamental group $\pi_{1}(C_{n}(M))$.
%In order to quantize the system of $n$ free indistinguishable
%particles described by the configuration space $C_{n}(M)$ one defines
%position $q_{i}$ and momentum $p_{i}$ operators acting on the space
%of wave functions and satisfying canonical commutation relations,
%i.e.,

%\begin{eqnarray}\label{cond}
%[q^{i},\, q^{j}]=0=[p_{i},\, p_{j}],\,\,\,\,[q^{i},\, p_{j}]=\delta_{j}^{i}.
%\end{eqnarray}
%It is the matter of straightforward calculations to show that operators

%\begin{eqnarray}
%\left(q^{i}\Psi\right)(q_{1},\ldots,q_{mn})=q^{i}\Psi(q_{1},\ldots,q_{mn}),\\
%\left(p_{i}\Psi\right)(q_{1},\ldots,q_{mn})=\left(\frac{\partial}{\partial q^{i}}\Psi+\omega_{i}\right)(q_{1},\ldots,q_{mn}),
%\end{eqnarray}
%where $d\omega=0$ for $\omega=\sum_{i}\omega_{i}dq^{i}$ fulfills conditions (\ref{cond}). When
%particles go around closed loop $C:[0,\, T]\rightarrow C_{n}(M)$
%the wave function transforms according to

%\begin{eqnarray}
%\left(e^{i\oint_{C}dq^{i}p_{i}}\Psi\right)(q)=\Psi(q)e^{i\oint_{C}\omega}=\Psi(q)e^{i\phi}.
%\end{eqnarray}
%Notice that in case when $\pi_{1}(C_{n}(M))$ is trivial
%\begin{eqnarray}
%\phi=\oint_{C}\omega=\int_{S}d\omega=0,
%\end{eqnarray}
%where $C=\partial S$, so there is no change of the phase during exchange
%of particles. However, if fundamental group is nontrivial it is possible
%to have $\phi=\neq0$. Moreover, the phase $\phi$ is
%invariant under continuous deformations of curve $C$.

\noindent The abelianization of the fundamental group is the first homology group $H_{1}(C_{n}(M))$, and its structure plays an important role in the characterization of quantum statistics. In order to clarify this idea we will first consider the well-known problem of quantum statistics of many particles in $\mathbb{R}^{m}$, $m\geq2$. We will describe fully both the fundamental and homology groups of $C_{n}(\mathbb{R}^{m})$ for $m\geq2$, showing that for $m\geq3$, the only possible statistics are bosonic and fermionic, while for $m=2$ anyon statistics emerges. 
\subsection{Quantum statistics for $C_{n}(\mathbb{R}^{m})$}
\paragraph{The case $M=\mathbb{R}^{m}$ and $m\geq3$.}

When $M=\mathbb{R}^{m}$ and $m\geq3$ the fundamental group $\pi_{1}(F_{n}(\mathbb{R}^{m}))$
is trivial, since there are enough degrees of freedom to avoid coincident configurations during the continuous contraction of any loop. Let us recall that we have a natural action of the permutation group $S_{n}$ on $F_{n}(\mathbb{R}^{m})$
which is free%
\footnote{The action of a group $G$ on $X$ is free iff the stabilizer of any $x\in X$
is the neutral element of $G$ .%
}. In such a situation the following theorem holds \cite{Hatcher}.

\begin{theorem}If an action of a finite group $G$ on a space $Y$ is
free then $G$ is isomorphic to $\nicefrac{\pi_{1}(\nicefrac{Y}{G})}{p_{\ast}(\pi_{1}(Y))}$, where $p:Y\rightarrow\nicefrac{Y}{G}$ is the natural projection and
$p_{\ast}:\pi_{1}(Y)\rightarrow\pi_{1}(\nicefrac{Y}{G})$ is the induced map of fundamental groups.
%\begin{enumerate}
%\item The quotient map $p:Y\rightarrow\nicefrac{Y}{G}$, is a normal covering
%space.
%\item $G$ is isomorphic to $\nicefrac{\pi_{1}(\nicefrac{Y}{G})}{p_{\ast}(\pi_{1}(Y))}$
%\end{enumerate}
\end{theorem}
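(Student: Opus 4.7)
The plan is to use elementary covering space theory. Since $G$ is finite and acts freely (and, assuming $Y$ is Hausdorff, therefore properly discontinuously), the quotient map $p \colon Y \to Y/G$ is a regular covering map whose fibers are precisely the $G$-orbits, and $G$ coincides with the full group of deck transformations of this covering. Granted this, I would construct an explicit surjective homomorphism $\Phi \colon \pi_{1}(Y/G,[y_{0}]) \to G$ whose kernel is $p_{\ast}(\pi_{1}(Y,y_{0}))$ and then invoke the first isomorphism theorem.

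To define $\Phi$, fix a basepoint $y_{0} \in Y$ and let $\gamma \colon [0,1] \to Y/G$ be a loop based at $[y_{0}]$. By the path lifting property for covering maps, $\gamma$ admits a unique lift $\tilde{\gamma}$ in $Y$ with $\tilde{\gamma}(0) = y_{0}$. The endpoint $\tilde{\gamma}(1)$ lies in the fiber $p^{-1}([y_{0}]) = G \cdot y_{0}$, so freeness of the action produces a unique element $g \in G$ with $\tilde{\gamma}(1) = g \cdot y_{0}$; set $\Phi([\gamma]) = g$. Homotopy invariance is immediate from homotopy lifting, since homotopic loops in $Y/G$ lift to homotopic paths with common endpoints, hence produce the same $g$.

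Next I would verify the three algebraic properties of $\Phi$. For the homomorphism property, the unique lift of a concatenation $\gamma_{1} \cdot \gamma_{2}$ starting at $y_{0}$ is $\tilde{\gamma}_{1}$ followed by $g_{1} \cdot \tilde{\gamma}_{2}$: here one uses that $g_{1}$ acts on $Y$ as a deck transformation, so it carries the lift of $\gamma_{2}$ starting at $y_{0}$ to a lift starting at $g_{1} y_{0}$, and the endpoint is $g_{1} g_{2} \cdot y_{0}$. For surjectivity, assuming $Y$ is path-connected, any path from $y_{0}$ to $g \cdot y_{0}$ projects to a loop in $Y/G$ whose $\Phi$-value is $g$. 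For the kernel, a class $[\gamma]$ lies in $\ker \Phi$ if and only if its lift $\tilde{\gamma}$ is itself a loop at $y_{0}$, which is equivalent to $[\gamma] \in p_{\ast}(\pi_{1}(Y,y_{0}))$. Applying the first isomorphism theorem yields the claimed isomorphism.

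The main obstacle I expect is the implicit topological hypothesis underlying covering space theory, namely that $p \colon Y \to Y/G$ really is a covering map. One needs an evenly covered neighbourhood around each point $[y] \in Y/G$: for a free action of a finite group on a Hausdorff space, Hausdorffness lets one separate the finitely many points of the orbit $G \cdot y$ by pairwise disjoint open sets, and a short equivariant shrinking argument then produces the required trivialisation. Once this is secured, path and homotopy lifting are standard covering-space tools, and the remainder is purely formal bookkeeping through the first isomorphism theorem.
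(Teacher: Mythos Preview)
The paper does not actually prove this theorem; it is quoted as a standard result from Hatcher's \emph{Algebraic Topology} and then applied immediately to the case $Y = F_n(\mathbb{R}^m)$, $G = S_n$. Your argument is the standard covering-space proof one finds in that reference: recognise $p\colon Y \to Y/G$ as a regular covering, build the monodromy homomorphism $\Phi$ by path lifting, and identify its kernel with $p_\ast(\pi_1(Y))$. This is correct and is precisely the textbook route the paper is implicitly invoking.

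One small remark: you rightly flag the need for mild hypotheses (Hausdorffness to ensure the free finite action is properly discontinuous, and path-connectedness of $Y$ for surjectivity of $\Phi$). The paper's statement suppresses these, but in the only application made --- $Y = F_n(\mathbb{R}^m)$ with $m \geq 3$ --- they are trivially satisfied, so nothing is at stake.
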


\noindent Notice that in particular if $\pi_{1}(Y)$ is trivial we
get $G=\pi_{1}(\nicefrac{Y}{G})$. In our setting $Y=F_{n}(\mathbb{R}^{m})$
and $G=S_{n}$. The triviality of $\pi_{1}(F_{n}(\mathbb{R}^{m}))$
implies that  the fundamental group of $C_{n}(\mathbb{R}^{m})$ is given
by
\begin{eqnarray}
\pi_{1}(\nicefrac{F_{n}(\mathbb{R}^{m})}{S_{n}})=\pi_{1}(C_{n}(\mathbb{R}^{m}))=S_{n}.
\end{eqnarray}
The homology group $H_{1}(C_{n}(\mathbb{R}^{m})\,,\,\mathbb{Z})$
is the abelianization of $\pi_{1}(C_{n}(\mathbb{R}^{m}))$.  Hence,
\begin{eqnarray}
H_{1}(C_{n}(\mathbb{R}^{m})\,,\,\mathbb{Z})=\mathbb{Z}_{2}.\label{eq:homr3}
\end{eqnarray}
Notice that $H_{1}(C_{n}(\mathbb{R}^{m})\,,\,\mathbb{Z})$ might also be
represented as $(\{1\,,\, e^{i\pi}\},\cdot)$. This result can explain
why we have only bosons and fermions in $\mathbb{R}^{m}$ when $m\geq3$ (see, e.g. \cite{D85} for a detailed discussion).

\paragraph{The case $M=\mathbb{R}^{2}$.}

The case of $M=\mathbb{R}^{2}$ is different as $\pi_{1}(F_{n}(\mathbb{R}^{m}))$
is no longer trivial and it is hard to use Theorem 1 directly. In
fact it can be shown (see \cite{Fox}) that for $M=\mathbb{R}^{2}$
the fundamental group $\pi_{1}(C_{n}(\mathbb{R}^{2}))$ is an Artin braid
group $\mathrm{Br}_{n}(\mathbb{R}^{2})$
\begin{eqnarray}
\mathrm{Br}_{n}(\mathbb{R}^{2})=\langle\sigma_{1},\sigma_{2},\ldots,\sigma_{n-1}\,|\,\sigma_{i}\sigma_{i+1}\sigma_{i}=\sigma_{i+1}\sigma_{i}\sigma_{i+1},\,\sigma_{i}\sigma_{j}=\sigma_{j}\sigma_{i}\rangle,
\end{eqnarray}
where in the first group of relations we take $1\leq i\leq n-2$,
and in the second, we take $|i-j|\geq2.$ Although this group
has a complicated structure, it is easy to see that its abelianization is
\begin{eqnarray}
H_{1}(C_{n}(\mathbb{R}^{2})\,,\,\mathbb{Z})=\mathbb{Z}.
\end{eqnarray}
This simple fact gives rise to a phenomena called anyon statistics \cite{LM77,W90},
i.e., particles in $\mathbb{R}^{2}$ are no longer fermions or bosons
but instead any phase $e^{i\phi}$ can be gained when they are exchanged \cite{D85}.

\section{Graph configuration spaces}\label{graph-statistics}
Here we consider the main problem of this thesis, namely $M=\Gamma$ is a graph. We describe the combinatorial structure of $C_{n}(\Gamma)$.

Let $\Gamma=(V\,,\, E)$ be a metric\footnote{A graph is metric if its edges have assign lengths.} connected simple graph on $|V|$
vertices and $|E|$ edges. Similarly to the previous cases we
define
\begin{eqnarray}
F_{n}(\Gamma)=\{(x_{1},\, x_{2},\ldots,\, x_{n})\,:x_{i}\in\Gamma,\, x_{i}\neq x_{j}\},
\end{eqnarray}
and
\begin{eqnarray}
C_{n}(\Gamma)=F_{n}(\Gamma)/S_{n},
\end{eqnarray}
where $S_{n}$ is the permutation group of $n$ elements. Notice also
that the group $S_{n}$ acts freely on $F_{n}(\Gamma)$, which means that
$F_{n}(\Gamma)$ is the covering space of $C_{n}(\Gamma)$. It seems
{\it a priori} a difficult task to compute $H_{1}(C_{n}(\Gamma))$. Fortunately,
this problem can be reduced to the computation of the first homology group of a cell
complex, which we define now.

\noindent Following \cite{Ghrist}
we define  the $n$-particle combinatorial configuration space as
\begin{eqnarray}
\mathcal{D}^n(\Gamma)=(\Gamma^{\times n}-\tilde{\Delta})/S_{n},
\end{eqnarray}
where $\tilde{\Delta}$ denotes all cells whose closure intersects
with $\Delta$. The space $\mathcal{D}^n(\Gamma)$ possesses a natural cell - complex
structure with vertices as $0$-cells, edges  as $1$-cells, $2$-cells corresponding to moving two particles
along two disjoint
edges in $\Gamma$, and $k$ - cells defined  analogously. The existence of a cell - complex structure happens to be very helpful for investigating
the homotopy structure of the underlying space. Namely, we have the following theorem:
\begin{theorem}\label{Abrams_thm1}\cite{Abrams,PS09}(Abrams) For any graph $\Gamma$
with at least $n$ vertices, the inclusion $\mathcal{D}^n(\Gamma)\hookrightarrow C_{n}(\Gamma)$
is a homotopy equivalence iff the following hold:
\begin{enumerate}
\item Each path between distinct vertices of valence not equal to two passes
through at least $n-1$ edges.
\item Each closed path in $\Gamma$ passes through at least $n+1$ edges.
\end{enumerate}
\end{theorem}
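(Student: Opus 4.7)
The proof naturally splits into proving necessity and sufficiency of the two conditions, with sufficiency being by far the harder direction.

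For necessity, I would argue by contrapositive, producing an explicit obstruction when either condition fails. Suppose condition 1 fails, so there is a simple path $P=(u_0,u_1,\ldots,u_k)$ in $\Gamma$ with $k<n-1$, joining two vertices of valence $\neq 2$ whose internal vertices $u_1,\ldots,u_{k-1}$ all have valence $2$. The plan is to exhibit a continuous loop in $C_n(\Gamma)$ in which $n$ particles slide past each other along $P$ (using the two endpoint junctions to temporarily park particles on adjacent edges) that cannot be represented by any edge path in the 1-skeleton of $\mathcal{D}^n(\Gamma)$, because $\mathcal{D}^n(\Gamma)$ simply lacks the cells needed to accommodate the $\binom{?}{2}$ crossings on so short a segment. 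Similarly, if condition 2 fails with a cycle $C$ of length at most $n$, then placing particles all the way around $C$ in $C_n(\Gamma)$ gives a rotational loop that, in $\mathcal{D}^n(\Gamma)$, has no edge-path representative because no discrete configuration can fit $n$ mutually non-adjacent particles on $C$. These obstructions contradict the inclusion being even a bijection on $\pi_0$ or $\pi_1$ in the appropriate way.

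For sufficiency, the strategy is to construct a deformation retraction $r:C_n(\Gamma)\to\mathcal{D}^n(\Gamma)$. First metrize $\Gamma$ so every edge has unit length and parametrize each edge by $[0,1]$. Given a configuration $\mathbf{x}=(x_1,\ldots,x_n)\in F_n(\Gamma)$, assign to each $x_i$ a ``target vertex'' by a priority rule (for instance, particles nearer to a junction vertex have higher priority to occupy it), and then flow each $x_i$ along its edge toward its target at unit speed, modified locally whenever a collision is imminent so that the lower-priority particle reroutes onto an adjacent edge or waits at an intermediate vertex. The union of these flows defines $r$ at time $1$. Because $S_n$ acts freely on $F_n(\Gamma)$ and the priority rule can be made $S_n$-equivariant (it depends only on positions, not labels), the flow descends to $C_n(\Gamma)$.

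The main obstacle, and where conditions 1 and 2 are used, is in proving that the flow is well-defined and continuous: namely, that a consistent priority assignment exists globally and that when a low-priority particle is bumped, there is always a cell of $\mathcal{D}^n(\Gamma)$ available to receive it. Condition 1 guarantees enough ``linear room'' between branch points to stage $n-1$ simultaneous passings along any subdivided segment without two particles ever landing on the same or adjacent vertices of the discretization; condition 2 is the analogous ``cyclic room'' requirement preventing a jam on any loop. The verification is a combinatorial case analysis over the strata of $F_n(\Gamma)$ indexed by which edges are occupied, and the two numerical conditions are essentially tailored to make this case analysis close. Once $r$ is shown to be continuous, one verifies that it is the identity on $\mathcal{D}^n(\Gamma)$ and that the straight-line homotopy from the identity of $C_n(\Gamma)$ to $\iota\circ r$ stays in $C_n(\Gamma)$; this is automatic because the flow only moves particles along edges they already occupy and never forces two particles to collide, by the choice of priorities. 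An alternative, cleaner presentation would invoke Abrams' subdivision lemma (a sufficiently fine iterated subdivision always produces a homotopy equivalent discrete model) and then show that conditions 1 and 2 are precisely what make $\Gamma$ itself already ``fine enough''; I expect the combinatorial accounting in either route to be the chief source of difficulty.
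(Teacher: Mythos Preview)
The paper does not contain a proof of this theorem. It is stated twice (in the introduction and again in Chapter~2) as a cited result attributed to Abrams' thesis \cite{Abrams} and Prue--Scrimshaw \cite{PS09}, and the paper only supplies an informal heuristic for the two conditions: one needs to be able to accommodate $n$ particles on every edge, and every cycle needs a free vertex to allow exchange. An example (the $Y$-graph for $n=2$) is given to illustrate the homotopy equivalence, but no argument for the general statement appears anywhere in the thesis.

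Consequently there is nothing in the paper to compare your proposal against. For what it is worth, your overall architecture for sufficiency---build an $S_n$-equivariant deformation retraction of $C_n(\Gamma)$ onto $\mathcal{D}^n(\Gamma)$ by flowing particles toward vertices with a local priority/yielding rule---is the shape of Abrams' original argument, and your identification of where the two numerical conditions enter (enough linear room on segments, enough cyclic room on loops) is correct. Your necessity sketch is less convincing as written: saying a loop ``cannot be represented'' in $\mathcal{D}^n(\Gamma)$ is not the same as showing the inclusion fails to be a homotopy equivalence, and in the cycle case one must be careful that $\mathcal{D}^n(\Gamma)$ can even be empty or disconnected when condition~2 fails, which is a $\pi_0$ obstruction rather than a $\pi_1$ one. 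The actual proofs in \cite{Abrams,PS09} carry out the case analysis you anticipate, and it is indeed where the work lies.
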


\noindent For $n=2$  these conditions are automatically satisfied (provided $\Gamma$ is simple).  Intuitively, they can be understood as follows:
\begin{enumerate}
\item In order to have homotopy equivalence between $\mathcal{D}^n(\Gamma)$ and
$C_{n}(\Gamma)$, we need to be able to accommodate $n$ particles on
every edge of graph $\Gamma$.
\item For every cycle there is at least one free (not occupied) vertex which
enables the exchange of particles along this cycle.
\end{enumerate}
Using Theorem \ref{Abrams_thm1}, the problem of finding  $H_{1}(C_{n}(\Gamma))$
is reduced to the problem of computing $H_{1}(\mathcal{D}^n(\Gamma))$. In the following two chapters of the thesis we will discuss
how to determine $H_{1}(\mathcal{D}^n(\Gamma))$. Meanwhile,  to clarify the idea behind theorem \ref{Abrams_thm1} let us consider the following example.
\begin{example}Let $\Gamma$ be a star graph on four vertices (see figure
\ref{fig1}(a)). The two-particle configuration spaces $C_{2}(\Gamma)$ and
$\mathcal{D}^2(\Gamma)$ are shown in figures \ref{fig1}(b),(c). Notice that $C_{2}(\Gamma)$
consists of six $2$ - cells (three are interiors of triangles
and the other three are interiors of squares), eleven $1$ - cells and
six $0$ - cells. Vertices $(1,1)$, $(2,2)$, $(3,3)$ and $(4,4)$
do not belong to $C_{2}(\Gamma)$. Similarly dashed edges, i.e.~$(1,1)-(2,2)$,
$(2,2)-(4,4)$, $(2,2)-(3,3)$ do not belong to $C_{2}(\Gamma)$.
This is why $C_{2}(\Gamma)$ is not a cell complex - not every cell
has its boundary in $C_{2}(\Gamma)$. Notice that cells of $C_{2}(\Gamma)$ whose closures
intersect $\Delta$ (denoted by dashed lines and diamond points) do not
influence the homotopy type of $C_{2}(\Gamma)$ (see  figures \ref{fig1}(b),(c)). Hence, the space $\mathcal{D}^2(\Gamma)$ has
the same homotopy type as $C_{2}(\Gamma)$, but consists of six $1$
- cells and six $0$ - cells. $\mathcal{D}^2(\Gamma)$ is subspace of $C_{2}(\Gamma)$
denoted by dotted lines in figure \ref{fig1}(b). In particular, one can also easily calculate that $H_1(C_{2}(\Gamma))=H_1(\mathcal{D}^2(\Gamma))$.
\begin{figure}[H]
\includegraphics[scale=0.35]{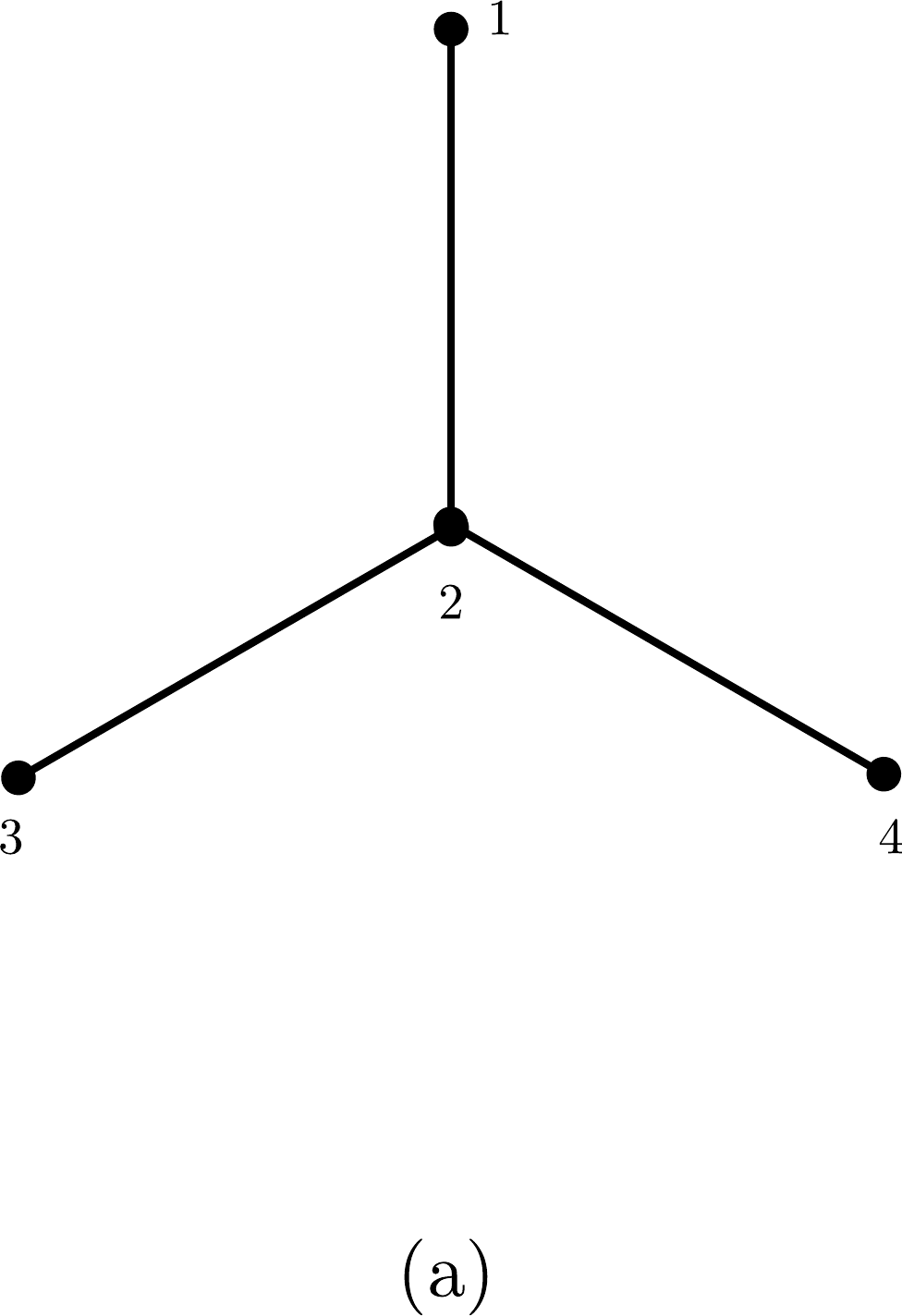}~~~\includegraphics[scale=0.35]{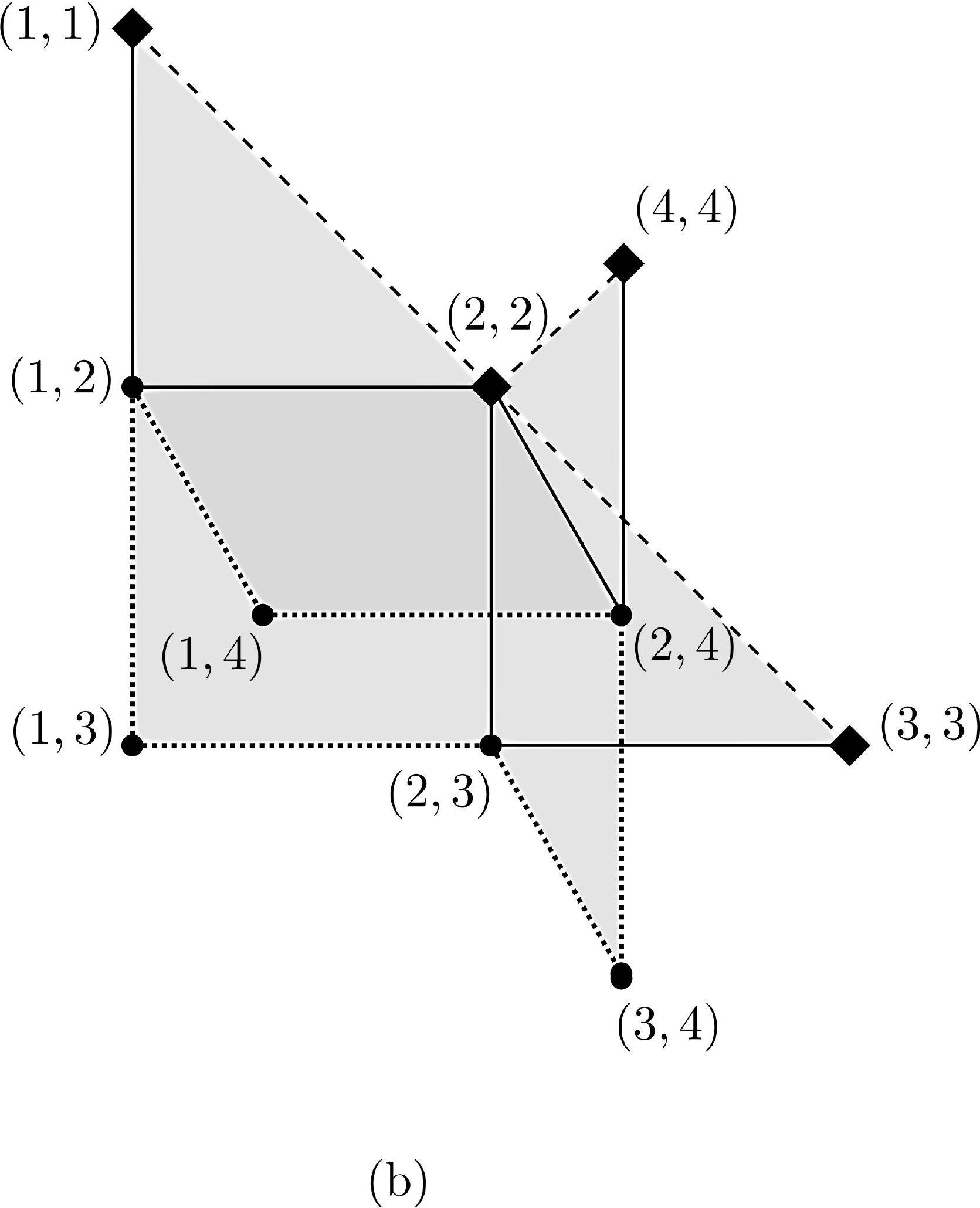}~~~\includegraphics[scale=0.35]{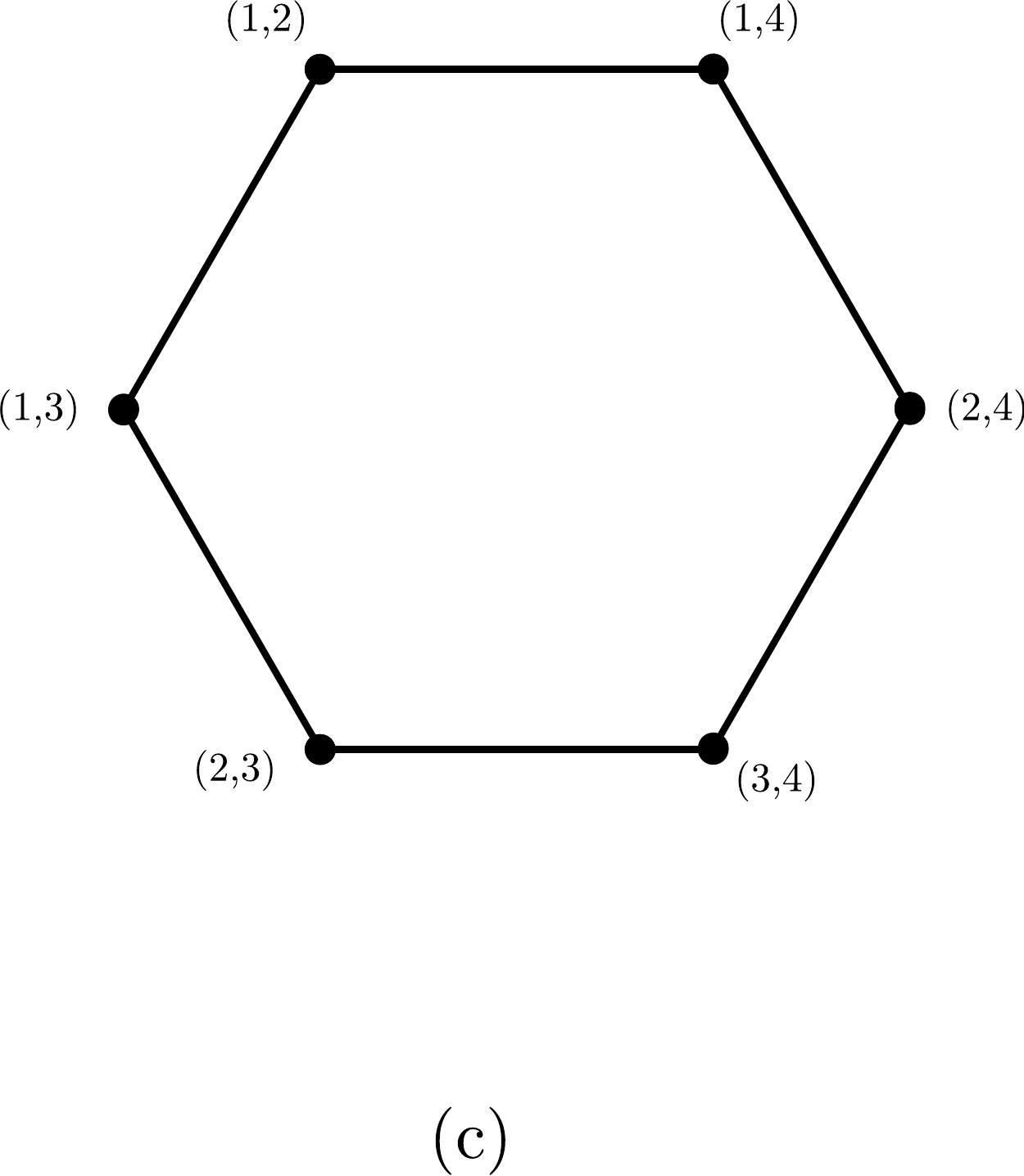}
\caption{\label{fig1} (a) The star graph $\Gamma$, (b) the two-particle configuration space
$C_{2}(\Gamma)$, (c) the two-particle discrete configuration space
$\mathcal{D}^2(\Gamma)$.}
\end{figure}
\end{example}

\section{Quantum graphs}\label{model}

The main attraction of quantum graphs is that they are simple models to study complicated phenomena. 
A metric graph $\Gamma=(V,E)$ is a graph whose edges have assigned lengths. One can consider single-particle quantum mechanics on a metric graph. A particle is described by a collection of wavefunctions, $\{\Psi_e\}_{e\in E}$,  living on the edges of $\Gamma$. On each edge a Hamiltonian, $H_e$, is defined. Typically it is of the form:
\begin{equation}
H_e=\frac{1}{2}(-i\frac{d}{dx}-A_e(x_e))^2+V_e(x_e).
\end{equation}  
In order to ensure selfadjointness of the Hamiltonian, boundary conditions at the vertices are introduced. For a free particle, i.e when $H_e=-\frac{1}{2}\frac{d^2}{dx^2}$ an example is Neumann boundary conditions:
\begin{enumerate}
\item Wavefunctions are continous at vertices.
\item The sum of outgoing derivatives vanishes at each vertex.
\end{enumerate}
The full characterization of boundary conditions for a free particle on a quantum graph was given in \cite{KS99} and is expressed in terms of Lagrangian planes of a certain complex symplectic form. Quantum mechanics of a single particle on a quantum graph is rather well understood. It has bee been recently extensively investigated from various angeles, e.g. superconductivity \cite{G81} quantum chaos \cite{KS97} or Anderson localization \cite{ASW06}. 

For understanding quantum statistics on a metric graph $\Gamma$, which is a topological property of the underlying configuration space $C_n(\Gamma)$, it is irrelevant to know the lengths of the edges of $\Gamma$. That is why in the following we will always treat $\Gamma$ as a $1$-dimensional cell complex. Using the fact that spaces $C_n(\Gamma)$ and $\mathcal{D}^n(\Gamma)$ are homotopic and by definition of the first homology group we need to only study the $2$-skeleton of $\mathcal{D}^n(\Gamma)$. In order to provide some physical intuition we will now describe the following model.

\subsection{Topological gauge potential}

Assume that $\Gamma$ is, as explained in section \ref{graph-statistics}, sufficiently subdivided. Note that $0$-skeleton of $\mathcal{D}^n(\Gamma)$ consists of unordered collections of $n$ distinct vertices of $\Gamma$, $\{v_{i_1},\ldots,v_{i_n}\}$. The connections between $0$-cells are described by $1$-cells of $\mathcal{D}^n(\Gamma)$. Two $0$-cells are connected by an edge iff they have $n-1$ vertices in common and the remaining two vertices are connected by an edge in $\Gamma$. In other words $1$-cells of $\mathcal{D}^n(\Gamma)$ are of the form $v_1\times\ldots\times v_{n-1}\times e$ up to permutations, where $v_j$ are vertices of $\Gamma$ and $e=j\rightarrow k$ is an edge of $\Gamma$ whose endpoints are not $\{v_1,\ldots, v_{n-1}\}$. For simplicity we will use the following notation
\[
\{v_1,\ldots,v_{n-1},j\rightarrow k\}:=v_1\times\ldots\times v_{n-1}\times e.
\]
An $n$-particle gauge potential is a function $\Omega^{(n)}$ defined on the directed edges of $\mathcal{D}^n(\Gamma)$ with the values in $\mathbb{R}$ modulo $2\pi$ such that
\begin{equation}\label{asym1}
\Omega^{(n)}(\{v_1,\ldots,v_{n-1},k\rightarrow j\})=-\Omega^{(n)}(\{v_1,\ldots,v_{n-1},j\rightarrow k\}).
\end{equation}
In order to define $\Omega$ on linear combinations of directed edges we extend (\ref{asym1}) by linearity.

For a given gauge potential, $\Omega^{(n)}$ the sum of its values calculated on the directed edges of an oriented cycle $C$ will be called the flux of $\Omega$ through $C$ and denoted $\Omega(C)$. Two gauge potentials $\Omega_1^{(n)}$ and $\Omega_2^{(n)}$ are called equivalent if for any oriented cycle $C$ the fluxes $\Omega_1^{(n)}(C)$ and $\Omega_2^{(n)}(C)$ are equal modulo $2\pi$.

The $n$-particle gauge potential $\Omega^{(n)}$ is called a {\it topological gauge potential} if for any contractible oriented cycle $C$ in $\mathcal{D}^n(\Gamma)$ the flux $\Omega^{(n)}(C)=0\,\mathrm{mod}\,2\pi$. It is thus clear that equivalence classes of topological gauge potentials are in 1-1 correspondence with the equivalence classes in $H_1(\mathcal{D}^n(\Gamma))$. Therefore, characterization of quantum statistics is characterization of all possible topological gauge potentials. These potentials can be incorporated into the Hamiltonian of a so-called tight-binding model. In short, it is a model whose underlying Hilbert space is spanned by elements of the $0$-skeleton of $\mathcal{D}^{n}(\Gamma)$ and the Hamiltonian is given by the adjacency matrix of the $1$-skeleton of $\mathcal{D}^{n}(\Gamma)$. As $\Omega^{(n)} $ is defined on the edges of $\mathcal{D}^{n}(\Gamma)$ it can be added to the Hamiltonian by changing: $H_{\{v_1,\ldots,v_{n-1},k\rightarrow j\}}\rightarrow H_{\{v_1,\ldots,v_{n-1},k\rightarrow j\}} e^{i\Omega^{(n)}(\{v_1,\ldots,v_{n-1},k\rightarrow j\})}$ (see \cite{JHJKJR} for more detailed discussion in case of two particles).

\chapter{Quantum Statistics on graphs}\label{chQS}

In this chapter we raise the question of what  quantum statistics are possible on quantum graphs. In particular we develop a full characterization of abelian quantum statistics on graphs. We explain how the number of anyon phases is related to connectivity.  For $2$-connected graphs the independence of quantum statistics with respect to the number of particles is proven. For non-planar $3$-connected graphs we identify bosons and fermions as the only possible statistics, whereas for planar  $3$-connected graphs we show that one anyon phase exists. Our approach also yields an alternative proof of the structure theorem for the first homology group of $n$-particle graph configuration spaces. Finally, we determine the topological gauge potentials for $2$-connected graphs.

In order to explore how the quantum statistics picture depends on topology, the case of two indistinguishable particles on a graph was  studied in \cite{JHJKJR} (see also \cite{BE92}). Recall that a graph $\Gamma$ is a network consisting of vertices (or nodes) connected by edges.  Quantum mechanically, one can either consider the one-dimensional Schr\"{o}dinger operator acting on the edges, with matching conditions for the wavefunctions at the vertices, or a discrete Schr\"{o}dinger operator acting on connected vertices (i.e.~a tight-binding model on the graph).  Such systems are of considerable independent interest and their single-particle quantum mechanics has been studied extensively in recent years \cite{Berkolaiko13}.  The extension of this theory to many-particle quantum graphs was another motivation for \cite{JHJKJR} (see also \cite{Bolte13}).  The discrete case turns out to be significantly easier to analyse, and in this situation it was found that a rich array of anyon statistics are kinematically possible.   Specifically, certain graphs were found to support anyons while others can only support fermions or bosons.  This was demonstrated by analysing the topology of the corresponding configuration graphs $C_2(\Gamma)=(\Gamma^{\times 2}-\Delta)/S_{2}$ in various examples.  It opens up the problem of determining general relations between the quantum statistics of a graph and its topology.

As explained in the previous section, mathematically the determination of quantum statistics reduces to finding the first homology group $H_{1}$ of the appropriate classical configuration space, $C_n(M)$.  Although the calculation for $C_n(\mathbb{R}^N)$ is relatively elementary, it becomes a non-trivial task when $\mathbb{R}^{N}$ is replaced by a general graph $\Gamma$. One possible route  is to use discrete Morse theory, as developed by Forman \cite{Forman98}. This is a combinatorial counterpart of classical Morse theory, which applies to cell complexes. In essence, it reduces the problem of finding $H_{1}(M)$, where $M$ is a cell complex, to the construction of certain discrete Morse functions, or equivalently discrete gradient vector fields. Following this line of reasoning
Farley and Sabalka \cite{FS05} defined the appropriate discrete vector fields and gave a formula for the first homology groups of tree graphs.  Recently, making extensive use of discrete Morse theory and some graph invariants, Ko and Park \cite{KP11} extended the results of \cite{FS05} to an arbitrary graph $\Gamma$.  However, their approach relies on a suite of relatively elaborate techniques -- mostly connected to a proper ordering of vertices and choices of trees to reduce the number of critical cells -- and the relationship to, and consequences for, the physics of quantum statistics are not easily identified.

In this chapter we give a full characterization of all possible abelian quantum statistics on graphs. In order to achieve this we develop a new set of ideas and methods which lead to an alternative proof of the structure theorem for the first homology group of the $n$-particle configuration space obtained by Ko and Park \cite{KP11}. Our reasoning, which is more elementary in that it makes minimal use of discrete Morse theory, is based on a set of simple combinatorial relations which stem from the analysis of some canonical small graphs.   The advantage for us of this approach is that it is explicit and direct.  This makes the essential physical ideas much more transparent and so enables us to identify the key topological determinants of the quantum statistics.  It also enables us to develop some further physical consequences. In particular we give a full characterization of the topological gauge potentials on $2$-connected graphs, and  identify some examples of particular physical interest, in which the quantum statistics have features that are subtle.

The chapter is organized as follows. We start with a discussion, in section
\ref{sec:Examples}, of some physically interesting examples of quantum
statistics on graphs, in order to motivate the general theory that follows. In section \ref{sec:Graph-configuration-spaces}
we define some basic properties of graph configuration
spaces. In section \ref{sec:Two-particle-quantum-statistics}
we develop a full characterization of the first homology group for $2$-particle graph
configuration spaces. In section \ref{sec:N-particle-statistics-for}
we give a simple argument for the stabilization of quantum statistics
with respect to the number of particles for $2$-connected graphs.
Using this we obtain the desired result for $n$-particle graph configuration
spaces when $\Gamma$ is $2$-connected. In order to generalize the
result to $1$-connected graphs we consider star and fan graphs.
The  main result is obtained at the end of section \ref{sec:N-particle-statistics-on}.
The first homology group $H_1(C_n(\Gamma))$ is given by the direct sum of a free component, which corresponds to anyon phases and Aharonov-Bohm phases, and a torsion component, which is restricted to be a direct sum of copies of $\mathbb{Z}_2$.  The last part of the chapter is devoted to the characterization
of topological gauge potentials for $2$-connected graphs.

\section{Quantum statistics on graphs\label{sec:Examples}}

In this section we discuss several examples which illustrate  some interesting and surprising aspects of quantum statistics on graphs.  A determining factor turns out to be the {\it connectivity} of a graph.  We recall (cf \cite{tutte01}) that a graph is {\it $k$-connected} if it remains connected after removing any  $k-1$ vertices.  (Note that a $k$-connected graph is also $j$-connected for any $j<k$.)  %every pair of distinct vertices is joined by at least $k$ disjoint paths.
According to Menger's theorem \cite{tutte01}, a graph is $k$-connected if and only if every pair of distinct vertices can be joined by at least $k$ disjoint paths.
% disconnected (i.e., made to belong to different connected components) by removing at most $k$ vertices.
A $k$-connected graph can be decomposed into $(k+1)$-connected components, unless it is complete \cite{Holberg92}.  Thus, a graph may be regarded as being built out of more highly connected components.
Quantum statistics, as we shall see, depends on $k$-connectedness up to $k = 3$.  (Remark: in this thesis, quantum statistics refers specifically to phases involving cycles of two or more particles; phases associated with single-particle cycles, called Aharonov-Bohm phases, are introduced in Section 2.4 below).

\subsection{$3$-connected graphs}

Quantum statistics for any number of particles on a $3$-connected graph depends only on whether the graph is planar, and not on any additional structure.
 We recall that a graph is planar if it can be drawn in the plane without  crossings.
 %For simplicity, let us assume there are $n=2$ particles (as we shall see below, this assumption turns out not to be necessary). .
For planar $3$-connected graphs we will show that the statistics is characterised by a single anyon phase associated with cycles in which a pair of particles exchange positions.  For non-planar $3$-connected graphs, the statistics is either Bose or Fermi -- in effect, the anyon phase is restricted to be $0$ and $\pi$.  Thus, as far as quantum statistics is concerned, three- and higher-connected graphs behave like $\bbR^2$ in the planar case and $\bbR^d$, $d > 2$, in the nonplanar case. A new aspect for graphs is the possibility of combining planar and nonplanar components. The graph shown in  figure~\ref{fig:The-large-almost} consists of a large square lattice in which four cells have been replaced by a defect in the form of a $K_5$ subgraph, the (nonplanar) fully connected graph on five vertices. This local substitution makes the full graph nonplanar, thereby excluding anyon statistics.
\begin{figure}[h]
\begin{center}\includegraphics[scale=0.5]{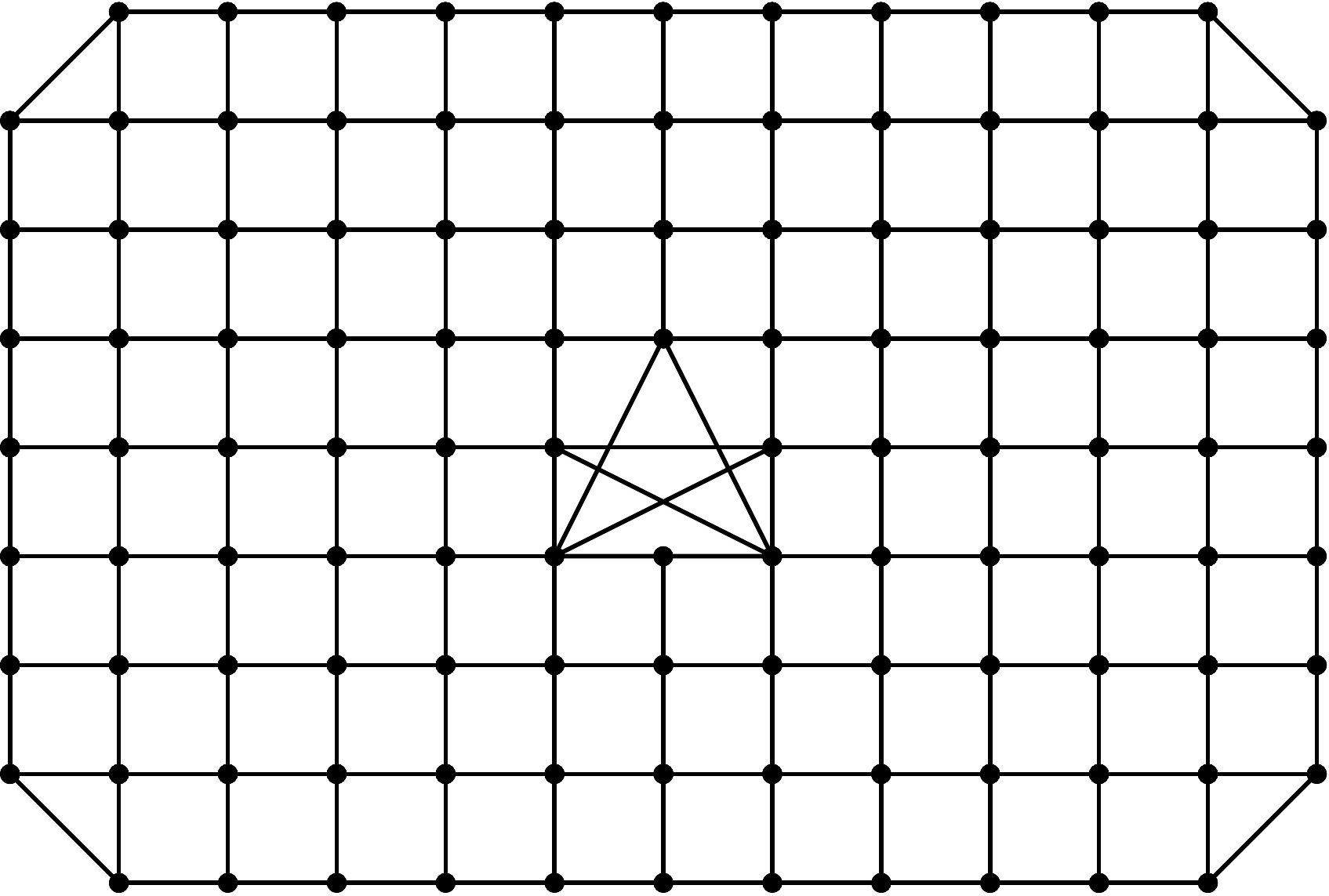}\end{center}
\caption{\label{fig:The-large-almost}The large almost planar $3$-connected
graph. }
\end{figure}

One of the simplest examples of this phenomenon is provided by the graph $G$ shown in figure~\ref{fig: GK5}. $G$ is planar $3$-connected, and therefore supports an anyon phase.  However, if  an additional edge $e$ is added, the resulting graph is $K_5$, and therefore supports only Bose or Fermi statistics.  One can continuously interpolate from a quantum Hamiltonian defined on $K_5$ to one defined by $G$ by introducing  an amplitude coefficient $\epsilon$ for transitions along $e$.  For $\epsilon = 0$, the edge $e$ is effectively absent, and the resulting Hamiltonian is  defined on $G$.  This situation might appear to be paradoxical; how could anyon statistics, well defined for $\epsilon = 0$, suddenly disappear for $\epsilon \neq 0$? The resolution lies in the fact that an anyon phase defined for $\epsilon = 0$ introduces, for $\epsilon \neq 0$, physical effects that cannot be attributed to quantum statistics (unless the phase is $0$ or $\pi$).  The transition between planar and nonplanar geometries, which is easily effected with quantum graphs, merits further study.
\begin{figure}[H]
\begin{center}\includegraphics[scale=0.15]{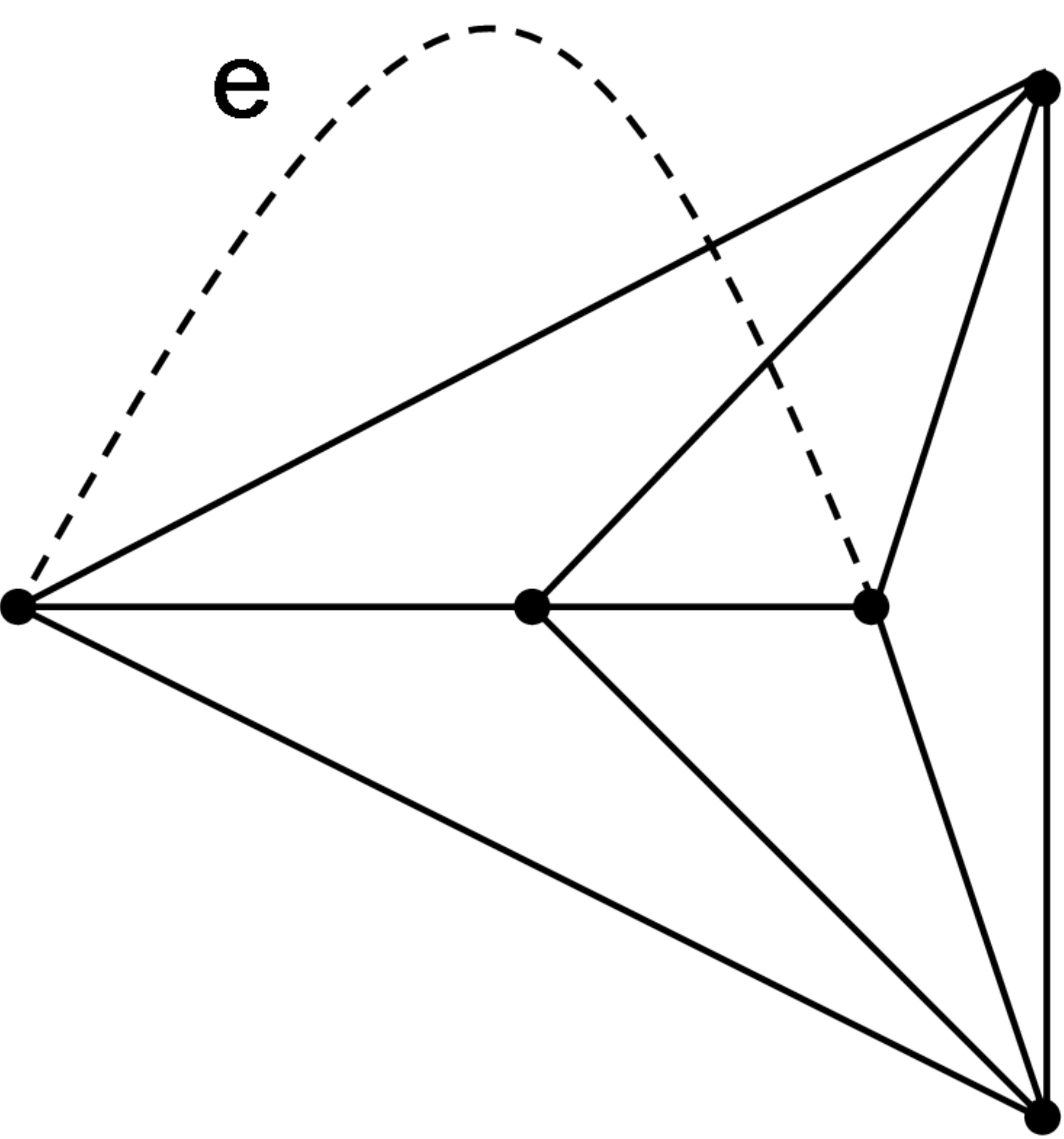}\end{center}
\caption{\label{fig: GK5}The graph $G$ (without  the edge $e$) is planar $3$-connected.  With $e$, the graph is $K_5$. }
\end{figure}

\subsection{$2$-connected graphs}

Quantum statistics on $2$-connected graphs is more complex, and depends on the decomposition of individual  graphs into cycles and $3$-connected components (see Section~\ref{subsec: two-connected}).
%NOTE: PLEASE add \label{subsec: two-connected) to \subsection{Two-connected graphs} -- added
There may be multiple anyon and $\bbZ_2$ (or Bose/Fermi alternative) phases. But $2$-connected graphs share the following important property: their quantum statistics do not depend on the number of particles, and therefore can be regarded as a characteristic of the particle species.  This property is important physically; it means that there is a building-up principle for increasing the number of particles in the system.  This is described in detail in Section~\ref{sec: gauge potential},
%NOTE: PLEASE add \label{sec: gauge potential} to \section{Gauge potentials for two-connected graphs}--added
 where we show how to construct an $n$-particle Hamiltonian from a two-particle Hamiltonian. Interesting examples are also obtained by building $2$-connected graphs out of higher-connected components.  Figure~\ref{fig: BF_chain} shows a chain of identical non-planar 3-connected components. The links between components, represented by lines in figure~\ref{fig: BF_chain}, consist of at least two edges, so that resulting graph is $2$-connected. In this case, the quantum statistics is in fact independent of the number of particles, and may be determined by specifying exchange phases ($0$ or $\pi$) for each component in the chain.  Thus, particles can act as bosons or fermions in different parts of the graph.
\begin{figure}[h]
\begin{center}\includegraphics[scale=1.3]{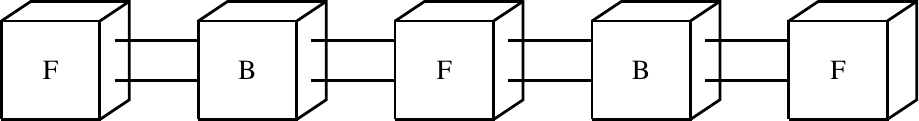}\end{center}
%\begin{center}
%\begin{tikzpicture}
%Block
%\draw[thick] (0,0) -- (0,1) -- (1,1) -- (1,0) -- (0,0);
%\draw[thick] (0,1) -- (.3,1.2) -- (1.3,1.2) -- (1.3,.2) -- (1,0);
%\draw[thick] (1,1) -- (1.3,1.2);
%\draw (0.5,0.5) node {F};
%Block
%\draw[thick] (2,0) -- (2,1) -- (3,1) -- (3,0) -- (2,0);
%\draw[thick] (2,1) -- (2.3,1.2) -- (3.3,1.2) -- (3.3,.2) -- (3,0);
%\draw[thick] (3,1) -- (3.3,1.2);
%\draw (2.5,0.5) node {B};
%Block
%\draw[thick] (4,0) -- (4,1) -- (5,1) -- (5,0) -- (4,0);
%\draw[thick] (4,1) -- (4.3,1.2) -- (5.3,1.2) -- (5.3,.2) -- (5,0);
%\draw[thick] (5,1) -- (5.3,1.2);
%\draw (4.5,0.5) node {F};
%Block
%\draw[thick] (6,0) -- (6,1) -- (7,1) -- (7,0) -- (6,0);
%\draw[thick] (6,1) -- (6.3,1.2) -- (7.3,1.2) -- (7.3,.2) -- (7,0);
%\draw[thick] (7,1) -- (7.3,1.2);
%\draw (6.5,0.5) node {B};
%Block
%\draw[thick] (8,0) -- (8,1) -- (9,1) -- (9,0) -- (8,0);
%\draw[thick] (8,1) -- (8.3,1.2) -- (9.3,1.2) -- (9.3,.2) -- (9,0);
%\draw[thick] (9,1) -- (9.3,1.2);
%\draw (8.5,0.5) node {F};
%chain
%\draw[thick] (1.15,0.4) -- (2,0.4);
%\draw[thick] (1.15,0.8) --  (2,0.8);
%chain
%\draw[thick] (3.15,0.4) -- (4,0.4);
%\draw[thick] (3.15,0.8) --  (4,0.8);
%chain
%\draw[thick] (5.15,0.4) -- (6,0.4);
%\draw[thick] (5.15,0.8) --  (6,0.8);
%chain
%\draw[thick] (7.15,0.4) -- (8,0.4);
%\draw[thick] (7.15,0.8) --  (8,0.8);
%\end{tikzpicture}
%\end{center}
\caption{\label{fig: BF_chain} Linear chain of $3$-connected nonplanar components with alternating Bose and Fermi statistics.}
\end{figure}

\subsection{$1$-connected graphs}

Quantum statistics on graphs achieves its full complexity for 1-connected graphs, in which case it also depends on the number of particles $n$.  A representative example, treated in detail in Section~\ref{sub:The-star-graphs},
is a star graph with $E$ edges, for which the number of anyon phases is given by
\begin{gather*}
\beta_{n}^{E}={n+E-2 \choose E-1}\left(E-2\right)-{n+E-2 \choose E-2}+1,
\end{gather*}
and therefore depends on both $E$ and $n$.

\subsection{Aharonov-Bohm phases}\label{sec:ABphase}

Configuration-space cycles on which one particle moves around a circuit $C$ while the others remain stationary play an important role in the analysis of quantum statistics which follows.  We call these Aharonov-Bohm cycles, and the corresponding phases Aharonov-Bohm phases, because they correspond physically to magnetic fluxes threading $C$.  In many-body systems, Aharonov-Bohm phases and quantum statistics phases can interact in interesting ways.  In particular, Aharonov-Bohm phases can depend on the positions of the  stationary particles.  An example is shown in the two-particle  octahedron graph (see figure~\ref{fig: oct_AB}), in which the  Aharonov-Bohm phase associated with one particle going around the equator depends on whether the second particle is at the north or south pole.  For $3$-connected non-planar graphs, it can be shown that Aharonov-Bohm phases are independent of the positions of the stationery particles.  (The octahedron graph, despite appearances, is planar.)
\begin{figure}[h]
\begin{center}~~~~~\includegraphics[scale=0.2]{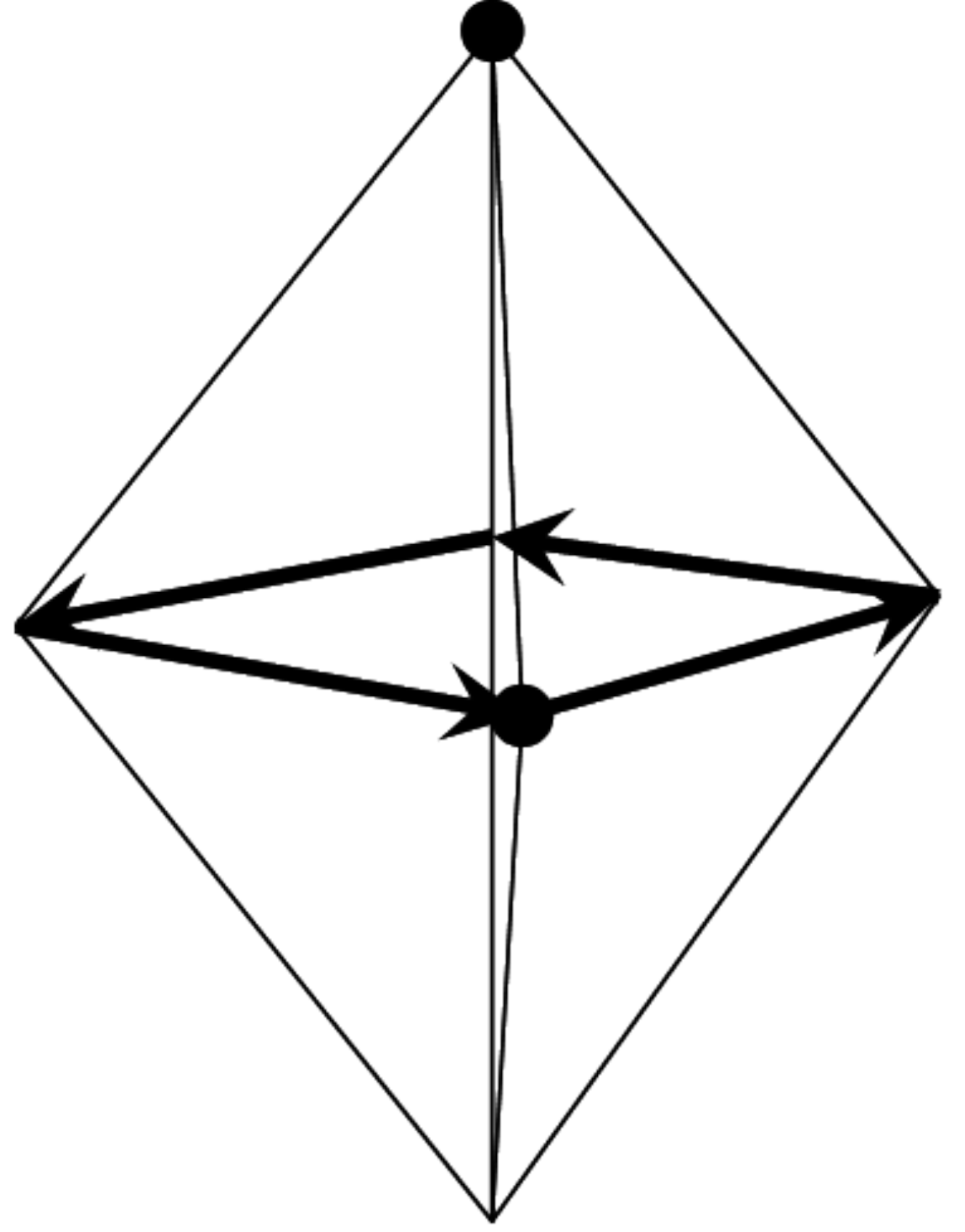}~~~~~~~~~~~~~~~~~~\includegraphics[scale=0.2]{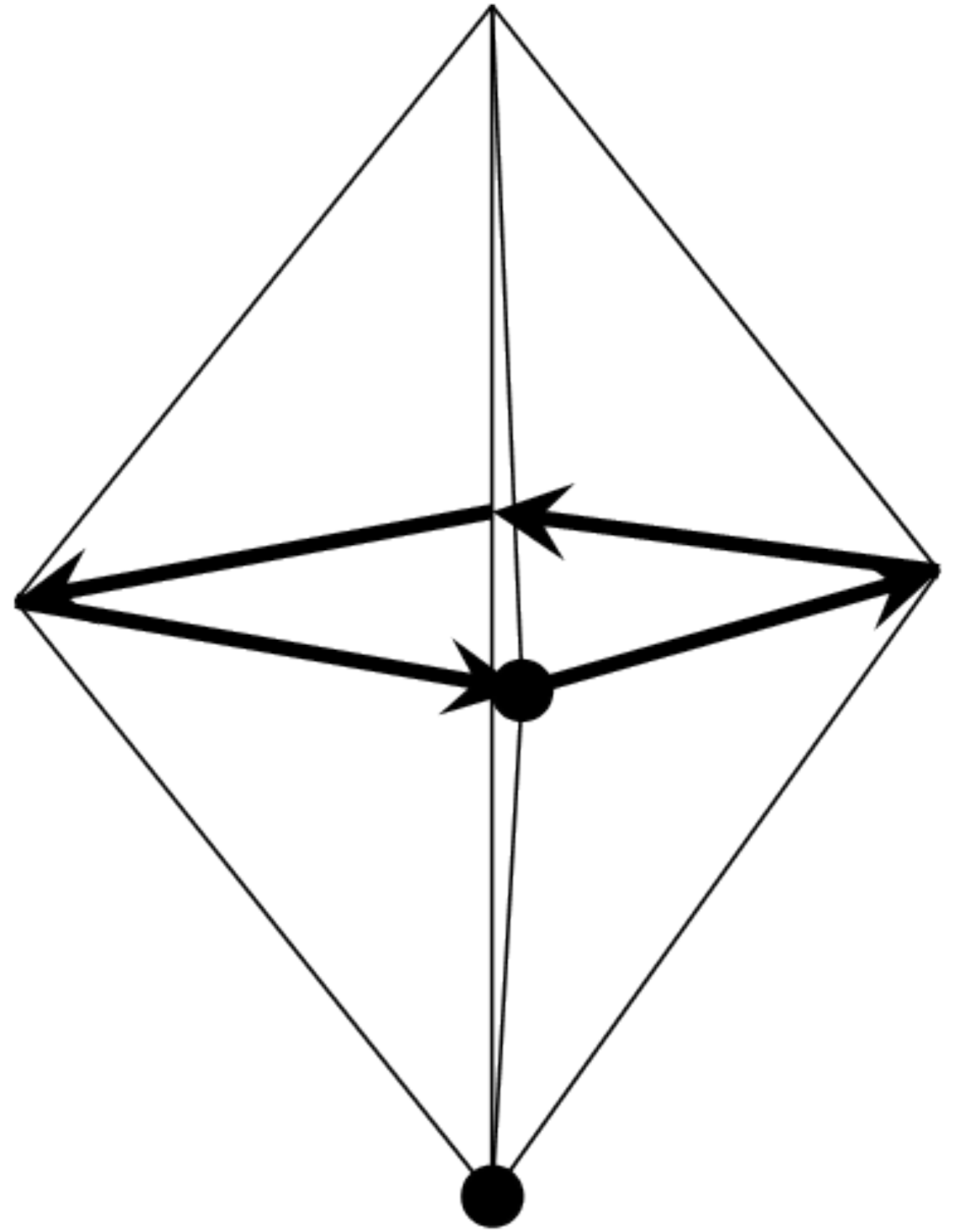}\end{center}
\caption{\label{fig: oct_AB} The Aharonov-Bohm phase for the equatorial cycle depends on whether the second particle is at the north or south pole.}
\end{figure}

\section{Graph configuration spaces\label{sec:Graph-configuration-spaces}}

In this section we repeat some definitions and theorems proved in the introduction. They will play a basic role in the current chapter. Let $\Gamma$ be a metric connected simple graph with $V$
vertices and $E$ edges. In a metric graph edges correspond to finite closed intervals of $\mathbb{R}$.  However, as we will be interested in the topology of the graph, the length of the edges will not play a role in the discussion.
As explained in the introduction an undirected edge between vertices $v_{1}$
and $v_{2}$ will be denoted by $v_{1}\leftrightarrow v_{2}$.  It will also be convenient to be able to label directed edges,
so $v_{1}\rightarrow v_{2}$ and $v_{2}\rightarrow v_{1}$ will denote the directed edges associated with $v_{1}\leftrightarrow v_{2}$.
A path joining two vertices $v_1$ and $v_m$ is then specified by a sequence of $m-1$ directed edges, written $v_1 \rightarrow v_2 \rightarrow \dots \rightarrow v_m$.

We define the \emph{$n$-particle configuration space} as the quotient space
\begin{eqnarray}
C_{n}(\Gamma)=\left(\Gamma^{\times n}-\Delta\right)/S_{n},
\end{eqnarray}
where $S_{n}$ is the permutation group of $n$ elements and
\begin{eqnarray}
\Delta=\{(x_{1},x_{2},\ldots,x_{n}):\exists_{i,j}\, x_{i}=x_{j}\},
\end{eqnarray}
is the set of coincident configurations.
We are interested in the calculation of the first homology group,
$H_{1}(C_{n}(\Gamma))$ of $C_{n}(\Gamma)$. The space $C_{n}(\Gamma)$
is not a cell complex. However, it is homotopy equivalent to the space
$\mathcal{D}^{n}(\Gamma)$, which is a cell complex, defined below.

Recall that a cell complex $X$ is a nested sequence of topological spaces
\begin{eqnarray}
X^{0}\subseteq X^{1}\subseteq\dots\subseteq X^{n},
\end{eqnarray}
where the $X^{k}$'s are the so-called $k$-skeletons defined as follows:
\begin{itemize}
\item The $0$ - skeleton $X^{0}$ is a finite set of points.
\item For $\mathbb{N}\ni k>0$, the $k$ - skeleton $X^{k}$ is the result
of attaching $k$ - dimensional balls $B_{k} = \{x\in\mathbb{R}^{k}\,:\,\|x\|\leq1\}$ to $X^{k-1}$ by gluing
maps
\begin{eqnarray}
\sigma:S^{k-1}\rightarrow X^{k-1},
\end{eqnarray}
where $S^{k-1}$ is the unit-sphere $S^{k-1}=\{x\in\mathbb{R}^{k}\,:\,\|x\|=1\}$.
\end{itemize}
A $k$-cell is the interior of the ball $B_{k}$ attached to the $(k-1)$-skeleton $X^{k-1}$.

Every simple graph $\Gamma$ is naturally a cell complex; the vertices are $0$-cells (points)
and edges are $1$-cells ($1$-dimensional balls whose boundaries are the $0$-cells). The product $\Gamma^{\times n}$ then naturally
inherits a cell complex structure. The cells of $\Gamma^{\times n}$ are
Cartesian products of cells of $\Gamma$. It is clear
that the space $C_{n}(\Gamma)$ is not a cell complex as
points belonging to $\Delta$ have been deleted. Following \cite{Abrams}
we define an \emph{$n$-particle combinatorial configuration space} as
\begin{eqnarray}
\mathcal{D}^{n}(\Gamma)=(\Gamma^{\times n}-\tilde{\Delta})/S_{n},
\end{eqnarray}
where $\tilde{\Delta}$ denotes all cells whose closure intersects
with $\Delta$. The space $\mathcal{D}^{n}(\Gamma)$ possesses a natural
cell complex structure. Moreover,
\begin{theorem}
\label{Abrams_thm}\cite{Abrams} For any graph $\Gamma$ with at
least $n$ vertices, the inclusion $\mathcal{D}^{n}(\Gamma)\hookrightarrow C_{n}(\Gamma)$
is a homotopy equivalence iff the following hold:
\begin{enumerate}
\item Each path between distinct vertices of valence not equal to two passes
through at least $n-1$ edges.
\item Each closed path in $\Gamma$ passes through at least $n+1$ edges.
\end{enumerate}
\end{theorem}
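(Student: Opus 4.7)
The plan is to prove both directions of the equivalence separately. For sufficiency (conditions (1) and (2) imply homotopy equivalence), I would construct an explicit strong deformation retraction of $C_n(\Gamma)$ onto the image of $\mathcal{D}^n(\Gamma)$. For necessity, I would show that if either condition fails, the inclusion is not a homotopy equivalence by exhibiting a cycle or local feature whose class in $H_1$ or $\pi_1$ is realized in one space but not the other.

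For sufficiency, the idea is to work with the product cell structure on $\Gamma^{\times n}$. Open cells disjoint from $\tilde\Delta$ descend, after the $S_n$-quotient, directly into $\mathcal{D}^n(\Gamma)$, so the retraction is the identity there. On cells whose closure intersects $\Delta$ but whose interior lies in $F_n(\Gamma)$, I would push particles along the graph so that they come to rest one-per-vertex with no two at adjacent vertices. Concretely, I would fix a spanning tree $T\subseteq\Gamma$ with a chosen root, orient its edges away from the root, and define a continuous flow that advances each particle in this oriented direction until the configuration is discrete. Condition (1) provides, along any maximal degree-two path, enough vertices to accommodate the particles that flow into it without collision; condition (2) provides, on any cycle, at least one free vertex so that particles clustered on the cycle have room to separate. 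Continuity across cells follows because the flow depends only on the position of each particle and the local combinatorial structure of $T$.

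For necessity, I would exhibit concrete obstructions. If condition (2) fails, there is a cycle $C$ with at most $n$ edges. Placing all $n$ particles on $C$ in $C_n(\Gamma)$, the loop that rotates every particle one step around $C$ represents a nontrivial class in $\pi_1(C_n(\Gamma))$. But in $\mathcal{D}^n(\Gamma)$ there is no configuration with all $n$ particles on $C$ for which the combinatorial exchange is realized, because at least one vertex of $C$ must stay unoccupied for a valid edge-traversal; hence the analogous class is absent from $\pi_1(\mathcal{D}^n(\Gamma))$, contradicting homotopy equivalence. If condition (1) fails, some path between vertices of valence $\neq 2$ is too short; then in $C_n(\Gamma)$ particles can temporarily bunch along that path to realize local motions, while $\mathcal{D}^n(\Gamma)$ forbids such configurations combinatorially, again giving a mismatch in local homotopy near the offending path.

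The hard part will be the construction of the deformation retraction in full generality. It is easy to describe a push-off direction for an isolated pair of nearly coincident particles, but the retraction must be defined simultaneously for arbitrarily many particles clustering on arbitrary subgraphs, and must glue continuously across cell boundaries of $\Gamma^{\times n}/S_n$. The spanning-tree orientation gives a coherent ``downstream'' direction, but one still has to verify that during the flow the particles never collide. The clean way to do this is by induction: show that at each instant the number of unresolved coincidences strictly decreases, using condition (1) to guarantee sufficient space in every path and condition (2) to guarantee that a particle can always escape a crowded cycle through a free vertex. Establishing this no-collision invariant, and verifying continuity at the boundary between cells where the combinatorial type of the configuration changes, will be the main technical burden.
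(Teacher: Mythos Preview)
The paper does not prove this theorem. It is stated with a citation to Abrams' thesis \cite{Abrams} (and, in the introduction, also to \cite{PS09}) and is used as a black box to replace $C_n(\Gamma)$ by the combinatorial model $\mathcal{D}^n(\Gamma)$. The only thing the paper offers beyond the statement is the two-sentence heuristic that (1) each edge must be able to accommodate $n$ particles and (2) each cycle needs a free vertex so particles can exchange around it. There is therefore no proof in the paper for your proposal to be compared against.

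As to the proposal itself: the overall shape is the right one and is in the spirit of Abrams' original argument---a deformation retraction for sufficiency, explicit obstructions for necessity---but as written it is a sketch, not a proof. On the sufficiency side, ``push particles along a rooted spanning tree'' is the correct intuition, but you have not explained what happens to a particle that reaches the root, nor how the flow behaves on edges outside the tree, nor why the flow is $S_n$-equivariant so that it descends to the quotient; these are exactly the places where Abrams' thesis spends its effort. On the necessity side, your argument for condition (2) needs more: exhibiting a loop in $C_n(\Gamma)$ is easy, but to conclude the inclusion is not a homotopy equivalence you must show that this class is not in the image of $\pi_1(\mathcal{D}^n(\Gamma))\to\pi_1(C_n(\Gamma))$, not merely that the obvious combinatorial realization is missing. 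Your treatment of condition (1) is vaguer still---``a mismatch in local homotopy near the offending path'' is not an argument. If you intend to actually prove the theorem rather than cite it, you should consult \cite{Abrams} and \cite{PS09} directly.
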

\noindent Following \cite{Abrams,FS05} we refer to a graph $\Gamma$ with properties 1 and 2 as \emph{sufficiently subdivided}. For $n=2$ these conditions are automatically satisfied
(provided $\Gamma$ is simple). Intuitively, they can be understood
as follows:
\begin{enumerate}
\item In order to have homotopy equivalence between $\mathcal{D}^{n}(\Gamma)$
and $C_{n}(\Gamma)$, we need to be able to accommodate $n$ particles
on every edge of graph $\Gamma$. This is done by introducing $n-2$ trivial vertices of degree $2$ to make a line subgraph between every adjacent pair of non-trivial vertices in the original graph $\Gamma$.
\item For every cycle there is at least one free (not occupied) vertex which
enables the exchange of particles around this cycle.
\end{enumerate}
For a sufficiently subdivided graph $\Gamma$ we can now effectively treat $\Gamma$ as a combinatorial graph where particles are accommodated at vertices and hop between adjacent unoccupied vertices along edges of $\Gamma$.  See Figure \ref{fig:(a)-The-Y} for a comparison of the configuration spaces $C_{2}(\Gamma)$ and $\mathcal{D}^{2}(\Gamma)$ of a Y-graph.

Using Theorem \ref{Abrams_thm}, the problem of finding $H_{1}(C_{n}(\Gamma))$
is reduced to the problem of computing $H_{1}(\mathcal{D}^{n}(\Gamma))$.
In the next sections we show how to determine $H_{1}(\mathcal{D}^{n}(\Gamma))$
for an arbitrary simple graph $\Gamma$. Note, however, that by the
structure theorem for finitely generated modules \cite{Nakahara}

\begin{gather}
H_{1}(\mathcal{D}^{n}(\Gamma))=\mathbb{Z}^{k}\oplus T_{l}\label{eq:finite-module}
\end{gather}
where $T_{l}$ is the torsion, i.e.
\begin{gather}
T_{l}=\mathbb{Z}_{n_1}\oplus\ldots\oplus\mathbb{Z}_{n_l},\label{eq:finite-module-torsion}
\end{gather}
and $n_{i}|n_{i+1}$. In other words $H_{1}(\mathcal{D}^{n}(\Gamma))$
is determined by $k$ free parameters $\{\phi_{1},\ldots,\phi_{k}\}$
and $l$ discrete parameters $\{\psi_{1},\ldots,\psi_{l}\}$ such that for
each $i\in\{1,\ldots l\}$

\begin{gather}
n_{i}\psi_{i}=0\,\,\mbox{mod}\,2\pi,\,\,n_i\in \mathbb{N}\,\,\,\:\mbox{and}\,\,n_{i}|n_{i+1}.\label{eq:finite-module-1}
\end{gather}
Taking into account their physical interpretation we will call the
parameters $\phi$ and $\psi$ continuous and discrete phases respectively.

\section{Two-particle quantum statistics\label{sec:Two-particle-quantum-statistics}}

In this section we fully describe the first homology group $H_{1}(\mathcal{D}^{2}(\Gamma))$
for an arbitrary connected simple graph $\Gamma$. We start with three simple examples: a cycle, a Y-graph and a lasso. The $2$-particle discrete configuration
space of the lasso reveals an important relation between the exchange
phase on the Y-graph and on the cycle. Combining this relation with
an ansatz for a perhaps over-complete spanning set of the cycle space of
$\mathcal{D}^{2}(\Gamma)$ and some combinatorial properties of $k$-connected
graphs, we give a formula for $H_{1}(\mathcal{D}^{2}(\Gamma))$. Our
argument is divided into three parts; corresponding to $3$-, $2$- and $1$-connected graphs respectively.

\paragraph{Three examples }
\begin{itemize}
\item Let $\Gamma$ be a triangle graph shown in figure \ref{fig:The-triangle}(a).
Its combinatorial configuration space $\mathcal{D}^{2}(\Gamma)$ is shown
in figure 1(b). The cycle $(1,2)\rightarrow(1,3)\rightarrow(2,3)\rightarrow(1,2)$
is not contractible and hence $H_{1}(\mathcal{D}^{2}(\Gamma))=\mathbb{Z}$.
In other words we have one free phase $\phi_{c}$ and no torsion.
%\end{itemize}
\begin{figure}[h]
\begin{center}\includegraphics[scale=0.45]{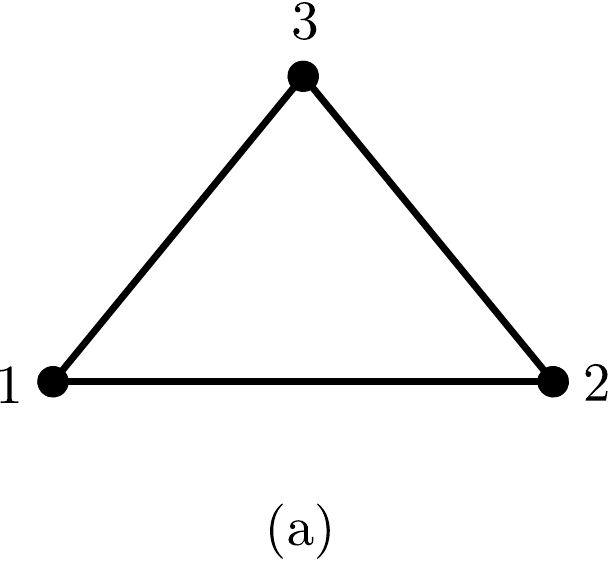}~~~~~~~~~~~~~~~~~~~~~~~~~~~~\includegraphics[scale=0.45]{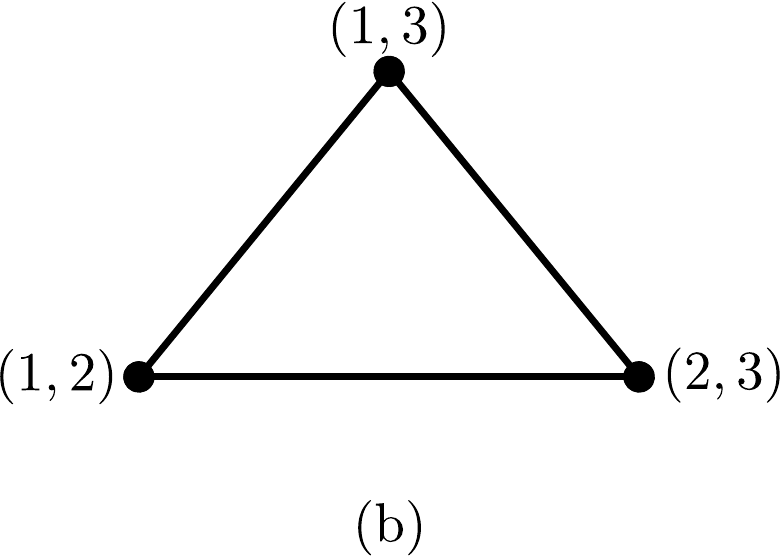}\end{center}

\caption{\label{fig:The-triangle}(a) The triangle graph $\Gamma$ (b) The
$2$-particle configuration space $\mathcal{D}^{2}(\Gamma)$.}
\end{figure}

%\begin{itemize}
\item Let $\Gamma$ be a Y-graph shown in figure \ref{fig:(a)-The-Y}(a).
Its combinatorial configuration space $\mathcal{D}^{2}(\Gamma)$ is shown
in figure \ref{fig:(a)-The-Y}(b). The cycle $(1,2)\rightarrow(1,3)\rightarrow(2,3)\rightarrow(3,4)\rightarrow(2,4)\rightarrow(1,4)\rightarrow(1,2)$
is not contractible and $H_{1}(\mathcal{D}^{2}(\Gamma))=\mathbb{Z}$.
Hence we have one free phase $\phi_{Y}$ and no torsion.  For comparison the configuration space $C_2(\Gamma)$ is shown in figure \ref{fig:(a)-The-Y}(c).  Contracting the triangular planes onto the hexagon and then contracting the surface of the hexagon to the boundary (expanding the empty vertex in the center) one obtains the combinatorial configuration space shown in figure \ref{fig:(a)-The-Y}(b).
%\end{itemize}
\begin{figure}[h]
\begin{center}~~~~~~~~~~~~~~~~\includegraphics[scale=0.4]{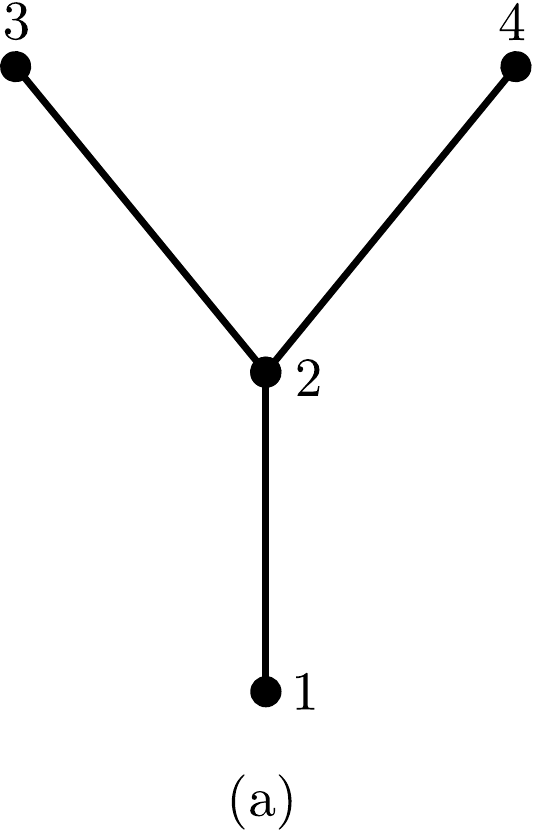}~~~~~~~~~~~~~~~~~~\includegraphics[scale=0.4]{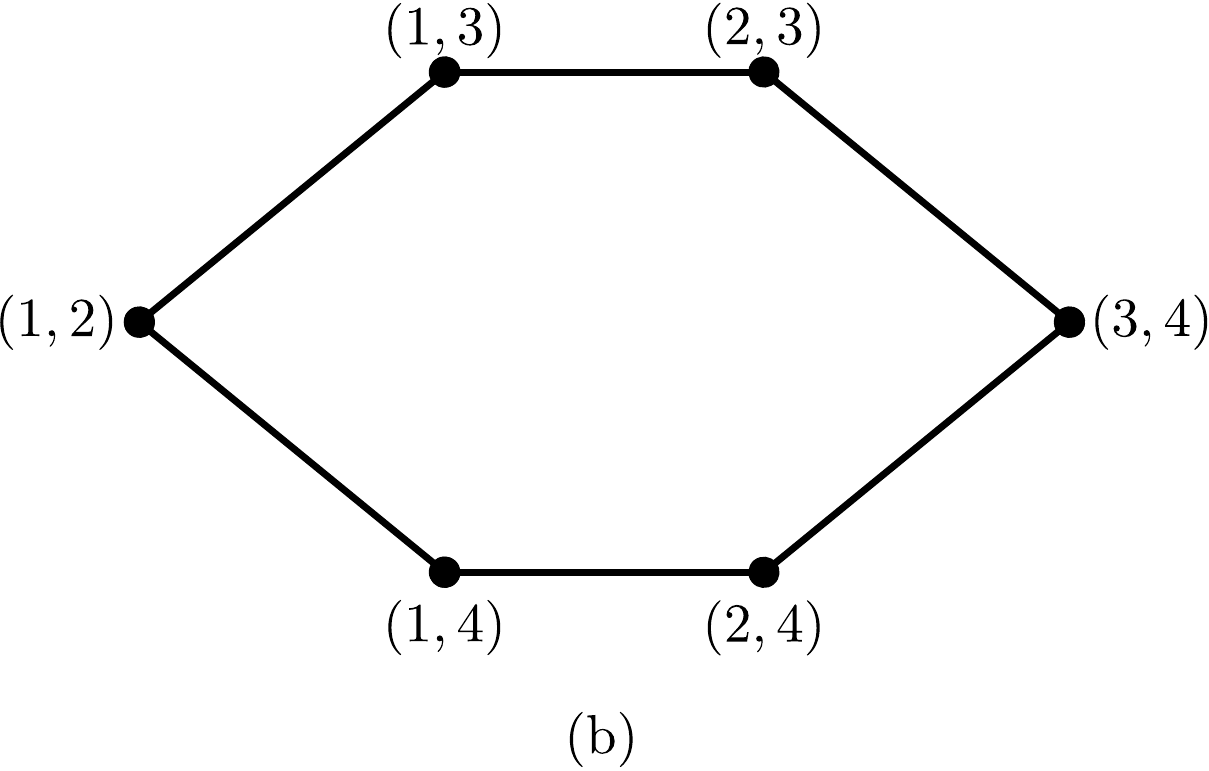}\end{center}
\begin{center}~~~~~~~~~~~~~~\includegraphics[scale=1]{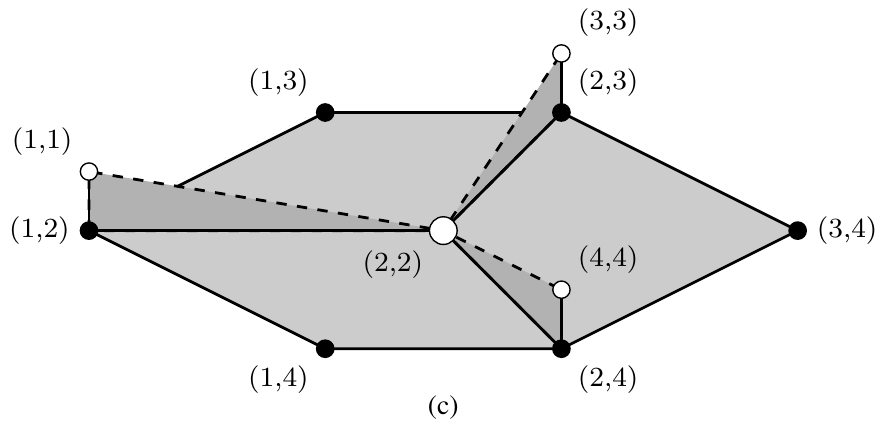}\end{center}
%\begin{center}
  %\begin{tikzpicture}[scale=1.2]
    %\tikzstyle{graph node}=[draw,circle,fill=black,minimum size=5pt,inner sep=0pt]
    %\tikzstyle{empty node}=[draw,circle,fill=white,minimum size=5pt,inner sep=0pt]
   % \tikzstyle{very empty node}=[draw,circle,fill=white,minimum size=8pt,inner sep=0pt]

    % Hexagon
    %\draw[thick] {(-3,0) -- (-1,1) -- (1,1) -- (3,0) -- (1,-1) -- (-1,-1) -- (-3,0)} [fill=black!20];

    % Triangles
    %\draw[thick, dashed] {(1,1) -- (1,1.5) -- (0,0) -- (1,1)} [fill=black!30];
    %\draw[thick, dashed] {(-3,0) -- (-3,0.5) -- (0,0) -- (-3,0)} [fill=black!30];
    %\draw[thick, dashed] {(1,-1) -- (1,-0.5) -- (0,0) -- (1,-1)} [fill=black!30];

    % Solid edges of triangles
    %\draw[thick] {(1,1.5) -- (1,1) -- (0,0)};
    %\draw[thick] {(-3,0.5) -- (-3,0) -- (0,0)};
    %\draw[thick] {(1,-0.5) -- (1,-1) -- (0,0)};

    % Full vertices
    %\draw (-3,0) node[graph node] [label=180:$(1{,}2)$] {};
    %\draw (-1,1) node[graph node] [label=135:$(1{,}3)$] {};
    %\draw (1,1) node[graph node] [label=45:$(2{,}3)$] {};
    %\draw (3,0) node[graph node] [label=0:$(3{,}4)$] {};
    %\draw (1,-1) node[graph node] [label=-45:$(2{,}4)$] {};
    %\draw (-1,-1) node[graph node] [label=-135:$(1{,}4)$] {};

    % Open vertices
    %\draw (-3,0.5) node[empty node] [label=135:$(1{,}1)$] {};
    %\draw (1,1.5) node[empty node] [label=45:$(3{,}3)$] {};
    %\draw (1,-0.5) node[empty node] [label=45:$(4{,}4)$] {};
    %\draw (0,0) node[very empty node] [label=-135:$(2{,}2)$] {};

    % Label
    %\draw (0,-1.5) node[draw=none,fill=none] {(c)};
%\end{tikzpicture}
%\end{center}
\caption{\label{fig:(a)-The-Y}(a) The Y-graph $\Gamma$. (b) The $2$-particle
combinatorial configuration space $\mathcal{D}^{2}(\Gamma)$. (c) The $2$-particle
configuration space $C_2(\Gamma)$; dashed lines and open vertices denote configurations where the particles are coincident.  Such configurations are excluded from $C_2(\Gamma)$.}
\end{figure}

%\begin{itemize}
\item Let $\Gamma$ be a lasso graph shown in figure \ref{fig:The-lasso}(a).
It is a combination of Y and triangle graphs. Its combinatorial configuration
space $\mathcal{D}^{2}(\Gamma)$ is shown in figure 3(b). The shaded
rectangle is a $2$-cell and hence the cycle $(1,3)\rightarrow(2,3)\rightarrow(2,4)\rightarrow(1,4)\rightarrow(1,3)$
is contractible. The cycle $(1,2)\rightarrow(1,4)\rightarrow(1,3)\rightarrow(1,2)$
corresponds to the situation when one particle is sitting at the vertex
$1$ and the other moves along the cycle $c=2\rightarrow4\rightarrow3\rightarrow2$
of $\Gamma$. We will call this cycle an Aharonov-Bohm cycle (AB-cycle) and
denote its phase $\phi_{c,1}^1$ % CHANGE
(the subscript $c,1$ indicates that  $c$ is traversed by just 1 particle, and the superscript $1$ indicates the position of the stationary particle).  The cycle $(2,3)\rightarrow(3,4)\rightarrow(2,4)\rightarrow(2,3)$
represents the exchange of two particles around $c$. The corresponding phase will be denoted by $\phi_{c,2}$. Finally, for the cycle $(1,2)\rightarrow(1,3)\rightarrow(2,3)\rightarrow(3,4)\rightarrow(2,4)\rightarrow(1,4)\rightarrow(1,2)$, corresponding to exchange of two particles along a Y-graph, the phase is denoted $\phi_Y$. There is no torsion in $H_{1}(\mathcal{D}^{2}(\Gamma))$. Moreover,
\begin{gather}
\phi_{c,2}=\phi_{c,1}^{1}+\phi_{Y}.\label{eq:lasso-relation}
\end{gather}
Thus,  the Y-phase  $\phi_{Y}$ and the AB-phase $\phi_{c,1}^{1}$  determine  $\phi_{c,2}$.
\end{itemize}
\begin{remark}\label{remark1}
Any relation  between cycles on a graph $G$  holds between the corresponding cycles on a graph $F$ containing $G$ as a subgraph or a subgraph homotopic to $G$. It is for this reason that  (\ref{eq:lasso-relation}) will play a key role in relating Y-phases and AB-phases for general graphs.
\end{remark}

\begin{figure}[h]
\begin{center}~~~~~~~~~~~~~~\includegraphics[scale=0.4]{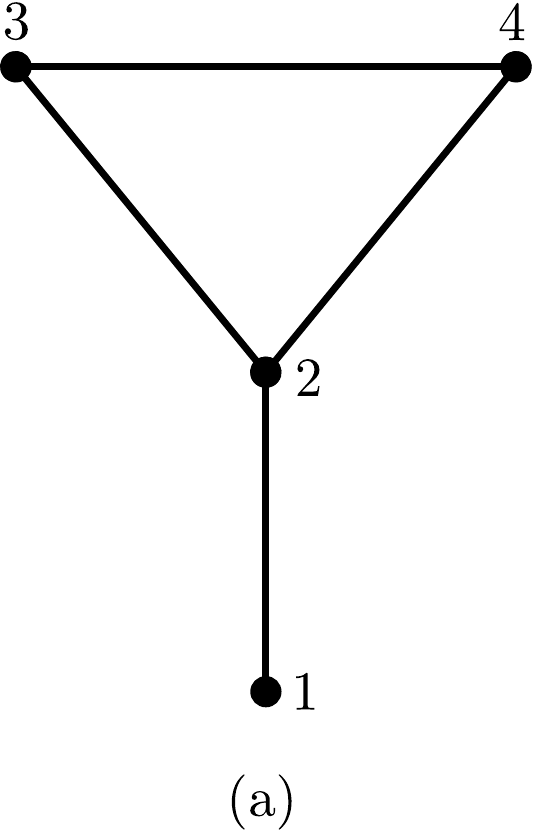}~~~~~~~~~~~~~~~~~~\includegraphics[scale=0.4]{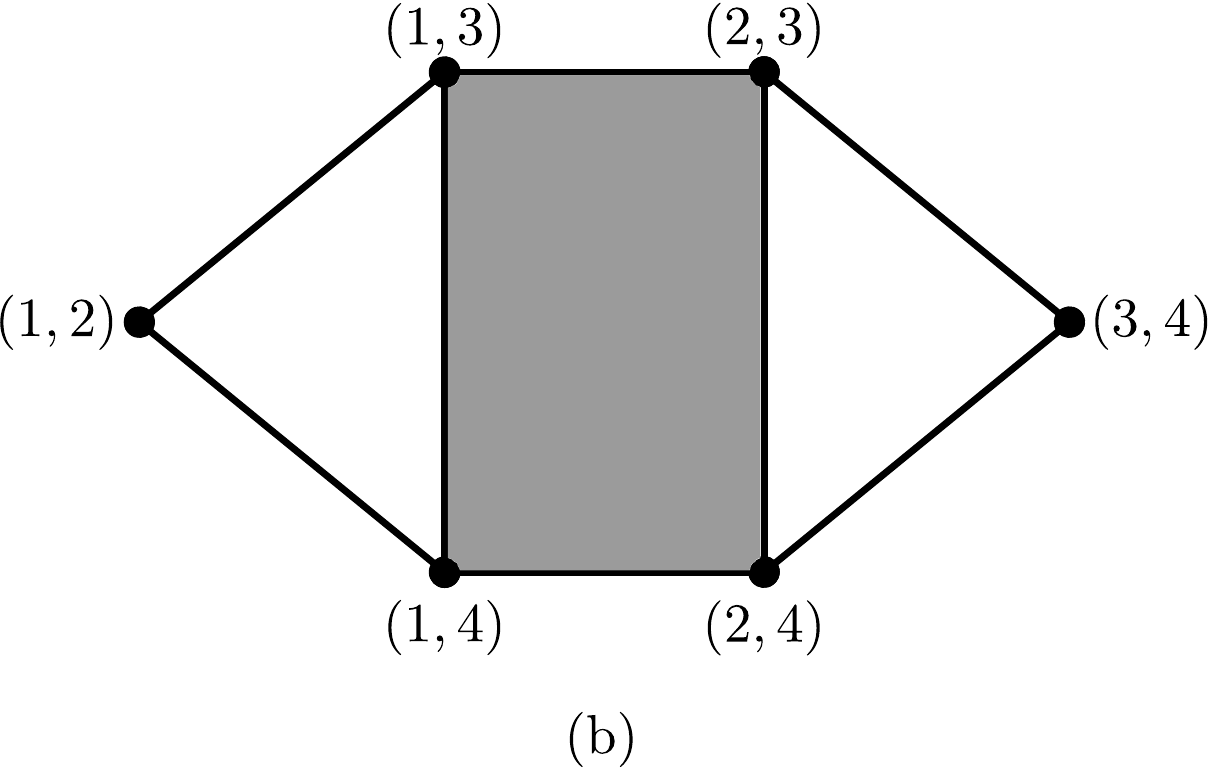}\end{center}

\caption{\label{fig:The-lasso}(a) The lasso graph $\Gamma$ (b) The $2$-particle
configuration space $\mathcal{D}^{2}(\Gamma)$.}
\end{figure}

\subsection{A spanning set of $H_1(\mathcal{D}^{2}(\Gamma))$\label{sub:An-over-complete-basis}}

In order to proceed with the calculation of $H_{1}(\mathcal{D}^{2}(\Gamma))$ for arbitrary $\Gamma$
we need a spanning set of $H_{1}(\mathcal{D}^{2}(\Gamma))$. Before we give
one, let us discuss the dependence of the AB-phase on the position
of the second particle. Suppose there is a cycle $c$ in $\Gamma$ with two vertices $v_{1}$
and $v_{2}$ not on the cycle. We want to know the relation between $\phi_{c,1}^{v_{1}}$ and $\phi_{c,1}^{v_{2}}$.
There are two possibilities to consider. The first is shown in
figure \ref{fig:The-dependence-of}(a) and represents the situation
when there is a path $P_{v_{1},v_{2}}$ which joins $v_{1}$ and $v_{2}$
and is disjoint with $c$. In this case both AB-cycles are homotopy
equivalent as they belong to the cylinder $c\times P_{v_{1},v_{2}}$. Therefore,
\begin{Fact}\label{fact1} Assume there is a cycle $c$ in $\Gamma$ with two vertices $v_{1}$
and $v_{2}$ not on the cycle. Suppose there is a path $P_{v_{1},v_{2}}$ which joins $v_{1}$ and $v_{2}$
and is disjoint with $c$. Then $\phi_{c,1}^{v_{1}}=\phi_{c,1}^{v_{2}}$.
\end{Fact}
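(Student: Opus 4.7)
The plan is to realize the claimed equality at the level of the cell complex $\mathcal{D}^2(\Gamma)$ by exhibiting an explicit $2$-chain whose boundary is the difference of the two AB-cycles, as foreshadowed by the phrase ``cylinder $c\times P_{v_1,v_2}$'' in the surrounding text. Since $P:=P_{v_1,v_2}$ and $c$ are vertex-disjoint subgraphs of $\Gamma$, for every pair $(e,e')$ with $e\in E(c)$ and $e'\in E(P)$ the product cell $e\times e'$ has closure disjoint from the diagonal $\Delta$, and therefore descends to a genuine $2$-cell of $\mathcal{D}^2(\Gamma)$. Together with the $1$-cells $w\times e'$, $e\times w'$ (for $w\in V(c)$, $w'\in V(P)$) and the evident $0$-cells, these $2$-cells assemble into a subcomplex of $\mathcal{D}^2(\Gamma)$ homeomorphic to $c\times P\cong S^{1}\times[0,1]$.

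The computation is then purely formal. Write $P=(u_0,u_1,\ldots,u_k)$ with $u_0=v_1$ and $u_k=v_2$, and orient $c$ so that it represents a $1$-cycle $[c]\in C_1(\mathcal{D}^2(\Gamma))$ when sitting at a fixed external vertex. The $2$-chain
\begin{equation*}
\sigma \;=\; [c]\times \sum_{j=1}^{k}(u_{j-1}\to u_j)
\end{equation*}
lies in $C_2(\mathcal{D}^2(\Gamma))$, and by the product rule for boundary in a product cell complex,
\begin{equation*}
\partial\sigma \;=\; (\partial[c])\times[P] \;\pm\; [c]\times\partial[P] \;=\; [c]\times(v_2-v_1),
\end{equation*}
because $\partial[c]=0$. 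The right-hand side is precisely the difference of the AB-cycles representing $\phi_{c,1}^{v_2}$ and $\phi_{c,1}^{v_1}$. Hence $[c\times\{v_1\}]=[c\times\{v_2\}]$ in $H_1(\mathcal{D}^2(\Gamma))$, and the corresponding holonomies agree.

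The only thing that could go wrong is the existence of the product $2$-cells, which fails precisely when a $1$-cell of $c$ and a $1$-cell of $P$ share a vertex, and this is ruled out by the disjointness hypothesis. No subdivision is needed: for $n=2$ the conditions of Theorem \ref{Abrams_thm} are automatic on a simple graph, so the cycle $\sigma$ constructed in $\mathcal{D}^2(\Gamma)$ transports directly to a cycle in $C_2(\Gamma)$ under the homotopy equivalence. Thus the argument has no technical obstruction beyond verifying that the cylindrical subcomplex actually embeds in $\mathcal{D}^2(\Gamma)$, which the hypothesis supplies for free.
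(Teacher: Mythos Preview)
Your proof is correct and follows essentially the same approach as the paper: both arguments rest on the observation that the disjointness of $c$ and $P_{v_1,v_2}$ means the product $c\times P_{v_1,v_2}$ embeds as a cylinder in $\mathcal{D}^2(\Gamma)$, whose two boundary circles are the AB-cycles in question. The paper states this in a single sentence (``both AB-cycles are homotopy equivalent as they belong to the cylinder $c\times P_{v_1,v_2}$''), whereas you carry out the homology computation explicitly via the product boundary formula; this is a welcome elaboration but not a different idea.
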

Assume now that every path joining $v_1$ and $v_2$ passes through the cycle $c$ (see figure \ref{fig:The-dependence-of}(b)).
Noting that the graph contains two subgraphs homotopic to the lasso which in turn both contain $c$, and making use of Remark \ref{remark1}, we can repeat the argument leading to
relation (\ref{eq:lasso-relation}) for each lasso.  We obtain,
\begin{gather}
\phi_{c,2}=\phi_{c,1}^{v_{1}}+\phi_{Y_{1}},\,\,\,\,\phi_{c,2}=\phi_{c,1}^{v_{2}}+\phi_{Y_{2}},\label{eq:A-B-1}
\end{gather}
and hence
\begin{gather}
\phi_{c,1}^{v_{1}}-\phi_{c,1}^{v_{2}}=\phi_{Y_{2}}-\phi_{Y_{1}}.\label{eq:AB-2}
\end{gather}
Thus, for a fixed one-particle cycle $c$ in $\Gamma$,
the difference between any two AB-phases (corresponding to two different positions of the stationary particle) may be expressed in terms of the
Y-phases.
%The relations between different AB-phases for a fixed cycle $c$
%of $\Gamma$ are therefore encoded in the phases $\phi_{Y}$.
\begin{figure}[h]
\begin{center}~~~~~~\includegraphics[scale=0.5]{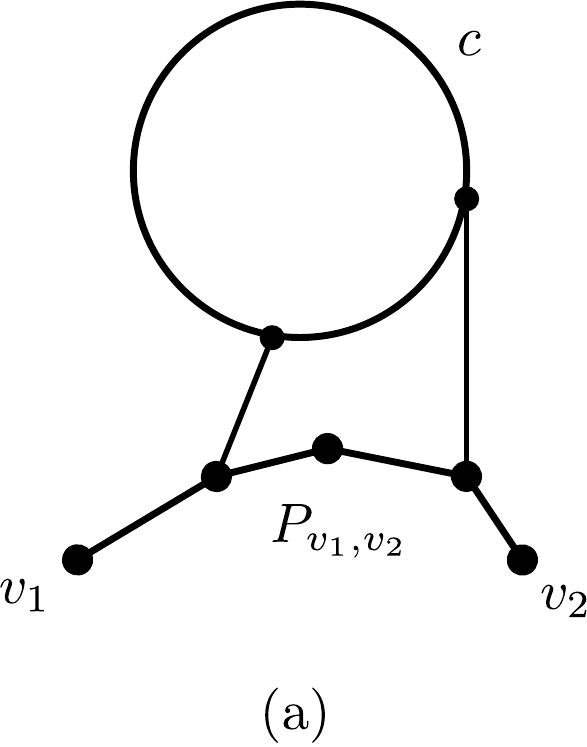}~~~~~~~~~~~~~~~~~~\includegraphics[scale=0.5]{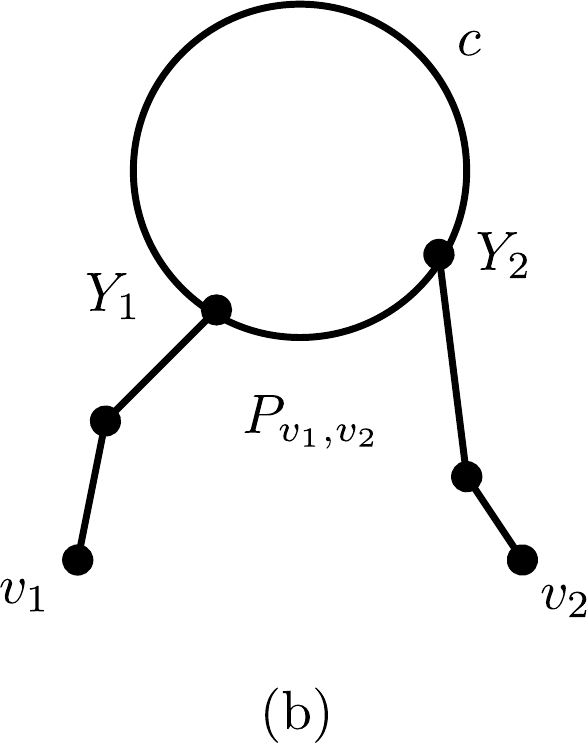}\end{center}

\caption{\label{fig:The-dependence-of}The dependence of the AB-phase for
cycle $c$ on the position of the second particle when (a) there is
a path between $v_{1}$ and $v_{2}$ disjoint with $c$, (b) every
path joining $v_{1}$ and $v_{2}$ passes through $c$.}
\end{figure}

As we show in section \ref{MT}, a spanning set of $H_{1}(\mathcal{D}^{2}(\Gamma))$ is given by all Y and AB-cycles. %Note that
%relations (\ref{eq:lasso-relation}) and (\ref{eq:AB-2}) reduce the number of relevant AB-cycles to one per independent cycle of $\Gamma$, i.e. to
Note that from relations (\ref{eq:lasso-relation}) and (\ref{eq:AB-2}) , we can restrict the set of AB-cycles to belong to a basis for $H_1(\Gamma)$ (since all other AB-cycles can be expressed in terms of these and Y-cycles).  By Euler's formula, the dimension of $H_1(\Gamma)$ is given by the first Betti number,
\begin{gather}
\beta_{1}(\Gamma)=E-V+1,
\end{gather}
%cycles.
As a result, we will use a spanning set (which in general is over-complete) containing the following:
%Moreover, it was hown in \cite{stein54}, that the homology groups of $K^{\times2}/S_{2}$ for any cell complex $K$ are completely determined by the homology of $K$. Using the results
%of \cite{stein54} we easily recover
%\begin{gather}
%H_{1}(\Gamma^{\times2}/S_{2})=\mathbb{Z}^{\beta_{1}(\Gamma)},\label{eq:SRS}
%\end{gather}
%and the cycles generating $H_{1}(\Gamma^{\times 2}/S_{2})$ are AB
%cycles. It is obvious that AB-cycles being independent on the level
%of $\Gamma^{\times2}/S_{2}$ are independent on the level of $\mathcal{D}^{2}(\Gamma)$.
\begin{enumerate}
\item All $2$-particle cycles corresponding to the exchanges on Y subgraphs
of $\Gamma$. There may be relations between these cycles.
\item A set of $\beta_{1}(\Gamma)$ AB-cycles, one for each independent cycle in a basis for $H_1(\Gamma)$.
\end{enumerate}
Thus, $H_{1}(\mathcal{D}^{2}(\Gamma))=\mathbb{Z}^{\beta_1(\Gamma)}\oplus A$, where $A$ is determined by Y-cycles. Consequently, in order to determine $H_{1}(\mathcal{D}^{2}(\Gamma))$ one has to
study the relations between Y-cycles.

\subsection{$3$-connected graphs}

In this section we determine $H_{1}(\mathcal{D}^{2}(\Gamma))$
for $3$-connected graphs. Let $\Gamma$ be a connected graph. We define an $m$-separation of
$\Gamma$ \cite{tutte01}, where $m$ is a positive integer, as an
ordered pair $(\Gamma_{1},\Gamma_{2})$ of subgraphs of $\Gamma$
such that
\begin{enumerate}
\item The union $\Gamma_{1}\cup\Gamma_{2}=\Gamma$.
\item $\Gamma_{1}$ and $\Gamma_{2}$ are edge-disjoint and have exactly
$m$ common vertices, $V_{m}=\{v_{1},\ldots,v_{m}\}$.
\item $\Gamma_{1}$ and $\Gamma_{2}$ have each a vertex not belonging to
the other.
\end{enumerate}
It is customary to say that the $V_{m}$ separates vertices of $\Gamma_{1}$
and $\Gamma_{2}$ different from $V_{m}$.
\begin{definition}
\label{n-connected-def}A connected graph $\Gamma$ is $n$-connected
iff it has no $m$-separation for any $m<n$.
\end{definition}
The following theorem of Menger \cite{tutte01} gives
an additional insight into graph connectivity:
\begin{theorem}
\label{Menger} For an $n$-connected graph $\Gamma$ there are at
least $n$ internally disjoint paths between any pair of vertices.
\end{theorem}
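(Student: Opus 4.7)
The plan is to prove the local version of Menger's theorem---that for any two vertices $u, v$ of $\Gamma$ the maximum number of internally disjoint $u$-$v$ paths equals the minimum size of a vertex set separating $u$ from $v$---and then specialize to derive the statement for $n$-connected graphs.

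For the local version, fix non-adjacent vertices $u, v$ and let $\kappa(u,v)$ denote the minimum cardinality of a $u$-$v$ separator (a set $S \subseteq V(\Gamma) \setminus \{u,v\}$ whose removal disconnects $u$ from $v$), and $\mu(u,v)$ the maximum number of internally disjoint $u$-$v$ paths. The inequality $\mu(u,v) \leq \kappa(u,v)$ is immediate, since every separator must intersect the interior of each of the disjoint paths. For the reverse inequality, I would proceed by induction on $|E(\Gamma)|$.

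The key inductive step picks a well-chosen edge $e = xy$. If $\kappa(u,v)$ does not decrease in $\Gamma \setminus e$, then induction applies directly to $\Gamma \setminus e$. Otherwise every minimum separator $S$ in $\Gamma$ contains an endpoint of $e$, and one splits along such an $S$: let $G_u$ and $G_v$ denote the $u$-side and $v$-side subgraphs bounded by $S$ (obtained by collapsing the opposite side to a single vertex joined to $S$). Each has strictly fewer edges, so the inductive hypothesis yields $|S|$ internally disjoint paths from $u$ to $S$ in $G_u$ and from $v$ to $S$ in $G_v$; concatenating at $S$ produces $\kappa(u,v)$ internally disjoint $u$-$v$ paths in $\Gamma$. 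The delicate bookkeeping is choosing $S$ so that both sides are strictly smaller (the degenerate case in which $S$ consists of the neighbours of $u$ only, or of $v$ only, is handled separately by a short degree argument).

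Having established the local version, the global statement follows quickly. For non-adjacent $u, v$ in an $n$-connected $\Gamma$, Definition \ref{n-connected-def} forbids any $m$-separation with $m < n$, so no set of fewer than $n$ vertices separates $u$ from $v$; hence $\kappa(u,v) \geq n$ and the local theorem furnishes $n$ internally disjoint paths. For adjacent $u, v$ joined by $e$, count $e$ itself as one path and apply the local version to $\Gamma \setminus e$ after verifying that $\kappa_{\Gamma \setminus e}(u, v) \geq n-1$: if $S \subseteq V(\Gamma) \setminus \{u,v\}$ with $|S| \leq n-2$ separated $u$ from $v$ in $\Gamma \setminus e$, then $S \cup \{v\}$ would be a set of at most $n-1$ vertices whose removal disconnects $\Gamma$ (using that an $n$-connected graph has minimum degree at least $n$ to rule out the exceptional case in which $v$ lies alone in its component), contradicting $n$-connectedness.

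The main obstacle is the inductive step in the local version: ensuring that both sides $G_u$ and $G_v$ are strictly smaller so the induction is well-founded, and verifying that the paths constructed recursively on the two sides can be glued at $S$ without producing coincident internal vertices across distinct paths.
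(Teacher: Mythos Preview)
The paper does not prove this theorem at all: it is stated as a classical result and simply cited to Tutte's textbook \cite{tutte01}, in keeping with the author's stated policy of omitting proofs that do not contribute to the main flow. Your sketch follows the standard inductive proof of the local Menger theorem (induction on the edge count, splitting along a minimum separator, gluing paths at the separator), and the derivation of the global statement from the local one is handled correctly, including the adjacent-vertex case. So your proposal is a genuine proof where the paper offers none; it is essentially the textbook argument you would find in the cited reference.
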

The basic example of $3$-connected graphs are wheel graphs. A wheel
graph $W^{n}$ of order $n$ consists of a cycle with $n$ vertices
and a single additional vertex which is connected to each vertex of
the cycle by an edge. Following Tutte \cite{tutte01} we denote the
middle vertex by $h$ and call it the hub, and the cycle that does not include $h$ by $R$ and call
it the rim. The edges connecting the hub to the rim will be called
spokes. The importance of wheels in the theory of $3$-connected graphs
follows from the following theorem:
\begin{theorem}
\label{thm-Wheel-thorem}(Wheel theorem \cite{tutte01}) Let $\Gamma$ be a simple
$3$-connected graph different from a wheel. Then for some edge $e\in E(\Gamma)$,
either $\Gamma\setminus e$ or $\Gamma/e$ is simple and
$3$-connected.
\end{theorem}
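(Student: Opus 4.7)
The plan is to prove the contrapositive: if $\Gamma$ is a simple $3$-connected graph such that for every edge $e$ neither $\Gamma\setminus e$ nor $\Gamma/e$ is simple and $3$-connected, then $\Gamma$ must be a wheel. The first step is to extract concrete structural information from these failure conditions for each edge $e = xy$. If $\Gamma\setminus e$ fails to be $3$-connected, any $2$-cut $\{u,v\}$ of $\Gamma\setminus e$ cannot already be a $2$-cut of $\Gamma$ (which is $3$-connected by assumption), so $\{u,v\}$ must separate $x$ from $y$ in $\Gamma\setminus e$. If $\Gamma/e$ fails to be simple, then $x$ and $y$ share a common neighbor, i.e.\ $e$ lies in a triangle. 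If $\Gamma/e$ is simple but not $3$-connected, any $2$-cut of $\Gamma/e$ must contain the vertex produced by contraction (else it would lift to a $2$-cut of $\Gamma$), so there is a vertex $w$ for which $\{x,y,w\}$ is a $3$-cut of $\Gamma$.

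Next, I would invoke Menger's theorem (Theorem~\ref{Menger}) to ensure at least three internally disjoint paths between any two vertices, and then choose a $3$-cut $\{a,b,c\}$ of $\Gamma$ together with a connected component $S$ of $\Gamma - \{a,b,c\}$ of minimum possible cardinality among all such pairs (cut, component). By $3$-connectivity of $\Gamma$ every vertex in $\{a,b,c\}$ has a neighbor in $S$, and minimality of $|S|$ prevents $S$ from containing any further $3$-cut of $\Gamma$ that would separate off a smaller piece. I would then argue that $S$ consists of a single vertex $h$ adjacent to each of $a,b,c$: if $S$ had two vertices, one could either contract an edge inside $S$ or delete one, and the structural conditions above (triangle condition, or endpoint-separating $2$-cut condition) would produce a smaller side, contradicting minimality.

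Having located a candidate hub $h$ adjacent to three vertices on its boundary, the next step is to propagate the construction. I would consider the subgraph $\Gamma - h$, show it is $2$-connected (since $\{h\}$ cannot be a cut), and use the failure conditions on edges incident to $h$ and on boundary edges of $S$ to force every neighbor of $h$ to lie on a common cycle $R$ and every vertex of $\Gamma$ other than $h$ to be a neighbor of $h$. The triangle condition (from non-simplicity of contractions of spokes $hv$) guarantees that consecutive rim vertices are joined, while the endpoint-separating $2$-cut condition (from non-$3$-connectivity of $\Gamma\setminus e$ for a rim edge $e$) prevents extra chords across $R$. The conclusion is that $\Gamma$ is precisely a wheel with hub $h$ and rim $R$.

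The main obstacle will be the iteration in the last step: combining the two distinct types of obstruction (triangles forced by contraction failure, and endpoint-separating $2$-cuts forced by deletion failure) to rule out every configuration on $\Gamma - h$ other than a single cycle. Any extra edge or extra vertex on the "rim side" tends to open up new contractible or deletable edges, and one must verify case by case that all such configurations indeed break at least one of the hypotheses, so that no graph other than a wheel can be simultaneously critical for both operations.
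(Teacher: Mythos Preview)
The paper does not prove this theorem at all; it is quoted from Tutte \cite{tutte01} as a known structural result and then used as a black box in the induction that follows. So there is no ``paper's own proof'' to compare your proposal against.

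That said, your outline is the standard route to Tutte's wheel theorem and is sound in its broad strokes: the dichotomy you extract for each edge (a common neighbour forcing a triangle when $\Gamma/e$ is non-simple, a $3$-cut $\{x,y,w\}$ when $\Gamma/e$ is simple but not $3$-connected, an endpoint-separating $2$-cut when $\Gamma\setminus e$ fails) is exactly right, and choosing a $3$-cut with a component of minimum size is the usual device for producing a degree-$3$ vertex. One genuine gap to flag: after you locate a single vertex $h$ adjacent to $\{a,b,c\}$, nothing you have established forces $h$ to be adjacent to \emph{all} remaining vertices. The minimality argument gives you a vertex of degree $3$, not a hub. The way the classical proof closes this is to iterate the minimality argument on edges incident to $a,b,c$ (or on a carefully chosen rim edge) and show that the failure conditions propagate, eventually forcing either every vertex to have degree $3$ (whence $\Gamma=K_4=W_3$) or a single high-degree vertex whose neighbours form a cycle. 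Your final paragraph correctly identifies this propagation as the crux, but the sketch as written does not yet contain the mechanism that makes it go through.
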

Here $\Gamma\setminus e$ is constructed from $\Gamma$ by removing the edge $e$, and $\Gamma/e$ is obtained by contracting edge $e$ and identifying its vertices. These two operations will be called edge removal and edge contraction. The inverses will be called edge addition and vertex expansion. Note that vertex expansion requires specifying which edges are connected to which vertices after expansion. As we deal with $3$-connected graphs we will apply the vertex expansion only to vertices of degree at least four and split the edges between new vertices in a such way that they are at least $3$-valent.

As a direct corollary of Theorem \ref{thm-Wheel-thorem} any simple $3$-connected graph can be constructed in a finite number of steps
starting from a wheel graph $W^{k}$, for some $k$; that is, there exists a sequence of simple $3$-connected graphs
\begin{gather*}
W_{k}=\Gamma_{0}\mapsto\Gamma_{1}\mapsto\ldots\mapsto\Gamma_{n-1}\mapsto\Gamma_{n}=\Gamma,
\end{gather*}
where $\Gamma_{i}$ is constructed from $\Gamma_{i-1}$ by either
\begin{enumerate}
\item adding an edge between non-adjacent vertices, or
\item expanding at a vertex of valency at least four.
\end{enumerate}
%Moreover, each $\Gamma_{i}$ is simple and $3$-connected.
Therefore, in order
to prove inductively some property of a $3$-connected graph, it is
enough to show that the property holds for an arbitrary wheel graph and that it persists under operations 1.~and 2.~above.
\begin{lemma}
For wheel graphs $W^{n}$ all phases $\phi_{Y}$ are equal up to a
sign.
\end{lemma}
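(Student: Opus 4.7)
My plan is to apply the lasso identity (\ref{eq:lasso-relation}) together with Fact \ref{fact1} and Remark \ref{remark1} to compare $\phi_Y$ for every pair of Y-subgraphs of $W^n$. The Y-subgraphs split into two families: \emph{rim} Y's $Y_i$ centred at a rim vertex $r_i$ with leaves $\{r_{i-1}, r_{i+1}, h\}$, and \emph{hub} Y's $Y_h^{(a,b,c)}$ centred at the hub with three distinct rim leaves $\{r_a, r_b, r_c\}$. The proof will proceed in three stages: all rim Y's equal, some hub Y equals a rim Y, all hub Y's equal.

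For the first stage, I would take $c_\mathrm{rim}=r_1 r_2 \cdots r_n r_1$. The union $Y_i\cup c_\mathrm{rim}$ is $c_\mathrm{rim}$ with the single pendant spoke $r_i h$, which is a lasso with pendant vertex $h$. Relation (\ref{eq:lasso-relation}), applied via Remark \ref{remark1}, then reads $\phi_{c_\mathrm{rim},2}=\phi_{c_\mathrm{rim},1}^{h}+\phi_{Y_i}$, and independence of $i$ of the right-hand side forces all $\phi_{Y_i}$ to coincide. For the second stage, I would use the triangle $c_\Delta = r_i r_{i+1} h r_i$: both the rim Y $Y_i$ and the hub Y $Y_h^{(i-1,i,i+1)}$ form lassos with $c_\Delta$ with the \emph{same} pendant vertex $r_{i-1}$, so (\ref{eq:lasso-relation}) gives directly $\phi_{Y_i}=\phi_{Y_h^{(i-1,i,i+1)}}$ with no appeal to Fact \ref{fact1} required. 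The degenerate case $n=3$, where $W^3=K_4$ has only a single hub Y $Y_h^{(1,2,3)}$, is already covered by this step.

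For the third stage, I would compare any hub Y to the canonical $Y_h^{(i-1,i,i+1)}$ by a sequence of single-leaf swaps. Given hub Y's $Y_h^{(a,b,c)}$ and $Y_h^{(a,b,c')}$ sharing two leaves, I take $c_0$ to be the cycle consisting of the shared spokes $hr_a, hr_b$ together with the rim arc between $r_a$ and $r_b$ that avoids both $r_c$ and $r_{c'}$. When $r_c, r_{c'}$ lie on a common rim arc such a cycle exists, both Y's form lassos with $c_0$ (pendants $r_c$ and $r_{c'}$), and Fact \ref{fact1} applied to the rim path from $r_c$ to $r_{c'}$ inside the complementary arc identifies the two AB-phases, yielding $\phi_{Y_h^{(a,b,c)}}=\phi_{Y_h^{(a,b,c')}}$. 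When $r_c, r_{c'}$ lie on opposite rim arcs no single $c_0$ can serve, and I would route the swap through an auxiliary hub Y $Y_h^{(a,b,c'')}$ with $r_{c''}$ on the same arc as $r_c$ (or temporarily swap one of $r_a, r_b$), which is always possible for $n\geq 4$. Iterating these swaps transforms any hub-triple into $\{r_{i-1}, r_i, r_{i+1}\}$ and closes the chain of equalities. The main obstacle is precisely this combinatorial routing when the two candidate pendants are split across the two rim arcs; once that is handled, the remainder is mechanical. The ``up to sign'' in the statement reflects the two possible orientations of each hexagonal Y-cycle in $\mathcal{D}^2(W^n)$: with orientations chosen compatibly along the chain of comparisons the equalities hold without sign, while reversing the orientation of any individual Y-cycle flips the sign of its phase.
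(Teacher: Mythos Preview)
Your argument is correct, and the overall architecture (rim Y's first, then hub Y's) matches the paper's, but the mechanics differ in both halves.

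For the rim Y's, the paper compares \emph{adjacent} rim Y's $Y_{v_1},Y_{v_2}$ pairwise using the triangle $C=v_1v_2h$: each has two edges in $C$, their third (rim) edges end at vertices $b_1,b_2$ joined by the complementary rim arc, and Fact~\ref{fact1} plus (\ref{eq:AB-2}) gives $\phi_{Y_{v_1}}=\phi_{Y_{v_2}}$. Your use of the full rim $c_{\mathrm{rim}}$ with the single common pendant $h$ is slicker: one application of (\ref{eq:lasso-relation}) for each $Y_i$ produces the \emph{same} equation $\phi_{c_{\mathrm{rim}},2}=\phi_{c_{\mathrm{rim}},1}^{h}+\phi_{Y_i}$, so all $\phi_{Y_i}$ coincide without any appeal to Fact~\ref{fact1}.

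For the hub Y's, the paper is more direct than you. Given \emph{any} hub Y with leaves $\{b_1,v_1,v_2\}$, it forms $C$ from the two spokes $hv_1,hv_2$ and the rim arc from $v_1$ to $v_2$ avoiding $b_1$; the rim Y centred at $v_2$ then shares two edges of $C$, its pendant $b_2$ lies on the same complementary arc as $b_1$, and Fact~\ref{fact1} with (\ref{eq:AB-2}) finishes in one stroke. Your route---first matching the consecutive-leaf hub Y to a rim Y via the shared pendant $r_{i-1}$ (a nice touch, needing no Fact~\ref{fact1}), then reducing an arbitrary hub Y to a consecutive one by single-leaf swaps---is correct but longer. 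One remark on your routing: the suggestion of an auxiliary $c''$ on the same arc as $c$ does not by itself resolve the opposite-arc case (you still face the same obstruction for $c''\to c'$); it is the parenthetical alternative, swapping one of $r_a,r_b$ first, that actually works. Concretely, swapping $b$ to $a{+}1$ (fixing $a,c$) and then $c$ to $a{+}2$ (fixing the now-adjacent $a,a{+}1$, so one rim arc is a single edge and the other contains everything else) reaches a consecutive triple in two swaps for all $n\ge 4$.
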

\begin{proof}
The Y subgraphs of $W^{n}$ can be divided into two groups: (i) the center vertex of Y is on the rim, and
(ii) the center vertex of Y is the hub. For (i) let $v_{1}$ and $v_{2}$ be two adjacent vertices belonging to the
rim, $R$. Let $Y_{v_{1}}$ and $Y_{v_{2}}$ be the corresponding
Y-graphs whose central vertices are $v_{1}$ and $v_{2}$ respectively.
%and one edge is a spoke.
Evidently, the two edges of $Y_{v_{1}}$ and $Y_{v_{2}}$  which are spokes belong
to the same triangle cycle, $C$, i.e the cycle with vertices $v_{1}$,
$v_{2}$ and $h$ (see figure \ref{fig:wheel}(a)). Moreover, $b_{1}$
is connected to $b_{2}$ by a path which is disjoint with $\mbox{\ensuremath{C}}$. Using Fact \ref{fact1}, we have that
$\phi^{b_1}_{c,1} =  \phi^{b_2}_{c,1}$.  From this and  relation (\ref{eq:AB-2}), it follows that
 $\phi_{Y_{v_{1}}}=\phi_{Y_{v_{2}}}$. Repeating this reasoning
we obtain that all $\phi_{Y_{v_{i}}}$, with $v_{i}$ belonging to
the rim are equal (perhaps up to a sign). We are left with the Y-graphs whose central vertex is the hub. Similarly (see figure \ref{fig:wheel}(b))
we take a cycle, $C$, with two edges belonging to the chosen Y.
Then there is always a Y-graph with two edges belonging to $C$ and center on the rim.
Therefore, by Fact \ref{fact1} and  relation (\ref{eq:AB-2}) the phase on a
Y subgraph whose center vertex is the hub is the same as on the Y subgraphs
whose center vertex is on the rim.
\end{proof}
\begin{figure}[h]
\begin{center}~~~~~~~~~~\includegraphics[scale=0.45]{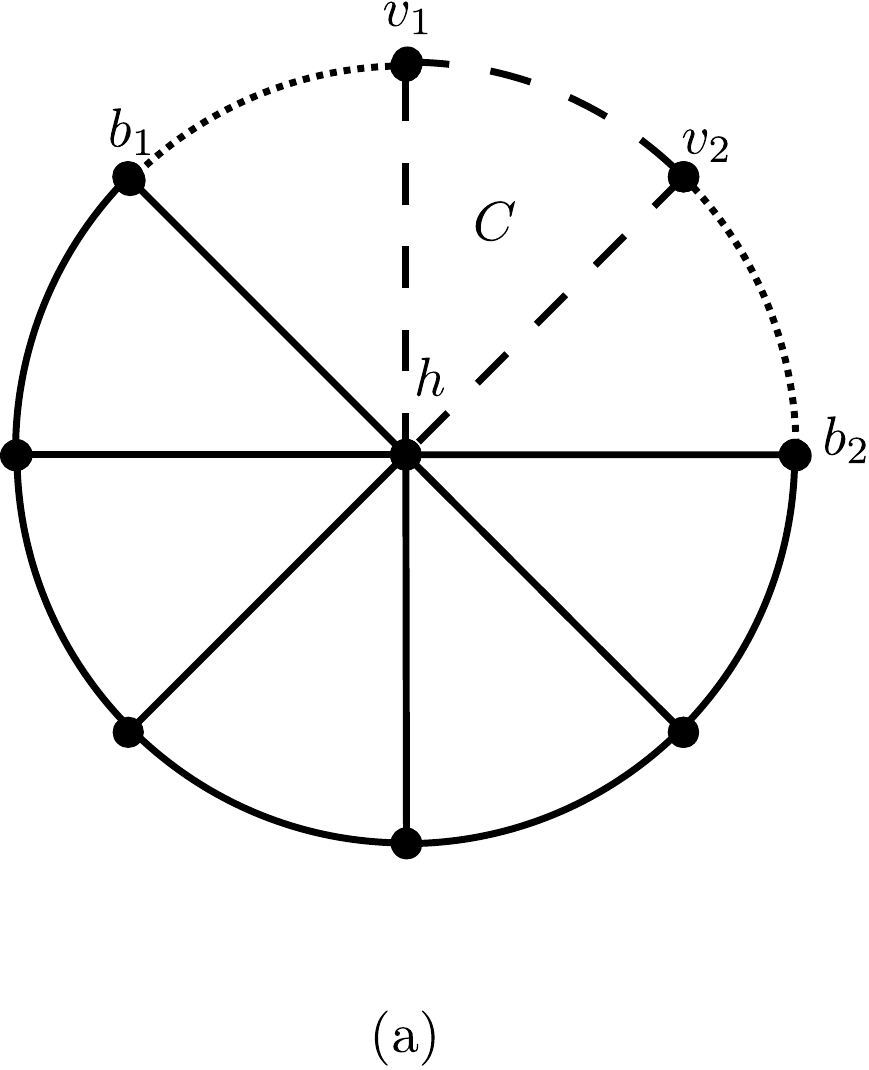}~~~~~~~~~~~~~~~\includegraphics[scale=0.45]{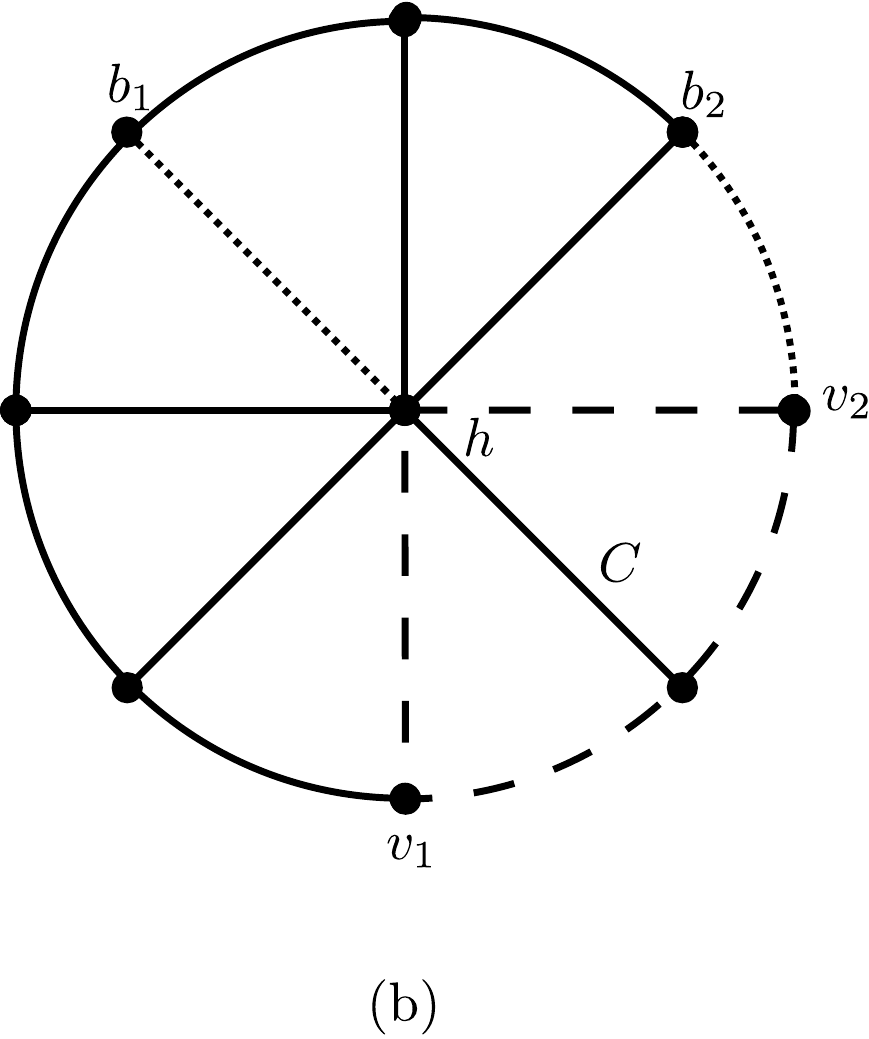}\end{center}

%CHANGE
\caption{\label{fig:wheel}Wheel graphs.  (a) Dashed lines denote a pair of Y subgraphs $Y_{v_1}$ and $Y_{v_2}$ centered at adjacent vertices $v_1$ and $v_2$ on the rim.   The three shared edges of the Y subgraphs (long dashes) form a cycle $C$. (b) The Y subgraph $Y_h$ (edges are dashed) has three outer vertices $b_1$, $v_1$ and $v_2$.  Two of the edges of $Y_h$ together with a path on the rim joining $v_1$ and $v_2$ form a cycle $C$ (long dashes).
A second $Y$-graph $Y_{v_2}$ (edges are dashed) shares two edges of $C$. }
%of the rim Dashed lines denote a Y subgraph centered at the hub $Y_{h}$ together with a section of the rim containing $twhich forms a cycle with two edges of $Y_{h}$ (edges in the cycle are shown with longer dashes) and a Y subgraph centered at the rim $Y_{v_{2}}$ which shares two edges with the cycle.}
\end{figure}

\begin{lemma}
For $3$-connected simple graphs all phases $\phi_{Y}$ are equal
up to a sign.\end{lemma}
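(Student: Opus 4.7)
The natural route is induction using Tutte's Wheel Theorem (Theorem~\ref{thm-Wheel-thorem}). The base case is already settled by the preceding lemma, which shows that every wheel graph has all Y-phases equal up to a sign. The plan is therefore to show that this property is preserved under the two allowed construction moves on simple $3$-connected graphs: (i) adding an edge between non-adjacent vertices, and (ii) expanding a vertex of valence at least four.

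The main tool in the inductive step will be the combination of Fact~\ref{fact1} and relation~(\ref{eq:AB-2}). Concretely, if for a pair of Y subgraphs $Y_1,Y_2$ I can find a common cycle $c$ in $\Gamma$ such that the two central vertices $v_1,v_2$ of $Y_1,Y_2$ lie off $c$ and are joined by a path disjoint from $c$, then Fact~\ref{fact1} gives $\phi_{c,1}^{v_1}=\phi_{c,1}^{v_2}$, and relation~(\ref{eq:AB-2}) (applied with the two lasso subgraphs centered on $c$) then forces $\phi_{Y_1}=\phi_{Y_2}$ up to a sign arising from the choice of orientation. The genuine work is thus geometric: given a new Y subgraph after one of the two Tutte operations, produce such a cycle $c$ and path, using $3$-connectivity (and Menger's theorem) to guarantee the required disjoint paths.

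For edge addition $\Gamma\mapsto\Gamma+e$ with $e=u\leftrightarrow w$, the new Y subgraphs are exactly those that use $e$ as one of their three edges. Given such a new $Y_{\mathrm{new}}$, say centered at $u$ with arms $e$, $u\leftrightarrow x$, $u\leftrightarrow y$, Menger's theorem applied to $\Gamma$ furnishes three internally disjoint paths between $w$ and $\{x,y\}$ avoiding $u$; from these I can extract a cycle $c$ containing the edges $u\leftrightarrow x$ and $u\leftrightarrow y$ and a further path in $\Gamma$ from $w$ to a vertex off $c$, disjoint from $c$. Comparing $Y_{\mathrm{new}}$ with a Y subgraph of $\Gamma$ (already known to have the canonical phase) via the lasso relation then gives the equality. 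Vertex expansion $v\leadsto\{v',v''\}$ with new edge $f=v'\leftrightarrow v''$ is handled analogously: a Y centered at $v'$ (or $v''$) using $f$ is paired with a Y of $\Gamma$ centered at $v$ by choosing the shared cycle to pass through $f$ together with a pair of the redistributed edges, and the disjoint path is supplied by Menger's theorem applied to the $3$-connected graph $\Gamma/f$.

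The hard part will be the vertex-expansion step, because the combinatorics of how the old edges at $v$ are split between $v'$ and $v''$ leads to several cases (depending on how many of the arms of the new Y end up on the same side of the new edge $f$). In each case one has to verify that $3$-connectivity still supplies the required cycle–plus–disjoint–path configuration; this is where I expect most of the bookkeeping to lie, but no new ideas beyond Fact~\ref{fact1}, relation~(\ref{eq:AB-2}), and Menger's theorem should be needed.
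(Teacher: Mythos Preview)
Your overall plan---induction via the Wheel Theorem, with Fact~\ref{fact1} and relation~(\ref{eq:AB-2}) as the comparison tool---is exactly the paper's approach. However, your statement of the main tool is inverted: in the lasso relation the Y-graph has its \emph{centre} on the cycle $c$ (two of its arms lie along $c$) and its \emph{tail} vertex off $c$; it is the two tail vertices $b_1,b_2$ that must be joined by a path disjoint from $c$, not the central vertices. Your edge-addition sketch actually builds $c$ through the centre $u$ and puts the tail $w$ off $c$, so the construction is right even though the summary sentence is not---just be sure to phrase the tool correctly when you write it out. The paper's construction of the cycle is the same as yours: take a path from $\alpha_1:=x$ to $\alpha_2:=y$ in $\Gamma$ avoiding $u$ and $w$ (this exists by $3$-connectedness), close it up through $u\leftrightarrow x$ and $u\leftrightarrow y$, then take a path from $w$ to a vertex $\beta$ of $C$ and compare $Y_{\mathrm{new}}$ with the Y-graph $Y'$ centred at $\beta$, which lies entirely in $\Gamma$.

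The vertex-expansion step needs a different comparison target than you propose. You cannot ``pair with a Y of $\Gamma$ centred at $v$'': the vertex $v$ no longer exists in $\tilde\Gamma$, and a former Y at $v$ whose arms are split $2$--$1$ between $v'$ and $v''$ is simply not a Y-subgraph of $\tilde\Gamma$. The paper instead repeats the edge-addition construction verbatim: for a new Y $\{e,f_1,f_2\}$ centred at $v_1$ with $e=v_1\leftrightarrow v_2$, build a cycle $C$ through $f_1,f_2$ using a path from $\alpha_1$ to $\alpha_2$ that avoids both $v_1$ and $v_2$ (so $C$ does \emph{not} contain the new edge $e$), then take a path from $v_2$ to some $\beta\in C$ and compare with the Y-graph $Y'$ centred at $\beta$. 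Since $C$ and this path avoid $e$, $Y'$ corresponds to a Y-graph of $\Gamma$ not centred at $v$, hence already known by induction to carry the common phase. Your proposed cycle through $f$ would force both comparison Y-graphs to be centred at the expanded vertices, defeating the purpose.
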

\begin{proof}
We prove by induction. By Lemma 1 the statement is true for all
wheel graphs.\\

\noindent 1.~Adding an edge: Assume that $v_{1}$ and $v_{2}$ are non-adjacent
vertices of the $3$-connected graph $\Gamma$.  Suppose that the relations on $\Gamma$ determine that all its $\phi_{Y}$
phases are equal (up to a sign).  These relations remain if we add an edge $e$ between the vertices
$v_{1}$ and $v_{2}$.  Therefore, on $\Gamma\cup e$, the phases $\phi_Y$ belonging to $\Gamma$ must still be equal.

 % we do not change the relations between the $\phi_{Y}$
%phases of $\Gamma$.

However, the graph $\Gamma\cup e$ contains
new Y-graphs, whose central vertices are $v_{1}$ or $v_{2}$ and
one of the edges is $e$. We need to show that the phase $\phi_{Y}$
on these new Y's is the same as on the old ones. Let $\{e,f_{1},f_{2}\}$
be such a Y-graph (see figure \ref{fig:Adding-an-edge}(a)). Let $\alpha_1$ and $\alpha_2$ be endpoints of $f_1$ and $f_2$. By $3$-connectedness, there is a path between $\alpha_1$ and $\alpha_2$ which does not contain $v_1$ or $v_2$.  In this way we obtain a cycle $C$, as shown in figure \ref{fig:Adding-an-edge}(a). Again by $3$-connectedness, there is a path $P$ from $v_2$ to a vertex $\beta$ in $C$
%CHANGE - insert
 which does not contain $\alpha_1$ and $\alpha_2$. Let $Y'$ be the Y-graph with $\beta$ as its center and edges along $C$ and $P$, as shown in figure \ref{fig:Adding-an-edge}(a). Then $Y'$ belongs to $\Gamma$. Applying Fact \ref{fact1} and  relation (\ref{eq:AB-2}) (cf.~the proof of Lemma 1) to the cycle $C$ and
the two Y-graphs discussed, the result follows.\\
%Assume
%that vertices $\alpha_{1}$ and $\alpha_{2}$ are adjacent. Then $\alpha_{1}\rightarrow\alpha_{2}\rightarrow v_{1}\rightarrow\alpha_{1}$
%form a cycle, $C$. Since $\Gamma$ is connected there must be a path
%in $\Gamma$ joining $v_{1}$ and $v_{2}$ and hence joining $v_{2}$
%with $C$. As it does not matter if the path is disjoint with other
%vertices and edges of $C$ we assume it is. As a result we obtain
%the Y graph, $\{f_{1},f_{2},f_{3}\}$ which belongs to $\Gamma$.
%Applying relation (\ref{eq:lasso-relation}) to the cycle $C$ and
%the two discussed Y graphs the result follows.

%As the graph $\Gamma$
%is $3$-connected there are at least three independent paths between
%$\alpha_{1}$ and $\alpha_{2}$. We claim that at least one of them
%is disjoint with vertex $v_{2}$. To see this assume the worst possible
%case, i.e that the one of the paths is $\alpha_{1}\leftrightarrow v_{1}\leftrightarrow\alpha_{2}$.
%We are therefore left with at least two independent paths. Their independence
%implies that only one of them can contain vertex $v_{2}$.
%As a result we get a cycle $C$ (see figure ).
%Similarly there must be a path in $\Gamma$ joining $v_{1}$
%and $v_{2}$. As a result we obtain the Y graph, $\{f_{1},f_{2},f_{3}\}$ which belongs to $\Gamma$.

\noindent 2.~Vertex expansion: Let $\Gamma$ be a $3$-connected simple graph and
let $v$ be a vertex of degree at least four. Let $\tilde{\Gamma}$
be a graph derived from $\Gamma$ by expanding at the vertex $v$,
and assume that the new vertices, $v_{1}$ and $v_{2}$, are at least
$3$-valent. These assumptions are necessary for $\tilde{\Gamma}$
to be $3$-connected \cite{tutte01}. Note that $\Gamma$ and $\tilde{\Gamma}$ have
the same number of independent cycles. Moreover, by splitting at the
vertex $v$ we do not change the relations between the $\phi_{Y}$
phases of $\Gamma$. This is simply because if the equality of some
of the $\phi_{Y}$ phases required a cycle passing through $v$, one can now use the cycle with one more edge passing through $v_{1}$ and $v_{2}$
in $\tilde{\Gamma}$. The graph $\tilde{\Gamma}$ contains new
Y-graphs, whose central vertices are $v_{1}$ or $v_{2}$ and
one of the edges is $e=v_{1}\leftrightarrow v_{2}$. We need to show
that the phase $\phi_{Y}$ on these new Ys is the same as on the old
ones. Let $\{e,f_{1},f_{2}\}$ be such a graph and let $\alpha_1$ and $\alpha_2$ be endpoints of $f_1$ and $f_2$. By $3$-connectedness, there is a path between $\alpha_1$ and $\alpha_2$ which does not contain $v_1$ or $v_2$.  In this way we obtain a cycle $C$, as shown in figure \ref{fig:Adding-an-edge}(b).  Again by $3$-connectedness, there is a path $P$ from $v_2$ to a vertex $\beta$ in $C$
%CHANGE - insert
which does not contain $\alpha_1$ and $\alpha_2$.  Let $Y'$ be the Y-graph with $\beta$ as its center and edges along $C$ and $P$, as shown in figure \ref{fig:Adding-an-edge}(b).  Then $Y'$ belongs to $\Gamma$. Applying Fact \ref{fact1} and  relation (\ref{eq:AB-2})
to the cycle $C$ and the two Y-graphs discussed, the result follows.
\end{proof}

\begin{figure}[h]
\begin{center}~~~~~\includegraphics[scale=0.4]{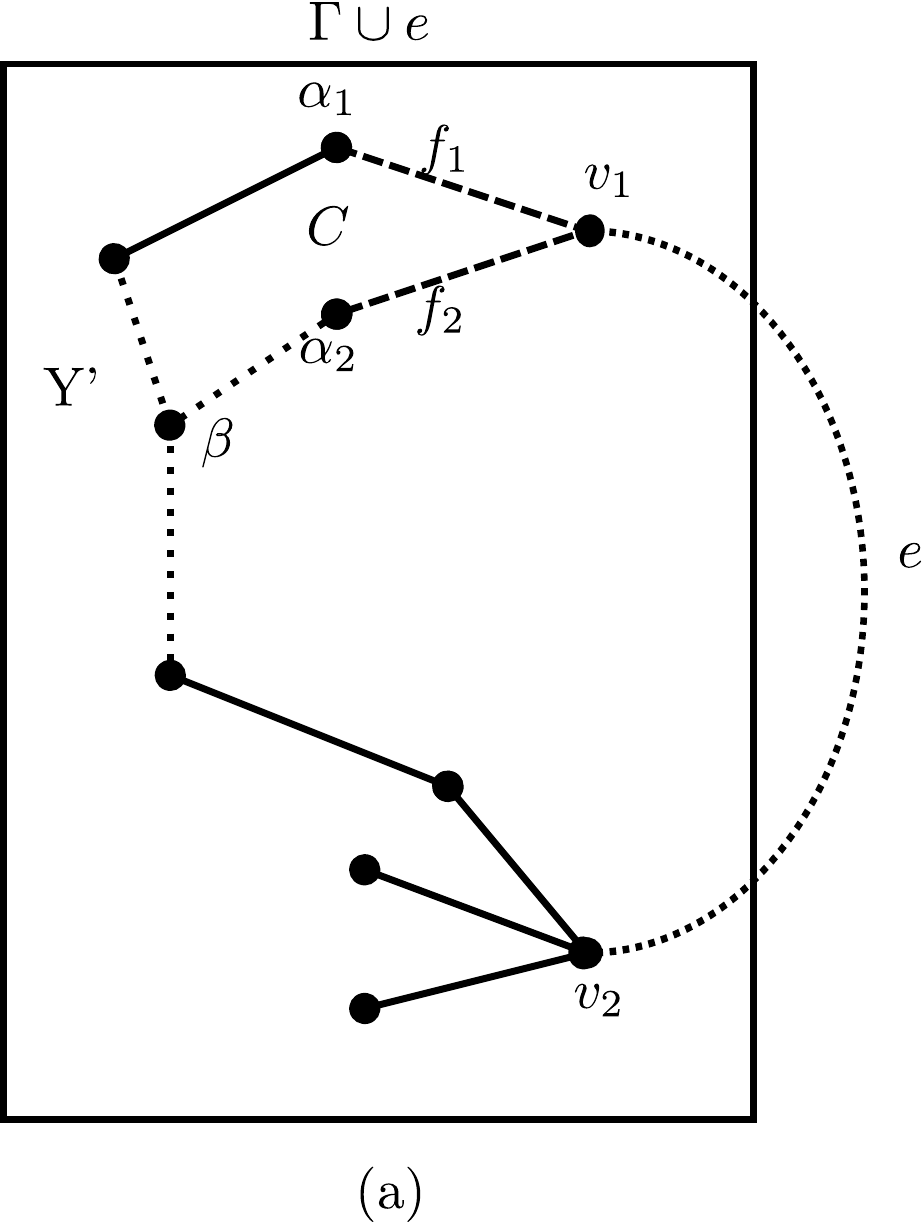}~~~~~~~~~~~~~~~~~~\includegraphics[scale=0.4]{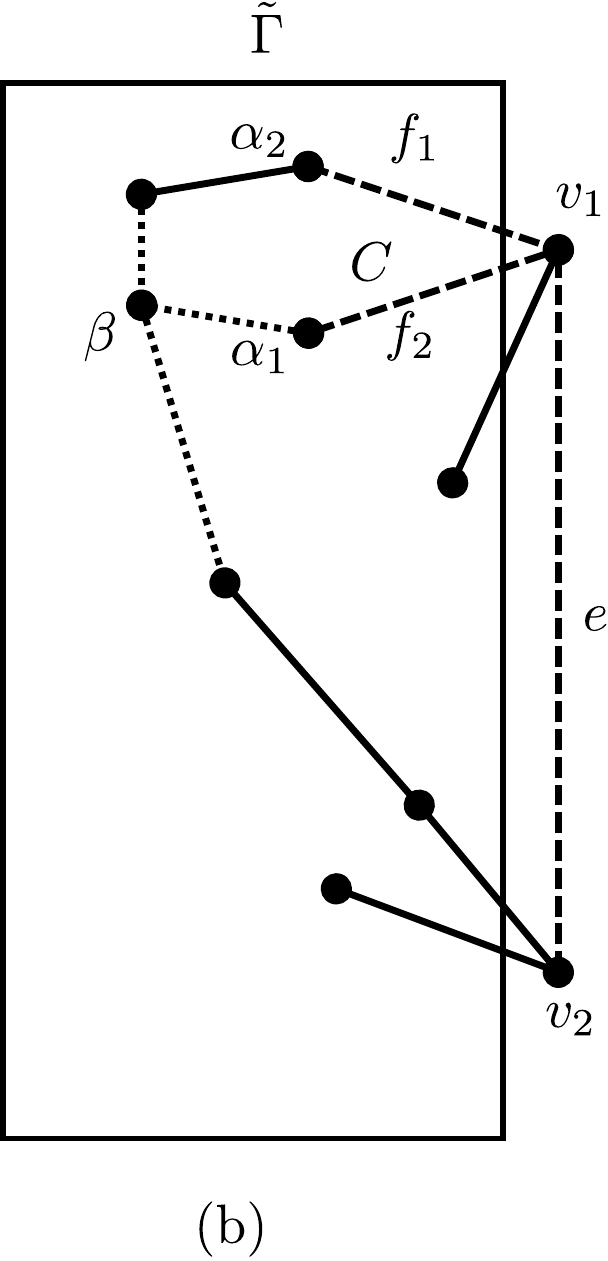}\end{center}

\caption{\label{fig:Adding-an-edge}(a) Adding an edge (b) Expanding at the vertex.}
\end{figure}

\begin{theorem}\label{3-final}
For a $3$-connected simple graph, $H_{1}(\mathcal{D}^{2}(\Gamma))=\mathbb{Z}^{\beta_1(\Gamma)}\oplus A$,
where $A=\mathbb{Z}_{2}$ for non-planar graphs and $A=\mathbb{Z}$
for planar graphs. \end{theorem}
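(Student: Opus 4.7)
The strategy is to combine the spanning-set description of $H_1(\mathcal{D}^2(\Gamma))$ from section \ref{sub:An-over-complete-basis} with Lemma 2 in order to reduce the theorem to identifying a single cyclic summand, and then to distinguish the planar and non-planar cases by exploiting embeddings of $\Gamma$ into Euclidean space.

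First, by the spanning-set analysis of section \ref{sub:An-over-complete-basis} together with Lemma 2, for any $3$-connected simple graph $\Gamma$ all Y-cycles contribute a single generator $\phi_Y$ (up to sign), while the $\beta_1(\Gamma)$ AB-cycles associated with a chosen basis of $H_1(\Gamma)$ are free generators independent of $\phi_Y$. This immediately yields a decomposition $H_1(\mathcal{D}^2(\Gamma)) = \mathbb{Z}^{\beta_1(\Gamma)} \oplus A$, where $A$ is cyclic and generated by $\phi_Y$. The remainder of the proof amounts to identifying the order of $\phi_Y$ in terms of planarity.

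In the planar case, I would employ a planar embedding $\iota : \Gamma \hookrightarrow \mathbb{R}^2$, which induces a continuous map $\mathcal{D}^2(\Gamma) \to C_2(\mathbb{R}^2)$ and therefore a homomorphism $\iota_* : H_1(\mathcal{D}^2(\Gamma)) \to H_1(C_2(\mathbb{R}^2)) = \mathbb{Z}$. A direct check on any Y-subgraph shows that $\iota_*(\phi_Y) = \pm 1$, the generator. Since $\phi_Y$ maps to an element of infinite order, it has infinite order in $A$, and hence $A \cong \mathbb{Z}$. In the non-planar case, I would appeal to Kuratowski's theorem to extract from $\Gamma$ a subdivision of $K_5$ or $K_{3,3}$. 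To establish $\phi_Y \neq 0$, embed $\Gamma$ into $\mathbb{R}^3$ and use the induced map $H_1(\mathcal{D}^2(\Gamma)) \to H_1(C_2(\mathbb{R}^3)) = \mathbb{Z}_2$, under which the Y-cycle maps to the nontrivial generator. To establish $2\phi_Y = 0$, exhibit explicitly in $\mathcal{D}^2(K_5)$ (respectively $\mathcal{D}^2(K_{3,3})$) a $2$-chain whose boundary is homologous to $2\phi_Y$, using the larger supply of disjoint edge-pair $2$-cells these graphs afford; by Remark \ref{remark1} the resulting relation passes to $\Gamma$. Combining the two parts yields $A \cong \mathbb{Z}_2$.

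The main obstacle I anticipate is the explicit construction of the $2$-chain in $\mathcal{D}^2(K_5)$ or $\mathcal{D}^2(K_{3,3})$ realising $2\phi_Y = 0$. This is a concrete combinatorial problem: one must assemble the disjoint edge-pair $2$-cells so that, upon taking the boundary, all AB-contributions and contractible pieces cancel and precisely twice the Y-exchange cycle remains. I expect the symmetric structure of $K_{3,3}$ (with its natural antipodal involution on a $6$-cycle traversed by two particles in opposite directions) to give the cleanest realisation, after which Wheel-theorem-style persistence under edge addition and vertex expansion, as used in Lemma 2, propagates the identity $2\phi_Y = 0$ to all $3$-connected non-planar graphs.
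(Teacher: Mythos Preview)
Your proposal is correct and follows essentially the same route as the paper: reduce to a single cyclic summand $A$ via Lemma~2 and the spanning set of section~\ref{sub:An-over-complete-basis}, then use Kuratowski together with the explicit $K_5$/$K_{3,3}$ computation for the non-planar case and a planar embedding for the planar case. The paper simply cites the $K_5$ and $K_{3,3}$ calculations (referencing \cite{JHJKJR} and \cite{KP11}) rather than building the $2$-chain by hand, and for the planar case it phrases the argument dually, pulling back the explicit anyon gauge potential ${\bf A}({\bf r}) = \frac{\alpha}{2\pi}\,\hat{\bf z}\times {\bf r}/|{\bf r}|^2$ to realise an arbitrary $\phi_Y$, rather than pushing the $Y$-cycle forward to the generator of $H_1(C_2(\mathbb{R}^2))$. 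Your extra step of detecting $\phi_Y\neq 0$ via an embedding into $\mathbb{R}^3$ is a pleasant uniformisation (both cases become ``push forward along an embedding into Euclidean space''), but the paper gets this for free from the cited $\mathbb{Z}_2$ computation, so it is not needed; likewise your closing remark about Wheel-theorem persistence is unnecessary once you have invoked Remark~\ref{remark1}.
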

\begin{proof}
By Lemmas 1 and 2 we only need to determine the phase $\phi_{Y}$. Using the construction in  \cite{JHJKJR},
it can be shown by elementary calculations that for the graphs $K_{5}$ and $K_{3,3}$,
$H_1(\mathcal{D}^2(\Gamma))=\mathbb{Z}^{\beta_1(\Gamma)}\oplus \mathbb{Z}_2$ (shorter calculations using discrete Morse theory are given in \cite{KP11}).
%It was shown in \cite{JHJKJR} that for the graphs $K_{5}$ and $K_{3,3}$
%$H_1(\mathcal{D}^2(\Gamma))=\mathbb{Z}^{\beta_1(\Gamma)}\oplus \mathbb{Z}_2$.
Therefore the phase $\phi_{Y}=0$ or $\pi$. By Kuratowski's
theorem \cite{Kuratowski30} every non-planar graph contains a subgraph
which is isomorphic to $K_{5}$ or $K_{3,3}$. This proves the statement
for non-planar graphs.

If $\Gamma$ is planar, then any phase $\phi_Y$ can be realised.  This can be demonstrated explicitly by appealing to the well-known anyon gauge potential for two particles in the plane, %for description of anyons on the plane in terms of  the two-particle vector potential
\[ {\bf A}({\bf r}) =   \frac{\alpha}{2\pi} {\bf \hat{z}}\times \frac{\bf r} {|{\bf r}|^2}. \]
The line integral of the one-form
\[  \omega = {\bf A}({\bf r_2} - {\bf r_1})\cdot {\bf d r_1} +  {\bf A}({\bf r_1} - {\bf r_2})\cdot {\bf d r_2} \]
around a primitive cycle in which the two particles are exchanged yields the anyon phase $\alpha$.
%\[ {\bf A}({\bf r_1} , {\bf r_2}) =   \frac{\alpha}{2\pi} {\bf \hat{z}}\times \frac{{\bf r_1} - {\bf r_2}}{{|\bf r_1} - {\bf r_2}|^2},\]
%whose line integral around a cycle exchanging the two particles yields an anyon phase $\alpha$.
If $\Gamma$ is drawn in the plane and each edge of $\mathcal{D}^2(\Gamma)$ is assigned the phase given by the line integral of $\omega$, then the phase associated with exchanging the particles on a $Y$-subgraph is given by $\alpha$. 

%have the anyon phase and
%hence $A=\mathbb{Z}$. This is because for planar graphs, one can introduce anyon phases by drawing the graph in the plane and integrating the anyon vector potential $\frac{\alpha}{2\pi} \hat{z}\times \frac{r_1 - r_2}{|r_1 - r_2|^2}$ along the edges of the two-particle graph, where $r_1$ and $r_2$ are the positions of the particles.
%

\end{proof}

For a given cycle on a 3-connected graph, it follows from Theorem \ref{3-final} and  relation (\ref{eq:AB-2}) that the difference between AB-phases (corresponding to different positions of the stationary particle) is either $0$ or $2\phi_Y$.  If the graph is nonplanar, we have that $2\phi_Y = 0 \mod 2\pi$, so that
%for a given cycle is either $0$ or $2since the $\phi_Y
%Finally, note that as a direct consequence of relation (\ref{eq:AB-2}) and Theorem \ref{3-final},
the AB-phases %for $3$-connected nonplanar graphs
are independent of the position of stationary particle.
% (the difference between AB-phases for a given cycle is either $0$ or $2since the $\phi_Y$ phases are equal up to sign, they are . This is because This is not necessarily the case for $3$-connected planar graphs, as $\phi_Y$ phases are equal only up to the sign.

\subsection{$2$-connected graphs}\label{subsec: two-connected}

In this subsection we discuss $2$-connected graphs. First, by
considering a simple example we show that in contrast to $3$-connected
graphs it is possible to have more than one $\phi_{Y}$ phase. Using
a decomposition procedure of a $2$-connected graph into $3$-connected
graphs and topological cycles we provide the formula for $H_{1}(\mathcal{D}^{2}(\Gamma))$.

\begin{figure}[h]
\begin{center}\includegraphics[scale=0.5]{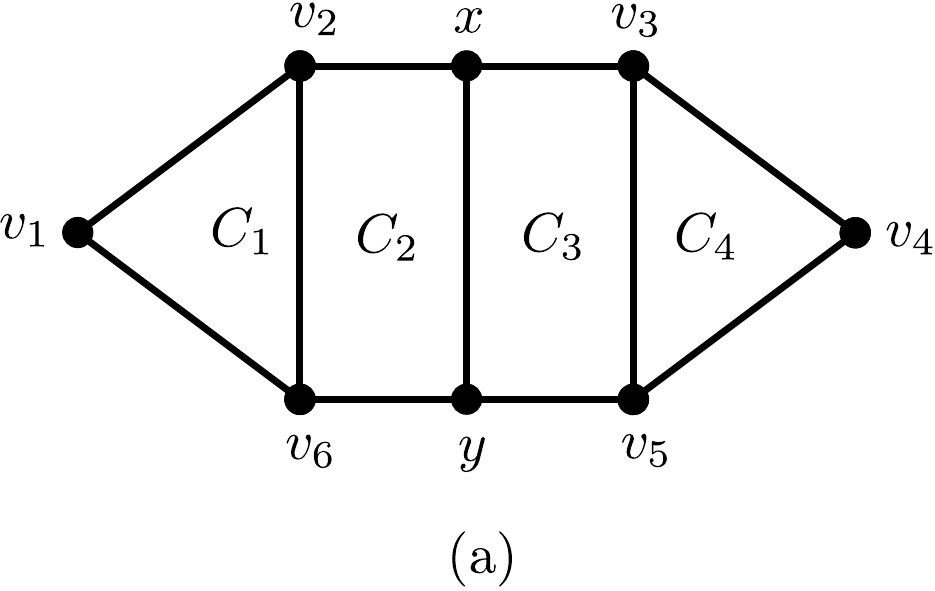}~~~\includegraphics[scale=0.5]{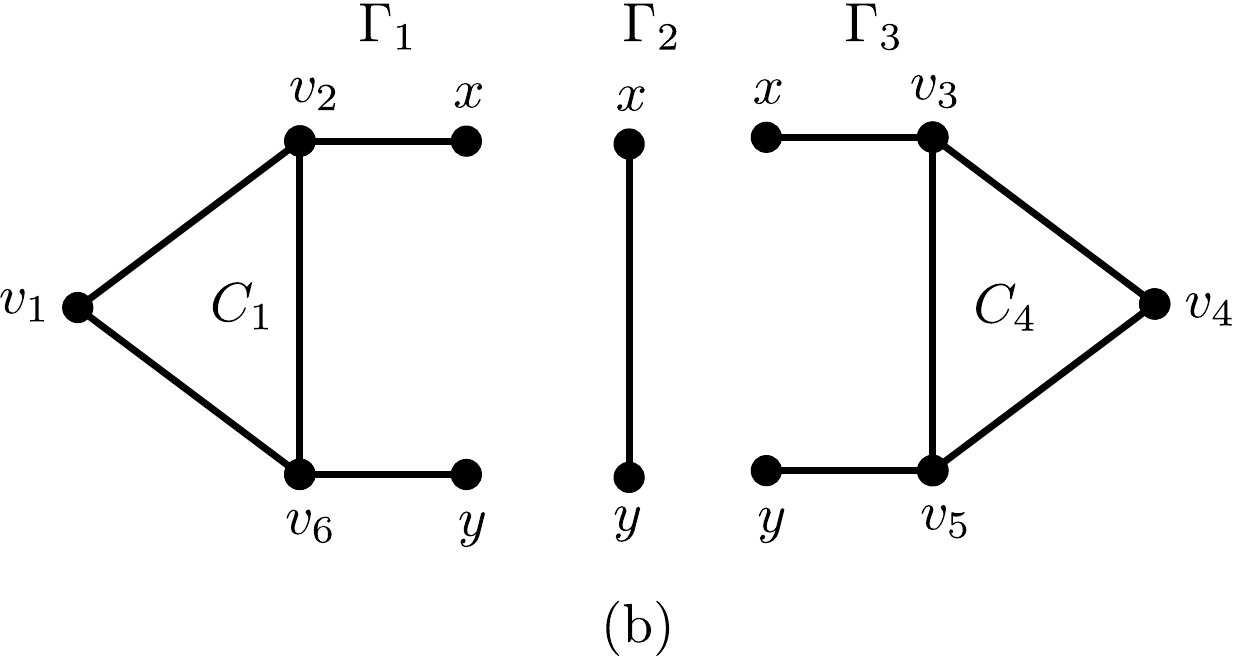}\end{center}

\begin{center}~~~~~~~~~~~~~~\includegraphics[scale=0.5]{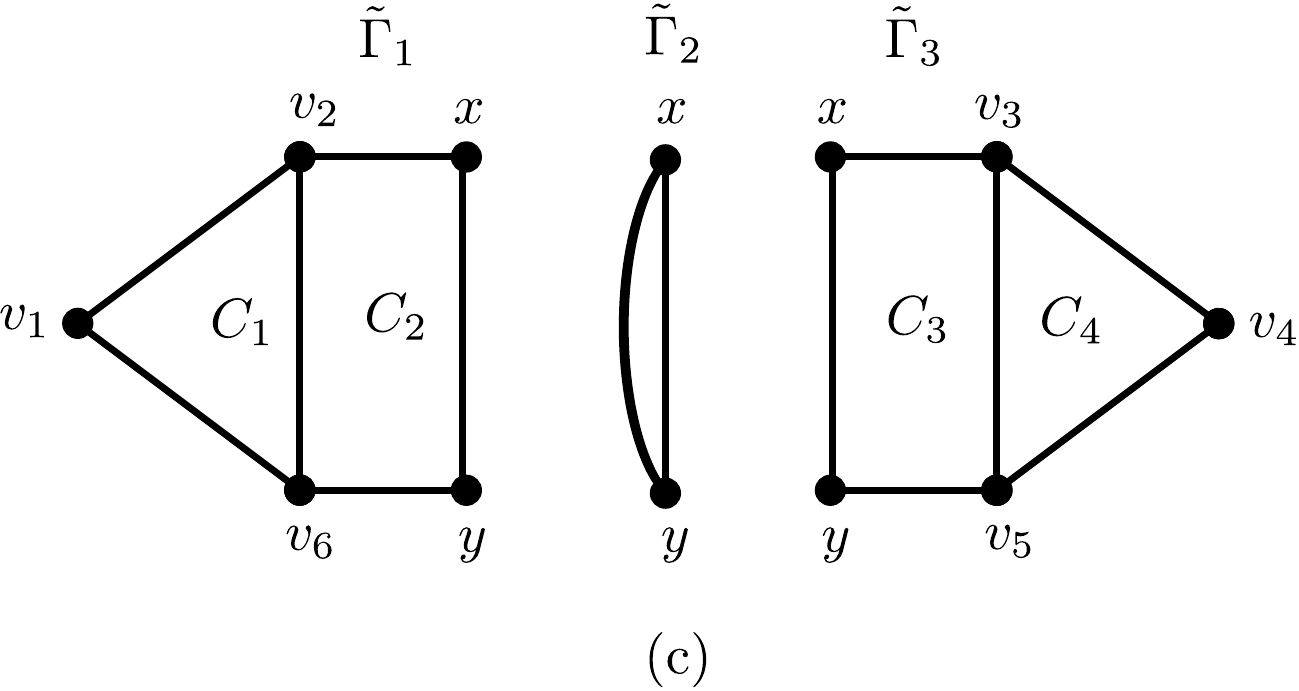}\end{center}

\caption{\label{fig:2-conn-ex}(a) An example of a $2$-connected graph, (b) the
components of the $2$-cut $\{x,y\}$, (c) the marked components. }
\end{figure}

\begin{example}
Let us consider graph $\Gamma$ shown in figure \ref{fig:2-conn-ex}(a).
Since vertices $v_{1}$ and $v_{4}$ are $2$-valent, $\Gamma$ is
not $3$-connected. It is however $2$-connected. Note that $\beta_1(\Gamma)=4$
and that there are six Y-graphs, with central vertices $v_{2}$, $v_{3}$,
$v_{5}$, $v_{6}$, $x$ and $y$ respectively. Using Fact \ref{fact1} and  relation (\ref{eq:AB-2})
we verify that
\begin{gather}
\phi_{Y_{v_{2}}}=\phi_{Y_{v_{6}}},\,\,\phi_{Y_{v_{3}}}=\phi_{Y_{v_{5}}},\,\,\phi_{Y_{x}}=\phi_{Y_{y}}.\label{eq:example4}
\end{gather}
One can also show that the phases $\phi_{Y_{v_{2}}}$, $\phi_{Y_{v_{3}}}$ and $\phi_{Y_{x}}$ are independent.

(For completeness, we give an explicit argument,  showing that each one of the phases $\phi_{Y_{v_{2}}}, \phi_{Y_{v_{3}}}, \phi_{Y_{x}}$ can be made to be nonzero while the other two are made to be zero. Following the procedure of  \cite{JHJKJR}, we can assign an arbitrary phase $\alpha$ to the edge
$(v_4,v_5)\leftrightarrow (v_3,v_4)$ of $\mathcal{D}^2(\Gamma)$, and zero phase to all its other edges.  This is because $(v_4,v_5)\leftrightarrow (v_3,v_4)$
does not  belong to a contractible square in  $\mathcal{D}^2(\Gamma)$ (no edge of
$\Gamma$ disjoint from $v_3\leftrightarrow v_5$ has $v_4$ as a vertex).  Since $(v_4,v_5)\leftrightarrow (v_3,v_4)$ uses the edge $v_3 \leftrightarrow v_5$ in $\Gamma$, which belongs to $Y_{v_3}$ but not to $Y_x$ or $Y_{v_2}$, the phase $\phi_{Y_{v_{3}}}$ associated with particle exchange on $Y_{v_3}$ is given by $\alpha$ (up to a sign) while $\phi_{Y_{v_{2}}} = \phi_{Y_{y}} = 0$.  A similar argument, based on the fact that the edge  $(v_1,v_2)\leftrightarrow (v_1,v_6)$ also does not belong to a contractible square in $\mathcal{D}^2(\Gamma)$, leads to an assignment of phases with  $\phi_{Y_{v_{2}}}$ arbitrary, $\phi_{Y_{v_{3}}} = \phi_{Y_{y}} = 0$. Finally,  one can assign edge phases in $\mathcal{D}^2(\Gamma)$ so that $\phi_{Y_{y}}$ is arbitrary.  Adjusting the phases of the edges $(v_4,v_5)\leftrightarrow (v_3,v_4)$
and $(v_1,v_2)\leftrightarrow (v_1,v_6)$ so that $\phi_{Y_{v_{2}}} = \phi_{Y_{v_{3}}}  = 0$ (which doesn't affect $\phi_{Y_x}$), we obtain an assignment of phases with  $\phi_{Y_x}$ arbitrary and $\phi_{Y_{v_{3}}} = \phi_{Y_{v_{2}}} = 0$.  Thus,
$\phi_{Y_{v_{2}}}$, $\phi_{Y_{v_{3}}}$ and $\phi_{Y_x}$ are linearly independent.)

%For example, it is possible to assign phases to the {\it edges} of the two-particle graph, following the procedure of \cite{JHJKJR}, so that the only cycle with a nonzero phase is  $\phi_{Y_{v_2}}$.  This can be accomplished by appropriate choices of nonzero edge phases restricted to one-particle moves along the three edges of  $Y_{v_2}$.)
%
Therefore we have three independent $\phi_{Y}$ phases and four AB-phases, and so
\begin{gather}
H_{1}(\mathcal{D}^{2}(\Gamma))=\mathbb{Z}^{7}.
\end{gather}
Vertices $\{x,y\}$ constitute a $2$-vertex cut of $\Gamma$, i.e.
after their deletion $\Gamma$ splits into three connected components
$\Gamma_{1}$, $\Gamma_{2}$, $\Gamma_{3}$ (see figure \ref{fig:2-conn-ex}(b)).
They are no longer $2$-connected. Moreover, for example, the two Y-subgraphs $Y_{v_{2}}$ and $Y_{v_{6}}$ for which $\phi_{Y_{v_{2}}}=\phi_{Y_{v_{6}}}$
in $\Gamma$ no longer satisfy this condition in $\Gamma_{1}$, i.e. $\phi_{Y_{v_{2}}}\neq\phi_{Y_{v_{6}}}$ in $\Gamma_1$. This
is because the AB-phases $\phi_{C_{1},1}^{x}$ and $\phi_{C_{1},1}^{y}$
are not necessarily equal.
(This can be readily seen by constructing the two-particle configuration space  ${\cal D}^2(\Gamma_1)$, an extension of the lasso in Figure~\ref{fig:The-lasso}(b), and recognising that the corresponding AB cycles are independent.)

To make components $\Gamma_i$ $2$-connected and at the same time keep the correct relations between the $\phi_{Y_{v_{i}}}$'s, it is enough to add to each
component $\Gamma_{i}$ an additional edge between vertices $x$ and $y$ (see figure \ref{fig:2-conn-ex}(c)).
The resulting graphs, which we call the marked components and denote by $\tilde{\Gamma}_{i}$ \cite{KP11}, are $2$-connected.  Moreover,
 the relations between the Y-graphs in each $\tilde{\Gamma}_{i}$
are the same as in $\Gamma$. The union of the three marked components
has, however, $\beta_1(\Gamma)+1$ independent cycles. On the other
hand, by splitting $\Gamma$ into marked components, the Y-cycles $Y_{x}$
and $Y_{y}$ have been lost. Since $\phi_{Y_{x}}=\phi_{Y_{y}}$ we
have lost one $\phi_{Y}$ phase. Summing up we can write $H_{1}(\mathcal{D}^{2}(\Gamma))\oplus\mathbb{Z}=\left [ \bigoplus_{i=1}^{3}H_{1}(\mathcal{D}^{2}(\tilde{\Gamma}_{i}))
\right ]\oplus\mathbb{Z}$.
\end{example}

\paragraph{2-vertex cut for an arbitrary $2$-connected graph $\Gamma$}

In figure \ref{fig:2-vertex-cut}(a) a more general $2$-vertex
cut is shown together with components $\Gamma_{i}$ red(note that $\Gamma_i$ consists of an interior $\gamma_i$, the edges connecting $\gamma_i$ to vertices $x$ and $y$,  and  $x$ and $y$ themselves). It is easy to
see that the marked components $\tilde{\Gamma}_{i}$ are $2$-connected
and the relations between the $\phi_{Y}$ phases in each $\tilde{\Gamma}_{i}$
are the same as in $\Gamma$.
Let $\mu(x,y)$ be the number of $\tilde{\Gamma}_{i}$ components into which $\Gamma$ splits after removal of vertices $x$ and $y$. By Euler's formula the union $\{\tilde{\Gamma}_{i}\}_{i=1}^{\mu(x,y)}$ of $\mu(x,y)$ marked components has
\begin{gather}
\beta=\# \mathrm{edges}-\# \mathrm{vertices}+\mu(x,y)\nonumber\\=E(\Gamma)+\mu(x,y)-\left(V(\Gamma)+2(\mu(x,y)-1)\right)+\mu(x,y)\nonumber \\
=E(\Gamma)-V(\Gamma)+2=\beta_1(\Gamma)+1,\label{eq:cycle-number-2-conn}
\end{gather}
independent cycles. By splitting $\Gamma$ into the marked components
we possibly lose $\phi_{Y}$ phases corresponding to the Y-graphs
with the central vertex $x$ or $y$. However
\begin{enumerate}
\item If three edges of a Y-graph are connected to the same component we do
not lose $\phi_{Y}$.\\

\item
  If two edges of a Y-graph are connected to the same component, we do not lose $\phi_Y$. The argument is as follows, referring to Figure~\ref{fig:2-vertex-cut}(b):  Let $Y_x$ denote a Y-graph centered at $x$ with vertices $u$ and $v$ in the interior $\gamma_2$ of the component $\Gamma_2$.  Since $\gamma_2$ is 1-connected, there is a path $P$  in $\gamma_2$ from $u$ to $v$ (short dashes in Figure~\ref{fig:2-vertex-cut}(b)). Together with the edges from $x$ to $u$ and $v$, $P$ forms a cycle $C$ in $\Gamma_2$ containing two edges of $Y_x$.  In addition, there is a path $Q$ in $\Gamma_2$ from $u$ to $y$.  Let $w$ denote the last vertex on $Q$ which belongs to $C$ ($w$ might coincide with $u$ or $v$, but need not).   Let $Y_w$ denote the $Y$-graph centred at $w$ with two edges along $C$ and one edge along $Q$.  Then $Y_w$ is contained in $\Gamma_2$, and by relation \eqref{eq:AB-2}, $\phi_{Y_x} = \phi_{Y_{w}}$.
 Therefore, $\phi_{Y_x} $ is not lost under splitting.
 %
 %.  .  and {\color{blue}is connected to $y$, there exist paths $P_{u,y}$ and $P_{{v,y}}$ in $\Gamma_2$ disjoint from $x$ which connect $u$ and $v$ to $y$.} Let $w^\prime$ be the vertex at which these paths intersect (note that $w^\prime$ may be equal to $y$). Let $C$ denote the cycle as shown in Figure~\ref{fig:2-vertex-cut}(b).  Let $Y_{w^\prime}$ denote  the $Y$-graph centered at $w^\prime$ with two edges in $C$. Then $Y_{w^\prime}$ lies on the same cycle $C$ as $Y_x$.  By relation \eqref{eq:AB-2}, $\phi_{Y_x} = \phi_{Y_{w^\prime}}$. Therefore, $\phi_{Y_x} $ is not lost under splitting.}
%If two edges of a Y-graph are connected to the same component we do
%not lose $\phi_{Y}$. This can be understood by looking at figure
%\ref{fig:2-vertex-cut}(b). The phase of the dashed Y-graph which
%is lost is the same as the phase of the dotted Y-graph inside $\Gamma_{2}$.
\end{enumerate}
Hence the $\phi_{Y}$ phases we lose correspond to the Y-graphs for which
each edge is connected to a different component. First we want to show that any two Y-graphs with the central vertex $x$ (or $y$) whose edges are connected to three fixed components have the same phase. It is enough to show this for Y-graphs which share the same center and two edges. Let us consider two such Y-graphs (see figure \ref{fig:2-vertex-cut}(c) -- the dashed edges are common to both Y-graphs; the distinct edges are  dotted and dotted-dashed). Let  $a_1$, $a_2$ and $b_1$, $b_2$ be the endpoints of the two shared edges, and $\alpha_{1}$, $\alpha_{2}$  the endpoints of the two distinct edges. As the $\gamma_i$'s are connected, there are paths $P_{a_{1},a_{2}}$, $P_{b_{1},b_{2}}$ and $P_{\alpha_{1},\alpha_{2}}$ in $\gamma_1$, $\gamma_3$ and $\gamma_2$ respectively.
%path between $\alpha_{1}$ and $\alpha_{2}$ in $\color{red} \gamma_{2}$
%and paths $P_{a_{1},a_{2}}$, $P_{b_{1},b_{2}}$ in $ \color{red} \gamma_{1} $ and $\color{red} \gamma_{3}$ respectively
Therefore, we can apply Fact \ref{fact1} and  relation (\ref{eq:AB-2})
to the cycle $x\rightarrow a_{1}\cup P_{a_{1},a_{2}}\cup a_{2}\rightarrow y\rightarrow b_{2}\cup P_{b_{1},b_{2}}\cup b_{1}\rightarrow x$ and the two considered Y-graphs
to conclude that their $\phi_{Y}$ phases are the same. Therefore, for each choice of three distinct components, there is just one $\phi_{Y}$ phase. Moreover,
 for a given choice of distinct components, the phase for the Y-graph
with  central vertex $x$ is the same as for the Y-graph with
central vertex $y$ (see figure \ref{fig:2-vertex-cut}(d) where the considered Y-graphs are denoted by dashed and dotted lines). This is once again due to Fact \ref{fact1} and  relation (\ref{eq:AB-2}) applied to the cycle $x\rightarrow a_{1}\cup P_{a_{1},a_{2}}\cup a_{2}\rightarrow y\rightarrow \alpha_{2}\cup P_{\alpha_{1},\alpha_{2}}\cup \alpha_{1}\rightarrow x$ and the two considered Y-graphs.

Summing up, the number of phases we lose when splitting $\Gamma$ into $\mu(x,y)$ marked components, $N_{2}(x,y)$, is equal to the number
of independent Y-graphs in the star graph with $\mu(x,y)$ edges. This can be calculated (see for example \cite{JHJKJR}) to be $N_{2}(x,y)=\frac{1}{2}\left(\mu(x,y)-2\right)\left(\mu(x,y)-1\right)$.
Hence
\begin{gather}
H_{1}(\mathcal{D}^{2}(\Gamma))= \left [ \bigoplus_{i=1}^{\mu(x,y)}H_{1}(\mathcal{D}^{2}(\tilde{\Gamma}_{i}))\right]\oplus\mathbb{Z}^{N_{2}(x,y)-1}.\label{eq:2-connected}
\end{gather}
Note that the $-1$ in the exponent here is to get rid of the additional AB-phase stemming from the calculation (\ref{eq:cycle-number-2-conn}).
Also, it is straightforward to see that although introducing an additional edge to a marked component may give rise to a new $Y$-graph, the associated $Y$-phase is not new,  and is equal to a $Y$-phase of $Y$-graph inside the component.
\begin{figure}[h]
\begin{center}\includegraphics[scale=0.35]{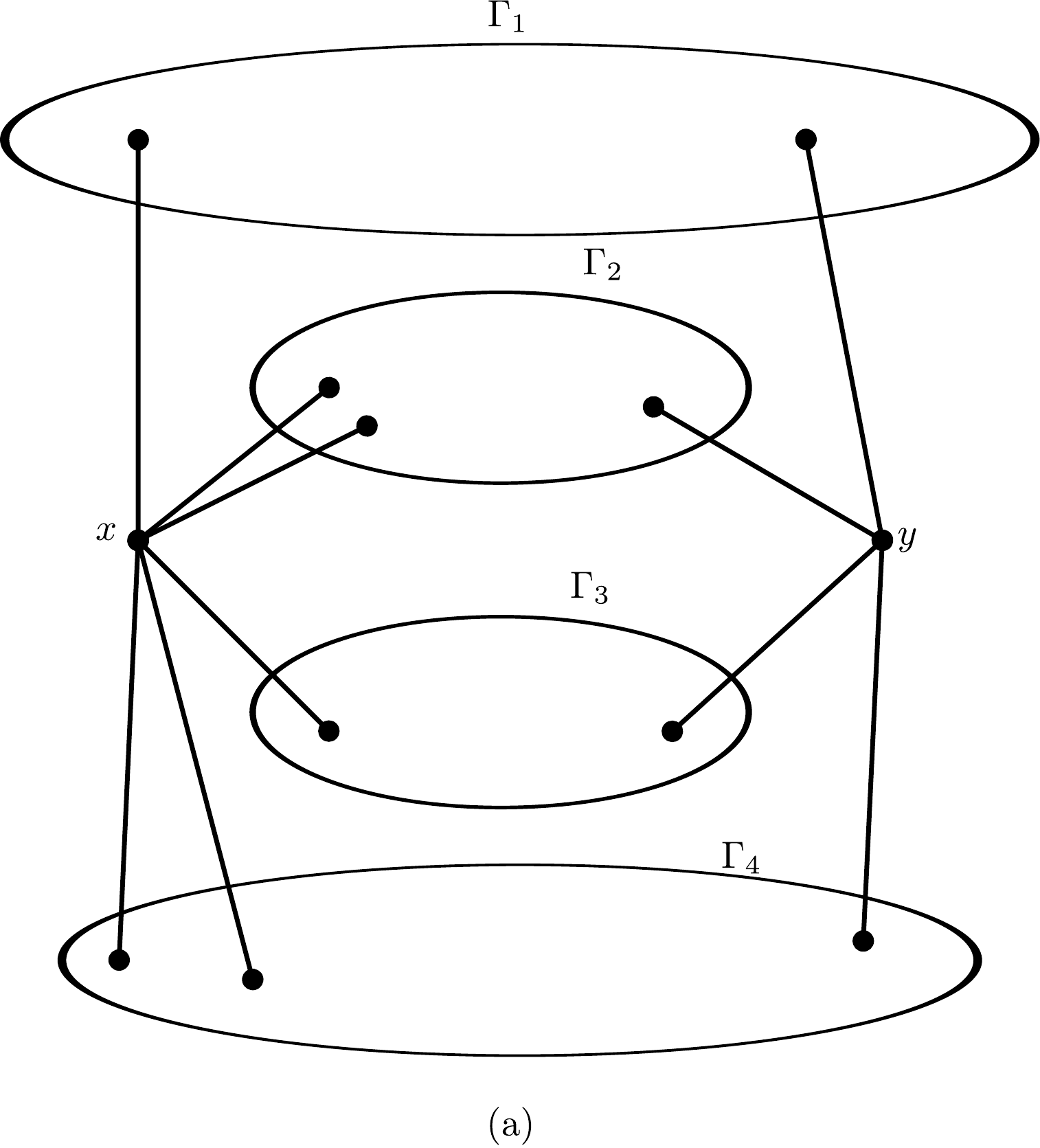}~~~ \includegraphics[scale=0.35]{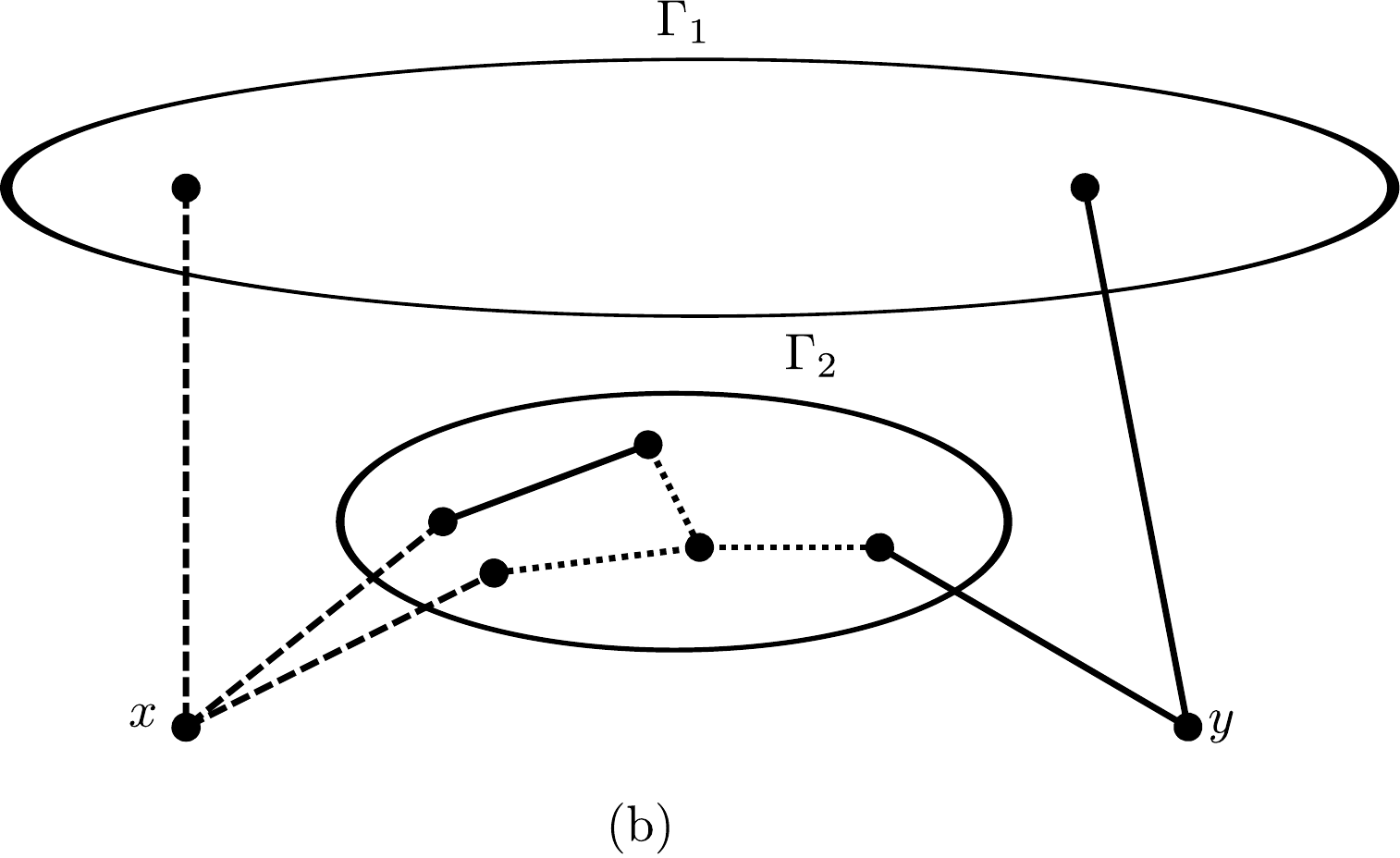}\end{center}
\bigskip{}
\begin{center}\includegraphics[scale=0.35]{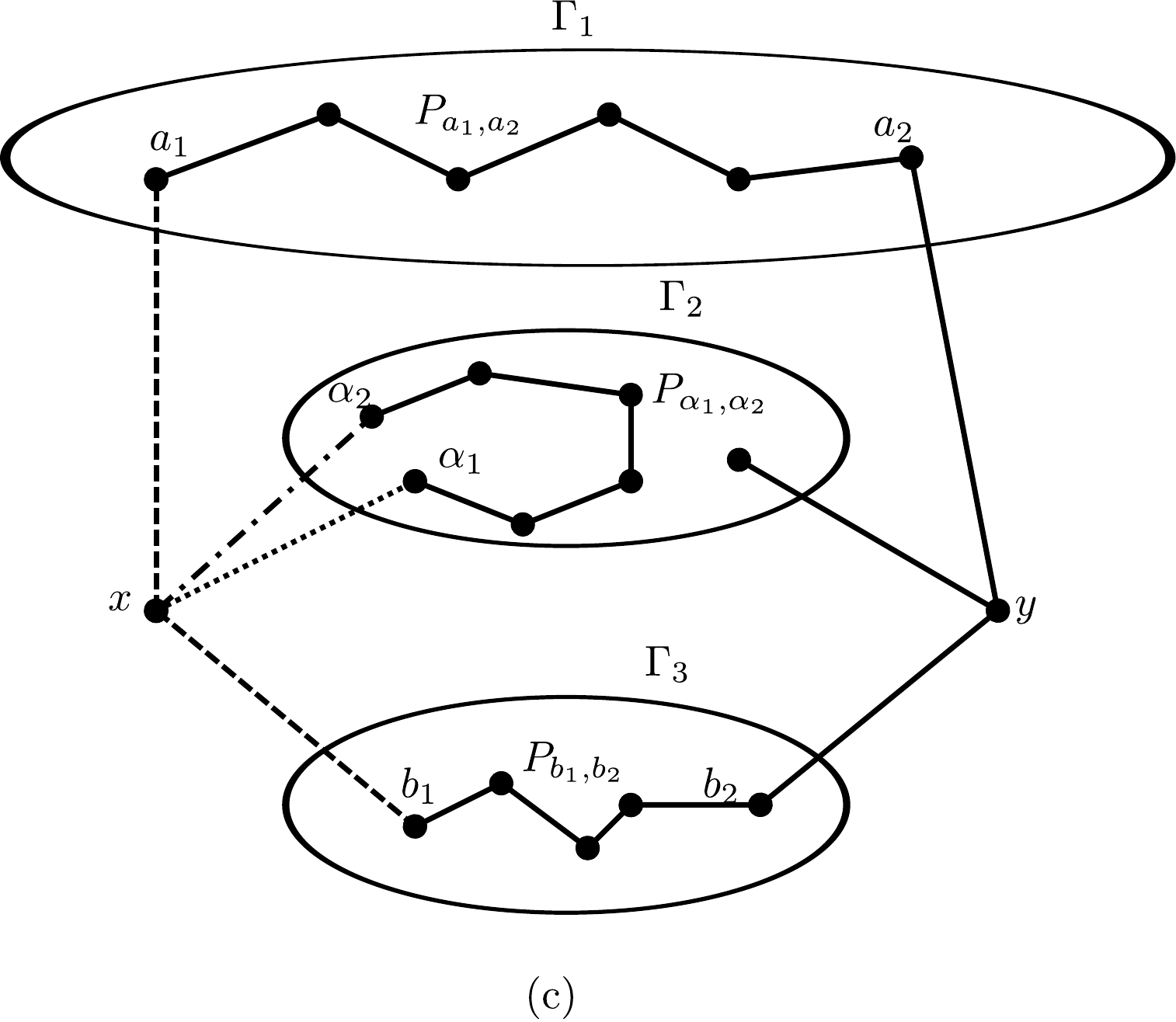}~~\includegraphics[scale=0.35]{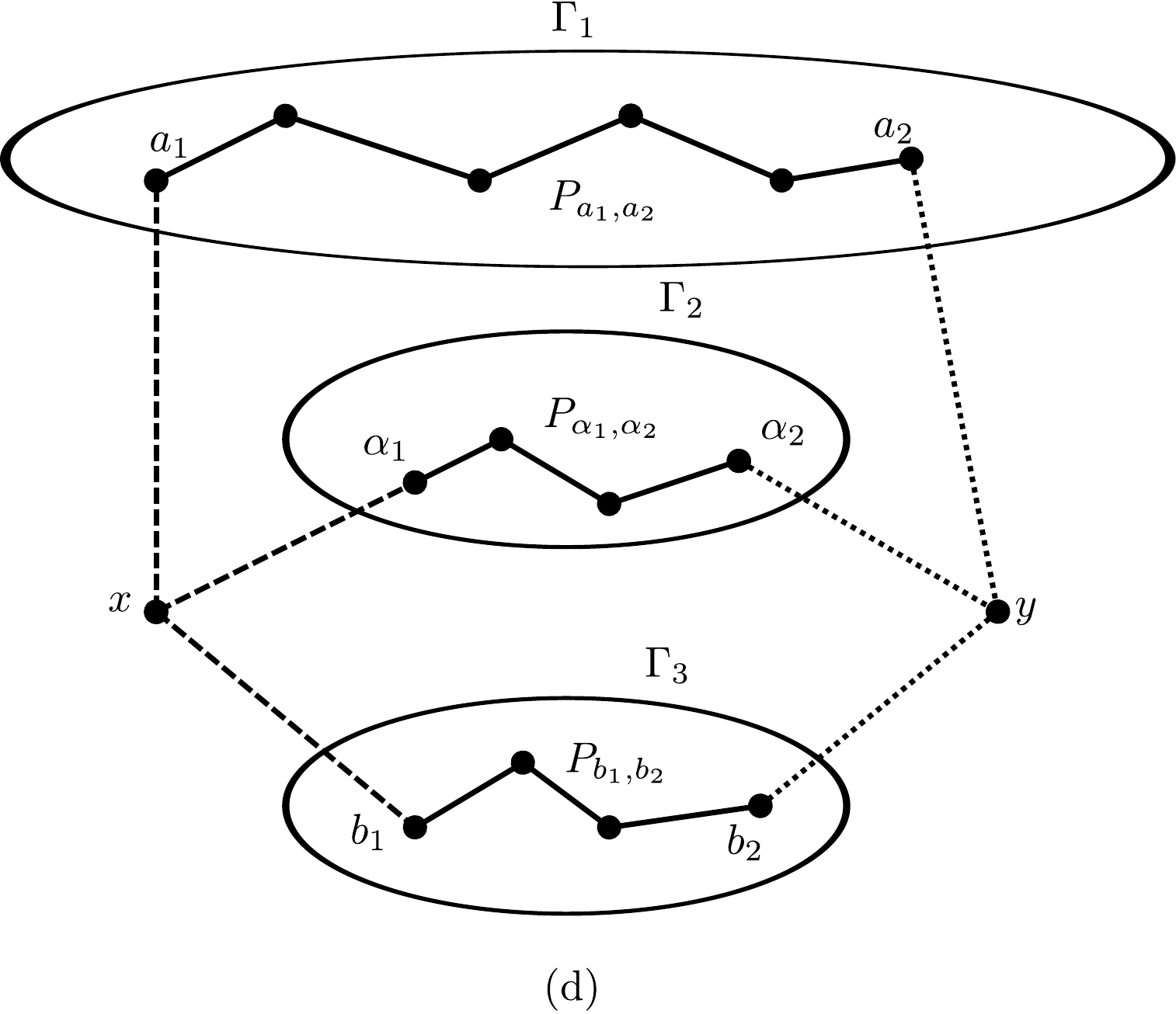}\end{center}
\caption{\label{fig:2-vertex-cut}(a) 2-vertex cut of $\Gamma$.  The $\gamma_i$'s are the interiors of the connected components $\Gamma_i$. %; that is, $\Gamma_i$ is the union of $\gamma_i$, the edges connecting $\gamma_i$ to $x$ and $y$, and $x$ and $y$ themselves.
(b) $Y_{x}$
with two edges connected to $\gamma_{2}$ (c) two Y-cycles with three
edges in three different components (d) the equality of $\phi_{Y_{x}}$
and $\phi_{Y_{y}}$. }
%{\color{red} In all figures $\Gamma_i$ consists of a 'blob' together with edges connecting it to vertices $x$ and $y$}}
\end{figure}
Finally, it is known in graph theory that by the repeated application
of the above decomposition procedure the resulting marked components
are either topological cycles or $3$-connected graphs \cite{tutte01}. Let $n$ be
the number of $2$-vertex cuts which is needed to get such a decomposition,
$N_{2}=\sum_{\{x_{i},y_{i}\}}N_{2}(x_{i},y_{i})$, $N_{3}$ the number
of planar $3$-connected components, $N_{3}^{\prime}$ the number of
non-planar $3$-connected components and $N_{3}^{''}$ the number of
the topological cycles. Let $\mu=N_{3}+N_{3}^{'}+N_{3}^{''}$. Then
\begin{gather}
H_{1}(\mathcal{D}^{2}(\Gamma))=\left[\bigoplus_{i=1}^{\mu}H_{1}(\mathcal{D}^{2}(\tilde{\Gamma}_{i}))\right ]\oplus\mathbb{Z}^{N_{2}-n},\label{eq:2-conn-intermidiate}
\end{gather}
where
\begin{gather}
H_{1}(\mathcal{D}^{2}(\tilde{\Gamma}_{i}))=\mathbb{Z}^{\beta_1(\tilde{\Gamma}_{i})}\oplus\mathbb{Z},\,\,\, \tilde{\Gamma}_{i}-\mathrm{planar}\\\nonumber
H_{1}(\mathcal{D}^{2}(\tilde{\Gamma}_{i}))=\mathbb{Z}^{\beta_1(\tilde{\Gamma}_{i})}\oplus\mathbb{Z}_{2},\,\,\,\tilde{\Gamma}_{i}-\mathrm{nonplanar}\\\nonumber
H_{1}(\mathcal{D}^{2}(\tilde{\Gamma}_{i}))=\mathbb{Z},\,\,\,\tilde{\Gamma}_{i}-\mathrm{topological}\,\,\mathrm{cycle}\\\nonumber
\end{gather}
Note that $\sum_{i}\beta_1(\tilde{\Gamma}_{i})+N_{3}^{''}=\beta_1(\Gamma)+n$
and therefore
\begin{gather}
H_{1}(\mathcal{D}^{2}(\Gamma))=\mathbb{Z}^{\beta_1(\Gamma)+N_{2}+N_{3}}\oplus\mathbb{Z}_{2}^{N_{3}^{\prime}}.\label{eq:2-c}
\end{gather}

\subsection{$1$-connected graphs\label{sub:One-connected-graphs}}
In this subsection we focus on $1$-connected graphs. Assume that $\Gamma$ is $1$-connected but not $2$-connected. There exists a vertex $v\in V(\Gamma)$ such that after its deletion
$\Gamma$ splits into at least two connected components. Denote these components by
$\Gamma_{1},\ldots,\Gamma_{\mu(v)}$.
It is to be understood that each component $\Gamma_{i}$  contains the edges which connect it to $v$, along with a copy of the vertex $v$ itself.  Let $E_{i}$ denote the number of edges at $v$ which belong to $\Gamma_{i}$.
% Assume that
%$\Gamma_{i}$ is attached to $v$ by $E_{i}$ edges and put $\nu(v)=\sum_{i}E_{i}$,
%so that $\nu(v)$ is the total number of edges at $v$.
By Euler's formula the union of components $\{\Gamma_{i}\}_{i=1}^{{\mu(v)}}$ has
\begin{gather}
E(\Gamma)-\left(V(\Gamma)+\mu(v)-1\right)+\mu(v)=\beta_1(\Gamma)\label{eq:cycles-1-conn}
\end{gather}
independent cycles, hence the number of independent cycles does not change
compared to $\Gamma$. Moreover, the phases $\phi_{Y}$ inside each
of the components are the same as in $\Gamma$. Note, however, that
by splitting we lose Y-graphs whose three edges do not belong to
one fixed component $\Gamma_{i}$. Consequently, there are two cases to consider:

\begin{figure}[h]
\begin{center}~~\includegraphics[scale=0.35]{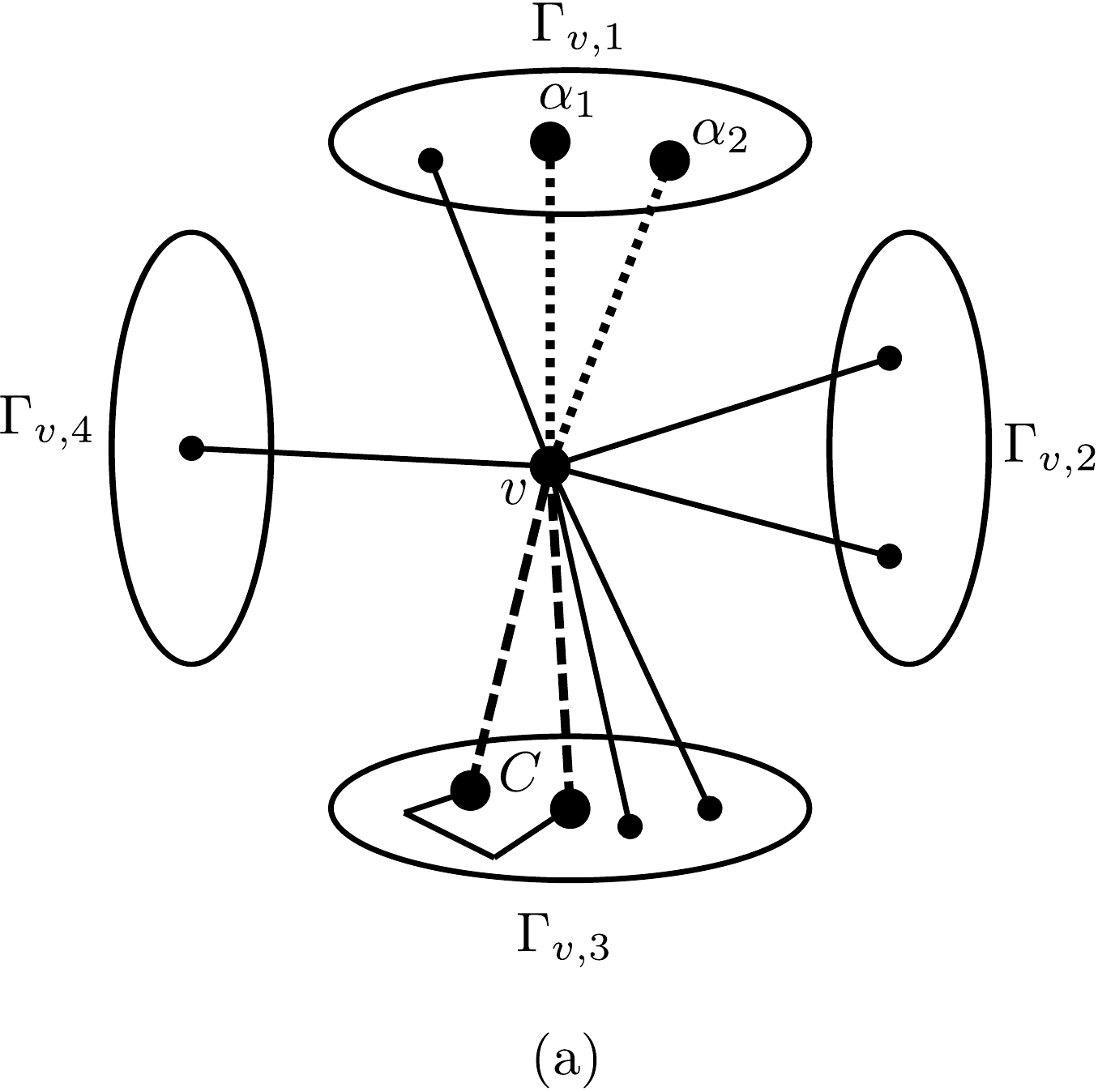}~~~~~~~~~~~~~~\includegraphics[scale=0.6]{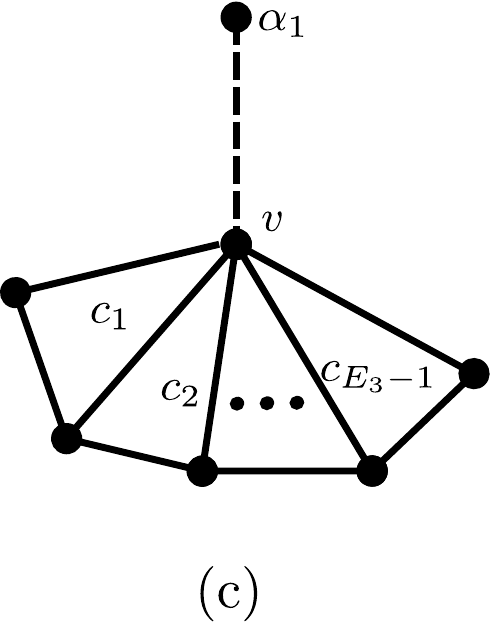}~\end{center}

%\bigskip{}

\begin{center}\includegraphics[scale=0.35]{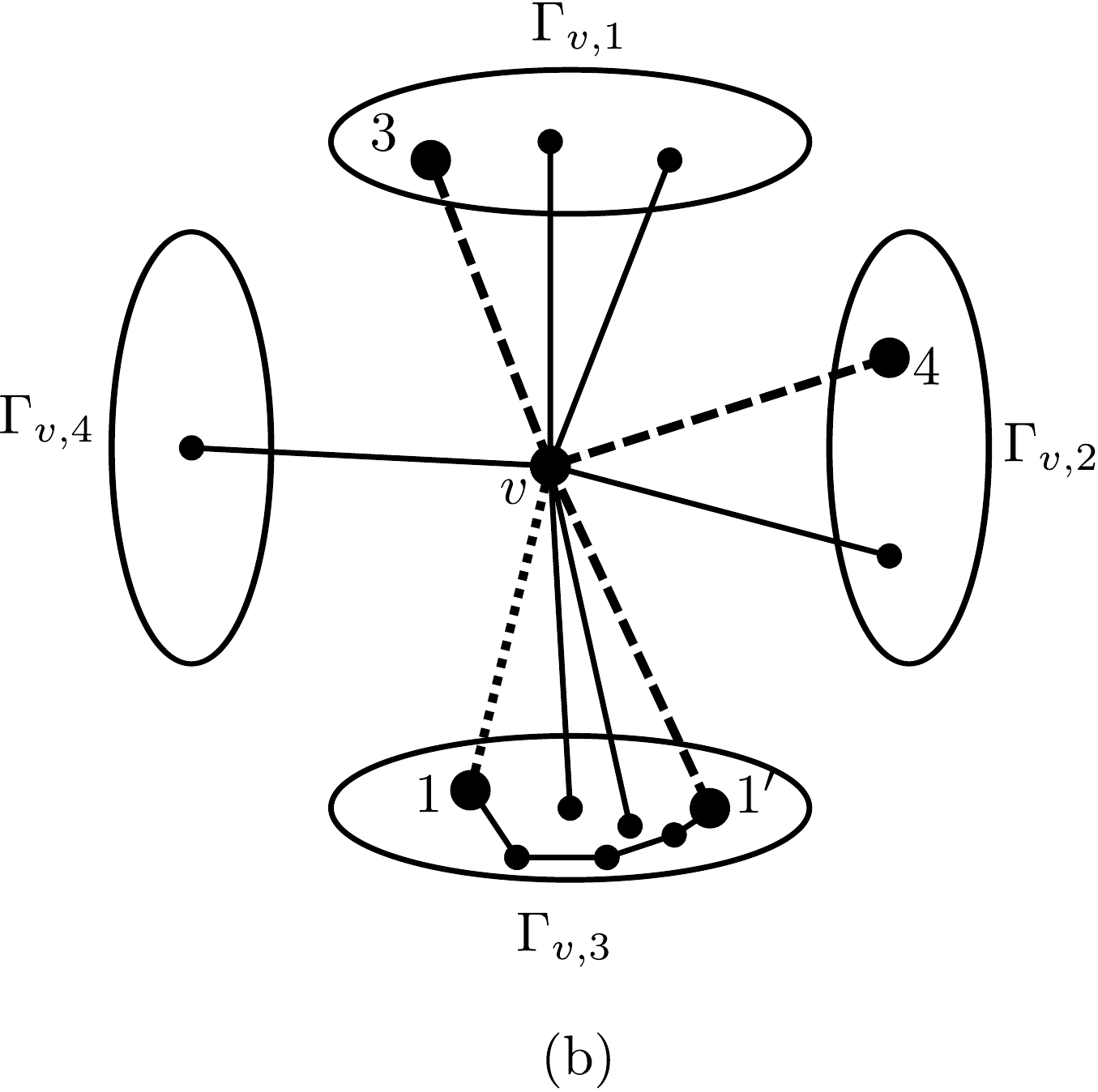}~~~\includegraphics[scale=0.5]{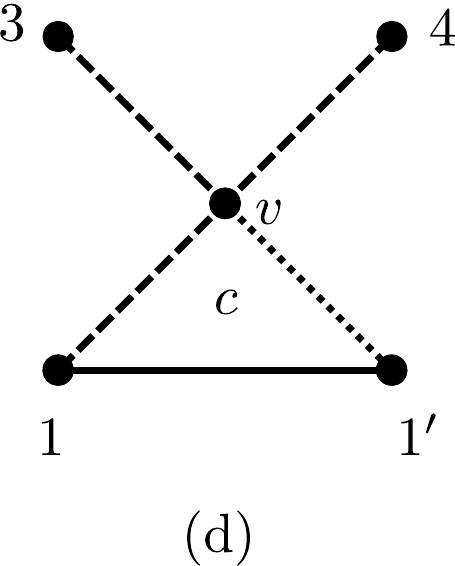}~~\includegraphics[scale=0.5]{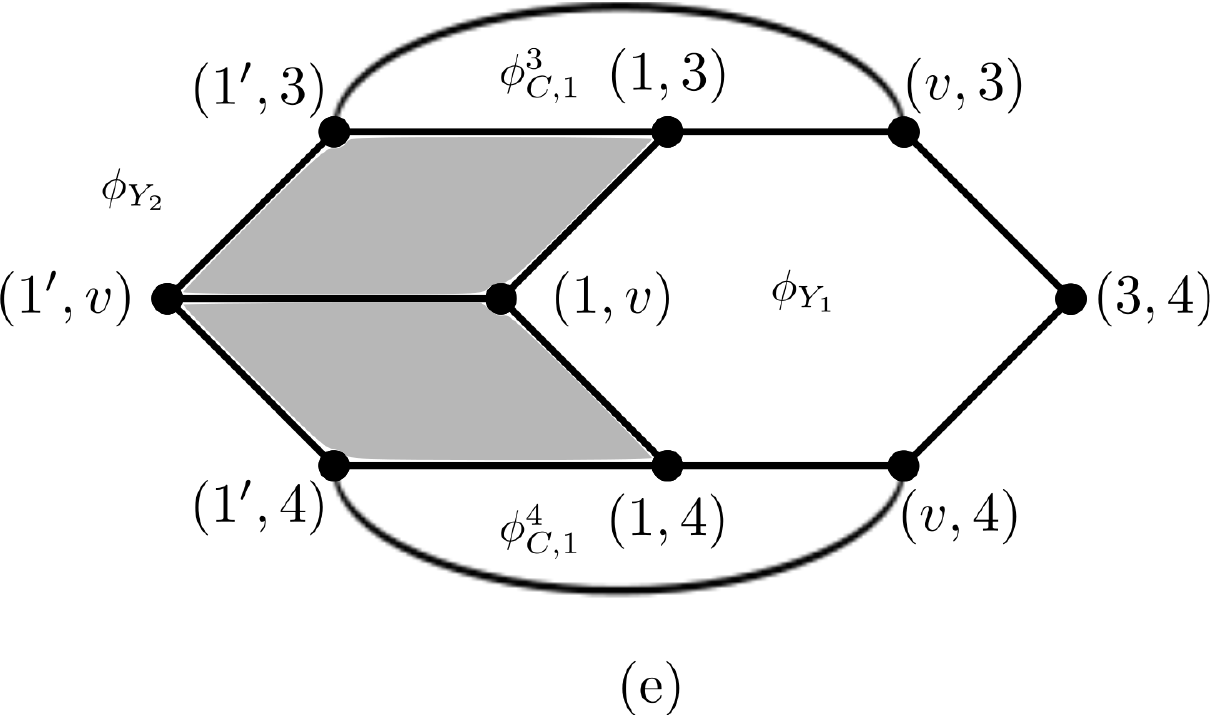}
\end{center}

\caption{\label{fig:1-vertex-cut}(a)  The $Y$-graphs $Y_1$ and $Y_2$ have central vertex $v$ and two common edges (long dashes) with vertices in $\Gamma_{v,3}$, but different edges (short dashes) with different vertices $\alpha_1$ and $\alpha_2$ in $\Gamma_{1,v}$.  Their exchange phases are the same.
%Two edges of the Y-graph, shown with dashed edges, are attached to $\Gamma_{v,3}$ while the third connects to $\Gamma_{v,1}$.
(b) Each edge of the Y-graph is attached
to a different component. (c) Y-graphs with two edges in the same component (d) Two Y-graphs centered at $v$ with external vertices $\{1,3,4\}$ and $\{1',3,4\}$ respectively. (e) The relevant part of $2$-particle configuration space of (d).}
\end{figure}

\begin{enumerate}
\item Two edges of the Y-graph are attached to one component, for example $\Gamma_{v,3}$,
while the third one is attached to another component, $\Gamma_{v,1}$. We claim that the phase $\phi_{Y}$ does
not depend on the choice of the third edge, provided
it is attached to $\Gamma_{v,1}$. To see this consider two Y-graphs,
$Y_{1}$ and $Y_{2}$ shown in figure \ref{fig:1-vertex-cut}(a).
Since vertices $\alpha_{1}$ and $\alpha_{2}$ are connected by a
path, by Fact \ref{fact1} $\phi_{C,1}^{\alpha_{1}}=\phi_{C,1}^{\alpha_{2}}$. Next, relation (\ref{eq:AB-2}) applied to cycle $C$ and the two considered Y  graphs gives $\phi_{Y_{1}}=\phi_{Y_{2}}$. \\

After choosing one edge of Y in component $\Gamma_{v,1}$ (by the above argument it does not matter which), we can choose the  two other edges in $\Gamma_{v,3}$ in $E_{3}\choose{2}$ ways. Therefore, {\it a priori}, we have $E_{3}\choose{2}$ Y-graphs to consider. There are, however,  relations between them. In order to find the relevant relations consider the graph shown in figure \ref{fig:1-vertex-cut}(c).
%, in which  the vertices in $\Gamma_{v,3}$ adjacent to $v$ are labeled $1$, $2$, \ldots, $\E_3 $. Let $c_i$ denote the cycle which joins consecutive vertices $i$ and $i+1$ to $v$ (such a cycle exists since $\Gamma_{v,3}$ is connected).
We are interested in  Y-graphs with one edge given by $\alpha_1 \leftrightarrow v$ (dashed line) and two edges joining $v$ to vertices in
 $\Gamma_{v,3}$, say  $j$ and $k$.
 Each such Y-graph determines a cycle $c$ in $\Gamma_{v,3}$ containing vertices $v$, $j$ and $k$ (since $\Gamma_{v,3}$ is connected).
We have that
\begin{gather}\label{eq: c2 c1 Y for 1 connected}
\phi_{c,2}=\phi_{c,1}^{\alpha_1}+\phi_{Y}.
\end{gather}
Therefore, the $E_{3}\choose{2}$ $Y$-phases under consideration are determined by the AB- and two-particle phases, $\phi_{c,2}$ and $\phi_{c,1}^{\alpha_1}$, of the associated cycles $c$.  These cycles may be expressed as linear combinations of a basis  of $E_3-1$ cycles, denoted $c_1, \ldots, c_{E_3 - 1}$, % of $E_3 - 1$,
as  in figure \ref{fig:1-vertex-cut}(c).  It is clear that if $c = \sum_{i=1}^{E_3} r_i c_i$, then %where $r_k \in {\mathbb Z}$, then
\begin{gather}
\phi_{c,1}^{\alpha_1}=\sum_{i=1}^{E_3-1} r_i \phi_{c_i,1}^{\alpha_1}, \quad \,\,\,\phi_{c,2}=\sum_{i=1}^{E_3-1} r_i \phi_{c_i,2}.
\end{gather}
Thus, the $Y$-phases under consideration may be expressed in terms of the $2(E_3-1)$ phases $\phi_{c_i,2}$ and $\phi_{c_i,1}^{\alpha_1}$. \\

Let $Y_i$ be the $Y$-graph which determines the cycle $c_i$.  We may turn the preceding argument around; from  \eqref{eq: c2 c1 Y for 1 connected}, the AB-phase $\phi_{c_i,1}^{\alpha_1}$ can be expressed in terms of $\phi_{Y_i}$ and $\phi_{c_i,2}$.  Combining the preceding observations, we deduce that the $\binom{E_3}{2}$  Y-phases lost when the vertex $v$ is removed may be expressed in terms of the phases $\phi_{c_i,2}$ and $\phi_{Y_i}$.  The phases  $\phi_{c_i,2}$ remain when $v$ is removed.   It follows that phases $\phi_{Y_i}$ suffice to determine all of the lost phases, so that the number of independent $Y$-phases lost is $E_3-1$.
Repeating this argument for each component, the total  number of Y-phases lost is $\sum_{i=1}^{\mu(v)}(E_{i}-1)(\mu(v)-1)=(\mu(v)-1)(\nu(v)-\mu(v))$,
where   $\nu(v) = \sum_i E_{i}$ is the valency of $v$.\\

\item Each edge of the Y-graph is attached to a different component. We will show now that once
three different components have been chosen it does not matter which
of the edges attaching $\Gamma_{v,i}$ to $v$ we choose.  It suffices to consider the case  where the edges differ for only one component.
Let us consider the two Y-graphs shown in figure \ref{fig:1-vertex-cut}(b).
The first one consists of the three dashed edges and the second of
two dashed edges attached to $\Gamma_{v,1}$ and $\Gamma_{v,2}$ respectively and
the dotted edged attached to $\Gamma_{v,3}$.  The two Y-graphs are shown on their own in figure \ref{fig:1-vertex-cut}(d);   we let $Y_1$ and $Y_2$ denote the Y-graphs with vertices $\{1,3,4,v\}$ and $\{1',3,4,v\}$ respectively. A subgraph of the corresponding 2-particle configuration space is shown in figure  \ref{fig:1-vertex-cut}(e).   %From  figure \ref{fig:1-vertex-cut}(e)
There  we see that
% in added in the previously considered step
%It is clear by figure \ref{fig:1-vertex-cut}(e) that
\begin{gather}
\phi_{Y_2}=\phi_{Y_1}+\phi_{c,1}^3+\phi_{c,1}^4.
\end{gather}
In Step~1 above, we showed that the AB phases   $\phi_{c,1}^3$ and $\phi_{c,1}^4$
can be expressed in terms of $\phi_{c,2}$ and Y-phases already accounted for in Step 1.
%  $\phi_{c_i,1}^{\alpha_1}$  $
%is given by \{1\leftrightarrow v,\,v\leftrightarrow3,\,v\leftrightarrow4\}$, and
%We will show
%that the phase of  $Y_2 = \{1^{\prime}\leftrightarrow v,\,v\leftrightarrow3,\,v\leftrightarrow4\}$
%is determined by the phase of   $Y_1 = \{1\leftrightarrow v,\,v\leftrightarrow3,\,v\leftrightarrow4\}$  as well as phases  already included in  Step 1 above.
%$Y_1$ and $Y_2$ are shown in isolation in figure \ref{fig:1-vertex-cut}(d)}, and a subgraph of the corresponding 2-particle configuration space is shown in figure  \ref{fig:1-vertex-cut}(e).  From  figure \ref{fig:1-vertex-cut}(e) we see that% in added in the previously considered step
%It is clear by figure \ref{fig:1-vertex-cut}(e) that
%\begin{gather}
%\phi_{Y_2}=\phi_{Y_1}+\phi_{c,1}^3+\phi_{c,1}^4.
%\end{gather}
%But {\color{red} as shown in Step 1, the phases $\phi_{c,1}^3$ and $\phi_{c,1}^4$
%can be expressed in terms of phases $\phi_{c,2}$ and Y-phases already accounted for.}
%are known, as they have been added in the previous step.
Thus, the number of the independent
Y-phases we lose is equal to the number of independent
Y-cycles in the two-particle configuration space of the star graph
with $\mu(v)$ edges, that is, $(\mu(v)-1)(\mu(v)-2)/2$.
\end{enumerate}
Summing up we can write
\begin{gather}
H_{1}(\mathcal{D}^{2}(\Gamma))=\left[\bigoplus_{i=1}^{\mu(v)}H_{1}(\mathcal{D}^{2}(\Gamma_{v,i}))\right]\oplus\mathbb{Z}^{N_{1}(v)},\label{eq:1-cut-forula}
\end{gather}
where $N_{1}(v)=(\mu(v)-1)(\mu(v)-2)/{2}+(\mu(v)-1)(\nu(v)-\mu(v))$.
It is known in graph theory \cite{tutte01} that by the repeated application
of the above decomposition procedure the resulting components become finally
$2$-connected graphs. Let $v_{1},\ldots,v_{l}$ be the set of cut
vertices such that components $\Gamma_{v_{i},k}$ are $2$-connected.
Making use of formula (\ref{eq:2-c}) we can write

\begin{gather}
H_{1}(\mathcal{D}^{2}(\Gamma))=\mathbb{Z}^{\beta(\Gamma)+N_{1}+N_{2}+N_{3}}\oplus\mathbb{Z}_{2}^{N_{3}^{\prime}},\label{eq:2-particle-final}
\end{gather}
where $N_{1}=\sum_{i}N_{1}(v_{i})$.

\section{n-particle statistics for $2$-connected graphs\label{sec:N-particle-statistics-for}}

Having discussed $2$-particle configuration spaces, we switch to the
$n$-particle case, $\mathcal{D}^{n}(\Gamma)$, where $n>2$. We proceed in a
similar manner to the previous section. First we give a spanning set of $H_1(\mathcal{D}^n(\Gamma))$. Next we show that if $\Gamma$ is $2$-connected
the first homology group stabilizes with respect to $n$, that is, $H_{1}(\mathcal{D}^{n}(\Gamma))=H_{1}(\mathcal{D}^{2}(\Gamma))$.
Making use of formula (\ref{eq:2-c})

\begin{gather*}
H_{1}(\mathcal{D}^{n}(\Gamma))=\mathbb{Z}^{\beta(\Gamma)+N_{2}+N_{3}}\oplus\mathbb{Z}_{2}^{N_{3}^{\prime}}.
\end{gather*}

\subsection{A spanning set of $H_1(\mathcal{D}^{n}(\Gamma))$\label{sub:An-over-complete-basis-2}}

In order to calculate $H_1(\mathcal{D}^{n}(\Gamma))$ we first need to
subdivide the edges of $\Gamma$ appropriately. By Theorem \ref{Abrams_thm} each edge of $\Gamma$
must be able to accommodate $n$ particles and each cycle needs to
have at least $n+1$ vertices, that is, $\Gamma$ needs to be sufficiently subdivided. Before we specify a
spanning set of $H_1(\mathcal{D}^{n}(\Gamma))$ we first discuss
two interesting aspects of this space. The first one concerns the
relation between the exchange phase of $k$ particles, $k\leq n$
on the cycle $C$ of the lasso graph and its $\phi_{Y}$ phases (see
Lemma \ref{aspect1-1} ). The second gives the relation between the
AB-phases for fixed cycle $c$ of $\Gamma$ and the different possible positions
of the $n-1$ stationary particles.
\begin{lemma}
\label{aspect1-1}The exchange phase, $\phi_{C,n}$, of $n$ particles
on the cycle $c$ of the lasso graph is the sum of the exchange phase,
$\phi_{C,n-1}^{1}$, of $n-1$ particles on the cycle $C$ with the
last particle sitting at the vertex not belonging to $C$, e.g. vertex
$1$, and the phase $\phi_Y$ associated with the exchange of two particles on the $Y$ subgraph with $n-2$ particles
placed in the vertices $v_1,\ldots ,v_{n-2}$ of $C$ not belonging to the Y
\begin{gather*}
\phi_{C,n}=\phi_{C,n-1}^{1}+\phi_{Y}^{v_1,\ldots ,v_{n-2}}.
\end{gather*}
\end{lemma}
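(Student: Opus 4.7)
The plan is to generalise the two-particle derivation of relation \eqref{eq:lasso-relation}. I fix coordinates on a sufficiently subdivided lasso, exhibit explicit combinatorial loops for the three cycles, and construct a 2-chain in $\mathcal{D}^n(\text{lasso})$ whose boundary realises the desired identity.

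First, I label the cycle vertices as $b = w_0, w_1, \ldots, w_{m-1}$ in cyclic order, with the pendant vertex $1$ attached to $b$; sufficient subdivision for $n$ particles ensures $m \geq n+1$. The $Y$-subgraph has centre $b$ and outer vertices $\{1, w_1, w_{m-1}\}$. I take the spectator positions to be $v_j = w_{j+1}$ for $j = 1, \ldots, n-2$, which lie on $C$ and are disjoint from the $Y$. With these choices I fix three loops in $\mathcal{D}^n(\text{lasso})$: the $Y$-loop $\phi_Y^{v_1,\ldots,v_{n-2}}$ is the standard hexagonal $Y$-exchange (cf.\ figure \ref{fig:(a)-The-Y}(b)) of two mobile particles on the subgraph $\{1, b, w_1, w_{m-1}\}$ while the spectators sit at $w_2, \ldots, w_{n-1}$; the loop $\phi_{C,n-1}^1$ holds one particle at vertex $1$ and performs a ``hole-propagation once around $C$'' for the remaining $n-1$ cycle-particles starting from $\{w_1, \ldots, w_{n-1}\}$; and $\phi_{C,n}$ is the analogous hole-propagation for $n$ particles starting from $\{w_0, w_1, \ldots, w_{n-1}\}$, with no particle at vertex $1$.

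The heart of the argument is the construction of a 2-chain $\Sigma \in C_2(\mathcal{D}^n(\text{lasso}))$ with
\[
\partial\Sigma \;=\; \phi_{C,n} - \phi_{C,n-1}^1 - \phi_Y^{v_1,\ldots,v_{n-2}}.
\]
I build $\Sigma$ out of two ingredients. The first ingredient is the collection of contractible square 2-cells that come from pairs of vertex-disjoint edges in the lasso. Stacking such squares along the trajectory of the hole lets me commute a single move across the pendant edge $1 \leftrightarrow b$ past every move on $C$ whose endpoints avoid $b$. Using this I homotope the chosen representative of $\phi_{C,n}$ into a loop that coincides with $\phi_{C,n-1}^1$ outside of a short segment near $b$, where the ``extra'' particle briefly visits vertex $1$. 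The second ingredient is the two-particle lasso 2-chain (the shaded square of figure \ref{fig:The-lasso}(b)) lifted into $\mathcal{D}^n$ by fixing the $n-2$ spectators at $w_2, \ldots, w_{n-1}$: its boundary cancels precisely the residual exchange in the short segment near $b$, contributing the $-\phi_Y^{v_1,\ldots,v_{n-2}}$ term. Summing the two ingredients yields the required identity at the level of 1-chains and hence in homology.

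The principal obstacle is combinatorial bookkeeping --- writing the hole-propagation loops as signed sums of 1-cells and verifying that the disjoint-edge 2-cells fill in the squares between successive segments of the two loops. Conceptually, however, the $n-2$ spectators are passive passengers: once their positions are fixed, every 2-cell used in $\Sigma$ is already present in the two-particle lasso argument, so the proof reduces to (and is essentially identical to) the two-particle derivation of \eqref{eq:lasso-relation}.
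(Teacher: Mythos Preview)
Your approach is essentially the same as the paper's: both construct a $2$-chain in $\mathcal{D}^n(\text{lasso})$ whose boundary is the desired relation. The paper does this by drawing the relevant subgraph of the configuration space explicitly for $n=3$ and $n=4$ (figures~\ref{fig:-The-subdivided} and~\ref{fig:The-relevant-parts}) and asserting that the pattern persists; you describe the same $2$-chain more abstractly in terms of commuting squares plus a lifted two-particle cell.

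One caution about your framing: your claim that ``every $2$-cell used in $\Sigma$ is already present in the two-particle lasso argument'' is not literally correct. The two-particle lasso uses a \emph{single} square (the shaded cell in figure~\ref{fig:The-lasso}(b)), whereas for $n$ particles the filling $\Sigma$ genuinely requires many $2$-cells --- your ``first ingredient'' commuting squares are new cells, not lifts of anything in the two-particle picture. What does lift cleanly is the \emph{local} Y-exchange relation near $b$ (your second ingredient), and the commuting squares then propagate the extra particle along the pendant. This is exactly what the paper's figures for $n=3,4$ display, so your decomposition is correct, but the reduction-to-two-particles rhetoric slightly overstates the situation. Also note that the paper places the stationary particle at the pendant vertex adjacent to the centre (vertex $2$ for $n=3$, vertex $3$ for $n=4$) rather than at vertex $1$; the two choices are homotopic via commuting squares along the pendant, so your version is equivalent.
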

\begin{proof}
By (\ref{eq:lasso-relation}), the lemma is true for $n=2$. The proof for $n = 3$ particles is shown in figure~\ref{fig:The-relevant-parts}(a), and contains the essence of the
 argument for general $n$.  Indeed, the way to incorporate additional particles  is illustrated by the $n=4$ case, shown in figure~\ref{fig:The-relevant-parts}(b).
%. To indicate how additional particles are accommodated,  e it is enough to consider the lasso
%graphs with $3$ and $4$ particles shown in figures \ref{fig:-The-subdivided}(a)
%and \ref{fig:-The-subdivided}(b).
Note that figure~\ref{fig:The-relevant-parts} shows only the small portion of the $n=3$ and $n=4$  configuration spaces required to establish the lemma. These configuration spaces are derived from the $1$-particle lasso graphs shown in figures \ref{fig:-The-subdivided}(a)
and \ref{fig:-The-subdivided}(b) respectively; it is easy to see that these are indeed
sufficiently subdivided. The Y-graphs we consider for $n=3$ an $n=4$ are $\{2\leftrightarrow3,\,3\leftrightarrow4,\,3\leftrightarrow6\}$
and $\{3\leftrightarrow4,\,4\leftrightarrow5,4\leftrightarrow8\}$
respectively.
%The relevant parts of the $3$ and $4$-particle configuration
%spaces are shown in figures \ref{fig:The-relevant-parts}(a) and \ref{fig:The-relevant-parts}(b).
%The statement follows immediately from these figures.

\end{proof}
\begin{figure}[h]
\begin{center}~~~~~~\includegraphics[scale=0.5]{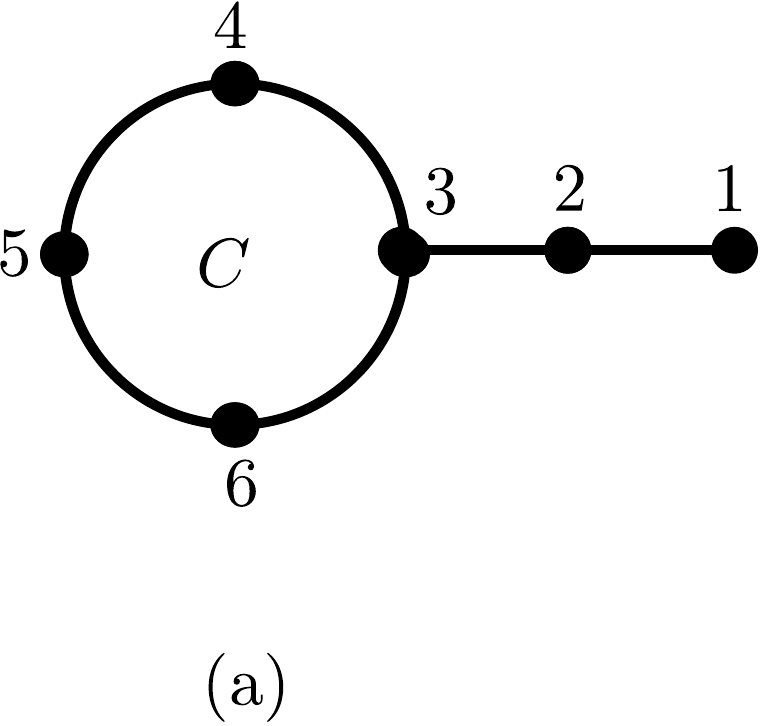}~~~~~~~~~~~\includegraphics[scale=0.5]{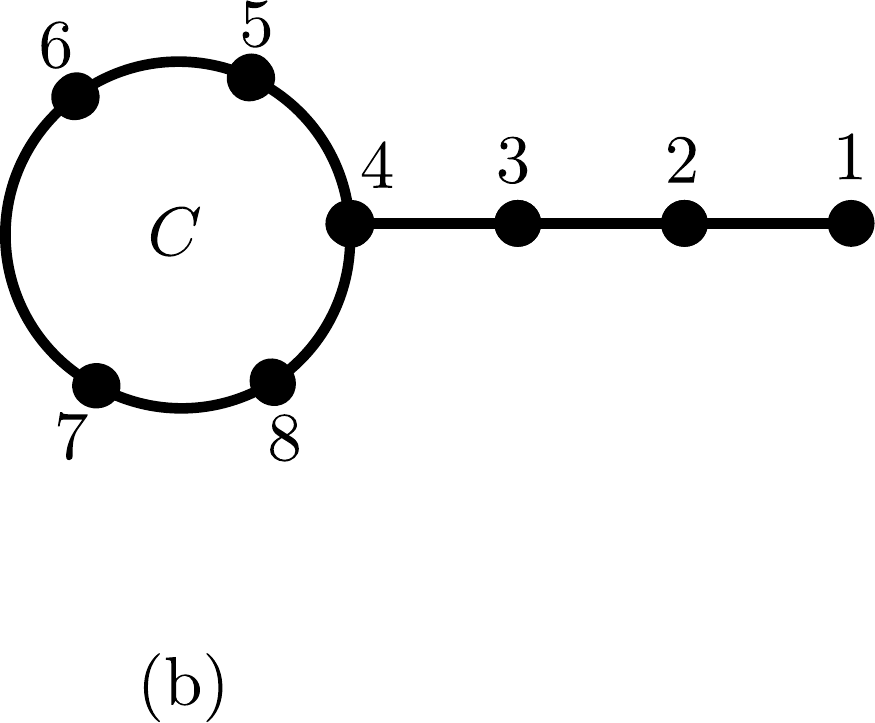}\end{center}

\caption{ \label{fig:-The-subdivided}The subdivided lasso for (a) $3$ particles,
(b) $4$ particles. }
\end{figure}

\begin{figure}[h]
\begin{center}\includegraphics[scale=0.5]{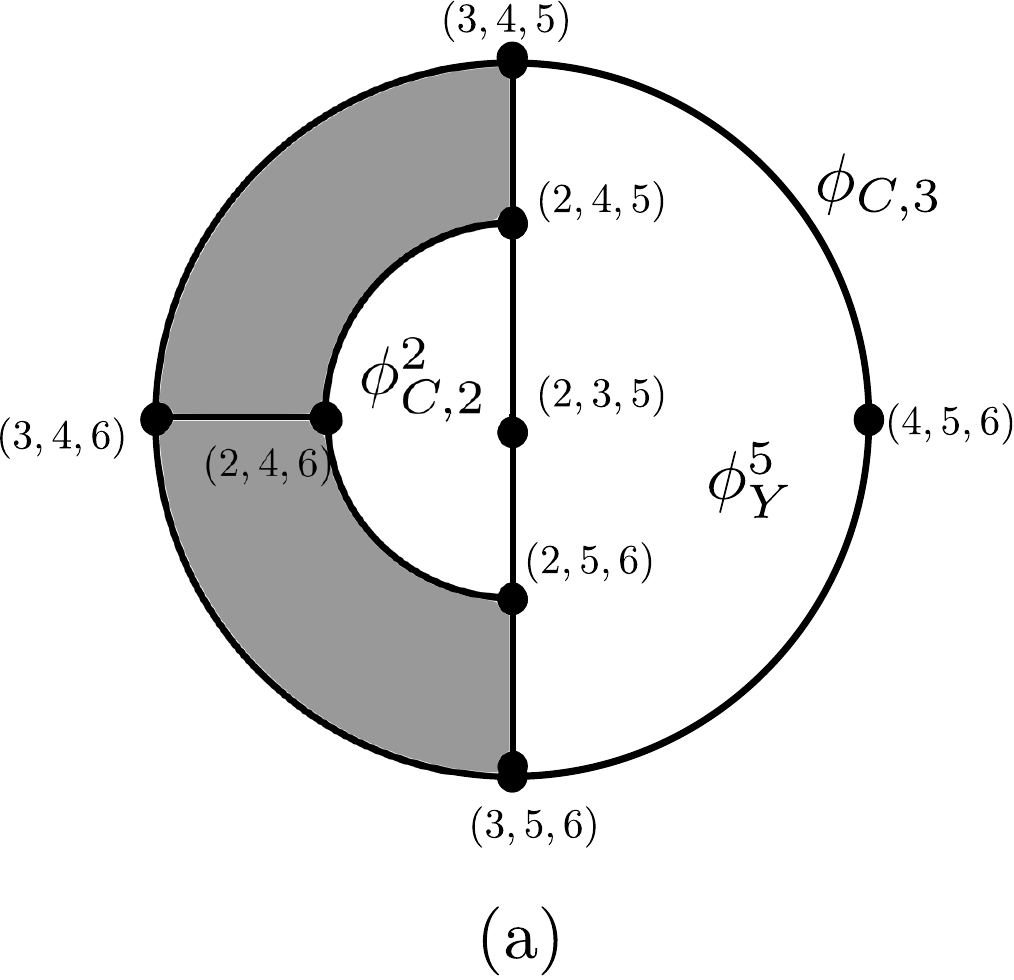}~~~\includegraphics[scale=0.5]{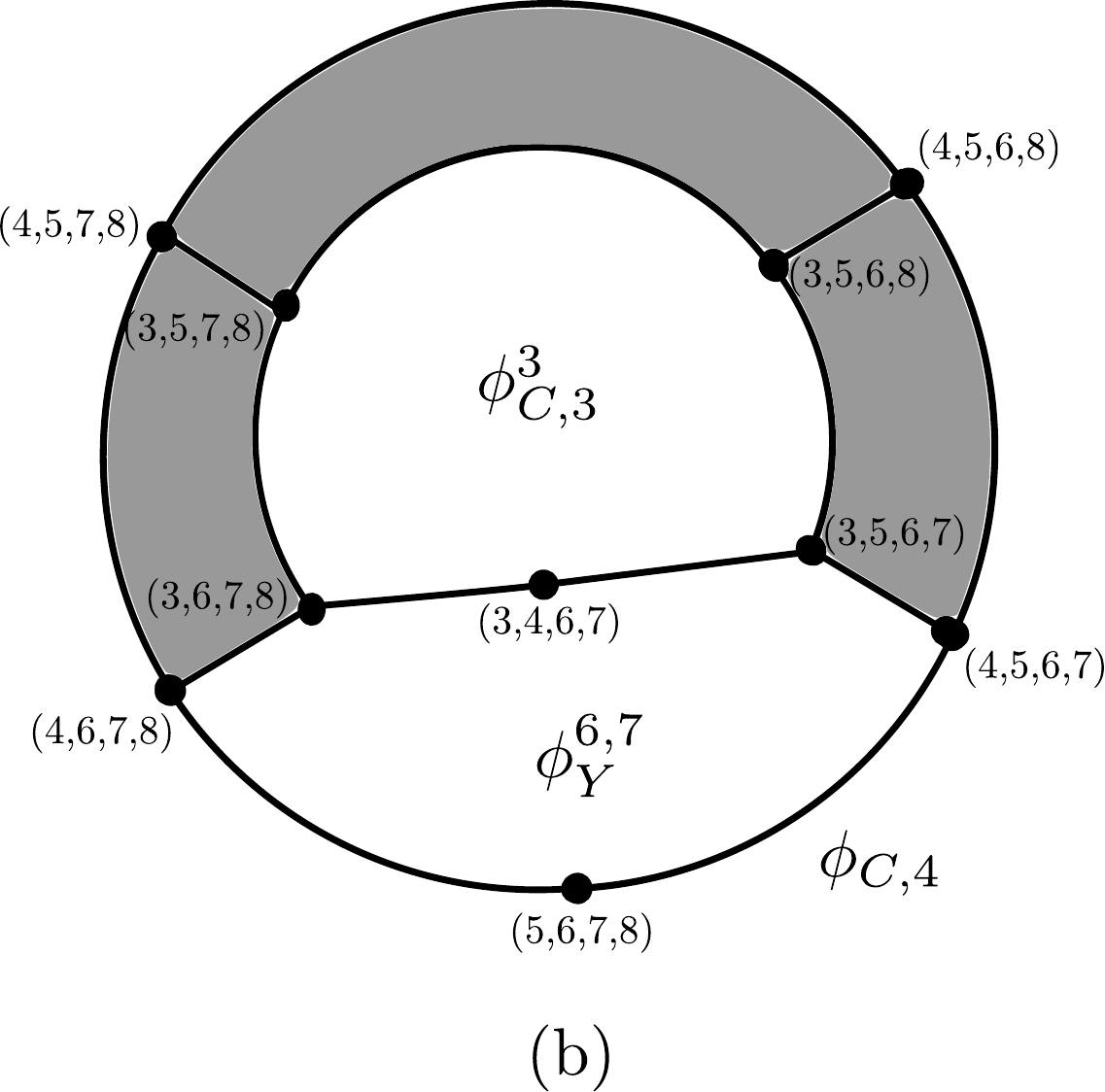}\end{center}

\caption{\label{fig:The-relevant-parts} %The relevant parts
 Subgraphs of the configurations
spaces for the lasso graphs with (a) $3$ particles: $\phi_{C,3} = \phi^2_{C,2} + \phi^5_Y$, (b) $4$ particles: $\phi_{C,4}  = \phi^3_{C,3} + \phi^{6,7}_Y$. }
\end{figure}

\noindent By repeated application of Lemma \ref{aspect1-1} we see
that $\phi_{C,n}$ can be expressed as a sum of an AB-phase and the
Y-phases corresponding to different positions of $n-2$ particles.
For example in the case of the graphs from figure \ref{fig:-The-subdivided}(a)
and \ref{fig:-The-subdivided}(b) we get

\begin{gather*}
\phi_{C,3}=\phi_{Y}^{5}+\phi_{C,2}^{2}=\phi_{Y}^{5}+\phi_{Y}^{1}+\phi_{C,1}^{1,2}\,,\\
\phi_{C,4}=\phi_{Y}^{6,7}+\phi_{C,3}^{3}=\phi_{Y}^{6,7}+\phi_{C,3}^{1}=\phi_{Y}^{6,7}+\phi_{Y}^{1,6}+\phi_{Y}^{1,2}+\phi_{C,1}^{1,2,3}\,.
\end{gather*}

\paragraph{Aharonov-Bohm phases}

Assume now that we have $n$ particles on $\Gamma$. Let $C$ be a
cycle of $\Gamma$ and $e_{1}$ and $e_{2}$ two sufficiently subdivided
edges attached to $C$ (see figure \ref{fig:stabilization}(a)). We
denote by $\phi_{C,1}^{k_{1},k_{2}}$ the AB-phase corresponding
to the situation where one particle goes around the cycle $C$ while $k_{1}$
particles are in the edge $e_{1}$ and $k_{2}$ particles are in the
edge $e_{2}$, $k_{1}+k_{2}=n-1$. For each distribution $(k_{1},k_{2})$
of the $n-1$ particles between the edges $e_{1}$ and $e_{2}$ we
get a (possibly) different AB-cycle and AB-phase in $\mathcal{D}^{n}(\Gamma)$.
We want to know how they are related. To this end notice that

\begin{gather}
\phi_{C,2}^{k_{1},k_{2}}=\phi_{C,1}^{k_{1}+1,k_{2}}+\phi_{Y_{1}}^{k_{1},k_{2}},\,\,\,\,\phi_{C,2}^{k_{1},k_{2}}=\phi_{C,1}^{k_{1},k_{2}+1}+\phi_{Y_{2}}^{k_{1},k_{2}},\label{eq:AB-1-1}
\end{gather}
and hence
\begin{gather}
\phi_{C,1}^{k_{1}+1,k_{2}}-\phi_{C,1}^{k_{1},k_{2}+1}=\phi_{Y_{2}}^{k_{1},k_{2}}-\phi_{Y_{1}}^{k_{1},k_{2}}.\label{eq:AB-2-1}
\end{gather}
The relations between different AB-phases for a fixed cycle $C$
of $\Gamma$ are therefore encoded in the $2$-particle phases $\phi_{Y}$,
albeit these phases can depend on the positions of the remaining $n-2$
particles.\\

\begin{figure}[h]
\begin{center}~~~~~~\includegraphics[scale=0.5]{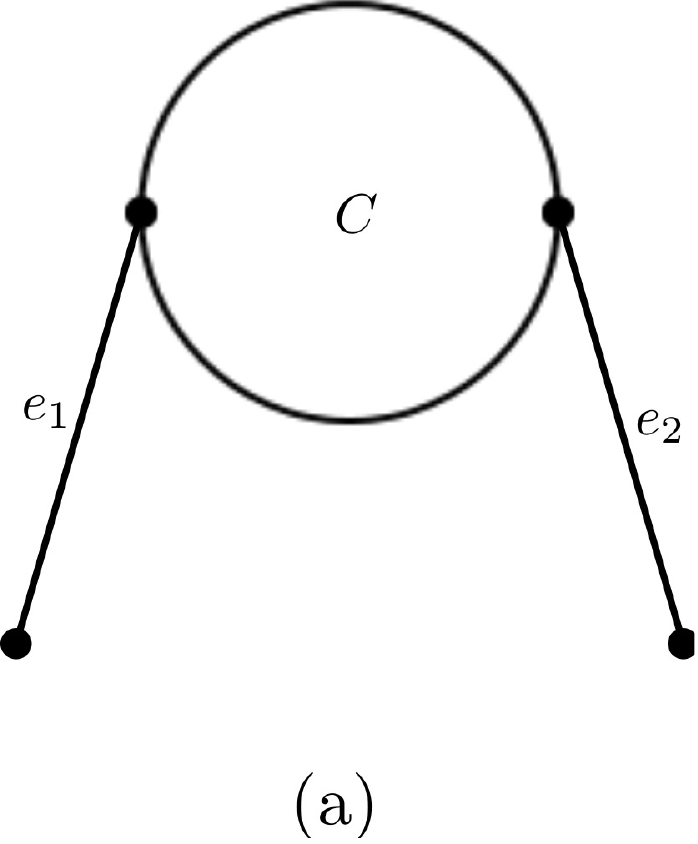}~~~~~~~~~~~~~~~~~~~~~~~~~~\includegraphics[scale=0.5]{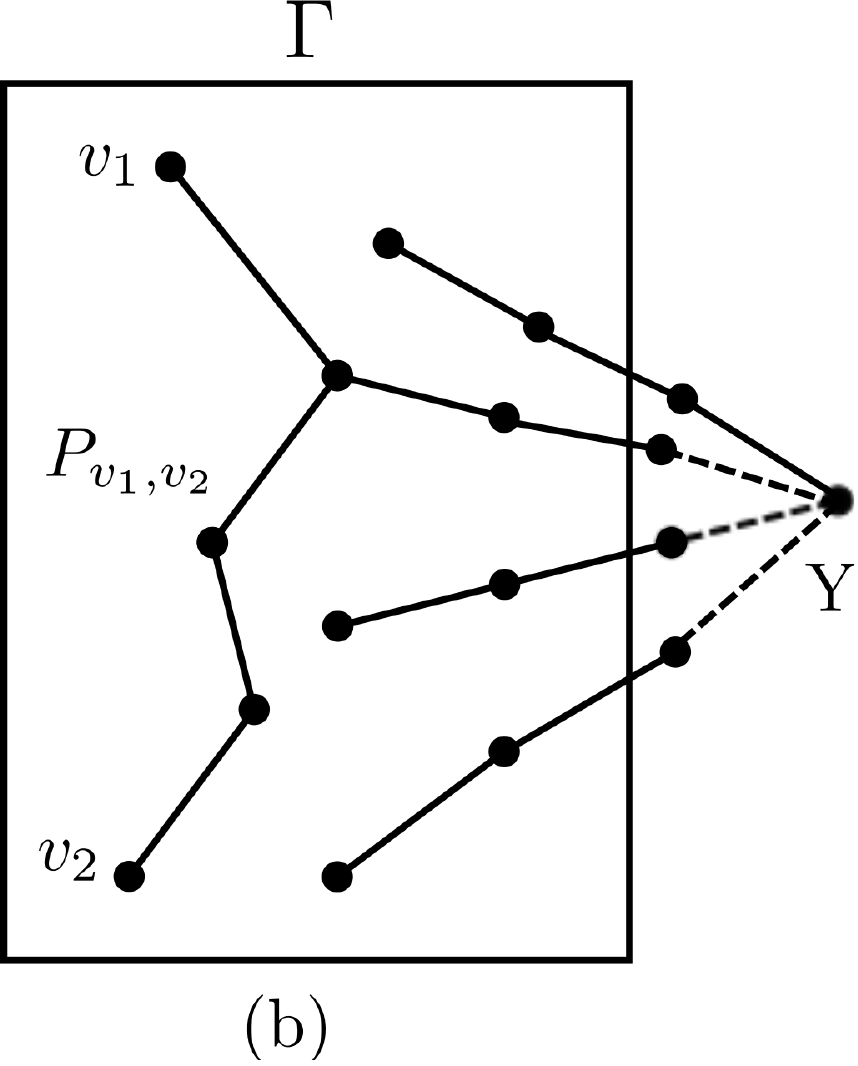}\end{center}

\caption{\label{fig:stabilization}(a) The relation between AB-phases, (b)
the stabilization of the first homology group.}
\end{figure}

\noindent A spanning set of
$H_1(\mathcal(D)^n(\Gamma))$ is given by the following (see section \ref{MT} for proof):
\begin{enumerate}
%\item Contractible cycles -- they do not contribute to $H_{1}(\mathcal{D}^{n}(\Gamma))$.
\item All $2$-particle cycles corresponding to the exchange of two particles
on the Y subgraph while $n-2$ particles are at vertices not belonging
to the considered Y-graph. In general the phases $\phi_{Y}$ depend
on the position of the remaining $n-2$ particles.
\item The set of $\beta_{1}(\Gamma)$ AB-cycles, where $\beta_{1}(\Gamma)$
is the number of the independent cycles of $\Gamma$. \end{enumerate}
\begin{theorem}\label{thm: 5}
\label{stabilization}For a $2$-connected graph $\Gamma$ the
first homology group stabilizes with respect to the number of particles,
i.e. $H_{1}(\mathcal{D}^{n}(\Gamma))=H_{1}(\mathcal{D}^{2}(\Gamma))$.\end{theorem}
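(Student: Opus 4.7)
My approach is to work with the spanning set of $H_1(\mathcal{D}^n(\Gamma))$ described just before the theorem, which consists of Y-cycles (indexed by a choice of Y-subgraph together with positions $v_1,\ldots,v_{n-2}$ of spectator particles) and $\beta_1(\Gamma)$ AB-cycles. The plan is to establish two facts: (i) for each Y-subgraph, the phase $\phi_Y^{v_1,\ldots,v_{n-2}}$ is independent of the spectator positions; (ii) for each cycle $C$ of $\Gamma$ the AB-phase $\phi_{C,1}^{k_1,k_2}$ is independent of the spectators' distribution. Together these collapse the $n$-particle spanning set onto the $2$-particle one, and a check that no additional relations appear will then give the desired isomorphism $H_1(\mathcal{D}^n(\Gamma))=H_1(\mathcal{D}^2(\Gamma))$.

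For fact (i), I would compare two Y-cycles whose spectator configurations differ only by moving a single particle from vertex $u$ to $u'$. The strategy is to exhibit a 2-dimensional subcomplex in $\mathcal{D}^n(\Gamma)$ realising the difference as a boundary. Concretely, if one can find a path $P$ from $u$ to $u'$ in $\Gamma$ whose vertices are disjoint from the four vertices of the Y-subgraph and from the remaining $n-3$ spectators, then the Y-exchange cycle and the path-traversal cycle live on disjoint vertex sets. Their product is a torus-like subcomplex of $\mathcal{D}^n(\Gamma)$, which forces the two Y-cycles to be homologous. The existence of such a path $P$ is where 2-connectedness enters, via Menger's theorem: any two vertices are joined by at least two internally-disjoint paths, and by suitably subdividing and iterating one-vertex moves, a disjoint detour can be arranged. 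Iterating these single-particle adjustments across all spectators reduces an arbitrary configuration to any chosen reference one.

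For fact (ii), I would invoke relation (\ref{eq:AB-2-1}): the difference $\phi_{C,1}^{k_1+1,k_2}-\phi_{C,1}^{k_1,k_2+1}$ is expressed entirely in terms of Y-phases of the form $\phi_{Y_i}^{k_1,k_2}$. By (i), each such Y-phase has already been identified with a $2$-particle Y-phase, and is therefore absorbed into the Y-cycle generators. Hence any two AB-phases for the same one-particle cycle $C$ differ by an element already present in the lower-$n$ spanning set, and the number of independent AB-generators stays $\beta_1(\Gamma)$ as in the $n=2$ case. Finally, the relations between Y-cycles that were proved in the $2$-particle setting (Lemmas~1 and~2 and the discussion of $2$-cuts) carry over verbatim, since each such relation was derived from local subgraph data (lasso and $Y$ configurations) which still embeds in $\mathcal{D}^n(\Gamma)$, and Remark~\ref{remark1} ensures the relation persists in the larger complex.

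The main obstacle is the simultaneous-avoidance requirement in fact (i): one must route a spectator's path around the Y-subgraph \emph{and} around all other spectators at once. This is where 2-connectedness, sufficient subdivision and possibly an inductive reduction on $n$ need to be combined carefully -- a single application of Menger is not obviously enough, and one should be prepared to move spectators sequentially, using intermediate ``parking'' vertices made available by the subdivision of edges. A secondary, more technical worry is to make precise that no new torsion or free summands slip in as $n$ grows; the natural way to handle this is to exhibit explicit 2-cells whose boundaries are precisely the relations present at the $2$-particle level, so that the inclusion $H_1(\mathcal{D}^2(\Gamma))\hookrightarrow H_1(\mathcal{D}^n(\Gamma))$ induced by adding stationary particles is both surjective (by (i) and (ii)) and injective (by matching of relations).
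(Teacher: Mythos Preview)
Your approach is essentially the paper's: use the spanning set of Y-cycles and $\beta_1(\Gamma)$ AB-cycles, then show that the Y-phases do not depend on the spectator configuration. Two simplifications in the paper's execution are worth noting.

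First, your fact~(ii) is redundant: relation~(\ref{eq:AB-2-1}) has already been used \emph{before} the theorem to reduce the AB-cycles to $\beta_1(\Gamma)$ of them, so once the Y-phases are spectator-independent the AB part comes for free.

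Second, and more importantly, the ``simultaneous-avoidance'' obstacle you flag dissolves with a sharper observation. The paper removes only the \emph{central} vertex $c$ of the chosen Y; by $2$-connectedness $\Gamma\setminus\{c\}$ is still connected, so any two spectator positions are joined by a path avoiding $c$. In a sufficiently subdivided graph the three outer vertices of~$Y$ are degree-$2$ neighbours of $c$, so a path avoiding $c$ cannot pass through them as internal vertices either; thus the path is automatically disjoint from all of~$Y$. There is no need for Menger, parking vertices, or an induction on~$n$: the $(n-2)$ spectators live in the connected graph $\Gamma\setminus\{c\}$, and connectivity of the corresponding $(n-2)$-particle configuration space gives the required homotopy between any two spectator placements in one stroke. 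Your worry about extra relations is also handled implicitly: every $2$-cell of $\mathcal{D}^n(\Gamma)$ is a product of two disjoint edges with $n-2$ fixed spectators, and after collapsing spectator-dependence its boundary relation is already a $2$-particle relation.
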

\begin{proof}
Using our spanning set it is enough to show that phases on the
Y-cycles do not depend on the position of the remaining $n-2$ particles.
Notice that if any pair of the vertices not belonging to the chosen
Y-graph is connected by a path then clearly the corresponding Y-phases have this property. Since the graph $\Gamma$ is $2$-connected
it remains at least $1$-connected after removal of a vertex. Removing the central vertex of the Y (see figure
\ref{fig:stabilization}(b)), the theorem follows.
\end{proof}

\section{$n$-particle statistics on $1$-connected graphs\label{sec:N-particle-statistics-on}}

By Theorem~\ref{thm: 5}, in order to fully characterize the first homology group of $\mathcal{D}^{n}(\Gamma)$ for an arbitrary graph $\Gamma$ we are left to calculate $H_1(\mathcal{D}^{n}(\Gamma))$ for graphs which are 1-connected but not 2-connected. This is achieved by considering $n$-particle star and fan graphs.

\subsection{Star graphs\label{sub:The-star-graphs}}

In the following we consider a particular family of $1$-connected graphs,
namely the star graphs $S_{E}$ with $E$ edges (see figure \ref{fig:The-star and fan}(a)).
Our aim is to provide a formula for the dimension of the first homology
group, $\beta_{n}^{E}$, of the $n$-particle configuration space $\mathcal{D}^{n}(S_{E})$. Let us recall that a graph $\Gamma$ is $1$-connected iff after deletion of one vertex it splits into at least two connected components.

\paragraph{Star graph with non-subdivided edges}
It turns out that the computation of $\beta_n^E$ can be reduced to the case of $n$ particles on a  star graph with non-subdivided edges, so we consider this case first.
Let $\bar{S}_{E}$ denote the star graph with $E+1$ vertices and $E$ edges each connecting the central vertex to a single vertex of valency $1$; such a star graph is not sufficiently subdivided for $n>2$ particles. As there are no pairs of disjoint edges (every edge contains the central vertex), there are no contractible cycles.
Therefore, the $n$-particle configuration
space, $\mathcal{D}^{n}(\bar{S}_E)$ is a graph, i.e. a one-dimensional cell complex. The number of independent cycles
in $\mathcal{D}^{n}(\bar{S}_E)$, denoted here and in what follows by $\gamma_n^E$,
is given by the first Betti number, $E_n - V_n + 1$, where $E_n$ and $V_n$ are the number of edges and vertices in $\mathcal{D}^{n}(S_E)$. It is
easy to see that  $V_{n}={E+1 \choose n}$
and $E_{n}=E\cdot{E-1 \choose n-1}$. Hence
\begin{gather}
\gamma_{n}^{E}=E {E-1 \choose n-1}-{E+1 \choose n}+1.\label{eq:2-particleY}
\end{gather}

\paragraph{Y-graph }

The simplest case of a sufficiently subdivided star graph is a Y-graph where each arm has $n-1$ segments.  As there are no cycles on the Y-graph itself, cycles in the $n$-particle configuration space are generated by two-particle exchanges on the non-subdivided subgraph $\bar{Y}$ comprised of the three segments adjacent to the central vertex. A basis of independent cycles is obtained by taking all possible configurations of the $n-2$ particles amongst the three arms of the Y-graph.  As configurations which differ by shifting particles within the arms of the Y produce homotopic cycles, the number of distinct configurations is the number of partitions of $n-2$ indistinguishable particles amongst three distinguishable boxes, or ${(n-2) +(3-1) \choose n-2} = {n \choose n-2}$.  Therefore,

\begin{gather}
\beta_{n}^{3}= {n \choose n-2}\gamma_2^3=\frac{n(n-1)}{2}.\label{eq:Y}
\end{gather}

\paragraph{Star graph with five arms}

For star graphs with more than three arms, it is necessary to take account of relations between cycles involving two or more moving particles.  With this in mind, we introduce the following terminology: an $(n,m)$-cycle is a cycle of $n$ particles on which $m$ particles move and $(n-m)$ particles remain fixed.

The general case is well illustrated by considering the star graph with $E=5$ arms.  As above, we suppose that each arm of  $S_5$ has $(n-1)$ segments, and is therefore sufficiently subdivided to accommodate $n$ particles.  Let $\bar{S}_5$ denote the non-subdivided subgraph consisting of the five segments adjacent to the central vertex. As there are no cycles on $S_5$, a spanning set for the first homology group of the $n$-particle configuration space is provided by two-particle cycles on the Y's contained in $\bar{S}_5$.  The number of independent two-particle cycles on $\bar{S}_5$ is given by $\gamma_5^2$.  For each of these, we can distribute the remaining $(n-2)$ particles among the five edges of $S_5$  (cycles which differ by shifting particles within an edge  are homotopic).  Therefore, we obtain a spanning set consisting  of ${\beta''}_n^5$ $(n,2)$-cycles, where
\[ {\beta''}_n^5 :=  \binom{n+2}{4} \gamma_2^5 .\]

The preceding discussion of non-subdivided star graphs reveals that there are relations among the cycles in the spanning set.  In particular, a subset of the $(n,2)$-cycles can be replaced by a smaller number of $(n,3)$-cycles.

To see this, consider first the case of $n=3$ particles on the non-subdivided star graph $\bar{S}_5$.
By definition, the number of independent $(3,3)$-cycles is $\gamma_3^5$.  However, the number of  $(3,2)$-cycles on $\bar{S}_5$ is larger; it is given by $\binom{5}{1} \gamma^4_2$, where the first factor represents the number of positions of the fixed particle, and the second factor represents the number of independent $(2,2)$-cycles on the remaining four edges of $\bar{S}_5$.  It is easily checked that $\gamma_3^5 - \binom{5}{1} \gamma^4_2 = -3$, so that there are three relations amongst the $(3,2)$-cycles on $\bar{S}_5$. % on which one particle is fixed.

 We return to the case of $n$ particles.  For each $(3,3)$-cycle on $\bar{S}_5$, there are $\binom{n+1}{4}$ $(n,3)$-cycles on $S_5$; the factor  $\binom{n+1}{4}$ is the number of ways to distribute the $n-3$ fixed  particles on the five edges of $S_5$ outside of $\bar{S}_5$. Calculating  the number of $(n,2)$-cycles on $S_5$ obtained from $(3,2)$-cycles on $\bar{S}_5$ requires a bit more care. The  reasoning underlying the preceding count of $(n,3)$ cycles  would suggest that the number of such $(n,2)$-cycles is given by $\binom{n+1}{4} \binom{5 }{1} \gamma_2^4$.  However,  this expression introduces some double counting.  In particular,  $(n,2)$-cycles for which two of the fixed particles lie  in $\bar{S}_5$ are  counted twice, as each of these two fixed particles is separately regarded as the fixed particle in a $(3,2)$-cycle on ${\bar  S}_5$.  The correct expression is obtained by subtracting the number of doubly counted cycles; this is given by $\binom{n}{4} \binom{5}{2} \gamma_2^3$. Thus we may replace  this subset of $(n,2)$-cycles by the $(n,3)$-cycles to which they are related to obtain a smaller spanning set with ${\beta'}_n^5 $ elements, where
\[ {\beta'}_n^5 =  {\beta''}_n^5 +  \binom{n+1}{4} \gamma_3^5 - \left( \binom{n+1}{4} \binom{5} {1} \gamma_2^4 - \binom{n}{4} \binom{5} {2}\gamma_2^3\right).\]

Finally, we must account for relations among the $(n,3)$-cycles.  Consider first the case of just four particles on $\bar{S}_5$.  The number of independent $(4,4)$-cycles is $\gamma_4^5$.  The number of $(4,3)$-cycles %on which one particle remains fixed
is $\binom{5}{1} \gamma^4_3$, where the first factor represents the number of positions of the fixed particle, and the second factor represents the number of independent $(3,3)$-cycles on the remaining four edges of $\bar{S}_5$.  For each  $(4,4)$-cycle on $\bar{S}_5$, there are $\binom{n}{4}$ $(n,4)$ cycles  on $S_5$. Similarly, for each $(4,3)$-cycle on $\bar{S}_5$, there are $\binom{n}{4}$ $(n,3)$-cycles on $S_5$ (there is no over-counting, as there are no five-particle cycles on $\bar{S}_5$).  Replacing this subset of $(n,3)$-cycles by the $(n,4)$-cycles to which they are related, we get a smaller spanning set of $\beta_n^5$ elements, where
\[ \beta_n^5 =  {\beta'}_n^5 +  \binom{n}{4}\left( \gamma_4^5 - \binom{5}{1} \gamma_3^4\right) = 6\binom{n+2}{4} - 4\binom{n+1}{4} + \binom{n}{4}.\]
As there are no five-particle cycles on $\bar{S}_5$, there are no additional relations, and the resulting spanning set constitutes a basis.

\paragraph{$n$ particles on a star graph with $E$ arms}

The formula in the general case of $E$ edges is obtained following a similar argument.  We start with a spanning set of $\binom{n+E-3}{E-1} \gamma_2^E$ $(n,2)$-cycles on $S_E$.  We then replace a subset of $(n,2)$-cycles by a smaller number of $(n,3)$-cycles, then replace a subset of these $(n,3)$-cycles by a smaller number of $(n,4)$-cycles, and so on, proceeding to $(n,E-1)$-cycles, thereby obtaining a basis. The number of elements in the basis is given by
\begin{equation} \label{eq: first beta expression} \beta_n^E = \sum_{m=2}^{E-1}\left( \binom{n - m + E -1}{E-1} \gamma_m^E + \sum_{j = 1}^{E-m}  (-1)^j \binom{n - m-j + E}{E-1} \binom{E}{j} \gamma_{m-1}^{E-j}\right).\end{equation}
The outer $m$-sum is taken over $(n,m)$-cycles.  The $m$th term is the difference between the number of $(n,m)$-cycles and the number of $(n,m-1)$-cycles to which they are related.  The inclusion-exclusion sum over $j$ compensates for over-counting $(n,m-1)$-cycles with $j$ fixed particles in ${\bar S}_E$.

It  turns out to be convenient to rearrange the sums in \eqref{eq: first beta expression} to obtain the following equivalent expression:
\begin{equation} \label{eq: second beta expression}
\beta_{n}^{E}=\sum_{k=2}^{E-1}{n-k+E-1 \choose E-1} \alpha^E_{k} \end{equation}
where
\begin{equation}\label{eq:alpha_k}
\alpha_{k}^{E}=\sum_{i=0}^{k-2}(-1)^{i}{E \choose i}\cdot\gamma_{k-i}^{E-i}.
\end{equation}
This is because the coefficients $\alpha_k^E$ turn out to have a simple expression.
First, straightforward manipulation yields
\begin{gather}
\alpha_{k}^{E}=\gamma_{k}^{E}-\sum_{i=1}^{k-2}{E \choose i}\alpha_{k-i}^{E-i}.\label{eq:alpha_k1}
\end{gather}
We then have the following:
%
%
%
%
%
%Following the reasoning given for $S_{3}$ and $S_{4}$ we treat now
%a general case of $n$ particles on $S_{E}$. To this end we consider
%$E-2$ cases, when $2$, $3$, ..., $E-1$ particles are in the exchange
%zone $S_{E}^{\prime}$ and the remaining ones are distributed among
%the other vertices. For $k$ particles in $S_{E}^{\prime}$ there
%is exactly ${n-k+E-1 \choose E-1}$ non-homotopy equivalent configurations
%of $n-k$ particles. The number of cycles which corresponds to the
%%case when $k$ particles are in the exchange zone is given by
%
%\begin{gather*}
%{n-k+E-1 \choose E-1}\cdot\alpha_{k}^{E},
%\end{gather*}
%where
%
%\begin{gather}
%\alpha_{k}^{E}=\sum_{i=0}^{k-2}(-1)^{i}\cdot{E \choose i}\cdot\gamma_{k-i}^{E-i}.\label{eq:alpha_k}
%\end{gather}
%The numbers $\alpha_{k}^{E}$ can be calculated explicitly. To this
%end notice that using (\ref{eq:2-particleY}) and appropriately grouping
%the terms of (\ref{eq:alpha_k}) we obtain
%
%\begin{gather}
%\alpha_{k}^{E}=\gamma_{k}^{E}-\sum_{i=1}^{k-2}{E \choose i}\alpha_{k-i}^{E-i}.\label{eq:alpha_k1}
%\end{gather}

\begin{lemma}
\label{lemma4}The coefficients $\alpha_{k}^{E}=(-1)^{k}{E-1 \choose k}$.\end{lemma}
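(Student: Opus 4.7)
The plan is to prove the lemma by direct computation, substituting the closed form
\[
\gamma_n^{E}=E\binom{E-1}{n-1}-\binom{E+1}{n}+1
\]
into the definition (\ref{eq:alpha_k}) of $\alpha_k^E$ and reducing the resulting sums to standard binomial identities. Writing $\alpha_k^E=S_1-S_2+S_3$, where
\begin{align*}
S_1 &= \sum_{i=0}^{k-2}(-1)^i\binom{E}{i}(E-i)\binom{E-i-1}{k-i-1},\\
S_2 &= \sum_{i=0}^{k-2}(-1)^i\binom{E}{i}\binom{E-i+1}{k-i},\\
S_3 &= \sum_{i=0}^{k-2}(-1)^i\binom{E}{i},
\end{align*}
I would attack each piece separately with the trinomial rewriting $\binom{E}{i}\binom{E-i}{k-i}=\binom{E}{k}\binom{k}{i}$ and the ``partial alternating sum'' identity
\[
\sum_{i=0}^{m}(-1)^i\binom{n}{i}=(-1)^m\binom{n-1}{m},
\]
which follows from Pascal's rule by a short telescoping argument.

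For $S_1$ I would first rewrite $(E-i)\binom{E-i-1}{k-i-1}=(k-i)\binom{E-i}{k-i}$ and then use the trinomial rewriting together with $(k-i)\binom{k}{i}=k\binom{k-1}{i}$ to obtain $S_1=k\binom{E}{k}\sum_{i=0}^{k-2}(-1)^i\binom{k-1}{i}=(-1)^k k\binom{E}{k}$. For $S_2$ I would apply Pascal to split $\binom{E-i+1}{k-i}=\binom{E-i}{k-i}+\binom{E-i}{k-i-1}$ and then reduce each piece by the trinomial rewriting and the partial alternating sum identity, obtaining $S_2=(-1)^k\bigl[(k-1)\binom{E}{k}+\binom{E}{k-1}\bigr]$. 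The sum $S_3$ is an immediate application of the same identity, giving $S_3=(-1)^k\binom{E-1}{k-2}$.

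Combining these and using the absorption identity $k\binom{E}{k}=E\binom{E-1}{k-1}$ together with Pascal's rule $\binom{E}{k-1}=\binom{E-1}{k-1}+\binom{E-1}{k-2}$ and $\binom{E}{k}-\binom{E-1}{k-1}=\binom{E-1}{k}$, the expression collapses to $(-1)^k\binom{E-1}{k}$. Alternatively, one could prove the lemma by induction on $k$ using the recursion (\ref{eq:alpha_k1}): the base case $k=2$ reduces to the direct check $\gamma_2^E=\binom{E-1}{2}$, and the inductive step reduces exactly to the combinatorial identity $(-1)^k\gamma_k^E=\sum_{i=0}^{k-2}(-1)^i\binom{E}{i}\binom{E-i-1}{k-i}$, which can be established by the same manipulations.

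The only real obstacle is bookkeeping: tracking signs, keeping the upper limits of the partial sums correct (in particular noting that the $i=k-1$ and $i=k$ terms that would complete the alternating sum are absent, which is precisely what produces the $(-1)^k$ factors), and not confusing $\binom{E-1}{k}$ with $\binom{E}{k-1}$ at the end. No deep machinery is needed beyond Pascal's rule and the two identities above.
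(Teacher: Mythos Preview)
Your direct computation is correct. Splitting $\alpha_k^E$ into $S_1-S_2+S_3$, applying the trinomial revision $\binom{E}{i}\binom{E-i}{k-i}=\binom{E}{k}\binom{k}{i}$, and then closing each piece with the partial alternating sum $\sum_{i=0}^{m}(-1)^i\binom{n}{i}=(-1)^m\binom{n-1}{m}$ all check out; the final combination indeed collapses via Pascal to $(-1)^k\binom{E-1}{k}$. (A small remark: you list the absorption identity $k\binom{E}{k}=E\binom{E-1}{k-1}$ among the tools for the last step, but in fact you only need $k\binom{E}{k}-(k-1)\binom{E}{k}=\binom{E}{k}$ and then two applications of Pascal.)

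The paper takes the inductive route you mention at the end: it works from the recursion (\ref{eq:alpha_k1}) rather than the closed sum (\ref{eq:alpha_k}), substitutes the inductive hypothesis $\alpha_i^{E}=(-1)^i\binom{E-1}{i}$, and then closes the resulting sum using the upper-negation identity $\binom{r}{k}=(-1)^k\binom{k-r-1}{k}$ together with Vandermonde's convolution $\sum_i\binom{E}{i}\binom{k-E}{k-i}=1$. Your approach avoids both of those identities in favour of trinomial revision and the alternating partial sum, and is arguably more self-contained; the paper's approach, on the other hand, exploits the recursive structure and makes the role of Vandermonde explicit. Both are of comparable length and entirely elementary.
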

\begin{proof}
We proceed by induction. Direct calculations give $\alpha_{2}={E-1 \choose 2}$.
Assume that $\alpha_{i}^{E}=(-1)^{i}{E-1 \choose i}$ for $i\in\{2,\ldots,k-1\}$
and $k\leq E$. Using this assumption and (\ref{eq:alpha_k1})

\begin{gather*}
\alpha_{k}=\gamma_{k}^{E}-(-1)^{k}\sum_{i=1}^{k-2}(-1)^{i}{E \choose i}{E-i-1 \choose k-i}.
\end{gather*}
Making use of the identity ${r \choose k}=(-1)^{k}{k-r-1 \choose k}$
and Vandermonde's convolution $\sum_{i=0}^{k}{E \choose i}{k-E \choose k-i}=1$, we get

\begin{gather*}
(-1)^{k}\sum_{i=1}^{k-2}(-1)^{i}{E \choose i}{E-i-1 \choose k-i}=\sum_{i=1}^{k-2}{E \choose i}{k-E \choose k-i}\\
=1-(-1)^{k}{E-1 \choose k}+(E-k){E \choose k-1}-{E \choose k}\,.
\end{gather*}
Using (\ref{eq:2-particleY}) for $\gamma_k^E$, we get
\begin{gather*}
\alpha_{k}=(-1)^{k}{E-1 \choose k}+E{E-1 \choose k-1}-{E+1 \choose k}-(E-k){E \choose k-1}+{E \choose k}\,.
\end{gather*}
Expanding ${E+1 \choose k}={E \choose k}+{E \choose k-1}$ and straightforward manipulations show
\begin{gather*}
\alpha_{k}=(-1)^{k}{E-1 \choose k},
\end{gather*}
which completes the argument.
\end{proof}
\noindent By Lemma \ref{lemma4}
\begin{gather*}
\beta_{n}^{E}=\sum_{k=2}^{E-1}{n-k+E-1 \choose E-1}\cdot\alpha_{k}=\sum_{k=2}^{E-1}\left(-1\right)^{k}{E-1 \choose k}{n-k+E-1 \choose E-1}\\
=\sum_{k=2}^{E-1}\left(-1\right)^{k}{E-1 \choose k}{n-k+E-1 \choose n-k}=(-1)^{n}\sum_{k=2}^{E-1}{E-1 \choose k}{-E \choose n-k}\,.
\end{gather*}
By Vandermonde's convolution

\begin{gather*}
\sum_{k=0}^{E-1}{E-1 \choose k}{-E \choose n-k}=\sum_{k=0}^{n}{E-1 \choose k}{-E \choose n-k}={-1 \choose n}=(-1)^{n}.
\end{gather*}
Therefore

\begin{gather*}
\beta_{n}^{E}=1-{n+E-1 \choose E-1}+{n+E-2 \choose E-1}\left(E-1\right).
\end{gather*}
Notice that ${n+E-1 \choose E-1}={n+E-2 \choose E-1}+{n+E-2 \choose E-2}$
and thus

\begin{gather}
\beta_{n}^{E}={n+E-2 \choose E-1}\left(E-2\right)-{n+E-2 \choose E-2}+1.\label{eq:star-n}
\end{gather}

\begin{figure}[h]
\begin{center}~~~~~~~~~\includegraphics[scale=0.6]{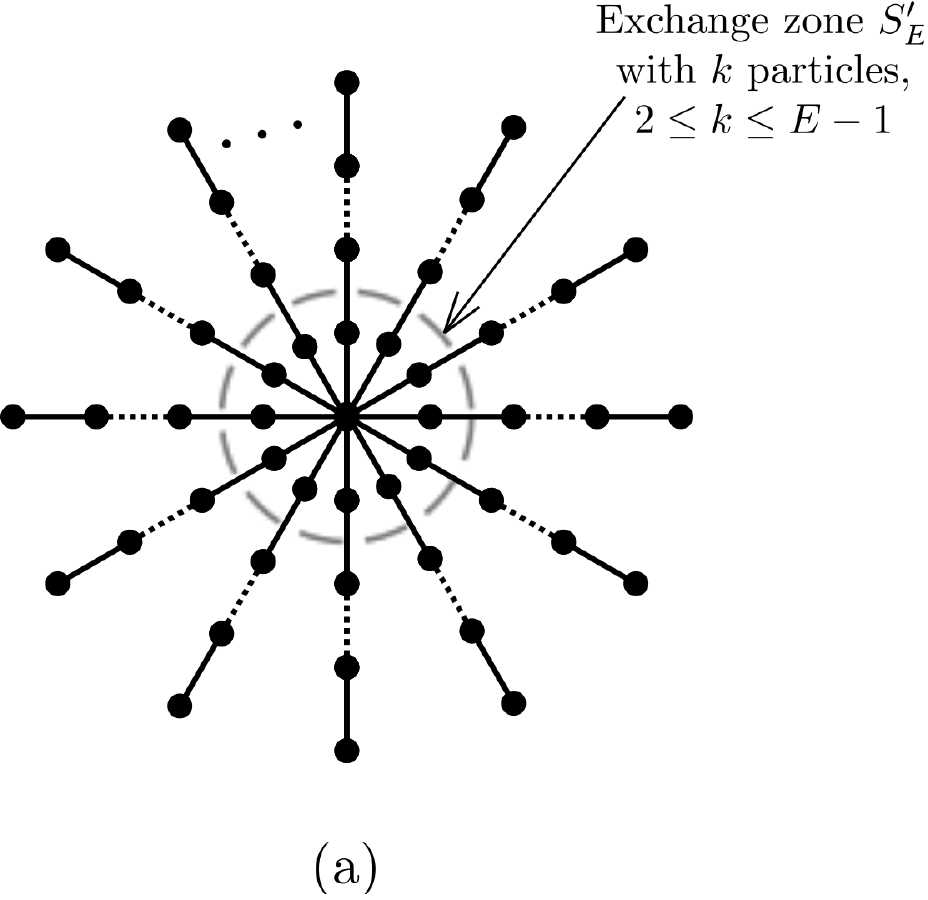}~~~~~~~~~~~\includegraphics[scale=0.6]{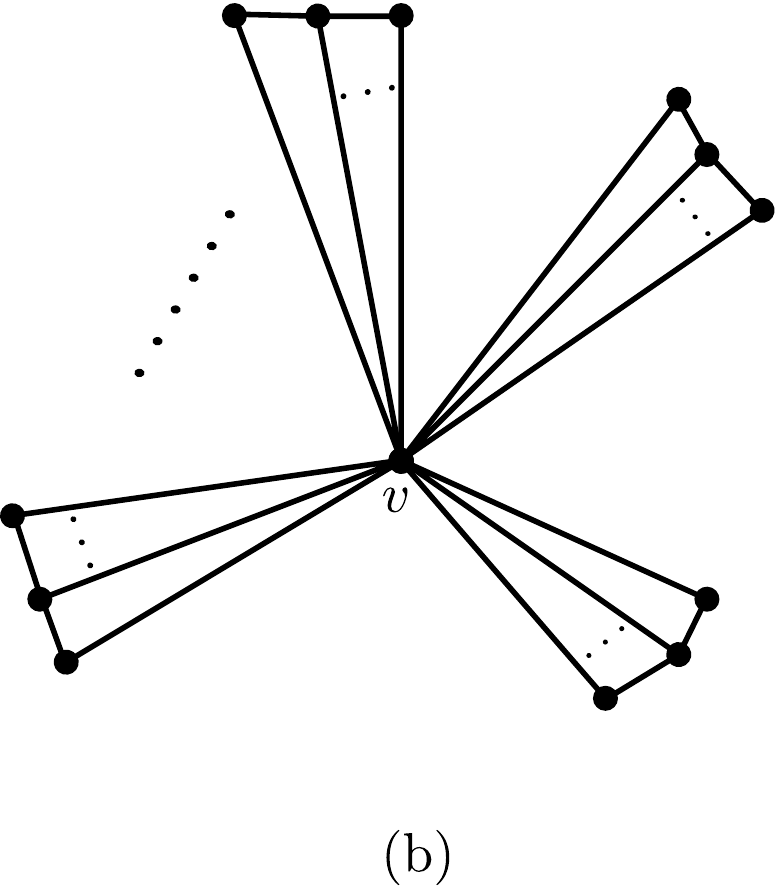}\end{center}

\caption{\label{fig:The-star and fan}(a) The star graph with $E$ arms and
$n$ particles. Each arm has $n$ vertices. The exchange zone $S_{E}^{\prime}$
can accommodate $2$, $3$,...,$E-1$ particles. (b) The fan graph
$F$.}
\end{figure}

\noindent Note finally that in contrast with $2$-connected graphs,
formula (\ref{eq:star-n}) indicates a strong dependence of the
quantum statistics on the number of particles, $n$.

\subsection{The fan graphs }

Following the argument presented in section \ref{sub:One-connected-graphs}
in order to treat a one-vertex cut $v$ we need to count the number
of the independent Y-phases which are lost due to the removal of $v$. As in Section \ref{sub:One-connected-graphs}, let $\mu = \mu(v)$ denote the number of connected components following the deletion of $v$, and denote these components by $\Gamma_1, \ldots, \Gamma_\mu$.  For Y-cycles with edges in three distinct components, the number  of independent phases, $\beta_n^\mu$, is given by the expression (\ref{eq:star-n}) for star graphs,
\begin{gather}
%Fixed typo below.  Should be \beta_{n}^{\mu}, not \beta_{n}^{E}
\beta_{n}^{\mu}={n+\mu-2 \choose \mu-1}\left(\mu-2\right)-{n+\mu-2 \choose \mu-2}+1.\label{eq:fan-star}
\end{gather}
We must also determine the number of independent Y-cycles with two edges in the same component $\Gamma_{i}$, denoted  $\gamma_n(v)$ .

Let us first consider a simple example, namely
%\begin{example}
%\label{ex2}
 the graphs shown in figures \ref{fig:fan-ex}(a)
and \ref{fig:fan-ex}(b). Assume there are three particles. We %want
%to
calculate $\gamma_3(v)$ as follows. %, the number of $Y$ phases is lost as a result of cutting
%the vertex $v$.
The $Y$ subgraphs we are interested in are denoted
by dashed lines and are $Y_{1}$ and $Y_{2}$ respectively. Note that
each of them contributes three phases corresponding to different
positions of the third particle $\{\phi_{Y_{1}}^{A},\phi_{Y_{1}}^{B},\phi_{Y_{1}}^{C},\phi_{Y_{2}}^{A},\phi_{Y_{2}}^{B},\phi_{Y_{2}}^{C}\}$.
They are, however, not independent. To see this, note that using Lemma
\ref{aspect1-1} we can write
\begin{gather*}
\phi_{c,3}=\phi_{Y_{1}}^{A}+\phi_{Y_{1}}^{B}+\phi_{c,1}^{B,B^{\prime}},\,\,\,\phi_{c,3}=\phi_{Y_{2}}^{A}+\phi_{Y_{2}}^{C}+\phi_{c,1}^{C,C^{\prime}}\,,\\
\phi_{c,2}^{B}=\phi_{Y_{1}}^{B}+\phi_{c,1}^{B,B^{\prime}},\,\,\,\phi_{c,2}^{B}=\phi_{Y_{2}}^{B}+\phi_{c,1}^{B,C}\,,\\
\phi_{c,2}^{C}=\phi_{Y_{1}}^{C}+\phi_{c,1}^{B,C},\,\,\,\phi_{c,2}^{C}=\phi_{Y_{2}}^{C}+\phi_{c,1}^{C,C^{\prime}}\,.
\end{gather*}
The phase $\phi_{c,3}$ is not lost when $v$ is cut. % during the splitting process.
On the other hand, the five phases
\begin{gather}
\{\phi_{c,1}^{C,C^{\prime}},\,\phi_{c,1}^{B,B^{\prime}},\,\phi_{c,1}^{B,C},\,\phi_{c,2}^{B},\,\phi_{c,2}^{C}\},
\end{gather}
are lost. The knowledge of them and $\phi_{c}^{3}$ determines all
six $\phi_{Y}$ phases. Therefore, $\gamma_3(v)$ is the number of $1$ and $2$-particle exchanges on cycle
$c$ (which is $5$) rather than the number of $Y$ phases (which
is $6$).
%\end{example}

\begin{figure}[h]
\begin{center}~~~~~~~~~~~~\includegraphics[scale=0.6]{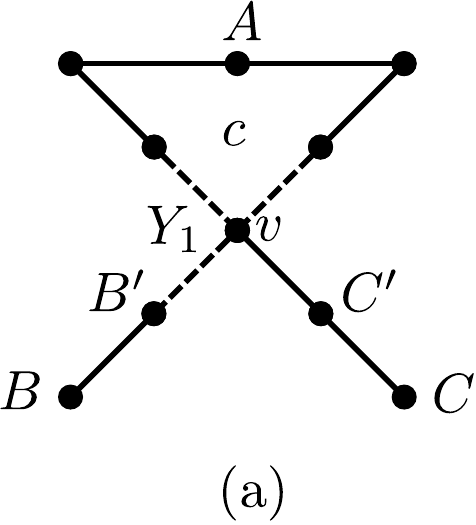}~~~~~~~~~~~~~~~~~~\includegraphics[scale=0.6]{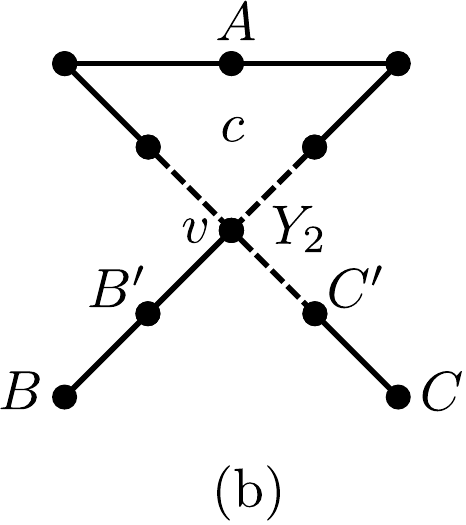}\end{center}

\caption{\label{fig:fan-ex}The $Y$ subgraphs (a) $Y_{1}$ and (b)$Y_{2}$.}
\end{figure}

%Following the conclusion stemming from
For the general case, let $\nu_i$ denote the number of edges at $v$ which belong to $\Gamma_i$.  Since the $\Gamma_i$ are connected, there exist $\nu_i - 1$ independent cycles in $\Gamma_i$ which connect these edges.  Denote these by $C_{i,1},\ldots, C_{1,\nu_1-1}$.  Fan graphs (see Fig~\ref{fig:The-star and fan} (b)) provide the simplest realization.  Using arguments similar to those in the above example, one can show that Y-cycles with two edges in the same component  can be expressed in terms of two sets of cycles. The first set contains cycles which are wholly contained in just one of the connected components. These cycles are not lost when $v$ is cut, and therefore do not contribute to $\gamma_n(v)$.  The second type of cycle is characterised as follows:  Consider a partition $\{n_i\}_{i=1}^{\mu}$ of the particles amongst the components $\Gamma_i$.  For each partition, we can construct cycles where all of the particles in $\Gamma_i$ -- assuming $\Gamma_i$ contains at least one particle, i.e.~that $n_i > 0$ -- are taken to move once around  $C_{i,j}$ while the other particles remain fixed.  Excluding the cases in which all of the particles belong to a single component,
 %(in this case, the cycle is {\it not} lost when $v$ is cut),
 the number of such cycles is given by the following sum over partitions $n_1 +\cdots + n_\mu = n$:
\[ \gamma_n(v) =  \sum_{\scriptstyle
n_1,\ldots,n_\mu = 0 %\ % \, | \,
\\ \atop \scriptstyle
n_1 + \cdots + n_\mu = n}^n \
\sum_{\scriptstyle
i = 1   % \, | \, %
\\ \atop \scriptstyle
0 < n_i < n}^\mu (\nu_i - 1).\]
Noting that
\[ \sum_{\scriptstyle
i = 1   % \, | \, %
\\ \atop \scriptstyle
0 < n_i < n}^\mu  =
\sum_{i = 1}^\mu \ - \
\sum_{\scriptstyle
i = 1   % \, | \, %
\\ \atop \scriptstyle  n_i = 0}^\mu
\ - \
\sum_{\scriptstyle
i = 1  % \, | \, %
\\ \atop \scriptstyle n_i = n}^\mu\]
and $\sum_{i = 1}^\mu (\nu_i - 1) = \nu - \mu$,
we readily obtain
\[  \gamma_n(v) = \left( \binom{n+\mu -1}{n} - \binom{n+\mu-2}{n} - 1\right)(\nu-\mu) = \left( \binom{n+\mu -2}{n-1} - 1\right)(\nu-\mu).\]
Hence the  number of the phases lost when $v$ is cut is given by
\begin{gather}
N_{1}(v,n)= \beta^\mu_n + \gamma_n(v) = {n+\mu-2 \choose \mu-1}\left(\nu-2\right)-{n+\mu-2 \choose \mu-2}-\left(\nu-\mu-1\right).\label{eq:one-connected-n}
\end{gather}

\paragraph{The final formula for $H_1(\mathcal{D}^{n}(\Gamma))$}

By the repeated application of the one-vertex cuts the resulting components
of $\Gamma$ become finally $2$-connected graphs. Let $v_{1},\ldots,v_{l}$
be the set of cut vertices such that components $\Gamma_{v_{i},k}$
are $2$-connected. Making use of formula (\ref{eq:2-connected})
we write

\begin{gather}
H_{1}(\mathcal{D}^{n}(\Gamma))=\mathbb{Z}^{\beta(\Gamma)+N_{1}+N_{2}+N_{3}}\oplus\mathbb{Z}_{2}^{N_{3}^{\prime}},\label{eq:2-particle-final-1}
\end{gather}
 where $N_{1}=\sum_{i}N_{1}(v_{i},n)$, the coefficients $N_{1}(v_i,n)$
are given by (\ref{eq:one-connected-n}) and $N_{2}$, $N_{3}$, $N_{3}^{\prime}$
are defined as in section \ref{sec:Two-particle-quantum-statistics}.

\section{Gauge potentials for $2$-connected graphs}

In this section we give a prescription for the $n$-particle topological gauge potential on $\mathcal{D}^n(\Gamma)$ in terms of the $2$-particle topological gauge potential. For $2$-connected graphs all choices of $n$-particle topological gauge potentials on $\mathcal{D}^n(\Gamma)$  are realized by this prescription. The discussion is divided into three parts: i) separation of a $2$-particle topological gauge potential into AB and quantum statistics components, ii) topological gauge potentials for 2-particles on a subdivided graph,  iii) $n$-particle topological gauge potentials.

We start with some relevant background. Assume as previously that $\Gamma$ is sufficiently subdivided. Recall that directed edges or $1$-cells of $\mathcal{D}^n(\Gamma)$ are of the form $v_1\times\ldots\times v_{n-1}\times e$ up to permutations, where $v_j$ are vertices of $\Gamma$ and $e=j\rightarrow k$ is an edge of $\Gamma$ whose endpoints are not $\{v_1,\ldots, v_{n-1}\}$. For simplicity we will use the following notation
\[
\{v_1,\ldots,v_{n-1},j\rightarrow k\}:=v_1\times\ldots\times v_{n-1}\times e.
\]
An $n$-particle gauge potential is a function $\Omega^{(n)}$ defined on the directed edges of $\mathcal{D}^n(\Gamma)$ with the values in $\mathbb{R}$ modulo $2\pi$ such that
\begin{equation}\label{asym}
\Omega^{(n)}(\{v_1,\ldots,v_{n-1},k\rightarrow j\})=-\Omega^{(n)}(\{v_1,\ldots,v_{n-1},j\rightarrow k\}).
\end{equation}
In order to define $\Omega$ on linear combinations of directed edges we extend (\ref{asym}) by linearity.

For a given gauge potential, $\Omega^{(n)}$ the sum of its values calculated on the directed edges of an oriented cycle $C$ will be called the flux of $\Omega$ through $C$ and denoted $\Omega(C)$. Two gauge potentials $\Omega_1^{(n)}$ and $\Omega_2^{(n)}$ are called equivalent if for any oriented cycle $C$ the fluxes $\Omega_1^{(n)}(C)$ and $\Omega_2^{(n)}(C)$ are equal modulo $2\pi$.

The $n$-particle gauge potential $\Omega^{(n)}$ is called a {\it topological gauge potential} if for any contractible oriented cycle $C$ in $\mathcal{D}^n(\Gamma)$ the flux $\Omega^{(n)}(C)=0\,\mathrm{mod}\,2\pi$. It is thus clear that equivalence classes of topological gauge potentials are in 1-1 correspondence with the equivalence classes in $H_1(\mathcal{D}^n(\Gamma))$.

\paragraph{Pure Aharonov-Bohm  and pure quantum statistics topological gauge potentials}
Let $\Gamma$ be a graph with $V$ vertices. We say that a 2-particle gauge potential $\Omega^{(2)}_{AB}$ is a {\it pure Aharonov-Bohm  gauge potential} if and only if
\begin{equation}
\label{ eq: AB gauge potential}
\Omega^{(2)}_{AB}(\{i, j\rightarrow k\}) = \omega^{(1)}(j \rightarrow k), \text { for all distinct vertices $i,j,k$ of $\Gamma$}.
\end{equation}
Here $\omega^{(1)}$ can be regarded as a gauge potential on $\Gamma$. Thus, for a pure AB gauge potential, the phase associated with one particle moving from $j$ to $k$ does not depend on where the other particle is. We say that a 2-particle gauge potential $\Omega^{(2)}_{S}$ is a {\it pure statistics gauge potential} if and only if
\begin{equation}
\label{ eq: AB gauge potential}
\sum_{\scriptstyle i\atop \scriptstyle i \neq j,k} \Omega^{(2)}_{S}(\{i, j\rightarrow k\}) = 0, \text { for all distinct vertices $j,k$ of $G$}.
\end{equation}
That is, the phase associated with one particle moving from $j$ to $k$ averaged over all possible positions of the other particle is zero.  It is clear that an arbitrary gauge potential $\Omega^{(2)}$ has a unique decomposition into a pure AB and pure statistics gauge potentials, i.e.
\begin{equation}
\label{eq: decomposition }
\Omega^{(2)} = \Omega^{(2)}_{AB} + \Omega^{(2)}_{S},
\end{equation}
where
\begin{equation}
\label{eq: decomposition 2}
 \Omega^{(2)}_{AB}(\{i, j\rightarrow k\}) = \frac{1}{V-2} \sum_{\scriptstyle p\atop \scriptstyle p \neq j,k} \Omega^{(2)}(\{p,j\rightarrow k\}), \quad \Omega^{(2)}_S =\Omega^{(2)} - \Omega^{(2)}_{AB}.
\end{equation}
It is straightforward to verify that if $\Omega^{(2)}$ is a topological gauge potential, then so are $\Omega^{(2)}_{AB}$ and $\Omega^{(2)}_S$, and vice versa. Moreover, one can easily check that $\Omega_{AB}^{(2)}$ vanishes on any Y-cycle of $\mathcal{D}^2(\Gamma)$. Note, however, that for a given cycle $C$ of $\Gamma$ the AB-phase, $\phi^{v}_{C,1}$ considered in the previous sections is not $\Omega^{(2)}_{AB}(v\times C)$ but rather $\Omega^{(2)}(v\times C)$ as AB-phases can depend on the position of the stationary particle.

\paragraph{Gauge potential for a subdivided 2-particle graph}\label{sec: gauge potential}
Let $\Gbar$ be a graph with vertices $\Vcalbar = \{1,\ldots, \Vbar\}$. Let $\Omegabar^{(2)}$ be a gauge potential on $\mathcal{D}^2(\Gbar)$.

%Denote matrix elements of $\Omegabar^{(2)}$ by $\Omega^0(i,j\rightarrow k)$, where $i$, $j$ and $k$ are distinct vertices of $G^0$, and $j \sim k$, %i.e.~$\Abar_{jk} = 1$.
We assume that $\Omegabar^{(2)}$ is topological, that is, for every pair of disjoint edges of $\Gbar$, $i\leftrightarrow k$ and $j\leftrightarrow l$ we have
\begin{equation}
\label{eq: relation 0 }
\Omegabar^{(2)}(i,j\rightarrow l) + \Omegabar^{(2)}(l,i\rightarrow k) +\Omegabar^{(2)}(k,l\rightarrow j) + \Omegabar^{(2)}(j,k\rightarrow i) = 0.
\end{equation}
Assume we add a vertex to $\Gbar$ by subdividing an edge.  Let $p$ and $q$ denote the vertices of this edge, and denote the new graph by $\Gamma$ and  the added vertex by $a$. Since subdividing an edge does not change the topology of a graph, it is clear that we can find a gauge potential, $\Omega^{(2)}$, on $\mathcal{D}^2(\Gamma)$ that is, in some sense, equivalent to $\Omegabar^{(2)}$.

%The new adjacency matrix,  $A$,   is given by
%\begin{align}
%\label{ eq: A}
%A_{ij} &= \Abar_{ij}, \quad \{i,j\} \neq \{p,q\},  \nonumber\\
%A_{pq} &= 0,\nonumber\\
%A_{aj} &= \delta_{jp} + \delta_{jq}.
%\end{align}

For the sake of  completeness, we first give a precise definition of what it means for gauge potentials on $\mathcal{D}^2(\Gamma)$ and $\mathcal{D}^2(\Gbar)$ to be equivalent. Given a path $\Cbar$ on $\mathcal{D}^2(\Gbar)$, we can construct a path $P$ on $\mathcal{D}^2(\Gamma)$ by making the replacements
\begin{align}\label{eq: subs C_0 to C}
\{i,p\rightarrow q\} &\mapsto \{i,p\rightarrow a\rightarrow q\},\nonumber\\
\{i,q\rightarrow p\} &\mapsto \{i,q\rightarrow a\rightarrow p\}.
\end{align}
Similarly, given a path $P$ on $\mathcal{D}^2(\Gamma)$  we can construct a path $\Cbar$ on $\mathcal{D}^2(\Gbar)$ by making the following substitutions:
\begin{align}\label{eq: subs C to C_0}
\{i,p\rightarrow a\rightarrow p\} &\mapsto  \{i,p\},\nonumber\\
\{i,p\rightarrow a\rightarrow q\} &\mapsto \{i,p\rightarrow q\},\nonumber\\
\{i,q\rightarrow a\rightarrow p\} &\mapsto \{i,q\rightarrow p\},\nonumber\\
\{i,q\rightarrow a\rightarrow q\} &\mapsto \{i,q\}.
\end{align}
We say that $\Omega^{(2)}$ and $\Omegabar^{(2)}$ are equivalent if
\begin{equation}
\label{eq: equiv Omega Omega^0 }
 \Omega^{(2)}(P) = \Omegabar^{(2)}(\Cbar)
\end{equation}
whenever $P$ and $\Cbar$ are related as above.

Next we give an explicit prescription for $\Omega^{(2)}$.  For edges in $\mathcal{D}^2(\Gamma)$ that do not involve vertices on the subdivided edge, we take $\Omega^{(2)}$ to coincide with $\Omegabar^{(2)}$.  That is,  for $i, j, k$ all distinct from $p,a,q$, we take
\begin{equation}
\label{eq: Omega ijk }
\Omega^{(2)}(\{i,j\rightarrow k\}) = \Omegabar^{(2)}(\{i,j\rightarrow k\}).
\end{equation}
As $p$ and $q$ are not adjacent on $\Gamma$, we take
\begin{equation}
\label{eq: Omega ijk }
\Omega^{(2)}(\{i,p\rightarrow q\}) = 0.
\end{equation}
For edges on $\mathcal{D}^2(\Gamma)$ involving the subdivided segments $p\rightarrow a$ and $a \rightarrow q$,  we require that $\Omega^{(2)}(\{i,p\rightarrow a\})$ and  $\Omega^{(2)}(\{i,a\rightarrow q\})$ add up to give the  phase  $\Omegabar^{(2)}(i,p\rightarrow q)$ on the original edge.  The partitioning of the original phase between the subdivided segments amounts to a choice of gauge.  For definiteness, we will take the phases on the two halves of the subdivided edge to be the same, so that
\begin{equation} \label{eq: Omega p,q-> a}
\Omega^{(2)}(\{i, p\rightarrow a\})  =
\Omega^{(2)}(\{i, a\rightarrow q\})  = \frac12 \Omegabar^{(2)}(\{i,p\rightarrow q\}). %      \end{align}
\end{equation}

It remains to determine $\Omega^{(2)}$ for edges of $C_2(G)$ on which the stationary particle sits at the new vertex $a$.  This follows from requiring that $\Omega^{(2)}$ satisfy the relations
\begin{align}
\label{eq: relations}
\Omega^{(2)}(\{a,i\rightarrow j\}) + \Omega^{(2)}(\{j,a\rightarrow p\}) + \Omega^{(2)}(\{p,j\rightarrow i\}) + \Omega^{(2)}(\{i,p\rightarrow a\}) &= 0,\nonumber\\
\Omega^{(2)}(\{a,i\rightarrow j\}) + \Omega^{(2)}(\{j,a\rightarrow q\}) + \Omega^{(2)}(\{q,j\rightarrow i\}) + \Omega^{(2)}(\{i,q\rightarrow a\}) &= 0.
\end{align}
From  (\ref{eq: Omega p,q-> a}) and the antisymmetry property $\Omega^{(2)}(\{i,j\rightarrow k\}) = -\Omega(\{i,k\rightarrow j\})$, along with the relations \eqref{eq: relation 0 } satisfied by $\Omegabar^{(2)}$, it follows that these conditions are equivalent, and both are satisfied by taking
\begin{equation}\label{eq: Omega a i-> j}
    \Omega^{(2)}(a, i\rightarrow j)  = \half \left(\Omegabar^{(2)}(p,i\rightarrow j) + \Omegabar^{(2)}(q,i\rightarrow j)\right).
\end{equation}
Finally, when $i$ or $j$ coincide with one of the vertices $p$ or $q$ the expression should be
\begin{equation}\label{eq: Omega a p-> j}
    \Omega^{(2)}(\{a, q\rightarrow j\})  = \left(\Omegabar^{(2)}(\{p,q\rightarrow j\}) + \half \Omegabar^{(2)}(\{j,q\rightarrow p\})\right).
\end{equation}
It is then straightforward to verify that $\Omega^{(2)}(P) = \Omegabar^{(2)}(\Cbar)$ whenever $P$ and $\Cbar$ are related as in \eqref{eq: subs C_0 to C} and \eqref{eq: subs C to C_0} and that $\Omega^{(2)}$ is a topological gauge potential.

\paragraph{Construction of $n$-particle topological gauge potential}

Let $\Omegabar^{(2)}$ be a gauge potential on $\mathcal{D}^2(\Gbar)$.  By repeatedly applying the procedure from the previous paragraph, we can construct an equivalent gauge potential $\Omega^{(2)}$ on $\mathcal{D}^2(\Gamma)$, where $\Gamma$ is a sufficiently subdivided version of $\Gbar$, in which $n-2$ vertices are added to each edge of $\Gbar$.
We resolve $\Omega^{(2)}$ into its AB and statistics components $\Omega^{(2)}_{AB}$ and $\Omega^{(2)}_S$, as in \eqref{eq: decomposition }.  Suppose the pure AB component is described by the gauge potential $\omega^{(1)}$ on $\Gamma$.  We define the $n$-particle gauge potential, $\Omega^{(n)} $, on $\mathcal{D}^{n}(\Gamma)$ as follows. Given $(n+1)$ vertices of $\Gamma$, denoted $\{ v_1,\ldots, v_{n-1},i,j\}$, with $i\sim j$, we take
\begin{equation}
\label{eq: Omega^n }
\Omega^{(n)} \left(
\{ v_1,\ldots,v_{n-1}, i\rightarrow j\}
\right)
= \omega^{(1)}(i\rightarrow j) +
\sum_{r=1}^{n-1} \Omega^{(2)}_S(\{v_r, i\rightarrow j\}).
\end{equation}
That is, the phase associated with the one-particle move $i\rightarrow j$ is the sum of the AB-phase $\omega^{(1)}(i,j)$ and the two-particle statistics phases $\Omega_S^{(2)}(\{v_r, i\rightarrow j\})$ summed over the positions of the other particles.

Given that $\Omega^{(2)}$ is a topological gauge potential, let us verify that $\Omega^{(n)} $ is a topological gauge potential. Let $i\rightarrow k$ and $j\rightarrow l$ be distinct edges of $\Gamma$, and let $\{v_1,\ldots, v_{n-2}\}$ denote $(n-2)$ vertices of $\Gamma$ that  are distinct from $i$, $j$, $k$, $l$.  We need to verify if
\begin{gather*}
\Omega^{(n)} \left( \{v_1,\ldots, v_{n-2},i, j\rightarrow l\}\right) +
\Omega^{(n)} \left( \{v_1,\ldots, v_{n-2},l, i\rightarrow k\}\right) + \\
+\Omega^{(n)} \left( \{v_1,\ldots, v_{n-2},k, l\rightarrow j\}\right) +
\Omega^{(n)} \left( \{v_1,\ldots, v_{n-2},j, k\rightarrow i\}\right)=0.
\end{gather*}
Using \eqref{eq: Omega^n } it reduces to
\begin{small}
\begin{gather*}
\omega^{(1)} (i\rightarrow k) + \omega^{(1)} (k\rightarrow i) + \omega^{(1)} (j \rightarrow l) + \omega^{(1)} (l\rightarrow k) + \\
+
\left( \sum_{r = 1}^{n-2} \Omega^{(2)}_S(\{v_r, j\rightarrow l\}) +  \Omega^{(2)}_S(\{i, j\rightarrow l\}) \right) +
\left( \sum_{r = 1}^{n-2} \Omega^{(2)}_S(\{v_r, i\rightarrow k\}) +  \Omega^{(2)}_S(\{l, i\rightarrow k\}) \right)+ \\
+\left( \sum_{r = 1}^{n-2} \Omega^{(2)}_S(\{v_r, l\rightarrow j\}) +  \Omega^{(2)}_S(\{k, l\rightarrow j\}) \right) +
\left( \sum_{r = 1}^{n-2} \Omega^{(2)}_S(\{v_r, k\rightarrow i\}) +  \Omega^{(2)}_S(\{j, k\rightarrow i\}) \right). \\
\end{gather*}
\end{small}
Next, using the antisymmetry property $\Omega^{(2)}_S(\{v_r, i\rightarrow k\}) = -\Omega^{(2)}_S(\{v_r, k\rightarrow i\})$ and the fact that $\Omega^{(2)}_S$ is a topological gauge potential we get\begin{small}
\begin{gather*}
\sum_{r=1}^{n-2}  \left(\Omega^{(2)}_S(\{v_r, j\rightarrow l\}) + \Omega^{(2)}_S(\{v_r, l\rightarrow j\})\right) +
\left(\Omega^{(2)}_S(\{v_r, i\rightarrow k\}) + \Omega^{(2)}_S(\{v_r, k\rightarrow i\})\right) +\\
+ \Omega^{(2)}_S(\{i, j\rightarrow l\}) +  \Omega^{(2)}_S(\{l, i\rightarrow k\}) +  \Omega^{(2)}_S(\{k, l\rightarrow j\}) + \Omega^{(2)}_S(\{j, k\rightarrow i\})  = 0.
\end{gather*}\end{small}

Therefore, the gauge potential defined by \eqref{eq: Omega^n } is topological. Equivalence classes of n-particle topological gauge potentials are essentially elements of the first homology group $H_1(\mathcal{D}^2(\Gamma))$. By Theorem~\ref{thm: 5} the equivalence classes in $H_1(\mathcal{D}^n(\Gamma))$ are in 1-1 correspondence with equivalence classes in $H_1(\mathcal{D}^2(\Gamma))$. Hence, for $2$-connected graphs all choices of $n$-particle topological gauge potential on $\mathcal{D}^n(\Gamma)$  can be realized  by \eqref{eq: Omega^n }. Finally, note  that, as explained in \cite{JHJKJR}, having an $n$-particle topological gauge potential one can easily construct a tight-binding Hamiltonian which supports quantum statistics represented by it (see \cite{JHJKJR} for more details).

\section{Morse theory argument} \label{MT}

We present an argument which shows the $n$-particle cycles given in sections \ref{sub:An-over-complete-basis} and
\ref{sub:An-over-complete-basis-2} form an over-complete spanning set of the first homology group $H_{1}(\mathcal{D}^{n}(\Gamma))$.  The argument follows the characterization of the fundamental group using discrete Morse theory by Farley and Sabalka \cite{FS05,FS08,FS12} or alternatively the characterization of the discrete Morse function for the $n$-particle graph \cite{S12}.  Here, however, we present the central idea in a way that does not assume a familiarity with discrete Morse theory in order to remain accessible.  For a rigorous proof we refer to the articles cited above.

Given a sufficiently subdivided graph $\Gamma$ we identify some maximal spanning subtree $T$ in $\Gamma$; $T$ is obtained by omitting exactly $\beta_1 (\Gamma)$ of the edges in $\Gamma$ such that $T$ remains connected but contains no loops.  The tree can then be drawn in the plane to fix an orientation.  A single vertex of degree $1$ in $T$ is identified as the root and the vertices of $T$ are labeled $1,2,\dots,|V|$ starting with $1$ for the root and labeling each vertex in turn traveling from the root around the boundary of $T$ clockwise, see figure \ref{fig:graphtotree}.

\begin{figure}[tbh]
\begin{center}
\includegraphics[scale=1.3]{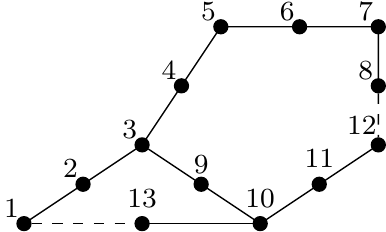}
\end{center}
\caption{\label{fig:graphtotree}A sufficiently subdivided graph for $3$ particles, edges in a maximal spanning tree are shown with solid lines and edges omitted to obtain the tree are shown with dashed lines.  Vertices are labeled following the boundary of the tree clockwise from the root vertex $1$.}
\end{figure}

To characterize a spanning set of $n$-particle cycles for the first homology group we fix a root configuration $\vx_0=\{1,2,\dots,n\}$ where the particles are lined up as close to the root as possible, see figure \ref{fig:Ybasis}(a).  The tree $T$ is used to establish a set of contractable paths between $n$-particle configurations on the graph (a discrete vector field). Given an $n$-particle configuration $\vx=\{v_1,\dots,v_n\}$ on the graph a path from $\vx$ to $\vx_0$ is a sequence of one-particle moves, where a single particle hops to an adjacent vacant vertex with the remaining $n-1$ particles remaining fixed.  This is a $1$-cell $\{v_1,\dots,v_{n-1},u\rightarrow v\}$ where $u$ and $v$ are the locations of the moving particle.  The labeling of the vertices in the tree provides a discrete vector field on the configuration space.  A particle moves according to the vector field if $n+1\rightarrow n$, i.e. the particle moves towards the root along the tree.  This allows a particle to move through a non-trivial vertex (a vertex of degree $\geq 3$) if the particle is coming from the direction clockwise from the direction of the root.  To define a flow that takes any configuration back to $\vx_0$ we also define a set of priorities at the non-trivial vertices that avoids $n$-particle paths crossing.  A particle may also move onto a non-trivial vertex $u$ according to the vector field if the $1$-cell $\{v_1,\dots,v_{n-1},u\rightarrow v\}$ does not contain a vertex $v_j$ with $v<v_j<u$; i.e. moving into a nontrivial vertex particles give way (yield) to the right.  So a particle can only move into the nontrivial vertex if there are no particles on branches of the graph between the branch the particle is on and the root direction clockwise from the root.  With this set of priorities it is clear that a path (sequence of $1$-cells) exists that takes any configuration $\vx$ to $\vx_0$ using only $1$-cells in the discrete vector field.  Equivalently by reversing the direction of edges in $1$-cells we can move particles from the reference configuration $\vx_0$ to any configuration $\vx$ against the flow.  As $n$-particle paths following this discrete flow do not cross these paths are contractible; equivalently, the phase around closed loops combining paths following and against the discrete flow is zero.  Note, we will describe paths either in the direction of the flow or against it as according to the vector field.

It remains to find a spanning set for the cycles that use $1$-cells not in the discrete vector field (that is, cells that are neither in the direction of the flow or against it).  We see now that there are only two types of $1$-cells that are excluded; those where the edge $u\leftrightarrow v$ is one of the $\beta_1 (\Gamma)$ edges omitted from $\Gamma$ to construct $T$, and those where a particle moves through a non-trivial vertex out of order - without giving way to the right.

We first consider a $1$-cell $c_{u\rightarrow v}=\{v_1, \dots,v_{n-1}, u\rightarrow v\}$ where $u\leftrightarrow v$ is an omitted edge.  Such a $1$-cell is naturally associated with a cycle where the particles move from $\vx_0$ to $\{v_1,\dots,v_{n-1},u\}$ against the flow, then follow $c_{u\rightarrow v}$ and finally move back from $\{v_1,\dots,v_{n-1},v\}$ to $\vx_0$ following the flow.  These $n$-particle cycles are typically the AB-cycles where one particle moves around a loop in $\Gamma$ with the other particles at a given configuration.  We saw in section \ref{sub:An-over-complete-basis} that while the phase associated with an AB-cycle can depend on the position of the other particles, these phases can be parameterized by only $\beta_1(\Gamma)$ independent parameters; one parameter for those cycles using each omitted edge.

We now consider, instead, cycles that include a $1$-cell $c=\{v_1, \dots,v_{n-1},u \rightarrow v)\}$ where a particle moves out of order at a nontrivial vertex.  Again each such $1$-cell is naturally associated to a cycle $C$ through $\vx_0$ where the particle moves according to the vector field except when it uses the $1$-cell $c$.  Such a cycle is shown in figure \ref{fig:Ybasis}.

\begin{figure}[tbh]
\begin{center}
\includegraphics[scale=1.1]{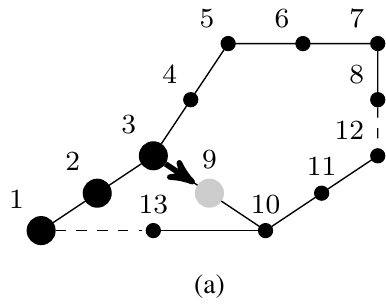}\includegraphics[scale=1.1]{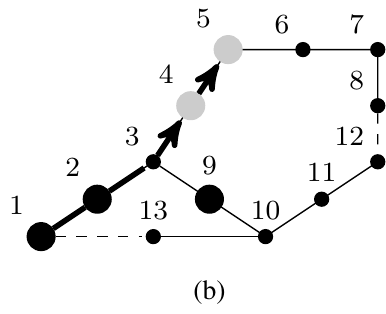}\includegraphics[scale=1.1]{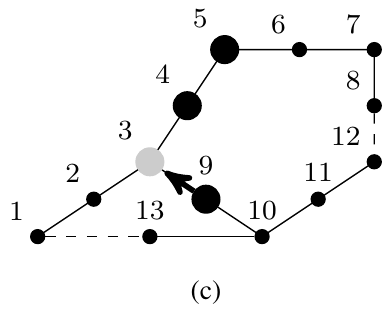}
\end{center}

\begin{center}
\includegraphics[scale=1.1]{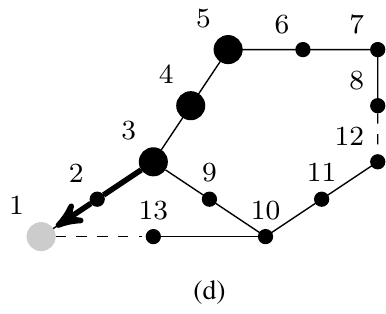}\includegraphics[scale=1.1]{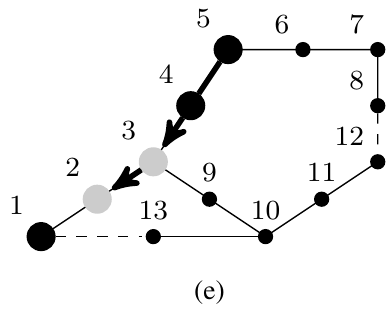}
\end{center}
\caption{\label{fig:Ybasis} An exchange cycle starting from the root configuration $\{1,2,3\}$ and using a single $1$-cell (c) that does not respect the flow at the non-trivial vertex $3$.  Large bold nodes indicate the initial positions of particles and light nodes their final positions.  In paths (a),(b),(d) and (e) particles move according to the vector field.}
\end{figure}

Such a cycle can be broken down into a product of $Y$-cycles in which pairs of particles are exchanged using three arms of the tree connected to the nontrivial vertex $v$ identified by $u,1$ and some $v_j$ where $v_j$ is a vertex in $c$ with $v<v_j<u$.  Figure \ref{fig:Ybasis2} shows a cycle homotopic to the cycle in figure \ref{fig:Ybasis} broken into the product of two $Y$-cycles; paths (a) through (c) and (d) through (e) respectively.  Notice that moving according to the vector field one returns from the initial configuration in figure \ref{fig:Ybasis2}(a) to the root configuration in figure \ref{fig:Ybasis}(a) and similarly one returns from the final configuration in figure \ref{fig:Ybasis2}(e) to the final configuration figure \ref{fig:Ybasis2}(d).  Then by contracting adjacent $1$-cells in the paths where the direction of the edge has been reversed it is straightforward to verify that the cycles in figures \ref{fig:Ybasis} and \ref{fig:Ybasis2} are indeed homotopic.

\begin{figure}[tbh]
\begin{center}
\includegraphics[scale=1.1]{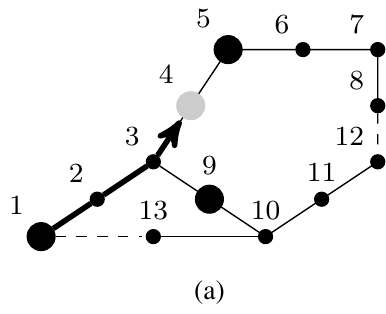}\includegraphics[scale=1.1]{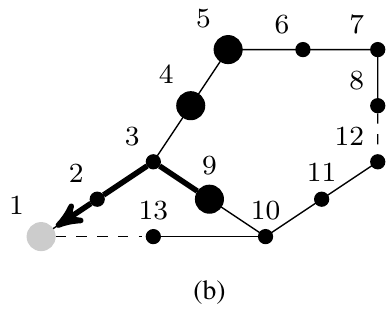}\includegraphics[scale=1.1]{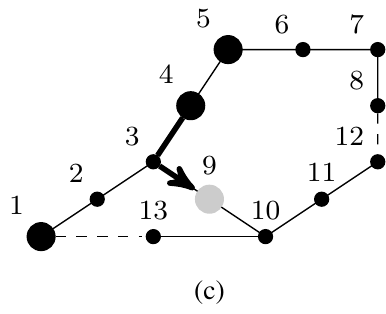}
\end{center}
\begin{center}
\includegraphics[scale=1.1]{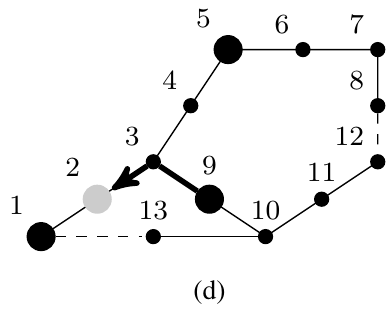}\includegraphics[scale=1.1]{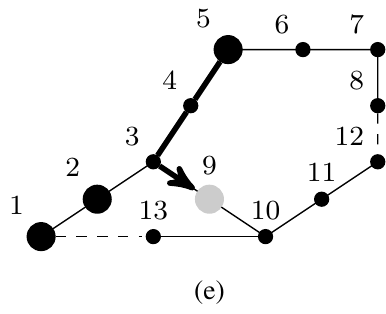}\includegraphics[scale=1.1]{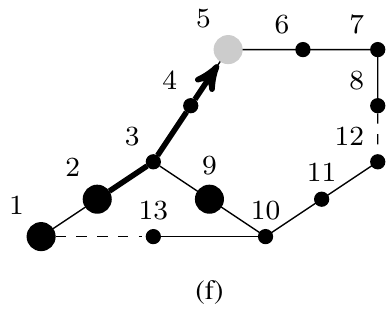}
\end{center}
\caption{\label{fig:Ybasis2} Examples of paths that form $Y$-cycles in the over-complete spanning set; large bold nodes indicate the initial positions of particles on the path and light nodes the final position a particle moves to. (a),(b) and (c) together form a $Y$-cycle, exchanging two particles at the non-trivial vertex $3$, similarly (c),(d) and (e) also form a $Y$-cycle.  Paths (a) through (e) together in order is a cycle homotopic to the exchange cycle starting from the root configuration shown in figure \ref{fig:Ybasis}.}
\end{figure}

Given a cycle $C$ from $\vx_0$ associated with a $1$-cell $c$ that does not respect the ordering at a nontrivial vertex to obtain a factorization of $C$ as a product of $Y$-cycles one need only start from $c$ and follow $C$ until it is necessary to move a third particle.  Instead of moving the third particle close the path to make a $Y$-cycle, which requires moving only one of the two particles moved so far.  Then retrace ones steps to rejoin $C$ and move the third particle through the nontrivial vertex again close a $Y$-cycle and repeat.  As any permutation can be written as the product of exchanges any such cycle $C$ can be factored as a product of $Y$-cycles.

Finally, as any $n$-particle cycle can be written as a closed sequence of $1$-cells and between $1$-cells we can add contractable paths according to the vector field without changing the phase associated with a cycle, we see that the AB-cycles and the cycles associated with $Y$ subgraphs centered at the nontrivial vertices form a spanning set for the $n$-particle cycles.  Clearly this spanning set will, in general, be over-complete as many relations between these cycles exist in a typical graph, in fact the full discrete Morse theory argument shows that all such relations are determined by critical $2$-cells \cite{FS05}.

\chapter{Discrete Morse functions for graph configuration spaces}
\section{Introduction}

In the last section of chapter \ref{chQS} some ideas of discrete Morse theory has been already introduced. In this chapter we present an alternative application of discrete Morse theory for two-particle graph configuration spaces. In contrast to previous constructions, which are based on discrete Morse vector fields, our approach is through Morse functions, which have a nice physical interpretation as two-body potentials constructed from one-body potentials. We also give a brief introduction to discrete Morse theory. 

Recently there has been significant progress in understanding topological properties of configuration spaces of many particles on metric graphs \cite{FS08,KP11}. This  was enabled by the foundational development of discrete Morse theory by Forman during  the late 1990's \cite{Forman98}. This theory reduces the calculation of homology groups to an essentially combinatorial problem, namely the construction of certain discrete Morse functions, or equivalently discrete gradient vector fields. Using this idea Farley and Sabalka \cite{FS08} gave a recipe for the construction of such a discrete gradient vector field \cite{FS08} on many-particle graphs and classified the first homology groups for tree graphs.
In 2011 Ko and Park \cite{KP11} significantly extended these results to arbitrary graphs by incorporating graph-theoretic theorems concerning the decomposition of a graph into its two and three-connected components.

In this chapter we give an alternative application of discrete Morse theory for two-particle graph configuration spaces. In contrast to the construction given in \cite{FS08}, which is based on discrete Morse vector fields, our approach is through discrete Morse functions. Our main goal is to provide an intuitive way of constructing a discrete Morse function and hence a discrete Morse gradient vector field. The central object of the construction is the `trial Morse' function. It may be understood as two-body potential constructed from one-body potential, a perspective which is perhaps more natural and intuitive from a physics point of view. Having a perfect Morse function\footnote{For the definition of a perfect Morse function see section \ref{sec:Classical-Morse-theory}.} $f_1$ on a graph $\Gamma$ we treat it as a one-body potential. The value of the trial Morse function at each point of a two-particle configuration space is the sum of the values of $f_{1}$ corresponding to the two particles positions in $\Gamma$. The trial Morse function is typically not a Morse function, i.e. it might not satisfy some of the relevant conditions. Nevertheless, we find that it is always possible to modify it and obtain a proper Morse function out of it. In fact, the trial Morse function is not `far' from being a Morse function and the number of cells at which it needs fixing is relatively small. Remarkably, this simple idea leads to similar results as those obtained in \cite{FS08}. We demonstrate it in Section \ref{sec:Main-example} by calculating two simple examples. We find that in both cases the trial Morse function has small defects which can be easily removed and a proper Morse function is obtained. The corresponding discrete Morse vector field is equivalent to the one stemming from the Farley and Sabalka method \cite{FS08}. As is shown in Section \ref{sec:General-consideration-for}, it is always possible to get rid of defects of the trial Morse function. The argument is rather technical. However, since the problem is of a certain combinatorial complexity we believe it cannot be easily simplified. We describe in details how the final result, i.e set of discrete Morse functions along with rules for identifying the critical cells and constructing the boundary map of the associated Morse complex, is built in stages from this simple idea. Our main purpose is hence to present an approach which we believe is conceptually simple and physically natural. It would be interesting to check if the presented constructions can give any simplification in understanding the results of \cite{KP11} but we do not pursue this here.

The chapter is organized as follows. In section \ref{sec:Morse-theory-in} we give a brief introduction to discrete Morse theory. Then in sections \ref{sec:One-particle-graph} and~\ref{sec:Main-example}, for two examples we present a definition of a `trial' Morse function $\tilde{f}_2$ for two-particle graph configuration space. We notice that the trial Morse function typically does not satisfy the conditions required of a Morse function according to Forman's theory. Nevertheless, we show in Section \ref{sec:General-consideration-for} that with small modifications, which we explicitly identify, the trial Morse function can be transformed into a proper Morse function $f_2$. In theorem \ref{theorem1} we give an explicit definition of the function $f_2$ and theorem \ref{theorem2} specifies its critical cells. Since the number of critical cells and hence the size of the associated Morse complex is small compared with the size of configuration space the calculation of homology groups are greatly simplified. The technical details of the proofs are given in the section \ref{proofs}. In section \ref{sec:topological gauge potentials} we discuss more specifically how the techniques of discrete Morse theory apply to the problem of quantum statistics on graphs.

\section{Morse theory in the nutshell\label{sec:Morse-theory-in}}

In this section we briefly present both classical and discrete Morse
theories. We focus on the similarities between them and illustrate
the ideas by several simple examples.

\subsection{Classical Morse theory}\label{sec:Classical-Morse-theory}

The concept of classical Morse theory is essentially very similar to its discrete counterpart. Since the former is better known we have found it beneficial to first discuss the classical version. A good reference is the monograph by
Milnor \cite{milnor}. Classical Morse theory is a useful tool
to describe topological properties of compact manifolds. Having such
a manifold $M$ we say that a smooth function $f:M\rightarrow\mathbb{R}$
is a Morse function if its Hessian matrix at every critical point
is nondegenerate, i.e.,
\begin{eqnarray}
df(x)=0\,\,\Rightarrow\,\,\mathrm{det}\left(\frac{\partial^{2}f}{\partial x_{i}\partial x_{j}}\right)(x)\neq0.
\end{eqnarray}
It can be shown that if $M$ is compact then $f$ has a finite number
of isolated critical points \cite{milnor}. The classical Morse theory
is based on the following two facts:
\begin{enumerate}
\item Let $M_{c}=\{x\in M\,:\, f(x)\leq c\}$ denote a sub level set of $f$.  Then $M_{c}$  is homotopy equivalent to $M_{c^{\prime}}$ if there is no critical value\footnote{A critical value of $f$  is the value of $f$ at one of its critical points.} between
the interval $(c,c\prime)$.
\item The change in topology when $M_{c}$ goes through a critical value
is determined by the index (i.e., the number of negative eigenvalues) of the Hessian
matrix at the associated critical point.
\end{enumerate}
The central point of classical Morse theory are the so-called Morse
inequalities, which relate the Betti numbers $\beta_{k}=\mathrm{dim}H_{k}(M)$, i.e. the dimensions of k-homology groups \cite{Hatcher}, to the numbers $m_{k}$ of critical points of index $k$, i.e.,
\begin{eqnarray}
\sum_{k}m_{k}t^{k}-\sum_{k}\beta_{k}t^{k}=(1+t)\sum_{k}q_{k}t^{k},\label{eq:Morse_ineq}
\end{eqnarray}
where $q_{k}\geq0$ and $t$ is an arbitrary real number. In particular
(\ref{eq:Morse_ineq}) implies that $\beta_{k}\leq m_{k}$. The function
$f$ is called a perfect Morse function iff $\beta_{k}=m_{k}$ for every
$k$. Since there is no general prescription it is typically hard to find a perfect Morse
function for a given manifold $M$. In fact a perfect Morse function may even not exist \cite{Ayala11}. However, even if $f$ is not perfect we can still encode the topological properties of $M$
in a quite small cell complex. Namely it follows from Morse theory that given a Morse function $f$,
one can show that $M$ is homotopic to a cell complex with $m_k$ $k$-cells,
and the gluing maps can be constructed in terms of the gradient paths of $f$.
We will not discuss this as it is far more complicated than in the discrete case.

\subsection{Discrete Morse function\label{sub:Discrete-Morse-function}}

In this section we discuss the concept of discrete Morse functions
for cell complexes as introduced by Forman \cite{Forman98}. Let $\alpha^{(p)}\in X$ denote a
$p$ - cell. A discrete Morse function on a regular cell complex $X$ is a function $f$
which assigns larger values to higher-dimensional cells with `local'
exceptions.
\begin{definition}
\label{Morse-fuction}A function $f\,:\, X\rightarrow\mathbb{R}$
is a discrete Morse function iff for every $\alpha^{(p)}\in X$ we
have
\begin{eqnarray}
\#\{\beta^{(p+1)}\supset\alpha\,:\, f(\beta)\leq f(\alpha)\}\leq1,\\
\#\{\beta^{(p-1)}\subset\alpha\,:\, f(\beta)\geq f(\alpha)\}\leq1.
\end{eqnarray}
\end{definition}
In other words, definition \ref{Morse-fuction} states that for any
$p$ - cell $\alpha^{(p)}$, there can be $\mathbf{at\,\, most}$
one $(p+1)$ - cell $\beta^{(p+1)}$ containing  $\alpha^{(p)}$ for which $f(\beta^{(p+1)})$ is less than or equal to  $f(\alpha^{(p)})$.  Similarly, there can be $\mathbf{at\,\, most}$ one $(p-1)$ - cell $\beta^{(p-1)}$ contained in  $\alpha^{(p)}$ for which $f(\beta^{(p-1)})$ is greater than or equal to  $f(\alpha^{(p)})$. Examples of a Morse function and a non-Morse function are shown in figure \ref{fig3}. The most important part of discrete Morse theory is the definition of a critical cell:
\begin{definition}
\label{criticalcell}A cell $\alpha^{(p)}$ is critical iff
\begin{eqnarray}
\#\{\beta^{(p+1)}\supset\alpha\,:\, f(\beta)\leq f(\alpha)\}=0,\,\,\mathrm{and}\\
\#\{\beta^{(p-1)}\subset\alpha\,:\, f(\beta)\geq f(\alpha)\}=0.
\end{eqnarray}
\end{definition}
That is,  $\alpha$ is critical if  $f(\alpha)$ is greater than the value of $f$ on all of the faces of $\alpha$, and $f(\alpha)$ is greater than the value of $f$ on all cells containing $\alpha$ as a face. From definitions
\ref{Morse-fuction} and \ref{criticalcell}, we get that a cell $\alpha$ is noncritical iff either
\begin{enumerate}
\item $\exists \ {\rm unique}\ \tau^{(p+1)}\supset\alpha\,\,\,\,\, \ {\rm with}\  f(\tau)\leq f(\alpha),$
or
\item $\exists \ {\rm unique}\  \beta^{(p-1)}\subset\alpha\,\,\,\,\,\ {\rm with}\  f(\beta)\geq f(\alpha).$
\end{enumerate}
It is quite important to understand that these two conditions cannot
be simultaneously fulfilled, as we now explain. Let us assume
on the contrary that both conditions (i) and (ii) hold. We have the following
sequence of cells:
\begin{eqnarray}
\tau^{(p+1)}\supset\alpha^{(p)}\supset\beta^{(p-1)}.
\end{eqnarray}
Since $\alpha^{(p)}$ is regular there is necessarily an $\tilde{\alpha}^{(p)}$
such that $\tau^{(p+1)}\supset\tilde{\alpha}^{(p)}\supset\beta^{(p-1)}$
(see figures \ref{fig2}(a),(b) for an intuitive explanation). Since $f(\tau)\leq f(\alpha)$,
by definition \ref{Morse-fuction} we have
\begin{eqnarray}
f(\tilde{\alpha})<f(\tau).
\end{eqnarray}
We also know that $f(\beta)\geq f(\alpha)$ which, once again by definition
\ref{Morse-fuction}, implies $f(\beta)<f(\tilde{\alpha})$. Summing
up we get
\begin{eqnarray}
f(\alpha)\leq f(\beta)< f(\tilde{\alpha})<f(\tau)\leq f(\alpha),
\end{eqnarray}
which is a contradiction.
\begin{figure}[h]
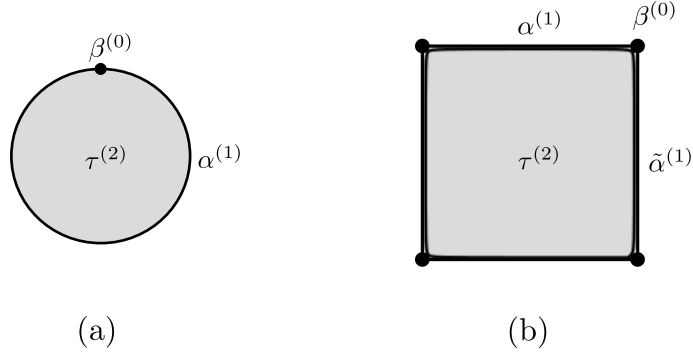

~~~~~~~~~~~~~~~~~\includegraphics[scale=0.5]{figure2a}~~~~~~~~~~~~~~~~~~~~~~\includegraphics[scale=0.5]{figure2b}

\caption{\label{fig2} Examples of (a) an irregular cell complex.  $\alpha^{(1)}$ is an irregular
1 - cell and $\beta^{(0)}$ is an irregular face of $\alpha^{(1)}$. (b) A
regular cell complex with $\tau^{(2)}\supset\alpha^{(1)}\supset\beta^{(0)}.$}
\end{figure}

\noindent Following the path of classical Morse theory we define
next the level sub-complex $K(c)$ by
\begin{eqnarray}
K(c)=\cup_{f(\alpha)\leq c}\cup_{\beta\subseteq\alpha}\beta.\label{eq:levelsubcomplex}
\end{eqnarray}
That is,  $K(c)$ is the sub-complex containing all cells on which $f$
is less or equal to $c$, $\mathbf{{together\,\, with\,\, their\,\, faces}}$%
\footnote{Notice that the value of $f$ on some of these faces might be bigger than
$c$.%
}.
Notice that by definition (\ref{Morse-fuction}) a Morse function does not have to be a bijection.  However, we have the following  \cite{Forman98}:
\begin{lemma}\label{lem: Morse 1-1}
For any Morse function $f_1$, there exist another Morse function $f_2$ which
is 1-1 (injective) and which has the same critical cells as  $f_1$.
\end{lemma}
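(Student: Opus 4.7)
The plan is to define $f_2$ as a carefully chosen small perturbation of $f_1$ that breaks every tie while preserving the Morse data encoded by $f_1$. First I would read off the matching induced by $f_1$: by Definitions~\ref{Morse-fuction} and \ref{criticalcell}, every noncritical $p$-cell $\alpha$ has a unique partner, either a coface $\tau^{(p+1)}\supset\alpha$ with $f_1(\tau)\le f_1(\alpha)$ or a face $\eta^{(p-1)}\subset\alpha$ with $f_1(\eta)\ge f_1(\alpha)$, and these two alternatives cannot occur simultaneously (as the discussion following Definition~\ref{criticalcell} shows). Write $\alpha\nearrow\tau$ for such a coface pair.

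Next I would enumerate the $N$ cells of the complex as $c_1,c_2,\dots,c_N$ so that $f_1(c_i)\le f_1(c_{i+1})$ and, whenever $\alpha\nearrow\tau$ with $f_1(\alpha)=f_1(\tau)$, the coface $\tau$ appears immediately before $\alpha$ in the list. The combinatorial fact that makes this ordering consistent is that inside a single level set of $f_1$ the only face--coface incidences are matched pairs: if $\gamma\subsetneq\gamma'$ with $f_1(\gamma)=f_1(\gamma')$ but $(\gamma,\gamma')$ is not matched, then $\gamma$ would have either two cofaces on which $f_1$ does not exceed $f_1(\gamma)$ or one such coface when it should have none, contradicting the Morse or critical character of $\gamma$ under $f_1$. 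With the order fixed, I would set
\[
f_2(c_i)=f_1(c_i)+i\delta,
\]
where $\delta>0$ is chosen small enough that $N\delta$ is strictly less than every positive gap $|f_1(\alpha)-f_1(\beta)|$ occurring across a face--coface pair in the (finite) complex.

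It then remains to verify injectivity of $f_2$, the Morse property, and agreement of critical cells. Injectivity is immediate. Every strict face--coface inequality under $f_1$ survives under $f_2$ by the choice of $\delta$, and each tied matched pair $\alpha\nearrow\tau$ becomes the strict inequality $f_2(\tau)<f_2(\alpha)$ in the correct direction by construction. Hence a cell critical under $f_1$, for which all face/coface inequalities were already strict, stays critical under $f_2$; and a noncritical cell retains its unique partner, and only that partner, among its faces and cofaces satisfying the Morse inequality. The main obstacle is precisely the consistency of the ordering within level sets, which is resolved by the combinatorial observation highlighted above; beyond that the argument is a routine check.
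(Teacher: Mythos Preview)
Your argument is correct. The paper itself does not prove this lemma; it merely states it with a citation to Forman \cite{Forman98}, so there is no in-paper proof to compare against. Your approach---extract the matching induced by $f_1$, order the cells so that within each level set the higher-dimensional member of every tied matched pair precedes its partner, then perturb by $i\delta$ for sufficiently small $\delta$---is essentially the standard one.

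Two small points are worth making explicit. First, the claim that the only face--coface incidences within a level set are matched pairs should be read as referring to codimension-one incidences, which are the only ones entering Definition~\ref{Morse-fuction}; higher-codimension coincidences of $f_1$-value can occur but are irrelevant to the verification. Second, ``immediately before'' is stronger than needed---``before'' suffices---though since tied matched pairs are pairwise disjoint (a cell cannot be matched both upward and downward, by the exclusivity argument following Definition~\ref{criticalcell}) the stronger ordering is indeed achievable. With these clarifications the verification goes through exactly as you describe.
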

The process of  attaching cells is accompanied by two important lemmas
which describe the change in homotopy type of level sub-complexes
when critical or noncritical cells are attached. Since, from lemma~\ref{lem: Morse 1-1}, we can assume that a given Morse function is 1-1, we can always choose the intervals $[a,b]$
below so that $f^{-1}([a,b])$ contains exactly one cell.
%
%
% Since it is always possible to make Morse function 1-1 we can assume that without loose of generality in the below lemmas intervals $[a,b]$ are such that $f^{-1}([a,b])$ contains exactly one cell.
\begin{lemma}
\label{lem:3} \cite{Forman98} If there are no critical cells $\alpha$
with $f(\alpha)\in[a,b]$, then $K(b)$ is homotopy equivalent to $K(a)$.
\end{lemma}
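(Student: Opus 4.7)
The plan is to follow the standard discrete Morse theory collapse argument of Forman. First, by the preceding Lemma \ref{lem: Morse 1-1} I may replace $f$ by an injective Morse function with the same critical cells; since the level subcomplexes at the relevant values do not change (up to homotopy, via the same argument), this reduction is harmless. Then, since $f$ is now injective and $X$ is finite, I partition $(a,b]$ into finitely many subintervals each containing exactly one $f$-value of some cell. Thus it suffices to establish the statement when $f^{-1}((a,b])$ is a single (noncritical) cell, which I will call $\alpha^{(p)}$.

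Because $\alpha$ is noncritical, exactly one of the two mutually exclusive alternatives following Definition \ref{criticalcell} holds. Suppose first that alternative (i) holds: there is a unique $\tau^{(p+1)}\supset\alpha$ with $f(\tau)<f(\alpha)$. Since $f(\tau)$ cannot lie in $(a,b]$ (only $\alpha$ does), we get $f(\tau)\le a$, so $\tau\in K(a)$; consequently $\alpha$ and all its faces already lie in $K(a)$, forcing $K(b)=K(a)$, and the conclusion is immediate.

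The substantive case is (ii): there is a unique face $\beta^{(p-1)}\subset\alpha$ with $f(\beta)>f(\alpha)$. I claim that $K(b)$ is obtained from $K(a)$ by attaching exactly the pair $\{\alpha,\beta\}$, with $\beta$ a free face of $\alpha$. Every cell in $K(b)\setminus K(a)$ must be a face of some cell whose $f$-value lies in $(a,b]$, and by injectivity this cell is $\alpha$, so $K(b)\setminus K(a)\subseteq\{\alpha\}\cup\{\text{faces of }\alpha\}$. All faces of $\alpha$ other than $\beta$ have $f$-value strictly below $f(\alpha)$, so by the same interval argument they have $f$-value $\le a$ and hence lie in $K(a)$. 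It remains to rule out $\alpha,\beta\in K(a)$; this uses the Morse condition at $\alpha$ (and inductively at intermediate cofacets) to exclude the existence of any $\delta\supsetneq\alpha$ with $f(\delta)\le a$, noting that in case (ii) every cofacet of $\alpha$ has $f$-value strictly greater than $f(\alpha)>a$, and similarly every cofacet of $\beta$ other than $\alpha$ has $f$-value $>f(\beta)>b$.

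Once this structural description is in place, $\beta$ is a free face of $\alpha$ in $K(b)$, so there is an elementary collapse of $K(b)$ onto $K(a)$, giving the required deformation retract and hence homotopy equivalence. The main obstacle is the careful bookkeeping in case (ii) to show that no higher-dimensional cell $\delta\supseteq\alpha$ or $\delta\supseteq\beta$ with $f(\delta)\le a$ can exist; this is where the regularity of the cell complex and the two one-sided Morse conditions from Definition \ref{Morse-fuction} are used in tandem, by the same contradiction argument (displayed between Definition \ref{criticalcell} and Lemma \ref{lem:3}) that showed conditions (i) and (ii) cannot hold simultaneously.
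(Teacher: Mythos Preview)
The paper does not actually prove Lemma~\ref{lem:3}; it is simply quoted from \cite{Forman98} without argument. Your proposal reconstructs Forman's original collapse proof, and the outline is correct: reduce to an injective $f$, pass one noncritical cell at a time, and in the nontrivial case~(ii) exhibit $\beta$ as a free face of $\alpha$ so that $K(b)\searrow K(a)$ by an elementary collapse.

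One point deserves a slightly sharper statement. When you argue that $\alpha\notin K(a)$ (and that $\beta$ is free in $K(b)$), you must exclude not only $(p{+}1)$-dimensional cofacets of $\alpha$ with small $f$-value, but \emph{any} cell $\delta\supset\alpha$ with $f(\delta)\le a$. The clean way to do this is to invoke the standard consequence of the Morse conditions in a regular complex: if $\sigma\subset\overline{\delta}$ with $\dim\delta\ge\dim\sigma+2$, then $f(\delta)>f(\sigma)$ (this follows from the existence of at least two intermediate cells $\sigma\subset\tau_i\subset\delta$, exactly the mechanism behind the contradiction argument you cite). With that lemma in hand, your ``inductively at intermediate cofacets'' becomes a one-line appeal, and the free-face claim for $\beta$ in $K(b)$ follows the same way. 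As written, your gesture toward this is correct in spirit but would benefit from stating this auxiliary fact explicitly rather than folding it into ``careful bookkeeping.''
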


\begin{lemma}
\label{lem:4}\cite{Forman98} If there is a single critical cell $\alpha^{(p)}$
with $f(\alpha)\in[a,b]$, then $K(b)$ is homotopy equivalent to
\begin{eqnarray}
K(b)=K(a)\cup\alpha
\end{eqnarray}
and $\partial\alpha\subset K(a)$.
\end{lemma}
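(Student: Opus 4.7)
The plan is to reduce to the injective case and then directly identify $K(b)$ as a cell attachment. By Lemma~\ref{lem: Morse 1-1}, I may replace $f$ by a 1-1 Morse function with the same critical cells, and after this replacement I may assume, without loss of generality, that $a$ and $b$ are chosen so that $f^{-1}([a,b])$ contains exactly one cell, the critical cell $\alpha^{(p)}$. (Otherwise partition $[a,b]$ into subintervals each containing at most one cell value; intervals with a noncritical cell are handled by Lemma~\ref{lem:3}, and the claim for the full interval will follow by composition.)

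First I would verify the boundary containment $\partial\alpha\subset K(a)$. Since $\alpha$ is critical, Definition~\ref{criticalcell} forces $f(\beta)<f(\alpha)$ for every face $\beta^{(p-1)}\subset\alpha$. Because $f(\alpha)$ is the \emph{only} value of $f$ inside $[a,b]$, each such $\beta$ satisfies $f(\beta)<a$, and hence $\beta\in K(a)$; all lower-dimensional faces of $\alpha$ then lie in $K(a)$ by (\ref{eq:levelsubcomplex}). Thus the attaching sphere of $\alpha$ is already sitting inside $K(a)$.

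Next I would show the set-theoretic identity $K(b)=K(a)\cup\alpha$. By definition any cell $\gamma\in K(b)\setminus K(a)$ is either a cell with $f(\gamma)\in(a,b]$ -- which by our choice of interval forces $\gamma=\alpha$ -- or a face of such a cell. The only candidate in the second case is a face of $\alpha$, and we have just shown every face of $\alpha$ belongs to $K(a)$. Therefore the only cell added in passing from $K(a)$ to $K(b)$ is $\alpha$ itself, glued along $\partial\alpha\subset K(a)$, which is the desired cell attachment. The main technical care is not in any one step but in handling the possibility that several cells share the same $f$-value in the original function; this is precisely why the preliminary reduction to an injective Morse function (Lemma~\ref{lem: Morse 1-1}) is essential -- without it, the intervals of constancy could absorb several critical or noncritical cells simultaneously and the clean identification $K(b)=K(a)\cup\alpha$ would fail.
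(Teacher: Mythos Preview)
The paper does not supply its own proof of this lemma; it is simply quoted from Forman \cite{Forman98} as background, with the surrounding text noting that after Lemma~\ref{lem: Morse 1-1} one may assume $f$ is injective and choose intervals containing a single cell. Your argument is exactly the standard one behind Forman's result: reduce to injective $f$, isolate the interval so that $\alpha$ is the only cell with value in $[a,b]$, use criticality to force every face of $\alpha$ below level $a$, and read off $K(b)=K(a)\cup\alpha$ directly from the definition of the level sub-complex. The handling of the general interval via Lemma~\ref{lem:3} on the noncritical subintervals and composition of homotopy equivalences is also the right move. So your proposal is correct and matches what the paper (implicitly, via its citation) relies on.
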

The above two lemmas lead to the following conclusion:
\begin{theorem}
\label{thm:1}\cite{Forman98} Let $X$ be a cell complex and $f\,:\, X\rightarrow\mathbb{R}$
be a Morse function. Then $X$ is homotopy equivalent to a cell complex
with exactly one cell of dimension $p$ for each critical cell $\alpha^{(p)}$

\begin{figure}[h]\label{figure3}
~~~~~~~~~~~~~~~~~\includegraphics[scale=0.5]{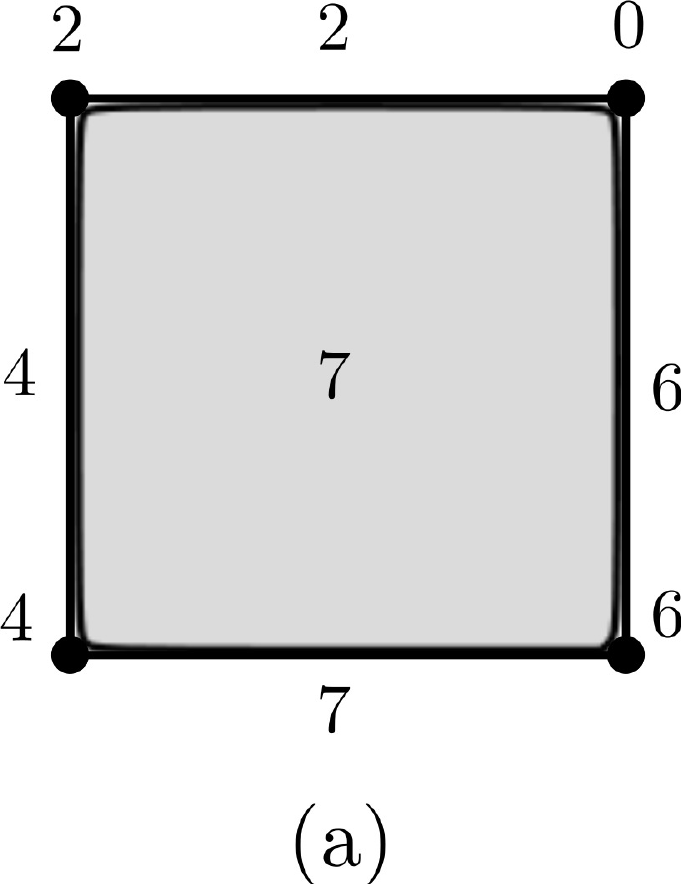}~~~~~~~~~~~~~~~~~~~~\includegraphics[scale=0.5]{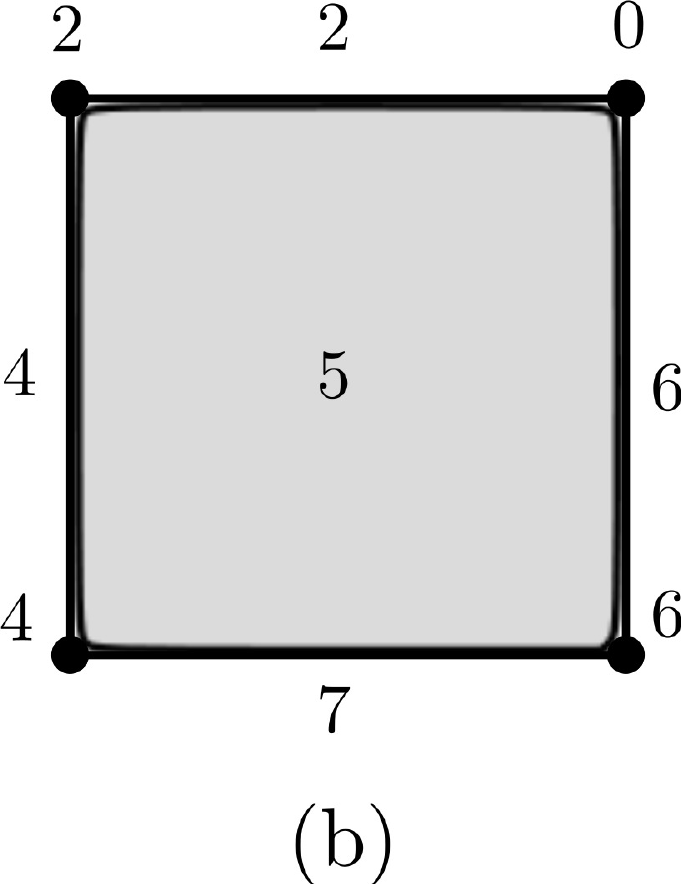}

\caption{\label{fig3} Examples of (a) a Morse function, and (b) a non-Morse function, since the 2-cell has value $5$ and there are two $1$-cells in its boundary
with higher values assigned ($6$, $7$).}
\end{figure}

\end{theorem}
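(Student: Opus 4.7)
The plan is to assemble the theorem as a direct consequence of the two preparatory lemmas (Lemmas \ref{lem:3} and \ref{lem:4}) by filtering the complex $X$ according to the values of $f$. By Lemma \ref{lem: Morse 1-1}, I may assume without loss of generality that $f$ is injective and that its values on the cells of $X$ are $c_1 < c_2 < \cdots < c_N$, where $N$ is the total number of cells. I then consider the nested sequence of level sub-complexes
\begin{equation*}
\emptyset = K(c_0) \subseteq K(c_1) \subseteq K(c_2) \subseteq \cdots \subseteq K(c_N) = X,
\end{equation*}
where $c_0$ is any value less than $c_1$. By injectivity, the difference $f^{-1}([c_{i-1},c_i])$ consists of exactly one cell, namely the cell on which $f$ takes the value $c_i$.

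The first main step is to analyze each transition $K(c_{i-1}) \to K(c_i)$ by dichotomy. If the unique cell in $f^{-1}([c_{i-1},c_i])$ is non-critical, Lemma \ref{lem:3} gives $K(c_i) \simeq K(c_{i-1})$, so the homotopy type is unchanged. If instead it is a critical cell $\alpha^{(p)}$, Lemma \ref{lem:4} guarantees that $K(c_i)$ is homotopy equivalent to $K(c_{i-1})$ with a single $p$-cell attached along $\partial \alpha \subset K(c_{i-1})$. Thus each passage through a critical value is precisely the operation of attaching one cell of the corresponding dimension, and each passage through a non-critical value is a homotopy equivalence.

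The second step is to assemble these transitions inductively. Starting from $K(c_0) = \emptyset$ and iterating along the filtration, I obtain at each stage a cell complex (up to homotopy) whose cells are in bijection with the critical cells of $f$ encountered so far, with dimensions preserved. Setting $c = c_N$ yields $K(c_N) = X$, and so $X$ is homotopy equivalent to a cell complex possessing exactly one $p$-cell for each critical cell $\alpha^{(p)}$ of $f$, which is the desired conclusion.

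The only subtle point — which I would flag but not belabour, since it is already built into Lemma \ref{lem:4} — is that when one attaches a critical $p$-cell, the attaching map $\partial\alpha \to K(c_{i-1})$ must be checked to land in the lower-dimensional skeleton of the homotopy model, not just in $K(c_{i-1})$ itself. This is handled by composing the attaching map with the homotopy equivalences from Lemma \ref{lem:3} accumulated along the non-critical transitions; formally, one can argue by induction using the homotopy extension property to transport attaching maps along homotopy equivalences. Once this bookkeeping is observed, the theorem follows.
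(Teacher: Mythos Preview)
Your proposal is correct and follows precisely the approach the paper indicates: the paper does not write out a proof of this theorem but simply states that ``the above two lemmas lead to the following conclusion'' and cites \cite{Forman98}. You have spelled out exactly that standard argument --- reduce to an injective Morse function via Lemma~\ref{lem: Morse 1-1}, filter by level sub-complexes, and apply Lemmas~\ref{lem:3} and~\ref{lem:4} at each step --- including the bookkeeping about transporting attaching maps along homotopy equivalences, which the paper omits entirely.
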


\subsection{Discrete Morse vector field}
From theorem \ref{thm:1} it follows that a given cell complex is homotopy equivalent to a  cell complex containing only its critical cells, the so-called Morse complex. The construction of the Morse complex, in particular its boundary map (as well as the proof of theorem~\ref{thm:1}), depends crucially on the concept of a discrete vector field, which  we define next.
%In this section we discuss the concept of a discrete  vector field. The
%understanding of this idea is crucial for the definition of a Morse complex
%(especially its boundary map).
We know from definition \ref{Morse-fuction} that the noncritical
cells can be paired. If a $p$-cell is noncritical, then it is paired
with either the unique noncritical $(p+1)$-cell on which $f$ takes
an equal or smaller value, or the unique noncritical $(p-1)$-cell on
which $f$ takes an equal or larger value. In order to indicate this
pairing we draw an arrow from the $(p-1)$-cell to the $p$-cell in the first
case or from the $p$-cell to the $(p+1)$-cell in the second case (see figure
\ref{fig4}). Repeating this for all cells we get the so-called discrete gradient
vector field of the Morse function. It also follows from section \ref{sub:Discrete-Morse-function}
that for every cell $\alpha$ exactly one of the following is true:
\begin{enumerate}
\item $\alpha$ is the tail of one arrow,
\item $\alpha$ is the head of one arrow,
\item $\alpha$ is neither the tail nor the head of an arrow.
\end{enumerate}
Of course $\alpha$ is critical iff it is neither the tail nor the
head of an arrow. Assume now that we are given a collection of arrows
on some cell complex satisfying the above three conditions. The question
we would like to address is whether it is a gradient vector field
of some Morse function. In order to answer this question we need to
be more precise. We define
\begin{definition}
A discrete vector field $V$ on a cell complex $X$ is a collection
of pairs $\{\alpha^{(p)}\subset\beta^{(p+1)}\}$ of cells such that
each cell is in at most one pair of $V$.
\end{definition}
Having a vector field it is natural to consider its `integral lines'.
We define the $V$ - path as a sequence of cells
\begin{eqnarray}
\alpha_{0}^{(p)},\,\beta_{0}^{(p+1)},\,\alpha_{1}^{(p)},\,\beta_{1}^{(p+1)},\,\ldots,\alpha_{k}^{(p)},\,\beta_{k}^{(p+1)}\label{eq:Vpath}
\end{eqnarray}
such that$\{\alpha_{i}^{(p)}\subset\beta_{i}^{(p+1)}\}\in V$ and
$\beta_{i}^{(p+1)}\supset\alpha_{i+1}^{(p)}$. Assume now that $V$
is a gradient vector field of a discrete Morse function $f$ and consider
a $V$ - path (\ref{eq:Vpath}). Then of course we have
\begin{eqnarray}
f(\alpha_{0}^{(p)})\geq f(\beta_{0}^{(p+1)})>f(\alpha_{1}^{(p)})\geq f(\beta_{1}^{(p+1)})>\ldots>f(\alpha_{1}^{(p)})\geq f(\beta_{k}^{(p+1)}).
\end{eqnarray}
This implies that if $V$ is a gradient vector field of
the Morse function then $f$ decreases along any $V$-path which in
particular means that there are no closed $V$-paths. It happens that
the converse is also true, namely a discrete vector field $V$ is a gradient
vector field of some Morse function iff there are no closed $V$ -
paths \cite{Forman98}.

\begin{figure}[h]
~~~~~~~~~~~~~~~~~\includegraphics[scale=0.5]{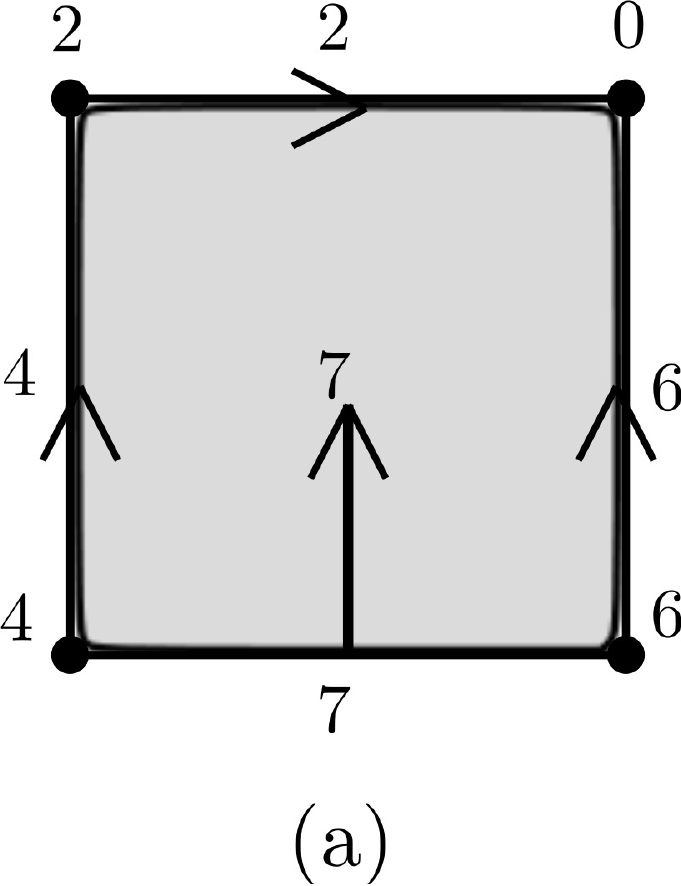}~~~~~~~~~~~~~~~~~~~~~~~~\includegraphics[scale=0.5]{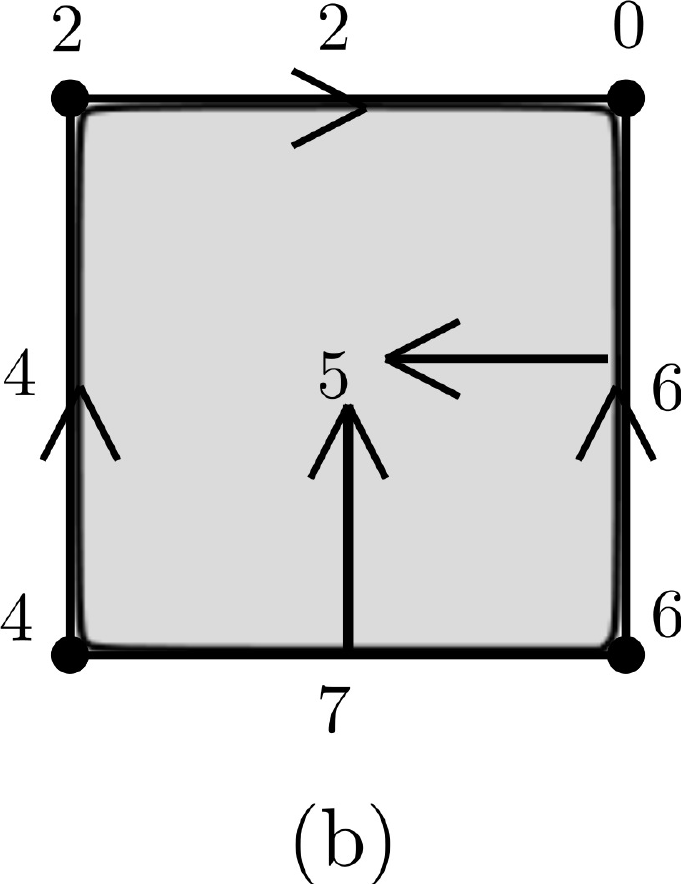}

\caption{\label{fig4} Examples of (a) a correct and (b) an incorrect discrete gradient vector fields; the 2-cell is the head of two arrows and the 1-cell is the head and tail of one arrow.}
\end{figure}

\subsection{The Morse complex}

Up to now we have learned how to reduce the number of cells of the original
cell complex to the critical ones. However, it is still not clear
how these cells are `glued' together, i.e. what is the boundary map
between the critical cells? The following result relates the concept of critical cells with discrete gradient vector fields \cite{Forman98}.
\begin{theorem}
Assume that orientation has been chosen for each cell in the cell complex
$X$. Then for any critical $(p+1)$-cell $\beta$ we have
\begin{eqnarray}
\tilde{\partial}\beta=\sum_{critical\,\alpha^{(p)}}c_{\beta,\alpha}\alpha,\label{eq:boundary}
\end{eqnarray}
where $\tilde{\ensuremath{\partial}}$ is the boundary map
in the cell complex consisting of the critical cells, whose existence
is guaranteed by theorem \ref{thm:1}, and
\begin{eqnarray}
c_{\beta,\alpha}=\sum_{\gamma\in P(\beta,\alpha)}m(\gamma),
\end{eqnarray}
where $P(\beta,\alpha)$ is the set of all $V$ - paths from
the boundary of $\beta$ to cells whose boundary contains $\alpha$ and
$m(\gamma)=\pm1$, depending on whether the orientation induced from
$\beta$ to $\alpha$ through $\gamma$ agrees with the one chosen for
$\alpha$.
\end{theorem}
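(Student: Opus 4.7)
The plan is to build up the Morse complex cell by cell using the sublevel-set filtration and to track how the boundary of a newly attached critical $(p+1)$-cell $\beta$ in the original complex $X$ is transported, via the discrete gradient flow $V$, to a combination of critical $p$-cells. By Lemma \ref{lem: Morse 1-1} I may assume $f$ is injective, so that cells enter $K(c)$ one at a time. Lemma \ref{lem:3} says noncritical cells do not change the homotopy type of the sublevel complex, while Lemma \ref{lem:4} says each critical $(p+1)$-cell $\beta$ is attached along its boundary, which already sits in the previous sublevel complex. The task is therefore to express the attaching map of $\beta$ at the level of cellular homology, i.e.\ to compute $\tilde\partial\beta$ as an integer combination of the critical $p$-cells that have already been produced.

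First I would set up the relative cellular chain complex $C_\bullet(K(b),K(a))$ where $[a,b]$ is a small interval containing only $f(\beta)$. By Lemma \ref{lem:3} the homotopy type of $K(a)$ agrees with the subcomplex built so far out of critical cells, and Lemma \ref{lem:4} identifies the attachment. Next I would start from $\partial\beta = \sum_{\alpha'^{(p)}\subset\beta}[\beta:\alpha']\,\alpha'$ in $X$, where $[\beta:\alpha']=\pm1$ is the incidence number fixed by the chosen orientations. Each $\alpha'$ is either critical, in which case it contributes directly to $\tilde\partial\beta$, or noncritical. A noncritical $\alpha'$ is the tail of an arrow $\{\alpha'\subset\tilde\beta\}\in V$; one then considers the faces of $\tilde\beta$ other than $\alpha'$, which yield the next $p$-cells $\alpha''$ in a $V$-path $\alpha'\to\tilde\beta\to\alpha''\to\cdots$. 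Iterating this flow replacement corresponds, in the chain complex, to performing elementary column operations that do not change the homology class of $\partial\beta$ modulo boundaries of $(p+1)$-chains supported on noncritical pairs of $V$; since there are no closed $V$-paths, the iteration terminates at critical $p$-cells.

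Now the main step: I would show that for each $V$-path $\gamma:\alpha_0^{(p)},\beta_0^{(p+1)},\alpha_1^{(p)},\ldots,\alpha_k^{(p)}$ with $\alpha_0\subset\partial\beta$ and $\alpha_k$ critical (or more generally terminating on a cell containing $\alpha$ in its boundary), the algebraic contribution to $\tilde\partial\beta$ picks up a factor
\[
m(\gamma) = [\beta:\alpha_0]\,\prod_{i=0}^{k-1}\bigl(-[\beta_i:\alpha_{i+1}][\beta_i:\alpha_i]\bigr),
\]
which by regularity of the cell complex is $\pm1$, and which equals $+1$ exactly when the orientation transported from $\beta$ to $\alpha$ along $\gamma$ matches the chosen orientation on $\alpha$. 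Summing over all $V$-paths in $P(\beta,\alpha)$ gives $c_{\beta,\alpha}$, and summing over critical $\alpha^{(p)}$ gives the formula \eqref{eq:boundary}. Verifying that the sign contribution from each elementary step of the flow has the stated form is a local computation on the regular square $\tilde\beta\supset\alpha_i,\alpha_{i+1}\supset\tau$ for a common codimension-one face $\tau$, where the identity $[\beta_i:\alpha_i][\alpha_i:\tau]+[\beta_i:\alpha_{i+1}][\alpha_{i+1}:\tau]=0$ (a consequence of $\partial^2=0$ on the boundary sphere of $\beta_i$) fixes the sign.

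The hard part will be organising the bookkeeping of signs consistently with the chosen orientations, and checking that the telescoping along arbitrarily long $V$-paths is well defined: finiteness follows from the absence of closed $V$-paths, and well-definedness (that one indeed lands in the subcomplex of critical $p$-cells rather than picking up spurious contributions from noncritical cells) uses that every noncritical cell appears in exactly one pair of $V$, so the elementary chain-level substitutions $\alpha'\mapsto -\sum_{\alpha''\neq\alpha'}\tfrac{[\tilde\beta:\alpha'']}{[\tilde\beta:\alpha']}\alpha''$ can be iterated without ambiguity. Everything else — the attachment statement and the identification of $\tilde\partial$ with the induced boundary in $H_{p+1}(K(b),K(a))$ — is a direct application of Lemmas \ref{lem:3} and \ref{lem:4} together with Theorem \ref{thm:1}.
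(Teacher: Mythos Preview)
The paper does not prove this theorem at all: it is stated as a result of Forman, with a citation to \cite{Forman98}, and no argument is given. So there is nothing in the paper to compare your proposal against.

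That said, your sketch is essentially the standard argument one finds in Forman's original paper. The idea of starting from $\partial\beta$ in $X$, iteratively replacing each noncritical face $\alpha'$ by the other faces of the unique $(p+1)$-cell $\tilde\beta$ with $\{\alpha'\subset\tilde\beta\}\in V$, and tracking the sign via products of incidence numbers, is exactly how Forman computes the Morse boundary. Your sign formula $m(\gamma)=[\beta:\alpha_0]\prod_i(-[\beta_i:\alpha_{i+1}][\beta_i:\alpha_i])$ is correct for a regular complex (each incidence number is $\pm1$), and the termination argument via acyclicity of $V$ is the right one. One small wording issue: the theorem says $P(\beta,\alpha)$ consists of $V$-paths from faces of $\beta$ to cells whose boundary contains $\alpha$, so the path ends at a $p$-cell equal to $\alpha$ itself (a critical $p$-cell is its own ``cell whose boundary contains $\alpha$'' only trivially); you should simply say the path terminates at the critical cell $\alpha$, and the final incidence $[\beta_{k}:\alpha]$ is absorbed into $m(\gamma)$. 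Otherwise your outline is sound and matches the literature proof that the thesis is quoting.
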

The collection of critical cells together with the boundary map $\tilde{\partial}$ is called the Morse complex of the function $f$ and we will denote it by $M(f)$. Examples of the computation of boundary maps for Morse complexes will be given in section \ref{sec:Main-example}.

\section{A perfect Morse function on $\Gamma$ and its discrete vector field.\label{sec:One-particle-graph}}

In this section we present a construction of a perfect discrete Morse function
on a $1$ - particle graph. It is defined analogously as in the classical case, i.e. the number of critical cells in each dimension is equal to the corresponding dimension of the homology group. The existence of such a function will
be used in section \ref{sec:Main-example} to construct a `good' but not necessarily perfect
Morse function on a $2$-particle graph.

Let $\Gamma=(V\,,\, E)$ be a graph with $v=|V|$ vertices and $e=|E|$
edges. In the following we assume that $\Gamma$ is connected and
simple. Let $T$ be a spanning tree of $\Gamma$, i.e. $T$ is a connected
spanning subgraph of $\Gamma$ such that $V(T)=V(\Gamma)$ and for
any pair of vertices $v_{i}\neq v_{j}$ there is exactly one path
in $T$ joining $v_{i}$ with $v_{j}$. We naturally have $|E(\Gamma)|-|E(T)|\geq0$.
The Euler characteristic of $\Gamma$ treated as a cell
complex is given by
\begin{eqnarray}
\chi(\Gamma)=v-e=\mathrm{dim}H_{0}(\Gamma)-\mathrm{dim}H_{1}(\Gamma)=b_{0}-b_{1}.
\end{eqnarray}
Since $\Gamma$ is connected, $H_{0}(\Gamma)=\mathbb{Z}$. Hence we get
\begin{eqnarray}
b_{0}=1,\\
b_{1}=e-v+1.
\end{eqnarray}
On the other hand it is well known that $b_{1}=|E(\Gamma)|-|E(T)|$. Summing up from
the topological point of view $\Gamma$ is homotopy equivalent to
a wedge sum of $b_{1}$ circles. Our goal is to construct a perfect
Morse function $f_{1}$ on $\Gamma$, i.e. the one with exactly $b_{1}$
critical $1$ - cells and one critical $0$ - cell. To this end we choose a vertex $v_{1}$ of valency one in
$T$ (it always exists) and travel through the tree anticlockwise
from it labeling vertices by $v_{k}$. The value of $f$ on the vertex
$v_{k}$ is $f_{1}(k)=2k-2$ and the value of $f_{1}$ on the edge
$(i,j)\in T$ is $f_{1}((i,j))=\mathrm{max}\left(f_{1}(i),\, f_{1}(j)\right)$.
The last step is to define $f_{1}$ on the deleted edges $(i,j)\in E(\Gamma)\setminus E(T)$.
We choose $f_{1}((i,j))=\mathrm{max}(f_{1}(i),\, f_{1}(j))+2$, where
$v_{i},v_{j}$ are the boundary vertices of $(i,j)$. This way we obtain
that all vertices besides $v_{1}$ and all edges of $T$ are not critical
cells of $f_{1}$. The critical $1$ - cells are exactly the deleted edges.
The following example clarifies this idea (see figure \ref{fig5}).
\begin{example}
Consider the graph $\Gamma$ shown in figure \ref{fig5}(a). Its spanning tree
is denoted by solid lines and the deleted edges by dashed lines. For each vertex and
edge the corresponding value of a perfect discrete Morse function
$f_1$ is explicitly written. Notice that according to definition
\ref{criticalcell} we have exactly one critical $0$ - cell (denoted
by a square) and four critical $1$ - cells which are deleted
edges. The discrete vector field for $f_1$ is represented by arrows.
The contraction of $\Gamma$ along this field yields the contraction of $T$ to a single
point and hence the Morse complex $M(f_1)$ is the wedge sum of four circles (see figure \ref{fig5}(b))
\end{example}
\begin{figure}[H]
~~~~~~~~~~~\includegraphics[scale=0.43]{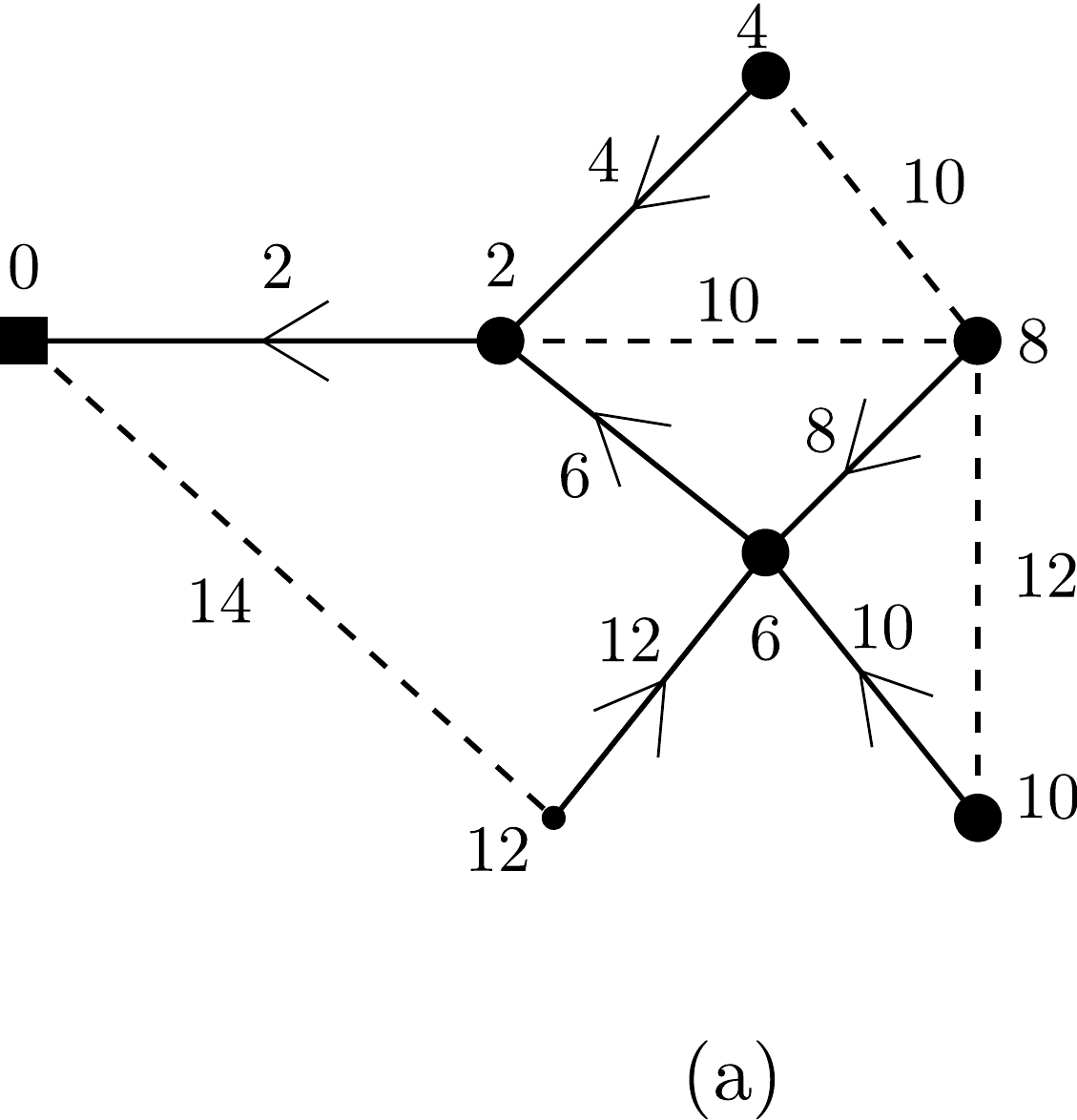}~~~~~~~~~~~~~~~~~\includegraphics[scale=0.5]{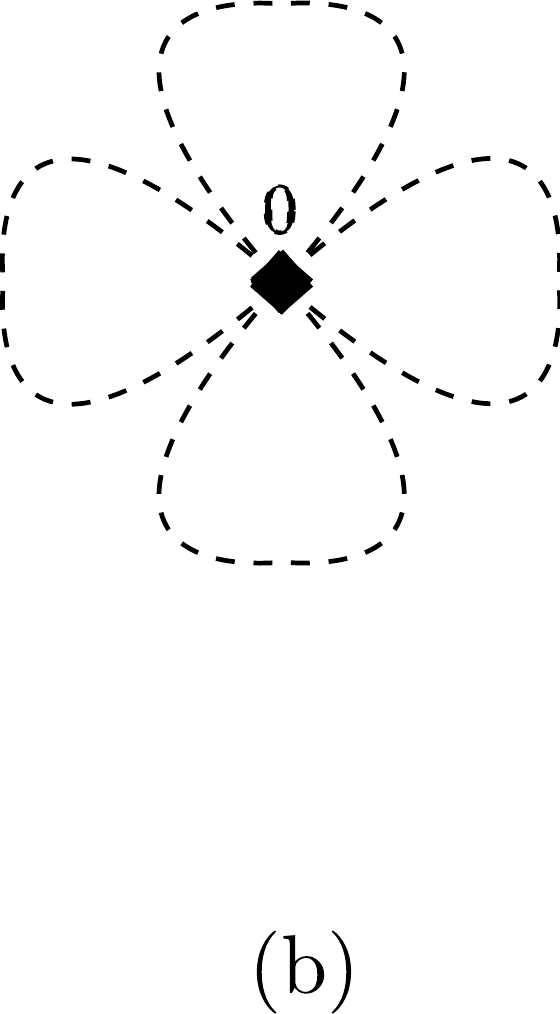}

\caption{\label{fig5} (a) The perfect discrete Morse function $f_{1}$ on the graph $\Gamma$
and its discrete gradient vector field. (b) The Morse complex $M(f_{1})$.}
\end{figure}

\section{The main examples \label{sec:Main-example}}

In this section we present a method of construction of a `good' Morse
function on the two particle configuration space $\mathcal{D}^2(\Gamma_{i})$
for two different graphs $\Gamma_{i}$ shown in figures \ref{fig6}(a) and \ref{fig8}(a).
We also demonstrate how to use the tools described in section \ref{sec:Morse-theory-in}
in order to derive a Morse complex and compute the first homology group.
We begin with a graph $\Gamma_{1}$ which we will refer to as lasso
(see figure \ref{fig6}(a)). The spanning tree of $\Gamma_{1}$ is denoted in
black in figure \ref{fig6}(a). In figure \ref{fig6}(b) we see an example of the perfect
Morse function $f_{1}$ on $\Gamma_{1}$ together with its gradient
vector field. They were constructed according to the procedure explained
in section \ref{sec:One-particle-graph}. The Morse complex of $\Gamma_{1}$
consists of one $0$-cell (the vertex $1$) and one $1$-cell (the
edge $(3,4)$).

\begin{figure}[h]
~~~~~~~~~~~~~\includegraphics[scale=0.4]{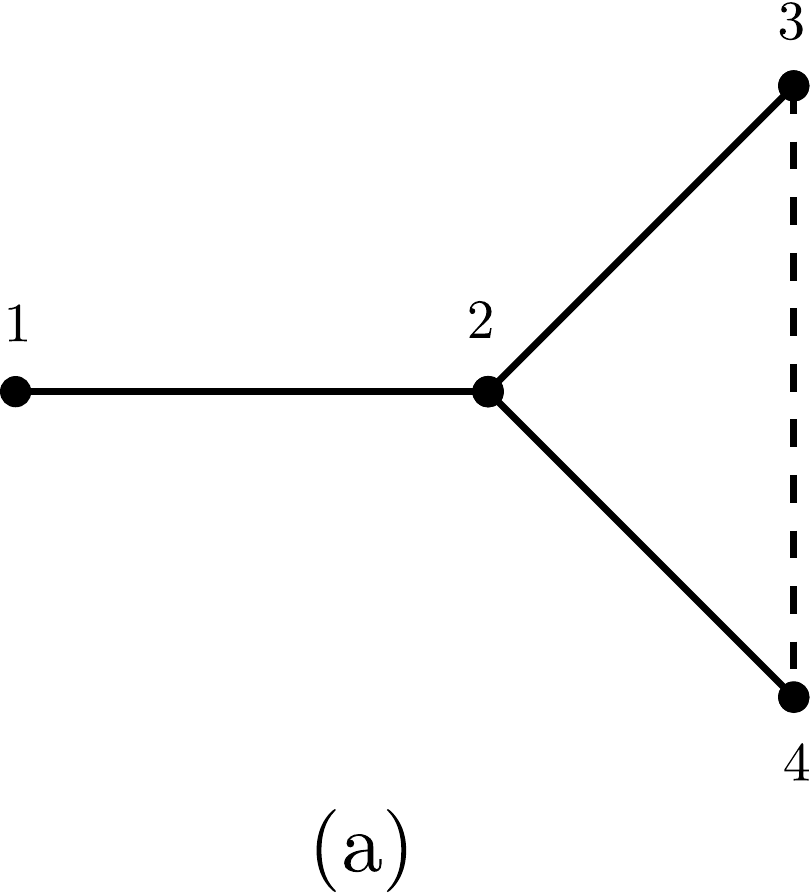}~~~~~~~~~~~~~~~~~~~~~~~~~~~~~~~~~~~~~\includegraphics[scale=0.4]{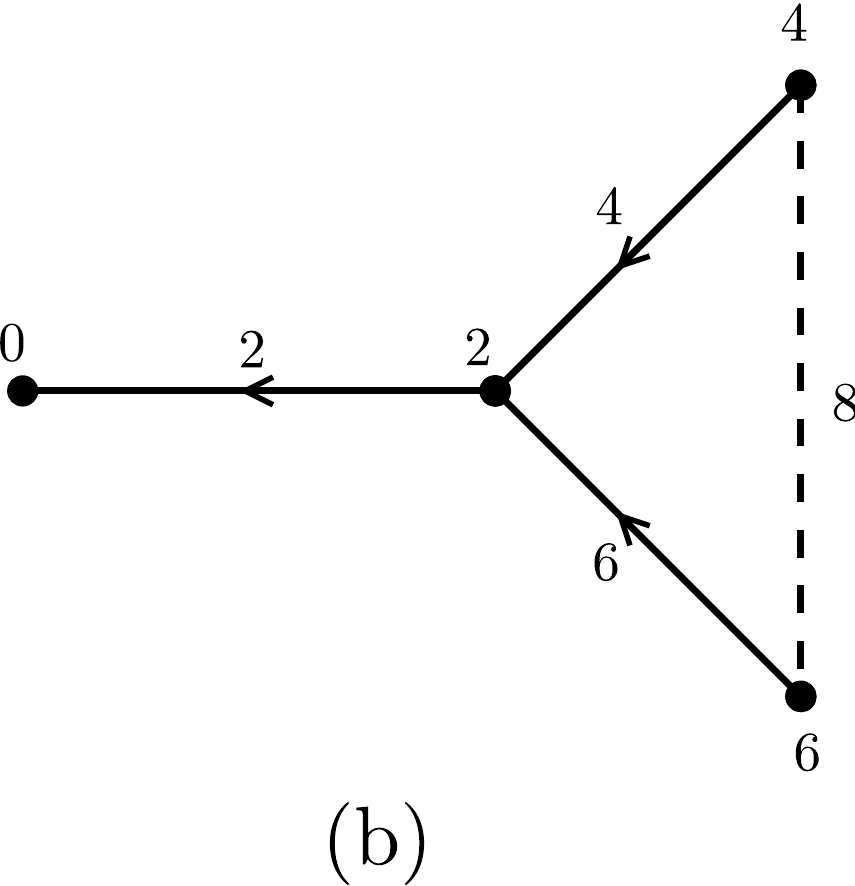}

\caption{\label{fig6} (a) One particle on lasso, (b) The perfect discrete Morse function
$f_{1}$}
\end{figure}

\noindent The two particle configuration space $\mathcal{D}^2(\Gamma_{1})$
is shown in figure \ref{fig7}(a). Notice that $\mathcal{D}^2(\Gamma_{1})$ consists
of one $2$ - cell $(3,4)\times(1,2)$\footnote{This notation should be understood as the Cartesian product of edges $(3,4)$ and $(1,2)$, hence a square.}, six $0$ - cells and eight
$1$ - cells. In order to define the Morse function $f_{2}$ on $\mathcal{D}^2(\Gamma_{1})$
we need to specify its value for each of these cells. We begin with
a trial function $\tilde{f}_{2}$ which is completely determined once
we know the perfect Morse function on $\Gamma_{1}$. To this end
we treat $f_{1}$ as a kind of `potential energy' of one particle.
The function $\tilde{f}_{2}$ is simply the sum of the energies of both
particles, i.e. the value of $\tilde{f}_{2}$ on a cell corresponding
to a particular position of two particles on $\Gamma_{1}$ is the
sum of the values of $f_{1}$ corresponding to this position. To
be more precise we have for
\begin{eqnarray}
\mathrm{0-cells:\,\,\,\,\,\,\,\,\,\,\,\,\,}\,\,\,\,\,\,\,\,\,\,\,\,\tilde{f}_{2}(i\times j) & = & f_{1}(i)+f_{1}(j),\nonumber \\
\mathrm{1-cells:}\,\,\,\,\,\,\,\,\,\,\,\,\tilde{f}_{2}\left(i\times(j,k)\right) & = & f_{1}(i)+f_{1}\left((j,k)\right),\nonumber \\
\mathrm{2-cells:}\,\,\,\tilde{f}_{2}\left((i,j)\times(k,l)\right) & = & f_{1}\left((i,j)\right)+f_{1}\left((k,l)\right).\label{eq:rules}
\end{eqnarray}
In figure \ref{fig7}(b) we can see $\mathcal{D}^2(\Gamma_{1})$ together with $\tilde{f}_{2}$.
Observe that $\tilde{f}_{2}$ is not a Morse function since the value
of $\tilde{f}_{2}\left((3,4)\right)$ is the same as the value of
$\tilde{f}_{2}$ on edges $4\times(2,3)$ and $3\times(2,4)$ which
are adjacent to the vertex $(3,4)$. The rule that $0$ - cell can
be the face of at most one $1$ - cell with smaller or equal value of $\tilde{f}_{2}$
is violated. In order to have Morse function $f_{2}$ on $\mathcal{D}^2(\Gamma_{1})$
we introduce one modification, namely
\begin{eqnarray}
f_{2}\left(3\times(2,4)\right)=\tilde{f}_{2}\left(3\times(2,4)\right)+1,\label{eq:mod2}
\end{eqnarray}
and $f_{2}$ is $\tilde{f_{2}}$ on the other cells.
\begin{figure}[h]
~~~~\includegraphics[scale=0.4]{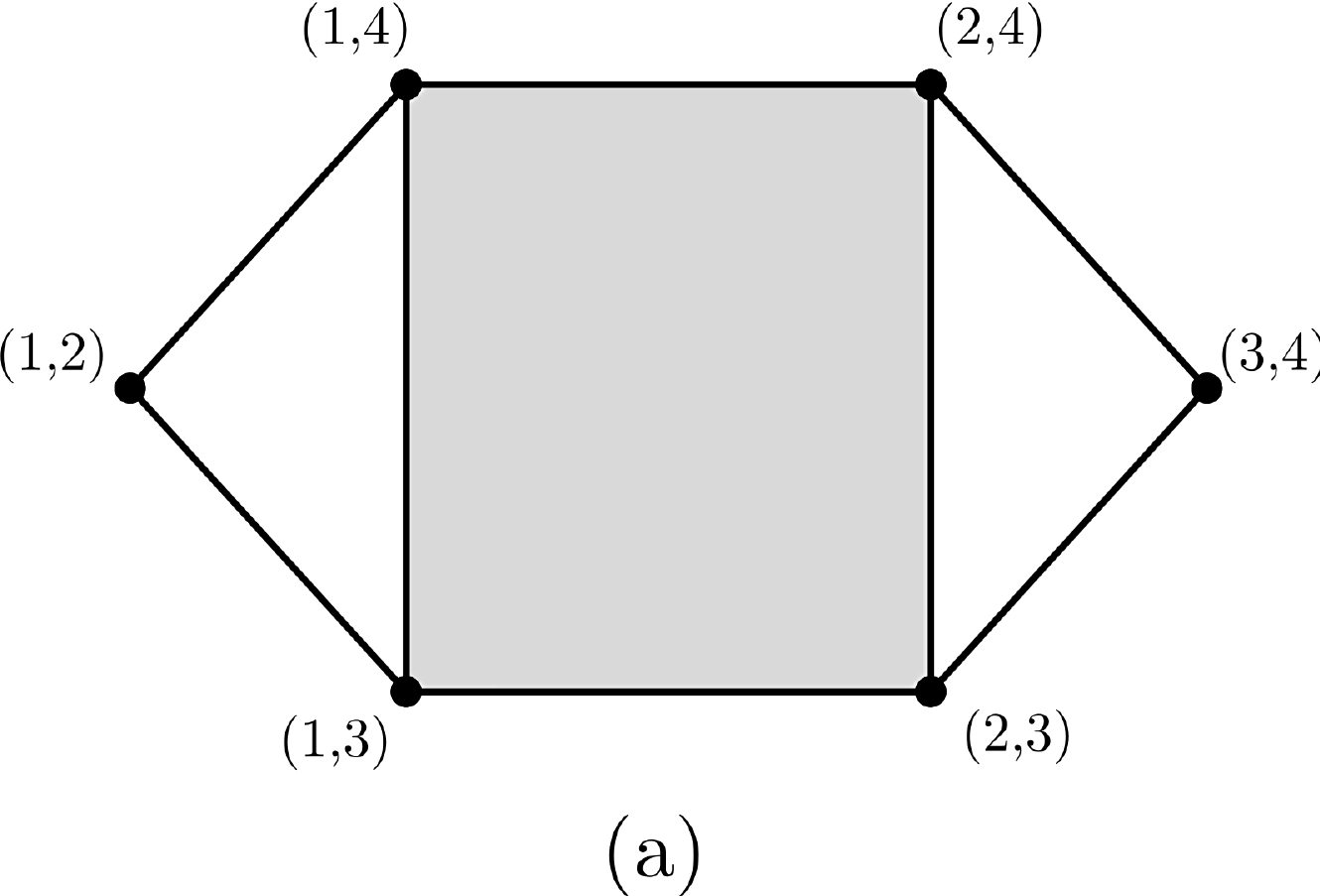}~~~~~~~~~\includegraphics[scale=0.4]{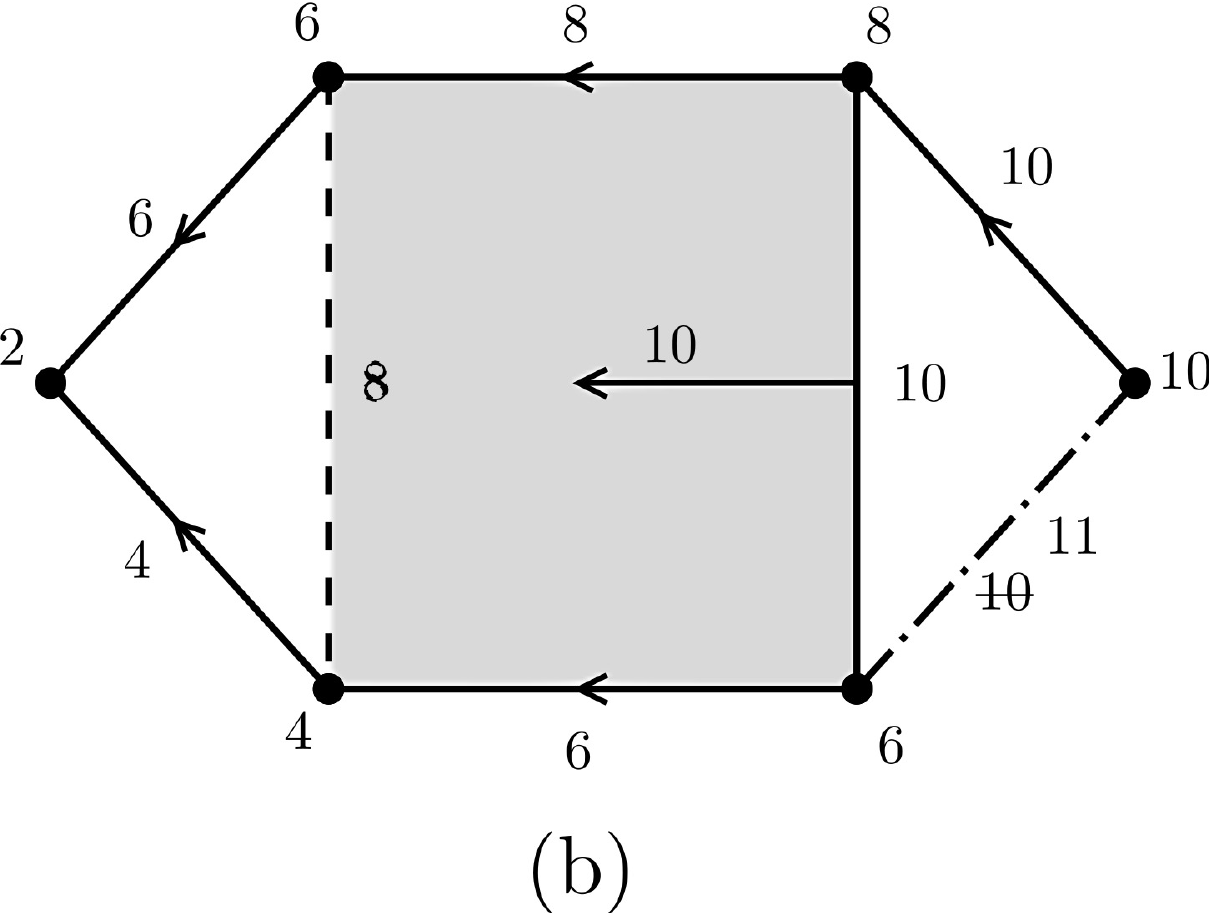}~~~~~~~~

\medskip{}

~~~~~~~~\includegraphics[scale=0.35]{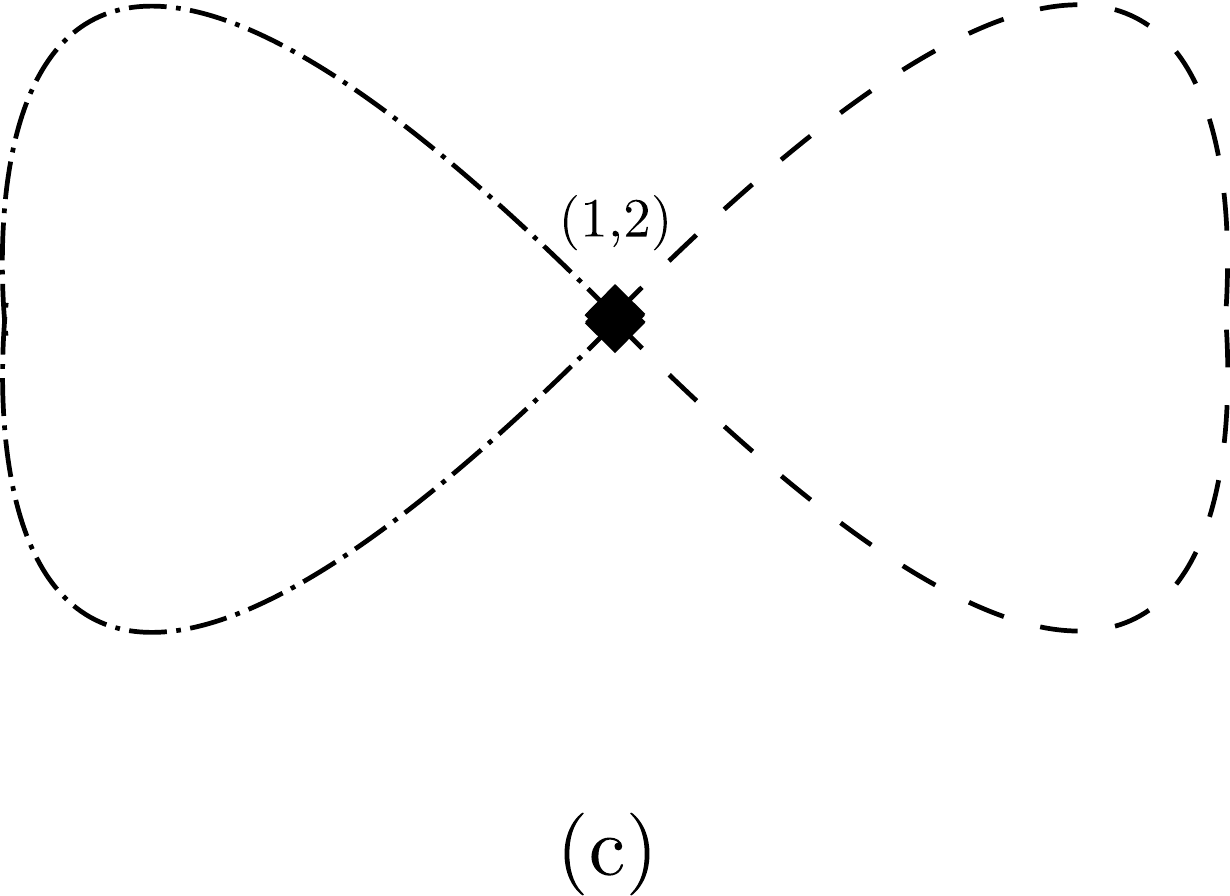}~~~~~~~~~~~\includegraphics[scale=0.4]{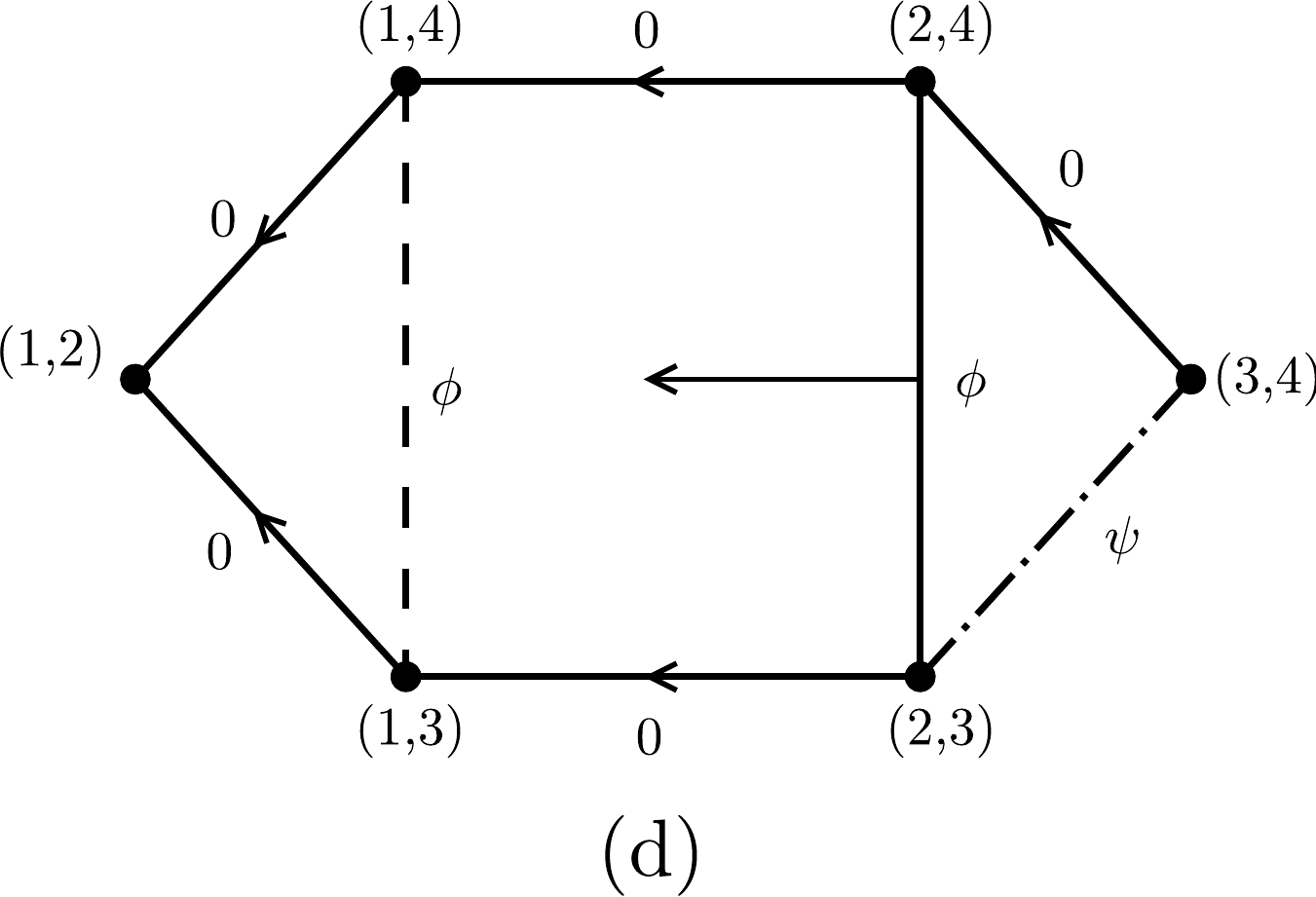}

\caption{\label{fig7}(a) The two particles on lasso, $\mathcal{D}^2(\Gamma_{1})$, (b) the discrete Morse
function and its gradient vector field (c) the Morse complex (d) the topological gauge potential $\Omega$}
\end{figure}

\noindent Notice that the choice we made is not unique. We could have changed $\tilde{f}_{2}\left(4\times(2,3)\right)$
in a similar way and leave $\tilde{f}_{2}\left(3\times(2,4)\right)$
untouched. After the modification (\ref{eq:mod2}) we construct the corresponding
discrete vector field for $f_{2}$. The Morse complex of $f_{2}$
consists of one critical $0$-cell (vertex $(1,2)$) and two critical
$1$ - cells (edges $3\times(2,4)$ and $1\times(3,4)$). Observe
that there are two different mechanisms responsible for criticality
of these $1$ - cells. The cell $1\times(3,4)$ is critical due to
the definition of trial Morse function $\tilde{f}_{2}$ and $3\times(2,4)$ has been chosen to be critical in order to make
$\tilde{f}_{2}$ the well defined Morse function $f_{2}$. We will
see later that these are in fact the only two ways giving rise to the critical
cells. Notice finally that function $f_{2}$ is in fact a perfect
Morse function and the Morse inequalities for it are equalities.

\noindent We will now consider a more difficult example. The one particle
configuration space, i.e. graph $\Gamma_{2}$ together with the perfect
Morse function and its gradient vector field are shown in figure \ref{fig8}(a)
and \ref{fig8}(b).

\begin{figure}[H]
~~~~~~~~~~~~~~~~~\includegraphics[scale=0.5]{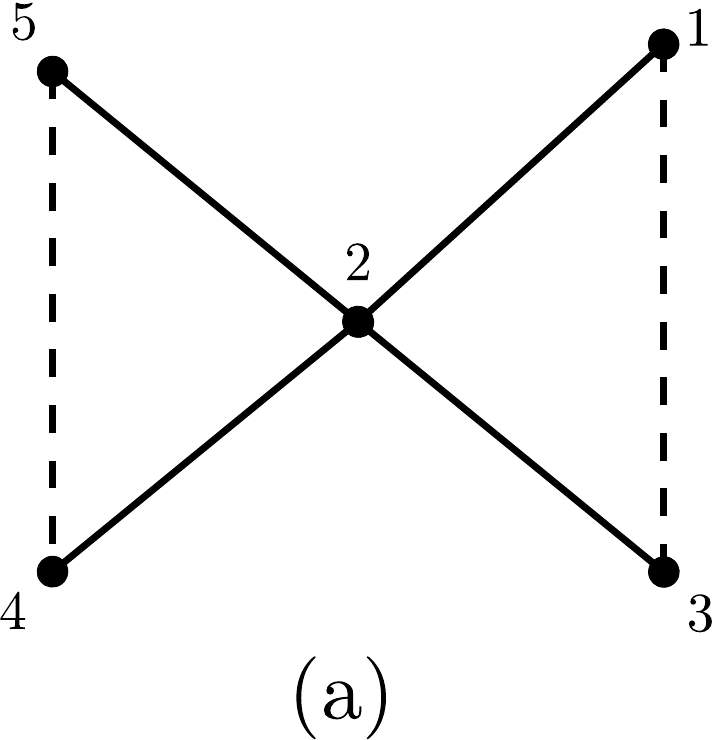}
~~~~~~~~~~~~~~~~~\includegraphics[scale=0.5]{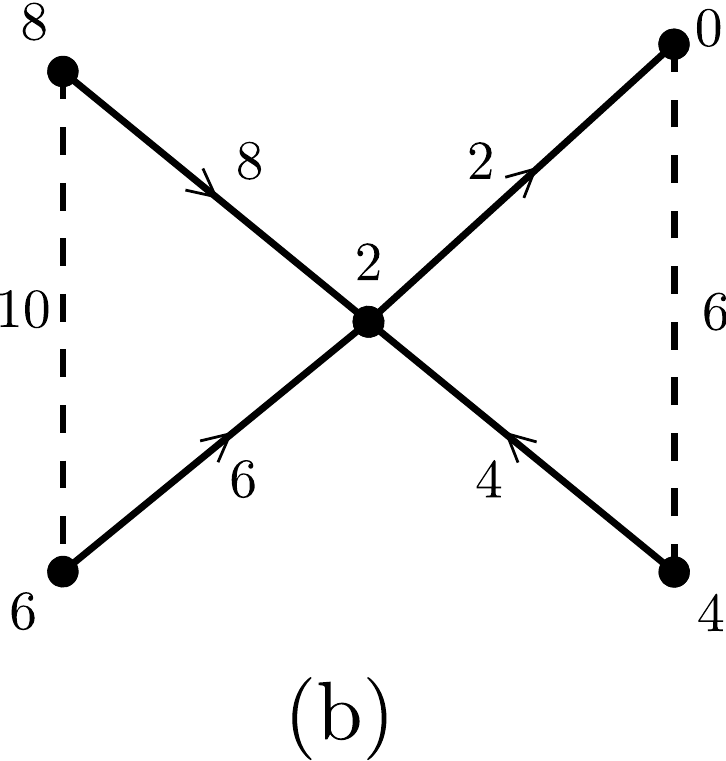}~~~~~~~~

\caption{\label{fig8}(a) One particle on bow-tie (b) Perfect discrete Morse function}
\end{figure}

\noindent The construction of two particle configuration space is
a bit more elaborate than in the lasso case and the result is shown
in figure \ref{fig9}(a). Using rules given in (\ref{eq:rules}) we obtain
the trial Morse function $\tilde{f}_{2}$ which is shown in figure
\ref{fig9}(b). The critical cells of $\tilde{f}_{2}$ and the cells causing
$\tilde{f}_{2}$ to not be a Morse function are given in table 3.1.
\begin{table}
\caption{\label{table1}The critical cells of $\tilde{f}_{2}$ and the vertices and edges
causing $\tilde{f}_{2}$ to not be a Morse function.}
%\begin{indented}
%\item[]
\begin{tabular}{|c|c|}
\hline
\multicolumn{2}{|c|}{Critical cells of the trial Morse fuction $\tilde{f}_{2}$}\tabularnewline
\hline
0 - cells & $1\times2$\tabularnewline
\hline
1 - cells & $1\times(4,5)$, $2\times(1,3)$ \tabularnewline
\hline
2 - cells  & $(1,3)\times(4,5)$\tabularnewline
\hline
\end{tabular}
\begin{tabular}{|c|c|c|}
\hline
\multicolumn{3}{|c|}{$\tilde{f}_{2}$ is not Morse function because}\tabularnewline
\hline
vertex  & edges & value\tabularnewline
\hline
$(3,4)$  & $3\times(2,4)$, $4\times(2,3)$ & $\tilde{f}_{2}=10$\tabularnewline
\hline
$(3,5)$ & $5\times(2,3)$, $3\times(2,5)$ & $\tilde{f}_{2}=12$\tabularnewline
\hline
$(4,5)$ & $5\times(2,4)$, $4\times(2,5)$ & $\tilde{f}_{2}=14$\tabularnewline
\hline
\end{tabular}
%\end{indented}
\end{table}

In figure \ref{fig9}(b) we have chosen $1$ - cells: $3\times(2,4)$,
$3\times(2,5)$ and $4\times(2,5)$ to be critical, although we should
emphasize that it is one choice out of eight possible ones. We will
now determine the first homology group of the Morse complex $M(f_{2})$
and hence $H_{1}(\mathcal{D}^2(\Gamma_2))$. The Morse complex $M(f_{2})$ is
the sum of $M_{0}(f_{2})$ consisting of one $0$-cell (vertex $1\times2$),
$M_{1}(f_{2})$ which consists of five critical $1$-cells and $M_{2}(f_{2})$
which is one critical $2$-cell $c_{2}=(1,3)\times(4,5)$.
\[
\xymatrix{M_{2}(f_{2})\ar[r]^{\tilde{\partial}_{2}} & M_{1}(f_{2})\ar[r]^{\tilde{\partial}_{1}} & M_{0}(f_{2}).}
\]
The first homology is given by
\begin{eqnarray}
H_{1}(M(f_{2}))=H_{1}(\mathcal{D}^2(\Gamma_2))=\frac{\mbox{Ker}\tilde{\partial}_{1}}{\mbox{Im}\tilde{\partial}_{2}}.
\end{eqnarray}
It is easy to see that $\tilde{\partial}_{1}c_{1}=0$ for any $c_{1}\in M_{1}(f_{2})$
and hence $\mbox{Ker}\tilde{\partial}_{1}=\mathbb{Z}^{5}$. What is
left is to find $\tilde{\partial}c_{2}$ which is a linear combination
of critical $1$-cells from $M_{1}(f_{2})$. According to formula
(\ref{eq:boundary}) we take the boundary of $c_{2}$ in $C_{2}(\Gamma_{2})$
and consider all paths starting from it and ending at the $2$-cells
containing critical $1$-cells (see table 3.2).
\begin{table}
\caption{\label{table2}The boundary of $c_{2}$.}
%\begin{indented}
%\item[]
\begin{tabular}{|c|c|c|c|}
\hline
boundary of $c_{2}$  &  path & critical $1$ - cells & orientation\tabularnewline
\hline
$1\times(4,5)$ & $\emptyset$ & $1\times(4,5)$ & +\tabularnewline
\hline
$5\times(1,3)$ & %
\begin{tabular}{c}
$5\times(1,3)$, $(2,5)\times(1,3)$, $2\times(1,3)$.\tabularnewline
$5\times(1,3)$, $(2,5)\times(1,3)$, $3\times(2,5)$.\tabularnewline
\end{tabular} & %
\begin{tabular}{c}
$2\times(1,3)$\tabularnewline
$3\times(2,5)$\tabularnewline
\end{tabular} &
\begin{tabular}{c}
-\tabularnewline
-\tabularnewline
\end{tabular}\tabularnewline
\hline
$3\times(4,5)$ & %
\begin{tabular}{c}
$3\times(4,5)$, $(4,5)\times(2,3)$, $2\times(4,5)$, \tabularnewline
$(1,2)\times(4,5)$, $1\times(4,5)$.\tabularnewline
\end{tabular} & $1\times(4,5)$ & -\tabularnewline
\hline
$4\times(1,3)$ & %
\begin{tabular}{c}
$4\times(1,3)$, $(1,3)\times(2,4)$, $2\times(1,3)$.\tabularnewline
$4\times(1,3)$, $(1,3)\times(2,4)$, $3\times(2,4)$.\tabularnewline
\end{tabular} & %
\begin{tabular}{c}
$2\times(1,3)$\tabularnewline
$3\times(2,4)$\tabularnewline
\end{tabular} &
\begin{tabular}{c}
+\tabularnewline
+\tabularnewline
\end{tabular}\tabularnewline
\hline
\end{tabular}
%\end{indented}

\end{table}
\noindent Eventually taking into account orientation we get
\begin{eqnarray}
\tilde{\partial}_{2}(c_{2})=1\times(4,5)-3\times(2,5)-2\times(1,3)-1\times(4,5)+\\+3\times(2,4)+2\times(1,3)=-3\times(2,5)+3\times(2,4).
\end{eqnarray}
Hence,
\begin{eqnarray}
H_{1}(\mathcal{D}^2(\Gamma_2))=\frac{\mbox{Ker}\tilde{\partial}_{1}}{\mbox{Im}\tilde{\partial}_{2}}=\mathbb{Z}^{4}.
\end{eqnarray}
The Morse complex $M(f_{2})$ is shown explicitly in figure \ref{fig9}(c).
It is worth mentioning that in this example $f_{2}$ is not a perfect
Morse function.
\begin{figure}[H]
\includegraphics[scale=0.37]{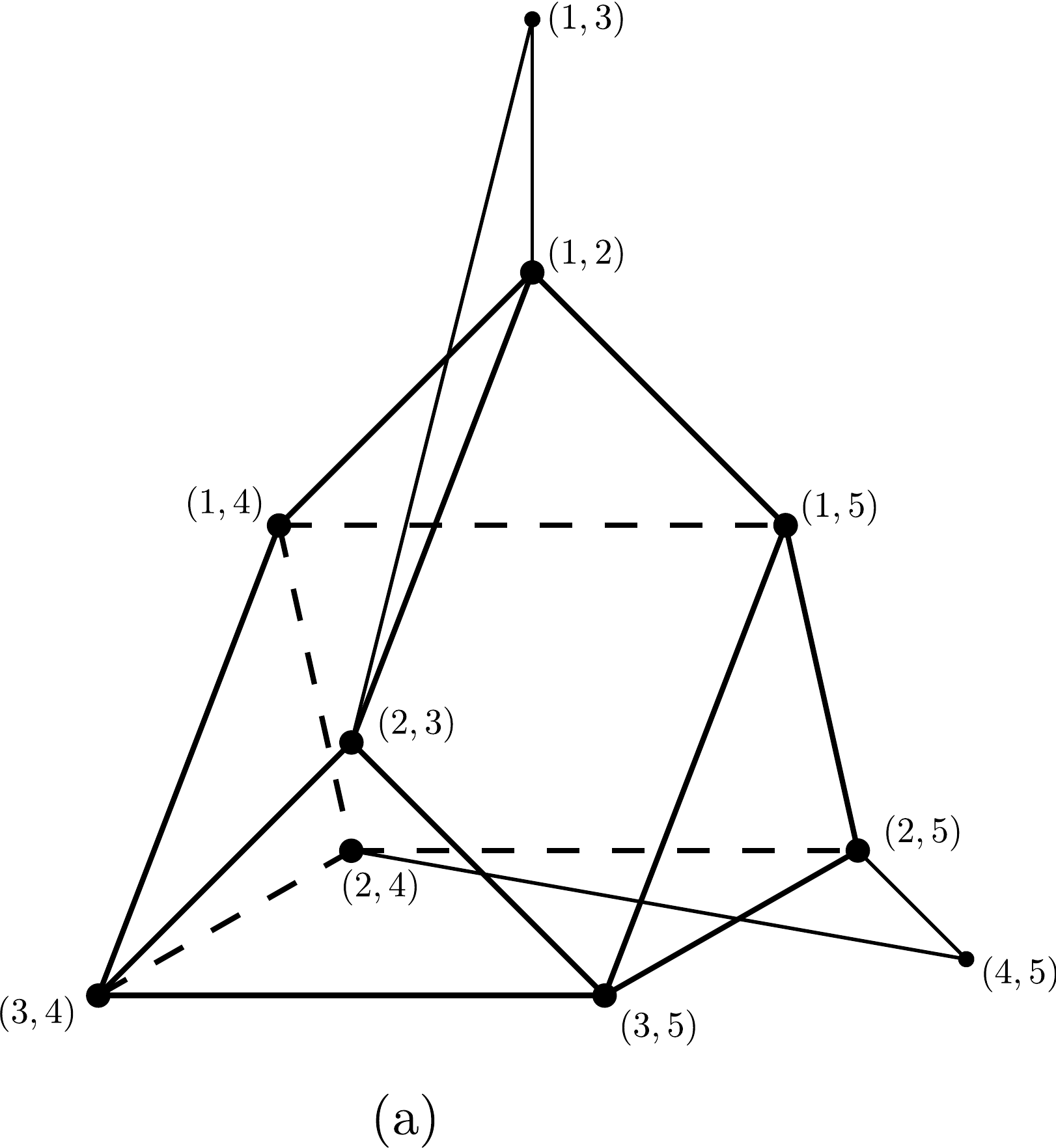}~~~\includegraphics[scale=0.4]{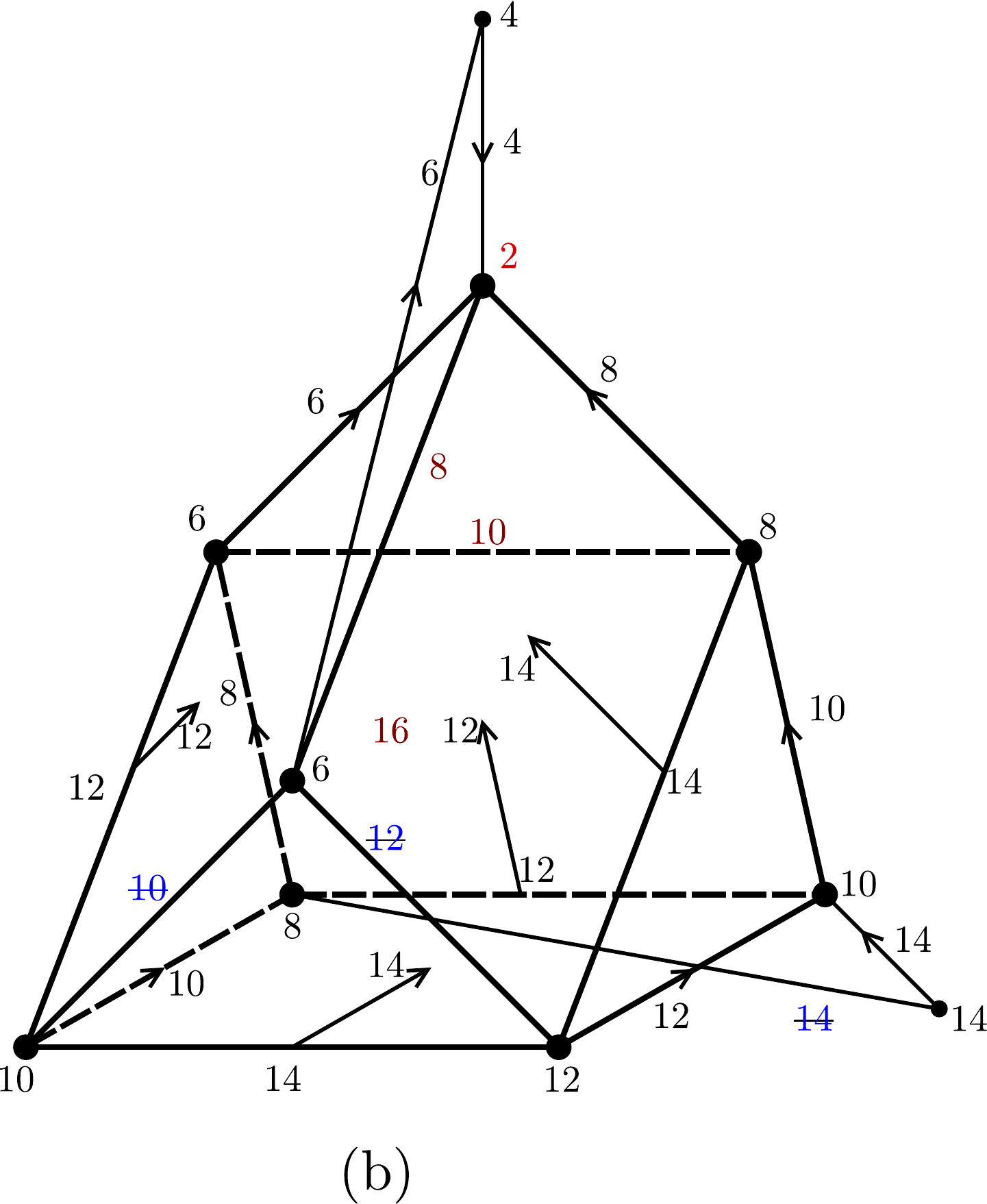}

\medskip{}

~~~~~~~~~~~~~~~~~~~~~~~~~\includegraphics[scale=0.4]{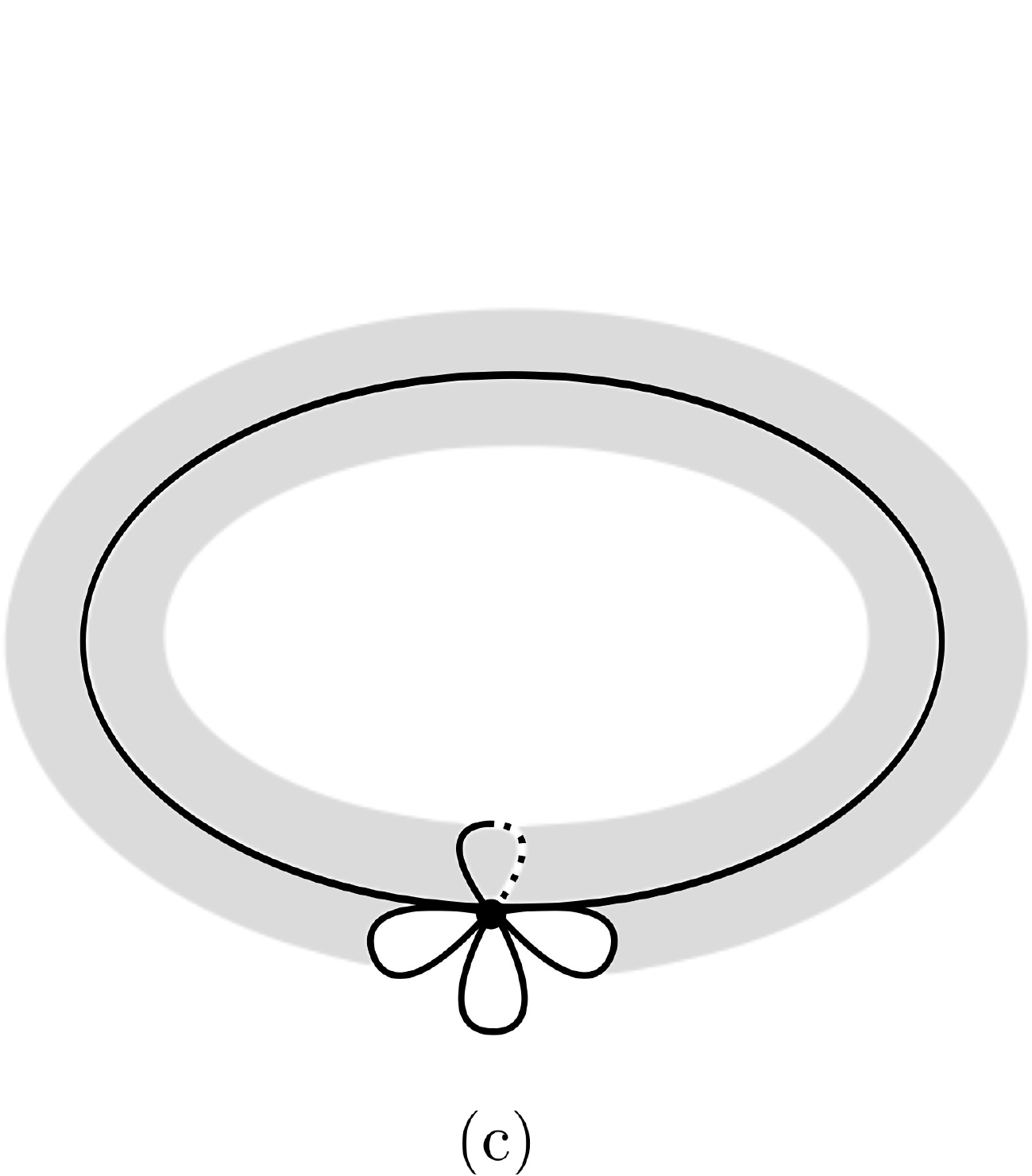}

\caption{\label{fig9}(a) Two particles on bow-tie (b) the discrete Morse function and its gradient vector
field, (c) the Morse complex $M(f_{2})$. }
\end{figure}

\section{Discrete Morse theory and topological gauge potentials}\label{sec:topological gauge potentials}

In this section we describe more specifically how the techniques of discrete Morse theory apply to the problem of quantum statistics on graphs. A more general discussion of the model can be found in \cite{JHJKJR}. Here we describe a particular representative example, highlighting the usefulness of discrete Morse theory.

Let $\Gamma$ be a graph shown in figure \ref{fig6}(a). The Hilbert space associated to $\Gamma$ is $\mathcal{H}=\mathbb{C}^4$ and is spanned by the vertices of $\Gamma$. The dynamics is given by Schr\"{o}dinger equation where the Hamiltonian $H$ is a hermitian matrix, such that $H_{jk}=0$ if $j$ is not adjacent to $k$ in $\Gamma$. As discussed in \cite{JHJKJR} this corresponds to the so-called tight binding model of one-particle dynamics on $\Gamma$. One can add to the model an additional ingredient, namely whenever the particle hops between adjacent vertices of $\Gamma$ the wavefunction gains an additional phase factor. This can be incorporated to the Hamiltonian by introducing a gauge potential. It is an antisymmetric real matrix $\Omega$ such that each $\Omega_{jk}\in [0,\,2\pi[$ and $\Omega_{jk}=0$ if $j$ is not adjacent to $k$ in $\Gamma$. The modified Hamiltonian is then $H_{jk}^\Omega=H_{jk}e^{i\Omega_{jk}}$. The flux of $\Omega$ through any cycle of $\Gamma$ is the sum of values of $\Omega$ on the directed edges of the cycle. It can be given a physical interpretation in terms of the Aharonov-Bohm phase.

In order to describe in a similar manner the dynamics of two indistinguishable particles on $\Gamma$ we follow the procedure given in \cite{JHJKJR}. The structure of the Hilbert space and the corresponding tight binding Hamiltonian are encoded in $\mathcal{D}^2(\Gamma)$. Namely, we have $\mathcal{H}_2=\mathbb{C}^6$ and is spanned by the vertices of $\mathcal{D}^2(\Gamma)$. The Hamiltonian is given by a hermitian matrix, such that $H_{j,k\rightarrow l}=0$ if $k$ is not adjacent to $l$ in $\Gamma$. The notation $j,k\rightarrow l$ describes two vertices $(j,k)$ and $(j,l)$ connected by an edge in $\mathcal{D}^2(\Gamma)$. The additional assumption which we add in this case stems from the topological structure of $\mathcal{D}^2(\Gamma)$ and is reflected in the condition on the gauge potential. Namely, since the 2-cell $c_2=(1,2)\times(3,4)$ is contractible we require that the flux through its boundary vanishes, i.e.
\begin{equation}\label{Omega}
\Omega(\partial c_2)=\Omega_{1,3\rightarrow 4}+\Omega_{4,1\rightarrow 2}+\Omega_{2,4\rightarrow 3}+\Omega_{3,2\rightarrow 1}=0\, \mathrm{mod}\,2\pi.
\end{equation}
Our goal is to find the parametrization of all gauge potentials satisfying (\ref{Omega}), up to a so-called trivial gauge, i.e. up to addition of $\Omega^\prime$ such that $\Omega^\prime(c)=0\,\mathrm{mod}\,2\pi$, for any cycle $c$. To this end we use discrete Morse theory. We first notice that the edges of $\mathcal{D}^2(\Gamma)$ which are heads of an arrow of the discrete Morse vector field form a tree. Without loss of generality we can put $\Omega_{j,k\rightarrow l}=0$ whenever $j\times(k,l)$ is a head of an arrow. Next, on the edges corresponding to the critical $1$-cells we put arbitrary phases $\Omega_{1,3\rightarrow 4}=\phi$ and $\Omega_{3,2\rightarrow 4}=\psi$. Notice that since $f_2$ is a perfect Morse function these phases are independent. The only remaining edge is $2\times(3,4)$ which is a tail of an arrow. In order to decide what phase should be put on it we follow the gradient path of the discrete Morse vector field which leads to edge $1\times(3,4)$. Hence $\Omega_{2,3\rightarrow 4}=\phi$. The effect of our construction is the topological gauge potential $\Omega$ which is given by two independent parameters (see figure \ref{fig7}(d)) and satisfies (\ref{Omega}). The described reasoning can be \emph{mutatis mutandis} applied to any graph $\Gamma$, albeit the phases on edges corresponding to the critical cells are not independent if $f_2$ is not a perfect Morse function. Finally notice, that in the considered example, the phase  $\phi$ can be interpreted as an Aharonov-Bohm phase and $\psi$ as the exchange phase. The latter gives rise to anyon statistics.

\section{General consideration for two particles\label{sec:General-consideration-for} }
In this section we investigate the first Homology group $H_{1}(C_{2}(\Gamma))$
by means of discrete Morse theory. In section \ref{sec:Main-example} the idea of a trial Morse function
was introduced. Let us recall here that the trial Morse function is
defined in two steps. The first one is to define a perfect Morse function
on $\Gamma$. To this end one chooses a spanning tree $T$ in $\Gamma$.
The vertices of $\Gamma$ are labeled by $1,\,2,\ldots,|V|$ according
to the procedure described in section \ref{sec:One-particle-graph}. The perfect Morse function
$f_{1}$ on $\Gamma$ is then given by its value on the vertices and
edges of $\Gamma$, i.e.
\begin{eqnarray}\label{1-particle-perfect}
f_{1}(i)=2i-2,\\
f_{1}((j,k))=\mathrm{max}(f_{1}(j),\, f_{1}(k)),\,\,(j,k)\in T,\\
f_{1}((j,k))=\mathrm{max}(f_{1}(j),\, f_{1}(k))+2,\,\,(j,k)\in\Gamma\setminus T\label{eq:f1-1}
\end{eqnarray}
When $f_{1}$ is specified the trial Morse function on $\mathcal{D}^2(\Gamma)$
is given by the formula
\begin{eqnarray}
\mathrm{0-cells:\,\,\,\,\,\,\,\,\,\,\,\,\,}\,\,\,\,\,\,\,\,\,\,\,\,\tilde{f}_{2}(i\times j) & = & f_{1}(i)+f_{1}(j),\nonumber \\
\mathrm{1-cells:}\,\,\,\,\,\,\,\,\,\,\,\,\tilde{f}_{2}\left(i\times(j,k)\right) & = & f_{1}(i)+f_{1}\left((j,k)\right),\nonumber \\
\mathrm{2-cells:}\,\,\,\tilde{f}_{2}\left((i,j)\times(k,l)\right) & = & f_{1}\left((i,j)\right)+f_{1}\left((k,l)\right).\label{eq:rules-1}
\end{eqnarray}
Let us emphasize that the trial Morse function is typically not a Morse function, i.e., the conditions of definition \ref{Morse-fuction} might not be satisfied. Nevertheless, we will show that it is always possible to modify the function $\tilde{f}_{2}$ and obtain a Morse function $f_2$ out of it. In fact the function $\tilde{f_2}$ is not 'far' from being a Morse function and, as we will see, the number of cells at which it needs fixing is relatively small. In the next paragraphs we localize the obstructions causing $\tilde{f}_2$ to not be a Morse function and explain how to overcome them.

\noindent The cell complex $\mathcal{D}^2(\Gamma)$ consists of $2$, $1$, and $0$-cells which we will denote by $\alpha$, $\beta$ and $\kappa$ respectively. For all these cells we have to verify the conditions of definition \ref{Morse-fuction}. Notice that checking these conditions for any cell involves looking at its higher and lower dimensional neighbours. In case of $2$-cell $\alpha$ we have only the former ones, i.e., the $1$-cells $\beta$ in the boundary of $\alpha$. For the $1$-cell $\beta$ both $2$-cells $\alpha$ and $0$-cells $\kappa$ are present. Finally for the $0$-cell $\kappa$ we have only $1$-cells $\beta$.

Our strategy is the following. We begin with the trial Morse function $\tilde{f}_2$ and go over all $2$-cells checking the conditions of definition \ref{Morse-fuction}. The outcome of this step is a new trial Morse function $\bar{f}_2$ which has no defects on $2$-cells. Next we consider all $1$-cells and verify the conditions of definition \ref{Morse-fuction} for $\bar{f}_2$. It happens that they are satisfied. Finally we go over all $0$-cells. The result of this three-steps procedure is a well defined Morse function $f_2$. Below we present more detailed discussion. The proofs of all statements are in section \ref{proofs}.
\begin{enumerate}
    \item \textbf{Step 1} We start with a trial Morse function $\tilde{f}_2$. We notice first that for any edge $e\in T$ there is a unique vertex $v$ in its boundary such that $f_1(e)=f_1(v)$. In other words every vertex $v$, different from $v=1$, specifies exactly one edge $e\in T$ which we will denote by $e(v)$. Next we divide the set of $2$-cells into three disjoint classes. The first one contains $2$-cells $\alpha=e_i\times e_j$, where both $e_i,e_j\notin T$. The second one contains  $2$-cells $\alpha=e_i\times e(v)$, where $e(v)\in T$ and $e_i\notin T$, and the last one contains $2$-cells $\alpha=e(u)\times e(v)$, where both $e(u),e(v) \in T$. Now, since there are no $3$-cells, we have only to check that for each $2$-cell $\alpha$
\begin{eqnarray}\label{alphacond}
    \#\{\beta\subset\alpha\,:\, \tilde{f}_2(\beta)\geq \tilde{f}_2(\alpha)\}\leq1
\end{eqnarray}
The following results are proved in section \ref{proofs}
\begin{enumerate}
    \item For the $2$-cells $\alpha=e_i\times e_j$ where both $e_i,e_j\notin T$ the condition (\ref{alphacond}) is satisfied (see fact \ref{fact1}).
    % and $\alpha$ is a critical cell (see fact \ref{fact1}).
    \item For the $2$-cells $\alpha=e_i\times e(v)$ where $e_i\notin T$ and $e(v)\in T$ the condition (\ref{alphacond}) is satisfied (see fact \ref{fact2}).
    % and the cell $\alpha$ is noncritical. In fact  $\{v\times e_i,\alpha\}$ is the pair of noncritical cells (see fact \ref{fact2}).

\item  For the $2$-cells $\alpha=e(u)\times e(v)$ where both $e(u),e(v)\in T$ the condition (\ref{alphacond}) is not satisfied. There are exactly two $1$-cells $\beta_1,\beta_2\subset\alpha$ such that $\tilde{f}_2(\beta_1)=\tilde{f}_2(\alpha)=\tilde{f}_2(\beta_2)$. They are of the form $\beta_1=u\times e(v)$ and $\beta_2=v\times e(u)$. The function $\tilde{f}_2$ can be fixed in two ways (see fact \ref{fact3}). We put $\bar{f}_2(\alpha)=\tilde{f}_2(\alpha)+1$ and either $\bar{f}_2(\beta_1):=\tilde{f}_2(\beta_1)+1$ or $\bar{f}_2(\beta_2):=\tilde{f}_2(\beta_2)+1$.
In both cases $\{\beta_i,\,\alpha\}$ is the pair of noncritical cells.
\end{enumerate}
The result of this step is a new trial Morse function $\bar{f}_2$, which satisfies (\ref{alphacond}).
\item \textbf{Step 2} We divide the set of $1$-cells into two disjoint classes. The first one contains $1$-cells $\beta=v\times e$, where $e\notin T$ and the second one contains $\beta=v\times e(u)$, where $e(u)\in T$. For the $1$-cells within each of this classes we introduce additional division with respect to condition $e(v)\cap e=\emptyset$ (or $e(v)\cap e(u)= \emptyset$). Notice that all $1$-cells $\beta$ which were modified in \textbf{Step 1} belong to the second class and satisfy $e(v)\cap e(u)=\emptyset$. Next we take a trial Morse function $\bar{f}_2$ and go over all $1$-cells $\beta$ checking for each of them if
\begin{eqnarray}\label{betacondition1}
\#\{\alpha\supset\beta\,:\, \bar{f}_2(\alpha)\leq \bar{f}_2(\beta)\}\leq1,\\
\#\{\kappa\subset\beta\,:\, \bar{f}_2\geq \bar{f}_2(\beta)\}\leq1.\label{betacondition2}
\end{eqnarray}
What we find out is
\begin{enumerate}
        \item For the $1$-cells $\beta=v\times e(u)$, where $e(u)\in T$ and $e(v)\cap e(u)\neq\emptyset$ the conditions (\ref{betacondition1}, \ref{betacondition2}) are satisfied (see fact \ref{fact4}). %Moreover, either the $\{\beta,e(v)\times e(u)\}$, or the $\{v\times u,v\times e(u)\}$ is the pair of noncritical cells (see fact \ref{fact7}).

    \item For the $1$-cells $\beta=v\times e$, where $e\notin T$ and $e(v)\cap e\neq\emptyset$ the conditions (\ref{betacondition1}, \ref{betacondition2}) are satisfied (see fact \ref{fact5}).
    %and the $1$-cell $\beta$ is critical (see fact \ref{fact5}).
    \item For the $1$-cells $\beta=v\times e(u)$, where $e(u)\in T$ and $e(v)\cap e(u)=\emptyset$ the conditions (\ref{betacondition1}, \ref{betacondition2}) are satisfied (see fact \ref{fact6}).
    %Moreover, either the $\{\beta,e(v)\times e(u)\}$, or the $\{v\times u,v\times e(u)\}$ is the pair of noncritical cells (see fact \ref{fact6}).
    \item For the $1$-cells $\beta=v\times e$, where $e\notin T$ and $e(v)\cap e=\emptyset$ the conditions (\ref{betacondition1}, \ref{betacondition2}) are satisfied (see fact \ref{fact7}). %In case when $v=1$ the cell $\beta$ is critical and when $v\neq 1$ the $\{\beta,e(v)\times e\}$ is the pair of noncritical cells (see fact \ref{fact4}).
\end{enumerate}
Summing up the trial Morse function $\bar{f}_2$, obtained in \textbf{Step 1} satisfies both (\ref{alphacond}) and (\ref{betacondition1}), (\ref{betacondition2}). We switch now to the analysis of $0$-cells.
\item \textbf{Step 3} We divide the set of $0$-cells into four disjoint classes in the following way. We denote by $\tau(v)\neq v$ the vertex to which $e(v)$ is adjacent and call it the terminal vertex of $e(v)$. For any $0$-cell $\kappa=v\times u$ we have that either
\begin{enumerate}
    \item $e(v)\cap e(u)\neq\emptyset$ and the terminal vertex $\tau(v)$ of $e(v)$ is equal to $u$.\label{k1}
    \item $e(v)\cap e(u)\neq\emptyset$ and the terminal vertex $\tau(u)$ of $e(u)$ is equal to the terminal vertex $\tau(v)$ of $e(v)$.\label{k2}
    \item $e(v)\cap e(u)=\emptyset$.\label{k3}
    \item $\kappa=1\times u$.\label{k4}
\end{enumerate}
What is left is checking the following condition for any $0$-cell $\kappa$ :
\begin{eqnarray}\label{betacond}
\#\{\beta\supset\kappa\,:\, \bar{f}_2(\beta)\leq \bar{f}_2(\kappa)\}\leq1
\end{eqnarray}
We find out that
\begin{enumerate}
    \item For the $0$-cell $\kappa=u\times v$ belonging to \ref{k1} the condition (\ref{betacond}) is satisfied (see fact \ref{fact8}).
    %Moreover, if  $\kappa=1\times 2$ then $\kappa$ is the critical $0$-cell. Otherwise, $\{u\times v,v\times e(u)\}$ is the pair of noncritical cells.
     \item For the $0$-cell $\kappa=u\times v$  belonging to \ref{k2} the condition (\ref{betacond}) is not satisfied. There are exactly two $1$-cells $\beta_1,\beta_2\supset\kappa$ such that $\bar{f}_2(\beta_1)=\bar{f}_2(\kappa)=\bar{f}_2(\beta_2)$. They are of the form $\beta_1=u\times e(v)$ and $\beta_2=v\times e(u)$. The function $\bar{f}_2$ can be fixed in two ways. We put $f_2(\beta_1):=\bar{f}_2(\beta_1)+1$ or $f_2(\beta_2):=\bar{f}_2(\beta_2)+1$ (see fact \ref{fact9}). Moreover, this change does not violate the Morse conditions at any $2$-cell containing $\beta_i$.
     %In the first case $\{u\times v,v\times e(u)\}$ is the pair of noncritical cells and $1$-cell $u\times e(v)$ is critical. In the second case $\{u\times v,u\times e(v)\}$ is the pair of noncritical cells and $1$-cell $v\times e(u)$ is critical.
     \item For the $0$-cell $\kappa=u\times v$ belonging to \ref{k3} the condition  (\ref{betacond}) is satisfied (see fact \ref{fact10})
     \item For the $0$-cell $\kappa=u\times v$ belonging to \ref{k4} the condition  (\ref{betacond}) is satisfied (see fact \ref{fact11})
     %due to the modification made for the $2$-cell $\alpha=e(v)\times e(u)$. Moreover, $\kappa$ is noncritical.
\end{enumerate}
\end{enumerate}

\noindent As a result of the above procedure we obtain the Morse function $f_2$. The following theorem summarizes the above described procedure.
\begin{theorem}\label{theorem1} Let $f_1$ be a perfect Morse function on a $1$-particle graph $\Gamma$ defined by (\ref{1-particle-perfect}). Define a trial Morse function $\tilde{f}_2$ on $\mathcal{D}^2(\Gamma)$ by $\tilde{f}_2(\alpha\times\beta):=f_1(\alpha)+f_1(\beta)$. A Morse function $f_2$ on $\mathcal{D}^2(\Gamma)$ is the modification of $\tilde{f}_2$ obtained in the following way:
\begin{enumerate}
  \item For $2$-cells of the form $\alpha=e(u)\times e(v)$ where both $e(u),e(v)\in T$, increment $\tilde{f}_2(\alpha)$ by $1$ and increment either $\tilde{f}_2(u\times e(v))$ or $\tilde{f}_2(e(u)\times v)$ by $1$ as well.
  \item For $0$-cells of the form $\kappa=u\times v$ where $\tau(e(u))=\tau(e(v))$, increment either $\tilde{f}_2(u\times e(v))$ or $\tilde{f}_2(e(u)\times v)$ by 1.
\end{enumerate}
\end{theorem}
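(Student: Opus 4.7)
My plan is to follow exactly the three-step sequence sketched just before the statement, verifying the Morse conditions of Definition \ref{Morse-fuction} one dimension at a time --- top-down --- and keeping track of how each local amendment of $\tilde{f}_{2}$ affects neighbouring cells. The central observation that makes all comparisons tractable is the structure of $f_{1}$: for $e=(j,k)\in T$ one has $f_{1}(e)=\max(f_{1}(j),f_{1}(k))=f_{1}(\tau(e))$, so $f_{1}$ takes the same value on $e$ and on one of its endpoints; for $e\notin T$ one has instead $f_{1}(e)=\max(f_{1}(j),f_{1}(k))+2$, strictly larger than either endpoint value. Lifted to $\mathcal{D}^{2}(\Gamma)$ via $\tilde{f}_{2}(\alpha\times\beta)=f_{1}(\alpha)+f_{1}(\beta)$, this means all equalities between $\tilde{f}_{2}$ values of adjacent cells originate from pairs of tree-edges, and their locations can be pinpointed exactly.

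First, for each 2-cell $\alpha$ I would check the condition $\#\{\beta\subset\alpha:\tilde{f}_{2}(\beta)\geq\tilde{f}_{2}(\alpha)\}\leq 1$. If $\alpha=e_{i}\times e_{j}$ with at least one of $e_{i},e_{j}$ deleted, then the strict inequality $f_{1}(e)>f_{1}(\mathrm{endpoint})$ on the deleted side forces every boundary 1-cell to have strictly smaller $\tilde{f}_{2}$ value, so the condition holds automatically. The only obstruction is $\alpha=e(u)\times e(v)$ with $e(u),e(v)\in T$: then the two boundary 1-cells $u\times e(v)$ and $v\times e(u)$ attain $\tilde{f}_{2}=\tilde{f}_{2}(\alpha)$, because $f_{1}(u)=f_{1}(e(u))$ and $f_{1}(v)=f_{1}(e(v))$, while the other two boundary 1-cells have strictly smaller value. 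Incrementing $\tilde{f}_{2}(\alpha)$ by $1$ together with exactly one of these two offending 1-cells by $1$ produces the new trial function $\bar{f}_{2}$: the Morse inequality on $\alpha$ now holds, and the modified pair is precisely a noncritical $V$-pair to be used later.

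Second, I would check the 1-cell conditions for $\bar{f}_{2}$. Split the 1-cells into four families depending on whether the edge is in $T$ and whether $e(v)$ meets that edge; in each family I would compute $\bar{f}_{2}$ on all adjacent 2-cells and boundary vertices and use the same two structural facts about $f_{1}$ above to show that at most one 2-coface has $\bar{f}_{2}$-value $\leq\bar{f}_{2}(\beta)$ and at most one vertex has $\bar{f}_{2}$-value $\geq\bar{f}_{2}(\beta)$. The fact that the increments introduced in step one were strictly positive guarantees that no new violations are created, and the fact that they were by $+1$ rather than $+2$ is what allows step three to still find and fix the remaining 0-cell defects without disturbing anything.

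Third, for each 0-cell $\kappa=u\times v$ I would check $\#\{\beta\supset\kappa:\bar{f}_{2}(\beta)\leq\bar{f}_{2}(\kappa)\}\leq 1$ via the four cases (a)--(d) listed in the text. Cases (a), (c) and (d) follow from direct computation using the same structural dichotomy on $f_{1}$. The genuinely new obstruction is case (b), when $\tau(e(u))=\tau(e(v))$: then the two 1-cells $u\times e(v)$ and $v\times e(u)$ both attain value $\bar{f}_{2}(\kappa)$. Incrementing exactly one of them by $1$ defines $f_{2}$ and resolves the defect. The main technical point I expect to verify carefully is the compatibility of this 0-cell fix with step one: one must check that the 1-cells modified here are disjoint from (or at worst consistently paired with) those modified in step one, and that raising a 1-cell's value by $1$ cannot re-create a violation at any containing 2-cell --- which is clear because step one left strict inequalities $\bar{f}_{2}(\alpha)<\bar{f}_{2}(\beta)$ at every 2-coface of these particular $\beta$. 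Once this compatibility is verified, $f_{2}$ satisfies Definition \ref{Morse-fuction} at every cell, which is the content of the theorem.
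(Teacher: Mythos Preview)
Your proposal is correct and follows essentially the same three-step, top-down verification as the paper (Facts~\ref{fact1}--\ref{fact11}): same case split on 2-cells by how many factors lie in $T$, same four-family split on 1-cells, and the same four-case split on 0-cells, with the two local increments appearing exactly where the paper places them. One small slip: in your first step, for the mixed 2-cell $\alpha=e\times e(v)$ with $e\notin T$, it is not true that \emph{every} boundary 1-cell has strictly smaller value --- the face $v\times e$ satisfies $\tilde f_2(v\times e)=f_1(v)+f_1(e)=f_1(e(v))+f_1(e)=\tilde f_2(\alpha)$ --- but this is still $\leq 1$ offending face, so the Morse condition holds and $\alpha$ is noncritical (paired with $v\times e$), exactly as in the paper's Fact~\ref{fact2}.
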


We can now ask the question which cells of $\mathcal{D}^2(\Gamma)$ are critical cells of $f_2$. Careful consideration of the arguments given in facts \ref{fact1}-\ref{fact11} lead to the following conclusions:
\begin{theorem}\label{theorem2} The conditions for the critical cells of $f_2$ are
\begin{itemize}
\item The $0$-cell is critical if and only if it is $1\times2$
\item The $1$-cell is critical if and only if

\begin{enumerate}
\item It is $v\times e$ where $e\notin T$ and $e(v)\cap e\neq\emptyset$ or $v=1$.
\item Assume that $e(v)\cap e(u)\neq\emptyset$ and the terminal vertex $\tau(u)$ of $e(u)$
is equal to the terminal vertex $\tau(v)$ of $e(v)$. Then either the $1$-cell $v\times e(u)$ or the $1$-cell $u\times e(v)$ is critical, but not both.
\end{enumerate}
\item The $2$-cell is critical if and only if it is $e_{1}\times e_{2}$ where both
$e_{i}\notin T$.
\end{itemize}
\end{theorem}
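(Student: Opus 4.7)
The plan is to prove Theorem \ref{theorem2} by a direct case analysis, tracing through the construction of $f_2$ in Theorem \ref{theorem1} and identifying, for each cell of $\mathcal{D}^2(\Gamma)$, whether it is paired (noncritical) or unpaired (critical) by the associated discrete gradient vector field. By Definition \ref{criticalcell}, a cell is critical iff neither a co-face nor a face matches the pairing inequality; the facts invoked in Steps 1--3 of the preceding section already established all the pairings except on those cells that become critical. So the proof amounts to reading off those cells that appear in none of the pairings.

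First I would handle the $2$-cells. By Fact \ref{fact2}, every $2$-cell of the form $\alpha = e_i \times e(v)$ with $e_i \notin T$ and $e(v)\in T$ is paired with one of its faces; by Fact \ref{fact3}, every $2$-cell of the form $\alpha = e(u) \times e(v)$ with both $e(u),e(v) \in T$ is paired (after the Step 1 increment) with either $u\times e(v)$ or $v\times e(u)$. This leaves only the $2$-cells $\alpha = e_1\times e_2$ with both $e_i\notin T$. Using (\ref{eq:f1-1}), $f_1(e_i)=\max(f_1(v),f_1(w))+2$, so $\tilde f_2(\alpha) > \tilde f_2(\beta)$ for every face $\beta\subset\alpha$, forcing criticality.

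Next I would turn to the $1$-cells. The pairings identified in Facts \ref{fact4}--\ref{fact7} cover every $1$-cell $\beta = v\times e(u)$ with $e(v)\cap e(u)\neq\emptyset$, $v\neq 1$, as well as every $\beta = v\times e$ with $e\notin T$, $e(v)\cap e=\emptyset$, $v\neq 1$; in each of these cases $\beta$ is the head of an arrow from a uniquely determined $0$-cell of equal $\tilde f_2$-value, or the tail of an arrow to a uniquely determined $2$-cell. This leaves exactly the $1$-cells described in the statement: those of the form $v\times e$ with $e\notin T$ and either $e(v)\cap e\neq\emptyset$ or $v=1$ (Fact \ref{fact5}), and the increment cells selected in Step 3 (Fact \ref{fact9}), which by the choice made in Theorem \ref{theorem1} are exactly one of $v\times e(u)$ or $u\times e(v)$ when $\tau(e(u))=\tau(e(v))$. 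Finally I would verify that the $1$-cells $u\times e(v)$ and $v\times e(u)$ of a Step-1-modified square retain their pairings upward into the $2$-cell, so they are not critical; this is immediate from the +1 increment applied jointly to $\alpha$ and the selected face.

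For the $0$-cells, every $\kappa = u\times v$ is paired downward with a lower-valued $1$-cell by Facts \ref{fact8}, \ref{fact10}, and \ref{fact11}, the only exception being $\kappa = 1\times 2$ (by Fact \ref{fact11}), which attains the global minimum $\tilde f_2 = 0$ and hence is critical.

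The main obstacle is verifying consistency of the two rounds of increments. In Step 1 the $1$-cell $u\times e(v)$ (say) gets $+1$ to break the pairing with $\alpha = e(u)\times e(v)$, while in Step 3 a possibly different $1$-cell is incremented to break the pairing with $\kappa$. I would need to check that the two choices never collide on a single $1$-cell in a way that would restore a forbidden face--coface inequality, and equivalently that after both increments the vector field contains no closed $V$-path; this reduces to showing that whenever a $1$-cell is incremented twice it would violate the hypothesis of one of the facts, so the choices in Theorem \ref{theorem1} can always be made disjoint. Once this bookkeeping is carried out, the list of unpaired cells reads precisely as in Theorem \ref{theorem2}.
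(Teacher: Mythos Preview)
Your proposal is correct and follows essentially the same approach as the paper, which does not give a separate proof but simply states that Theorem \ref{theorem2} follows from ``careful consideration of the arguments given in facts \ref{fact1}--\ref{fact11}.'' Your case-by-case reading of the pairings is exactly that consideration; the ``main obstacle'' you raise (possible collision of the Step~1 and Step~3 increments) dissolves immediately once you note that Step~1 modifies only $1$-cells $u\times e(v)$ with $e(u)\cap e(v)=\emptyset$, while Step~3 modifies only those with $e(u)\cap e(v)\neq\emptyset$, so the two families are disjoint.
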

These rules are related to those given by Farley and Sabalka in \cite{FS08}. As pointed out by an anonymous referee the freedom in choosing noncritical $1$-cells (see fact 3 in section \ref{proofs}) and critical $1$-cells (see fact 9 in section \ref{proofs}) is also present in Farley and Sabalka's \cite{FS08} construction. Moreover, a perfect Morse function on a $1$-particle graph used in our construction stems from the labeling of the tree discussed in \cite{FS08}.

\section{Summary}
We have presented a description of topological properties of two-particle
graph configuration spaces in terms of discrete Morse theory. Our
approach is through discrete Morse functions, which may be regarded as two-particle potential energies.
We proceeded by introducing a trial Morse function on the full two-particle cell complex, $\mathcal{D}^2(\Gamma)$, which is simply the sum of single-particle potentials on the
one-particle cell complex, $\Gamma$.
We showed that the trial Morse function is close to being a true Morse function
provided that the single-particle potential is a perfect Morse function on $\Gamma$. Moreover, we give an explicit prescription for removing  local defects. The  fixing process is unique modulo the freedom described in facts
\ref{fact3} and \ref{fact9}. The  construction was demonstrated by two examples. A future goal would be to see if these constructions can provide any simplification in understanding of the results of \cite{KP11}. It will be also interesting to verify if the presented techniques can be extended to $N$-particle graphs and if they lead to analogous results as in \cite{FS08}. The preliminary calculations indicate that the answer is positive, however small modifications of a perfect Morse function on a $1$-particle graph are needed.

Finally, notice that using our analogy with the potential energy a trial Morse function is constructed as if particles do not interact. The modification of a trail Morse function can be hence viewed as introducing an interaction. On the other hand, in the considered graph setting, quantum statistics or anyons can be regarded as fermions which interact in some particular way. Remarkably, the modifications of a trial Morse function in particular these described in point 2 of theorem \ref{theorem1} correspond to situations when two particles come close together. 
\section{Proofs}\label{proofs}
In this section we give the proofs of the statements made in section \ref{sec:General-consideration-for}. The following notation will be used. We denote by $D_v$ all edges of $\Gamma$ which are adjacent to $v$ and belong to $\Gamma-T$. Similarly by $T_v$ we denote all edges of $\Gamma$ which are adjacent to $v$ and belong to $T$, except one distinguished edge $e(v)\in T$, but not in $T_v$.

\begin{Fact}\label{fact1}
Let $\alpha=e_{1}\times e_{2}$ be a $2$-cell such that both $e_{1}$ and $e_{2}$
do not belong to $T$. The condition (\ref{alphacond}) is satisfied and
$\alpha$ is a critical cell.\end{Fact}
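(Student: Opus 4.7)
My plan is a direct computation. Since $\mathcal{D}^{2}(\Gamma)$ is two-dimensional, there are no 3-cells, so the upward count $\#\{\tau^{(3)}\supset\alpha:\tilde{f}_{2}(\tau)\leq\tilde{f}_{2}(\alpha)\}$ is automatically zero. The whole task reduces to checking that every 1-cell in the boundary of $\alpha$ carries a value strictly smaller than $\tilde{f}_{2}(\alpha)$; this simultaneously establishes (\ref{alphacond}) (in fact with count $0$) and, together with the preceding remark, the criticality of $\alpha$ in the sense of Definition \ref{criticalcell}.

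To carry this out, I would write $e_{1}=(a_{1},b_{1})$ with $f_{1}(a_{1})<f_{1}(b_{1})$, and similarly $e_{2}=(a_{2},b_{2})$ with $f_{1}(a_{2})<f_{1}(b_{2})$. Since by hypothesis $e_{1},e_{2}\notin T$, the defining rule (\ref{eq:f1-1}) gives
\[
f_{1}(e_{1})=f_{1}(b_{1})+2,\qquad f_{1}(e_{2})=f_{1}(b_{2})+2,
\]
so by (\ref{eq:rules-1}),
\[
\tilde{f}_{2}(\alpha)=f_{1}(e_{1})+f_{1}(e_{2})=f_{1}(b_{1})+f_{1}(b_{2})+4.
\]

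Next I would enumerate the four boundary 1-cells of $\alpha$, namely $a_{1}\times e_{2}$, $b_{1}\times e_{2}$, $e_{1}\times a_{2}$, $e_{1}\times b_{2}$, and compute their $\tilde{f}_{2}$ values by (\ref{eq:rules-1}):
\[
\tilde{f}_{2}(a_{1}\times e_{2})=f_{1}(a_{1})+f_{1}(b_{2})+2,\quad \tilde{f}_{2}(b_{1}\times e_{2})=f_{1}(b_{1})+f_{1}(b_{2})+2,
\]
\[
\tilde{f}_{2}(e_{1}\times a_{2})=f_{1}(b_{1})+f_{1}(a_{2})+2,\quad \tilde{f}_{2}(e_{1}\times b_{2})=f_{1}(b_{1})+f_{1}(b_{2})+2.
\]
Comparing each of these to $\tilde{f}_{2}(\alpha)=f_{1}(b_{1})+f_{1}(b_{2})+4$, every one is strictly smaller: the two ``$b_{i}\times e_{j}$'' type edges differ by $-2$, and the two ``$a_{i}\times e_{j}$'' type edges differ by $-2-(f_{1}(b_{i})-f_{1}(a_{i}))<-2$ since $f_{1}$ is injective on vertices with $f_{1}(a_{i})<f_{1}(b_{i})$. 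Hence
\[
\#\{\beta\subset\alpha:\tilde{f}_{2}(\beta)\geq\tilde{f}_{2}(\alpha)\}=0\leq 1,
\]
verifying (\ref{alphacond}), and combined with the vacuous upward count this gives criticality by Definition \ref{criticalcell}. There is no real obstacle here; the proof is purely arithmetic and the construction of $f_{1}$ via (\ref{eq:f1-1}) has been arranged precisely so that non-tree edges sit strictly above all their vertices.
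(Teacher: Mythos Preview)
Your proof is correct and follows essentially the same approach as the paper: the paper simply refers to a figure displaying the values of $\tilde{f}_2$ on $\alpha$ and its four boundary $1$-cells, from which the strict inequalities are read off, while you have written out explicitly the arithmetic that the figure encodes. The content is identical.
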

Proof.
The two cell $e_{1}\times e_{2}$ is shown in the figure \ref{figure10}, where $e_{1}=(i,j)$
and $e_{2}=(k,l)$ and $i>j$, $k>l$. The result follows immediately from this figure.

\begin{figure}[H]
~~~~~~~~~~~~~~~~~~~~~~~~~~~~~~~~\includegraphics[scale=0.5]{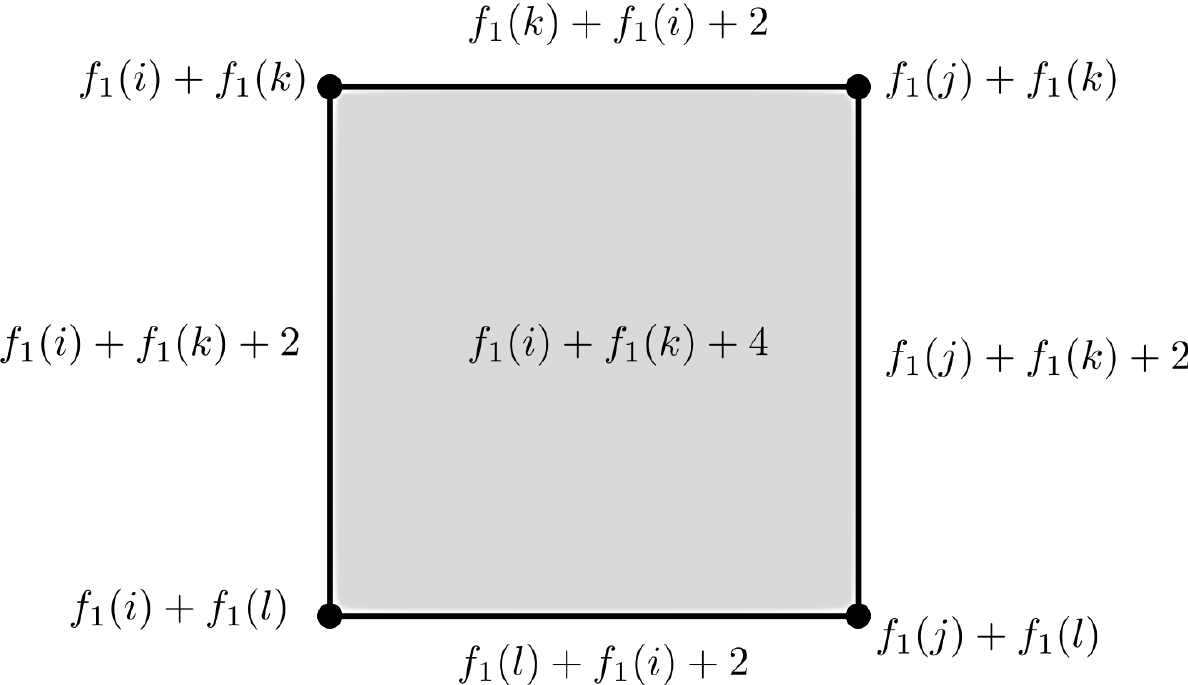}

\caption{The critical cell $e_{1}\times e_{2}$ where both $e_{1}$ and $e_{2}$
do not belong to $T$}\label{figure10}

\end{figure}

\begin{Fact}\label{fact2}
Let $\alpha=e\times e(v)$ be a $2$-cell, where $e\notin T$ and $e(v)\in T$.
Condition (\ref{alphacond}) is satisfied and
$\alpha$ is a noncritical cell.\end{Fact}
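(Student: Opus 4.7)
The plan is to write $e=(i,j)\notin T$ with $i>j$ and $e(v)=v\leftrightarrow\tau(v)\in T$ with $\tau(v)<v$ (so that $f_1(e(v))=f_1(v)=2v-2$, and $f_1(e)=2i-2+2=2i$). Since $\alpha=e\times e(v)$ is a genuine $2$-cell of $\mathcal{D}^2(\Gamma)$, the two edges must be disjoint, i.e.\ $\{i,j\}\cap\{v,\tau(v)\}=\emptyset$. From the definition of $\tilde f_2$ we immediately obtain $\tilde f_2(\alpha)=f_1(e)+f_1(e(v))=2i+2v-2$.

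Next I would evaluate $\tilde f_2$ on each of the four $1$-cells in $\partial\alpha$. A direct substitution gives
\[
\tilde f_2(i\times e(v))=2i+2v-4,\quad \tilde f_2(j\times e(v))=2j+2v-4,
\]
\[
\tilde f_2(\tau(v)\times e)=2\tau(v)+2i-2,\quad \tilde f_2(v\times e)=2v+2i-2.
\]
The first two are strictly smaller than $\tilde f_2(\alpha)$ because $i,j\le i$ and at least one strict inequality occurs in each line ($-4<-2$ from the $+2$ correction on the deleted edge $e$, and $j<i$). The third is strictly smaller because $\tau(v)<v$. The fourth equals $\tilde f_2(\alpha)$ exactly.

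This shows that $\#\{\beta\subset\alpha:\tilde f_2(\beta)\ge\tilde f_2(\alpha)\}=1$, so condition \eqref{alphacond} is satisfied (with equality), and $\alpha$ is noncritical: it is paired with the unique $1$-cell $v\times e\subset\alpha$ on which $\tilde f_2$ attains the value $\tilde f_2(\alpha)$. There is no real obstacle here; the only thing to be careful about is the convention $i>j$ for labelling the endpoints of $e$, together with the fact that adding $2$ to $\max$ on deleted edges is precisely what makes the value on $v\times e$ tie (rather than strictly dominate) the value on $\alpha$, forcing the noncritical pairing rather than criticality.
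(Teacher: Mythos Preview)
Your proof is correct and takes essentially the same approach as the paper. The paper simply draws the $2$-cell with the values of $\tilde f_2$ on its boundary and says the result follows from the figure; you have written out explicitly the four boundary values and compared them to $\tilde f_2(\alpha)=2i+2v-2$, arriving at the same conclusion that $v\times e$ is the unique face with $\tilde f_2(\beta)\ge\tilde f_2(\alpha)$, giving the noncritical pair $\{v\times e,\alpha\}$.
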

Proof.
We of course assume that $e(v)\cap e=\emptyset$. The $2$-cell $\alpha$ is shown on figure \ref{figure11}, where we denoted $e(v)=(v,\tau(v))$ and $e=(j,k)$. The result follows immediately from this figure.
\begin{figure}[h]
%\includegraphics[scale=0.5]{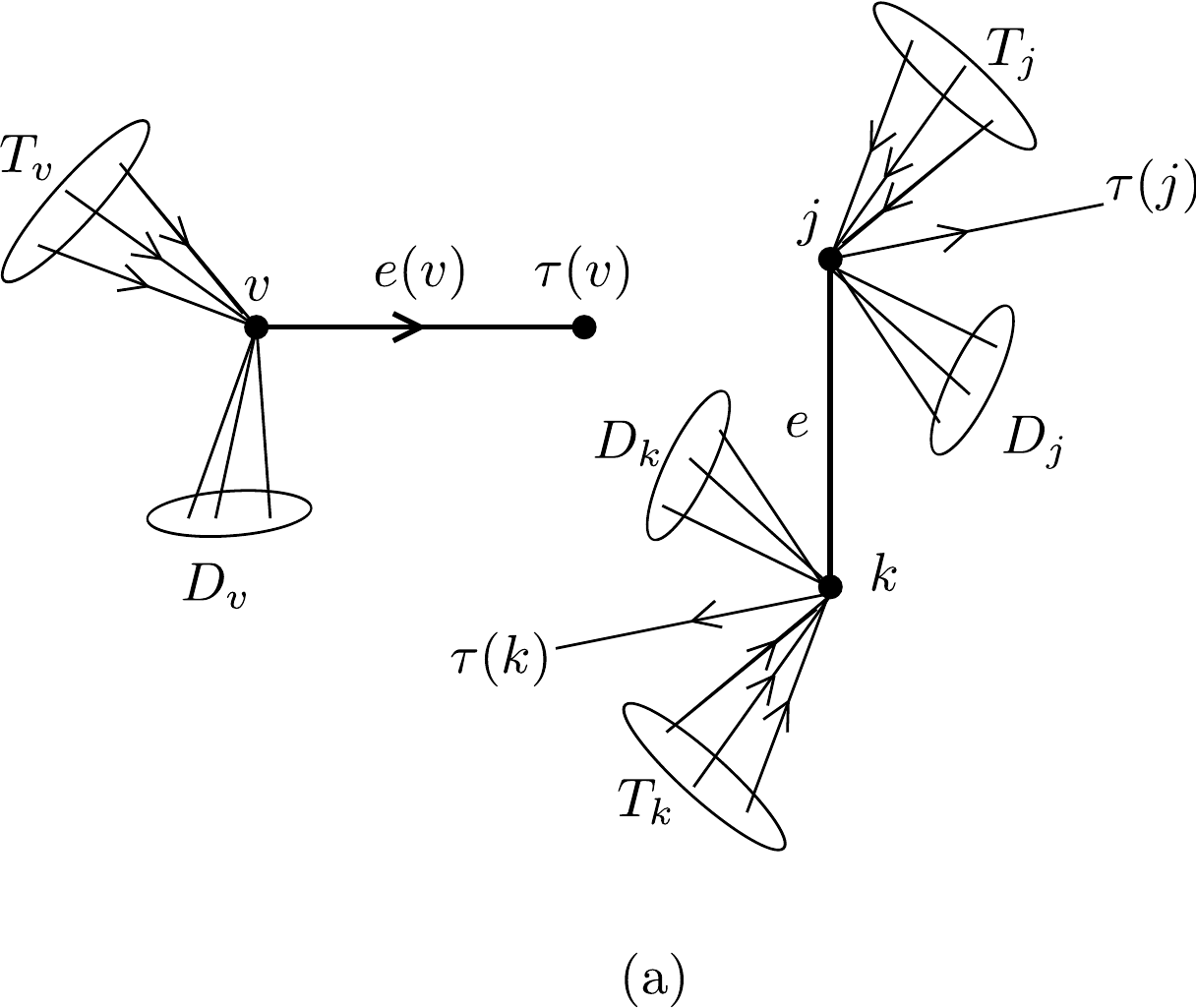}~~\includegraphics[scale=0.45]{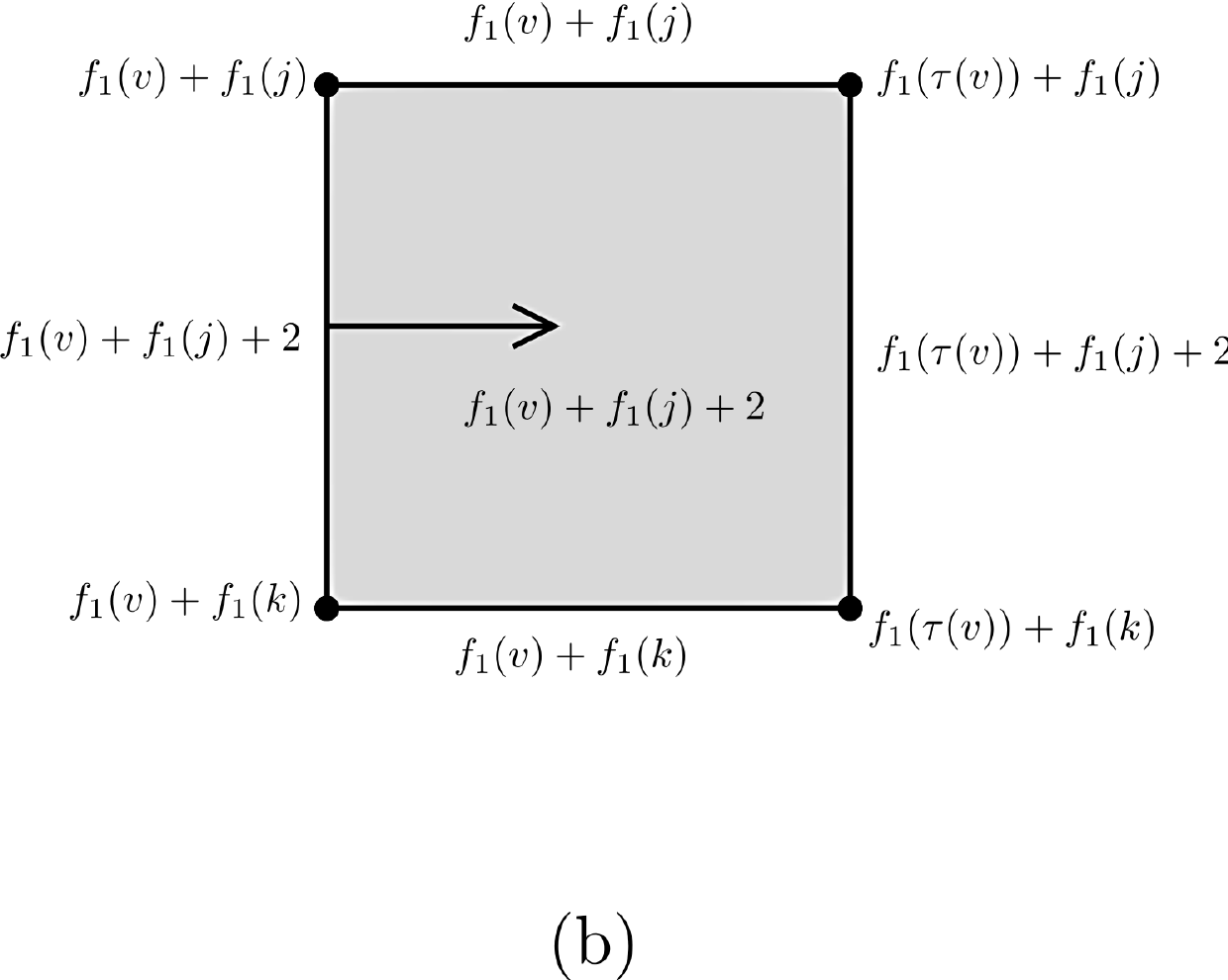}

%\bigskip{}

\includegraphics[scale=0.5]{figure11a}~~\includegraphics[scale=0.5]{figure11b}

\caption{(a) $e(v)\cap e=\emptyset$ and $e\notin T$, (b) The
noncritical cells $v\times e$ and $e(v)\times e$. }\label{figure11}
\end{figure}

\begin{Fact}\label{fact3}
Let $\alpha=e(u)\times e(v)$ be the $2$-cells, where both $e(u),e(v)\in T$. Condition (\ref{alphacond}) is not satisfied. There are exactly two $1$-cells $\beta_1,\beta_2\subset\alpha$ such that $\tilde{f}_2(\beta_1)=\tilde{f}_2(\alpha)=\tilde{f}_2(\beta_2)$. They are of the form $\beta_1=u\times e(v)$ and $\beta_2=v\times e(u)$. The function $\tilde{f}_2$ can be fixed in two ways. We put $\bar{f}_2(\alpha)=\tilde{f}_2(\alpha)+1$ and either $\bar{f}_2(\beta_1):=\tilde{f}_2(\beta_1)+1$ or $\bar{f}_2(\beta_2):=\tilde{f}_2(\beta_2)+1$. \end{Fact}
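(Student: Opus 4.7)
The plan is to compute $\tilde{f}_2$ directly on $\alpha = e(u) \times e(v)$ and on each of the four $1$-cells in its boundary, to identify the two which share the value $\tilde{f}_2(\alpha)$, and then to verify that the prescribed increment restores condition~(\ref{alphacond}) while producing a valid noncritical pair.

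First I would unwind the definitions. Because both $e(u), e(v) \in T$, the recipe (\ref{1-particle-perfect}) gives $f_1(e(u)) = f_1(u)$ and $f_1(e(v)) = f_1(v)$; indeed, by construction of the vertex labeling the terminal vertex $\tau(w)$ of $e(w)$ satisfies $\tau(w) < w$, so $f_1(e(w)) = \max(f_1(w), f_1(\tau(w))) = f_1(w)$. Consequently $\tilde{f}_2(\alpha) = f_1(u) + f_1(v)$. The four $1$-cells in $\partial \alpha$ are $u \times e(v)$, $\tau(u) \times e(v)$, $e(u) \times v$, and $e(u) \times \tau(v)$, all of which are genuine $1$-cells of $\mathcal{D}^2(\Gamma)$ since $\alpha \in \mathcal{D}^2(\Gamma)$ forces $e(u) \cap e(v) = \emptyset$. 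Direct evaluation of (\ref{eq:rules-1}) yields
\[
\tilde{f}_2(u \times e(v)) = \tilde{f}_2(e(u) \times v) = f_1(u) + f_1(v),
\]
while the remaining two satisfy $\tilde{f}_2(\tau(u) \times e(v)) = f_1(\tau(u)) + f_1(v) < \tilde{f}_2(\alpha)$ and $\tilde{f}_2(e(u) \times \tau(v)) = f_1(u) + f_1(\tau(v)) < \tilde{f}_2(\alpha)$. Hence condition (\ref{alphacond}) fails at $\alpha$ with precisely the two offenders $\beta_1 = u \times e(v)$ and $\beta_2 = e(u) \times v$, proving the first assertion of the fact.

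For the repair I would observe that setting $\bar{f}_2(\alpha) := \tilde{f}_2(\alpha) + 1$ alone would already restore (\ref{alphacond}) at $\alpha$, since both $\beta_i$ then drop strictly below $\bar{f}_2(\alpha)$; however such a modification would leave $\alpha$ critical. Incrementing one chosen $\beta_i$ by one as well places $\bar{f}_2(\beta_i) = \bar{f}_2(\alpha)$, so $\beta_i$ becomes the unique boundary cell attaining this value, the other $\beta_j$ sits strictly below, and $\{\beta_i, \alpha\}$ constitutes a noncritical pair with an arrow directed from $\beta_i$ to $\alpha$ in the induced gradient vector field. The one point that still warrants attention --- which I would flag but not grind through inside the statement of this fact --- is that the unit shift imposed on $\beta_i$ must not spoil the Morse conditions at the other $2$-cell containing $\beta_i$ or at the $0$-cell faces of $\beta_i$. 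This is the main obstacle, but it is mild: since $\beta_i$ has the form $w \times e(w')$ with $e(w') \in T$ and $e(w) \cap e(w') = \emptyset$, the required compatibility falls squarely within the regime of the $1$-cell analysis carried out in Step~2 below, and the shift by one does not disturb the strict inequalities identified there. I would therefore defer this consistency check to its natural place in the argument, leaving the freedom to choose $\beta_1$ or $\beta_2$ genuinely arbitrary as stated.
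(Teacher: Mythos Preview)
Your proof is correct and follows essentially the same approach as the paper's own argument. The paper's proof of this fact is figure-based, simply displaying the $2$-cell $e(u)\times e(v)$ with the values of $\tilde f_2$ on its boundary and the two possible fixes; your version makes the same computation explicit, and your deferral of the consistency check on the incremented $\beta_i$ to Step~2 mirrors exactly what the paper does (handling it in Facts~\ref{fact4}--\ref{fact7}).
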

%In both cases $\{\beta_i,\,\alpha\}$ is the pair of noncritical cells.

\noindent Proof.
The $2$-cell $e(v)\times e(u)$ when $e(v)\cap e(u)=\emptyset$
is presented in figure \ref{figure12}(a),(b). The trail Morse function $\tilde{f}_{2}$
requires fixing and two possibilities are shown on figure \ref{figure12}(c),(d).
Notice that in both cases we get a pair of noncritical cells. Namely
the $1$-cell $v\times e(u)$ and $2$-cell $e(v)\times e(u)$ for the situation
presented in figure \ref{figure12}(c) and $1$-cell $u\times e(v)$, $2$-cell $e(v)\times e(u)$
for the situation presented in figure \ref{figure12}(d).

\begin{figure}[h]\label{figure12}
\includegraphics[scale=0.6]{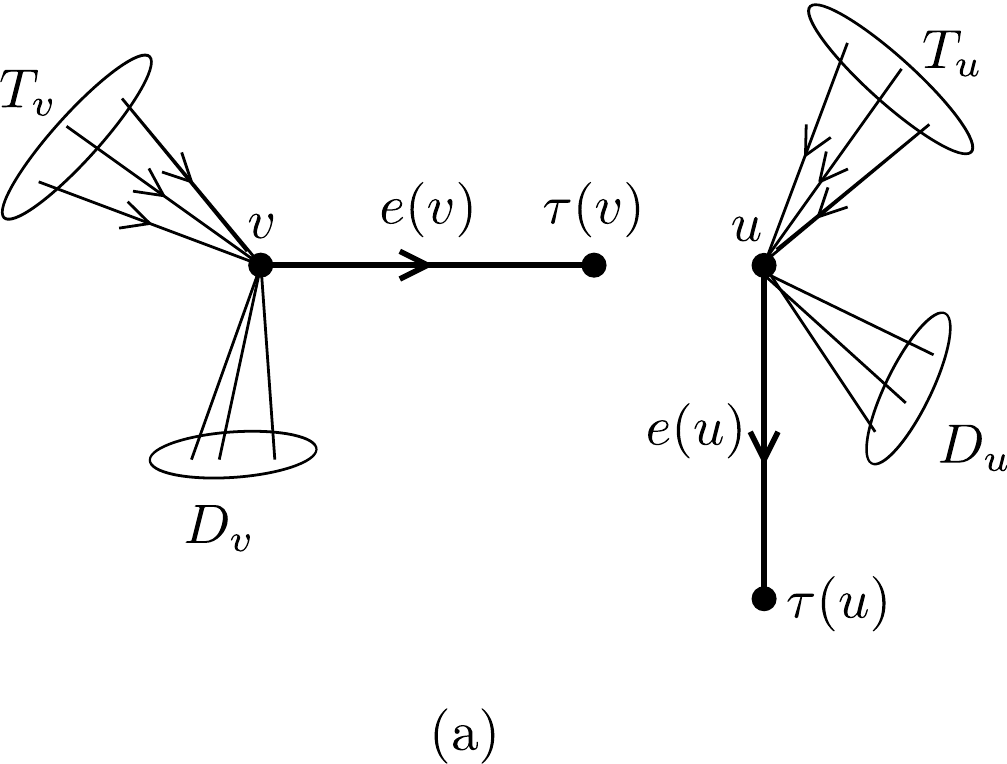}~~~~\includegraphics[scale=0.5]{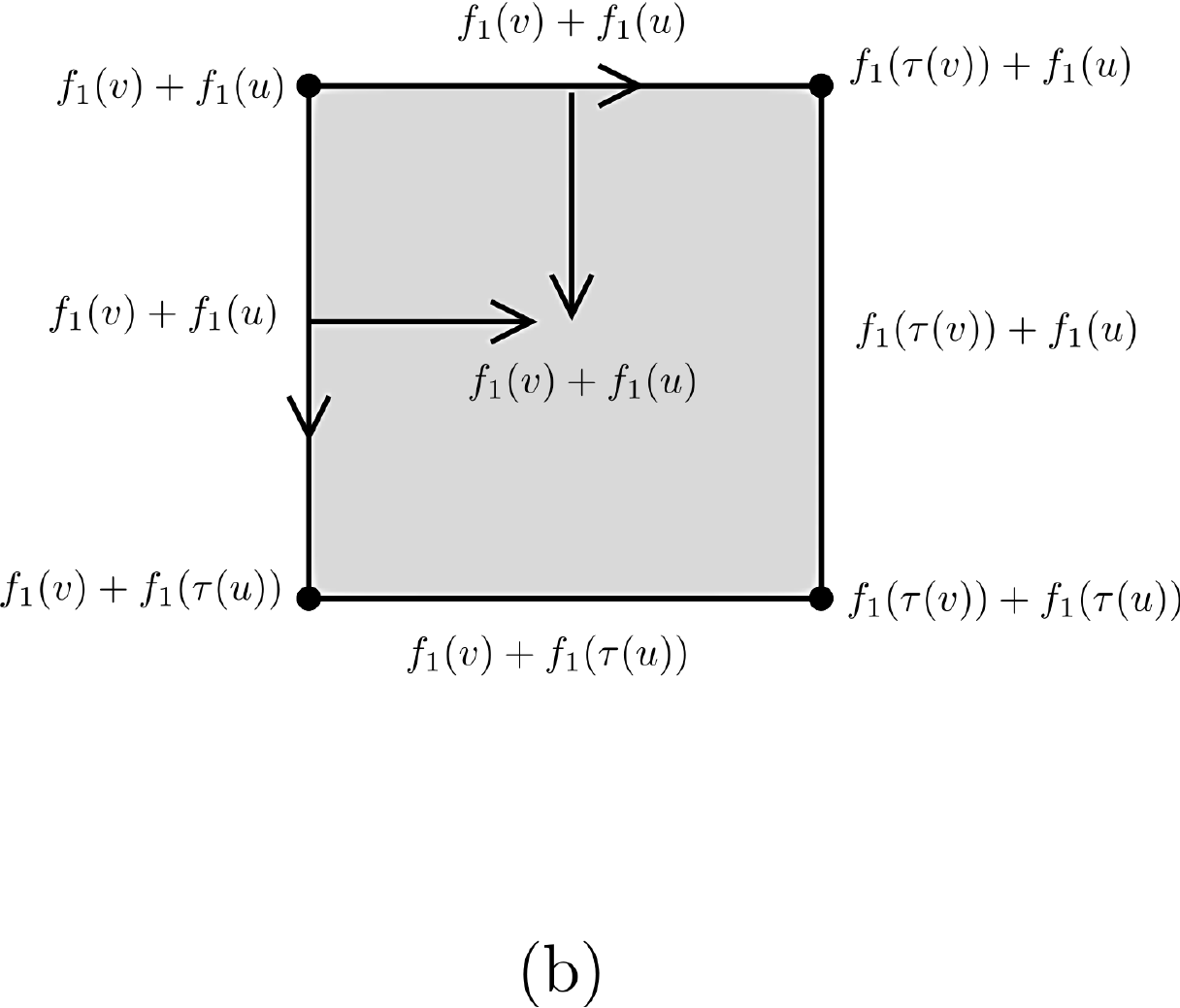}

\bigskip{}

\includegraphics[scale=0.5]{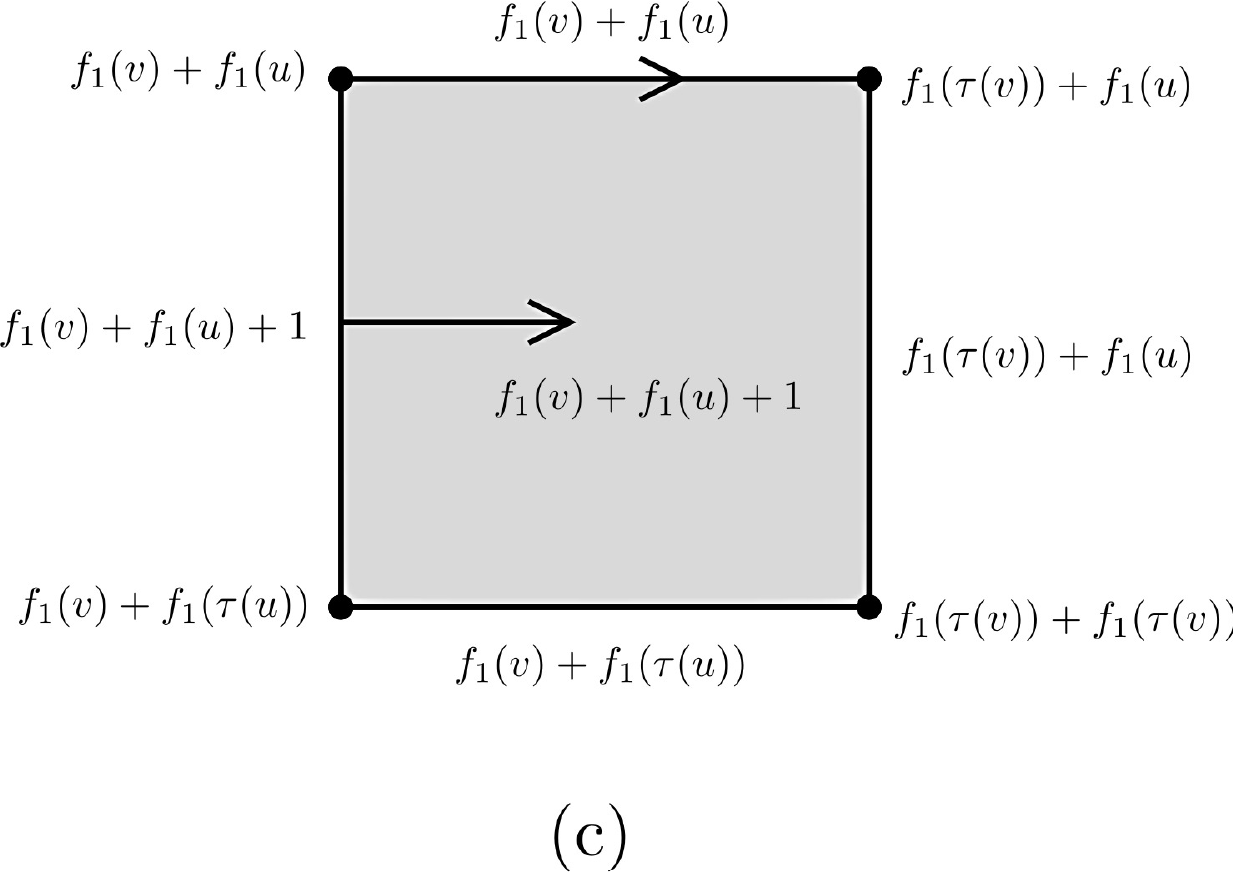}~~\includegraphics[scale=0.5]{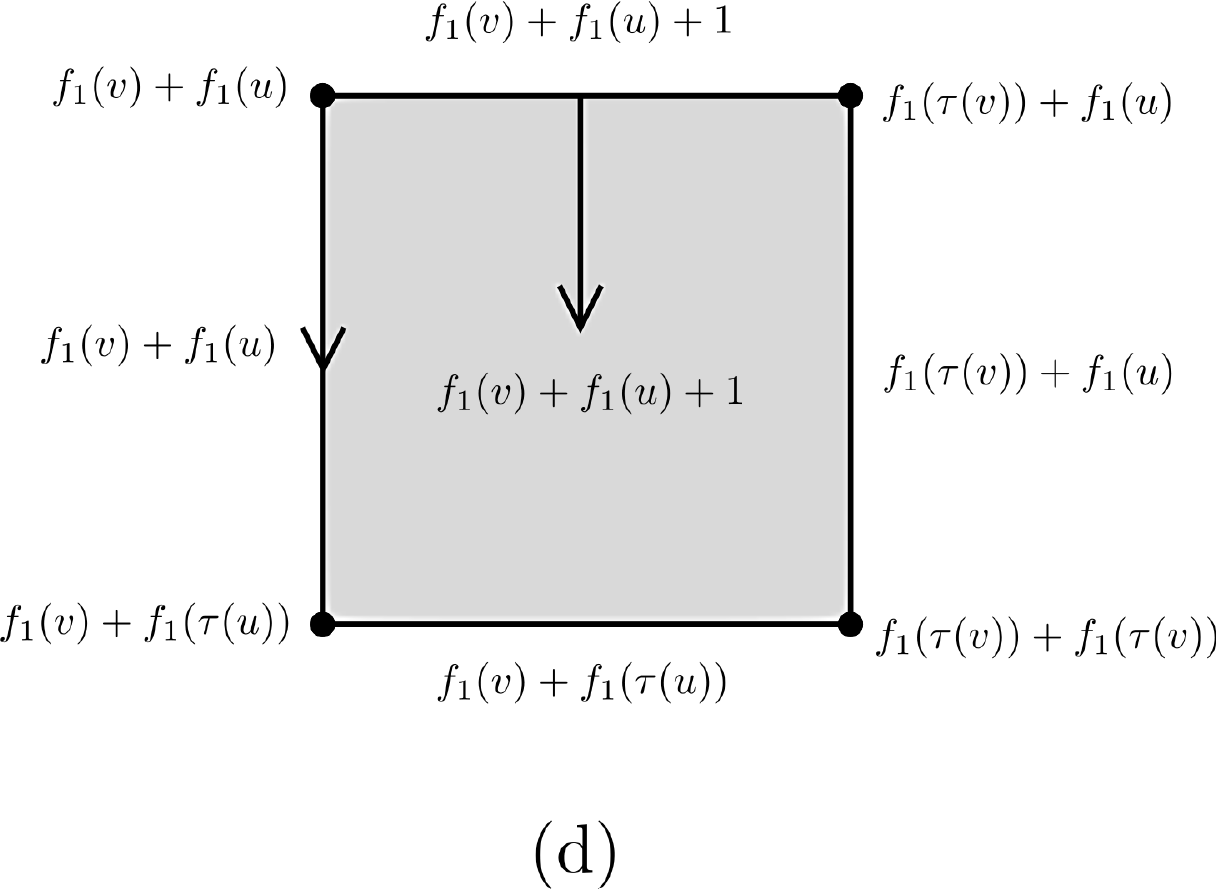}

\caption{(a) Two edges of $T$ with $e(v)\cap e(u)=\emptyset$, (b) The
problem of $2$-cell $e(v)\times e(u)$ (c),(d) two possible fixings
of $\tilde{f}_{2}$ }\label{figure12}
\end{figure}

\begin{Fact}\label{fact4}
For the $1$-cells $\beta=v\times e(u)$, where $e(u)\in T$ and $e(v)\cap e(u)\neq\emptyset$ the conditions (\ref{betacondition1}, \ref{betacondition2}) are satisfied.
%Moreover, either the $\{\beta,e(v)\times e(u)\}$, or the $\{v\times u,v\times e(u)\}$ is the pair of noncritical cells.
\end{Fact}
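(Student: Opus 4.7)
The plan is to verify conditions (\ref{betacondition1}) and (\ref{betacondition2}) by a direct computation, keeping careful track of which cells were altered in Step~1. First I would establish that $\bar{f}_2(\beta) = \tilde{f}_2(\beta) = f_1(v) + f_1(u)$. The point is that Step~1 only increments $\bar{f}_2$ on $2$-cells of the form $e(u')\times e(v')$ and on one of the two distinguished $1$-cells $u'\times e(v')$, $v'\times e(u')$ sitting inside them, and all of those modifications require $e(u')\cap e(v')=\emptyset$. Since our hypothesis is $e(v)\cap e(u)\neq\emptyset$, the cell $\beta$ is untouched, and the tree-labeling convention of section~\ref{sec:One-particle-graph} gives $f_1(e(u))=f_1(u)$ because $\tau(u)<u$.

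For the face condition (\ref{betacondition2}), the two $0$-cell faces of $\beta$ are $v\times u$ and $v\times\tau(u)$, and neither is modified in Step~1. One immediately computes $\bar{f}_2(v\times u)=f_1(v)+f_1(u)=\bar{f}_2(\beta)$ and $\bar{f}_2(v\times\tau(u))=f_1(v)+f_1(\tau(u))<\bar{f}_2(\beta)$. So exactly one face meets the inequality, which is within the allowed bound of one.

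For the coface condition (\ref{betacondition1}), I would enumerate every $2$-cell $\alpha\supset\beta$, all of which take the form $\alpha=e\times e(u)$ with $e$ an edge incident to $v$ and disjoint from $e(u)$. There are three subcases to dispatch: the option $e=e(v)$ is ruled out by the hypothesis $e(v)\cap e(u)\neq\emptyset$; if $e\in T_v$, then the other endpoint $w$ of $e$ satisfies $w>v$ so $f_1(e)=f_1(w)>f_1(v)$; if $e\notin T$, then the $+2$ in (\ref{eq:f1-1}) forces $f_1(e)\geq f_1(v)+2$. In every surviving case $\tilde{f}_2(\alpha)=f_1(e)+f_1(u)>\bar{f}_2(\beta)$ strictly, and since Step~1 only ever adds $+1$ to values, the strict inequality persists for $\bar{f}_2(\alpha)$. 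Hence no $\alpha$ meets the condition and the count is zero.

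The only real bookkeeping issue is tracking the effect of Step~1: one must rule out both that $\beta$ itself was incremented (handled by the non-disjointness hypothesis) and that some coface $\alpha$ could have been pushed downward into conflict with $\bar{f}_2(\beta)$ (handled by the fact that Step~1 never decreases values). Beyond this the verification reduces to the elementary $f_1$-inequalities forced by the tree labeling, so I do not expect a real obstacle. As a byproduct, the proof shows that $\beta$ is non-critical and is paired in the discrete gradient vector field with the $0$-cell $v\times u$.
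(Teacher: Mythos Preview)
Your proposal is correct and follows essentially the same approach as the paper's proof: first establish that $\bar f_2(\beta)=\tilde f_2(\beta)=f_1(v)+f_1(u)$ by checking that Step~1 does not touch $\beta$, then verify (\ref{betacondition2}) via the two faces $v\times u$ and $v\times\tau(u)$, and finally verify (\ref{betacondition1}) by enumerating the cofaces $e\times e(u)$ according to whether $e=e(v)$ (excluded), $e\in T_v$, or $e\in D_v$. The only cosmetic differences are that the paper argues $\beta$ is unmodified by listing the three possible coface types and invoking Facts~\ref{fact2} and~\ref{fact3}, whereas you argue directly that the distinguished $1$-cells of Step~1 live in $2$-cells $e(v)\times e(u)$ which do not exist here; and the paper writes out the explicit inequalities $\bar f_2(\alpha)>f_1(v)+f_1(u)+1$ or $+2$, while you appeal to the fact that Step~1 only increments values.
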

\noindent Proof. Let us first calculate $\bar{f}_2(\beta)$. To this end we have to check if $\beta$ was modified in step 1. Notice that every $2$-cell which has $\beta$ in its boundary is one of the following forms:
\begin{enumerate}
    \item $e(v)\times e(u)$\label{1}
    \item $e\times e(u)$ with $e\in D_v$\label{2}
    \item $e\times e(u)$ with $e\in T_v$\label{3}

\end{enumerate}
Case (\ref{1}) is impossible since $e(v)\cap e(u)\neq\emptyset$. For any $2$-cell belonging to (\ref{2}) the value of $\tilde{f}_2$ was not modified on the boundary of $e\times e(u)$ (see fact \ref{fact2}). Finally, for $2$-cells belonging to (\ref{3}) the value of $\tilde{f}_2$ was modified on the boundary of $e\times e(u)$ but not on the cell $\beta$ (see fact 3). Hence $\bar{f}_2(v\times e(u))=\tilde{f}_2(v\times e(u))=f_1(v)+f_1(e(u))=f_1(v)+f_1(u)$. Let us now verify condition (\ref{betacondition2}). The $1$-cell $\beta$ is adjacent to exactly two $0$-cells, namely $v\times u$ and $v\times \tau(u)$. We have $\bar{f}_2(v\times u)=\tilde{f}_2(v\times u)=f_1(v)+f_1(u)$ and  $\bar{f}_2(v\times \tau(u))=\tilde{f}_2(v\times \tau(u))=f_1(v)+f_1(\tau(u))$. Now since $f_1(\tau(u))<f_1(u)$ condition (\ref{betacondition2}) is satisfied. For condition (\ref{betacondition1}) we have only to examine $2$-cells of forms (\ref{2}) and (\ref{3}) (listed above). For $2$-cells that belong to (\ref{2}) we have $f_2(e\times e(u))=f_1(e)+f_1(e(u))>f_1(v)+f_1(u)+2$ and for $2$-cells that belong to (\ref{3}) we have $f_2(e\times e(u))=f_1(e)+f_1(e(u))+1>f_1(v)+f_1(u)+1$. Hence in both cases $\bar{f}_2(e\times e(u))>\bar{f}_2(v\times e(u))$ and condition (\ref{betacondition1}) is satisfied.

\begin{Fact}\label{fact5}
For the $1$-cells $\beta=v\times e$, where $e\notin T$ and $e(v)\cap e\neq\emptyset$ conditions (\ref{betacondition1}, \ref{betacondition2}) are satisfied.
% and the $1$-cell $\beta$ is critical.
\end{Fact}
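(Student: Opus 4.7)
Proof. The plan is to mirror the verification in Fact \ref{fact4}, with the deleted edge $e$ now playing the structural role that the tree edge $e(u)$ played there. Write $e=(j,k)$ with $j>k$, so by the defining formula for $f_1$ on deleted edges, $f_1(e)=f_1(j)+2$.

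The first step is to determine $\bar{f}_2$ at $\beta$ and at all cells incident to it. Step 1 modified only 2-cells of the form $e(u)\times e(w)$ with both factors in $T$, together with one of the two 1-cells $u\times e(w)$ or $w\times e(u)$ in their boundary. Since $e\notin T$, the 1-cell $\beta=v\times e$ is not of that shape, and every 2-cell containing $\beta$ has the form $e'\times e$ with $e\notin T$, so it is not of that shape either. Hence $\bar{f}_2$ agrees with $\tilde{f}_2$ on every cell relevant to checking the Morse conditions at $\beta$, and in particular $\bar{f}_2(\beta)=f_1(v)+f_1(j)+2$.

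For condition \eqref{betacondition2}, the two 0-cells in $\partial\beta$ are $v\times j$ and $v\times k$, with values $f_1(v)+f_1(j)$ and $f_1(v)+f_1(k)$, both strictly less than $\bar{f}_2(\beta)$ thanks to the $+2$ offset that $f_1$ assigns to deleted edges. For condition \eqref{betacondition1}, I would enumerate the 2-cells $\alpha=e'\times e$ containing $\beta$: here $e'$ must contain $v$ and be disjoint from $e$. The hypothesis $e(v)\cap e\neq\emptyset$ rules out $e'=e(v)$, leaving two cases. If $e'\in T_v$, then $e'=(v,w)$ with $w>v$ and $f_1(e')=f_1(w)>f_1(v)$; if $e'\in D_v$, then $f_1(e')\geq f_1(v)+2$. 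In both cases $f_1(e')>f_1(v)$, hence $\bar{f}_2(\alpha)=f_1(e')+f_1(e)>\bar{f}_2(\beta)$.

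None of these steps is genuinely difficult; the only delicate point is confirming that the Step 1 surgery does not reach $\beta$ or its cofaces, which is handled cleanly by the assumption $e\notin T$. The essential mechanism has simply been swapped: where Fact \ref{fact4} used the strict inequality $f_1(\tau(u))<f_1(u)$ internal to $T$, here one uses the $+2$ offset carried by deleted edges. As a by-product the enumeration also shows that $\beta$ has no exceptional face or coface, which is consistent with the criticality assertion for such 1-cells in Theorem \ref{theorem2}.
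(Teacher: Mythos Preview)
Your proof is correct and follows essentially the same approach as the paper's: both arguments enumerate the $2$-cells containing $\beta$, use the hypothesis $e(v)\cap e\neq\emptyset$ to exclude $e(v)\times e$, observe that Step~1 did not touch $\beta$ or its cofaces because $e\notin T$, and then compare values using $f_1(e')>f_1(v)$ for $e'\in T_v\cup D_v$ and the $+2$ offset on deleted edges. Your additional remark that the strict inequalities also establish criticality of $\beta$ is a correct observation, though the paper defers that conclusion to Theorem~\ref{theorem2}.
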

\noindent Proof. Let us first calculate $\bar{f}_2(\beta)$. To this end we have to check if $\beta$ was modified in step 1. Notice that every $2$-cell which has $\beta$ in its boundary is one of the following forms:
\begin{enumerate}
    \item $e(v)\times e$\label{c1}
    \item $e_i\times e$ with $e_i\in D_v$\label{c2}
    \item $e_i\times e$ with $e_i\in T_v$\label{c3}

\end{enumerate}
Case (\ref{c1}) is impossible since $e(v)\cap e\neq\emptyset$. For any $2$-cell belonging to (\ref{c2}) or (\ref{c3}) the value of $\tilde{f}_2$ was not modified on the boundary of $e_i\times e(u)$ (see fact \ref{fact1} and \ref{fact2}). Hence $\bar{f}_2(v\times e)=\tilde{f}_2(v\times e)=f_1(v)+f_1(e)$. Let us now verify condition (\ref{betacondition2}). To this end assume that $e=(j,k)$ with $j>k$. The $1$-cell $\beta$ is adjacent to exactly two $0$-cells, namely $v\times j$ and $v\times k$. We have $\bar{f}_2(v\times j)=\tilde{f}_2(v\times j)=f_1(v)+f_1(j)$ and  $\bar{f}_2(v\times k)=\tilde{f}_2(v\times k)=f_1(v)+f_1(k)$. Now since $f_1(e)=\mathrm{max}(f_1(j),f_1(k))+2$ condition (\ref{betacondition2}) is satisfied. For condition (\ref{betacondition1}) we have only to examine $2$-cells of forms (\ref{c2}) and (\ref{c3}) (listed above). It is easy to see that in both cases  $\bar{f}_2(e_i\times e)>\bar{f}_2(v\times e)$.

\begin{Fact}\label{fact6}
For the $1$-cells $\beta=v\times e(u)$, where $e(u)\in T$ and $e(v)\cap e(u)=\emptyset$ conditions (\ref{betacondition1}, \ref{betacondition2}) are satisfied.
%Moreover, either the $\{\beta,e(v)\times e(u)\}$, or the $\{v\times u,v\times e(u)\}$ is the pair of noncritical cells.
\end{Fact}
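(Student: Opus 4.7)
The plan is to follow the same three-part template used in Facts~\ref{fact4} and~\ref{fact5}: first compute $\bar f_2(\beta)$ by tracing whether any Step~1 modification touched the trial value at $\beta$, then verify condition (\ref{betacondition2}) against the two $0$-cells in $\partial\beta$, and finally verify (\ref{betacondition1}) against the $2$-cells that contain $\beta$.

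For the first part, every $2$-cell $\alpha'\supset\beta$ has the form $e_i\times e(u)$ with $v$ an endpoint of $e_i$, and $e_i$ falls into one of the three classes $e(v)$, $D_v$, or $T_v$. Unlike in Fact~\ref{fact4}, the case $e_i=e(v)$ is now present because we assume $e(v)\cap e(u)=\emptyset$, and the $2$-cell $\alpha:=e(v)\times e(u)$ is exactly the kind handled by Fact~\ref{fact3}. Thus the Step~1 fixing may (at our choice) have incremented $\tilde f_2(v\times e(u))$ by $1$, so $\bar f_2(\beta)$ equals either $f_1(v)+f_1(u)$ or $f_1(v)+f_1(u)+1$. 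I would treat both subcases in parallel.

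For condition (\ref{betacondition2}), the two $0$-cells in $\partial\beta$ are $v\times u$ and $v\times\tau(u)$, with trial values $f_1(v)+f_1(u)$ and $f_1(v)+f_1(\tau(u))$. Since $\tau(u)<u$ in the tree labelling, $f_1(\tau(u))<f_1(u)$, and in both subcases at most one $0$-cell matches or exceeds $\bar f_2(\beta)$. For condition (\ref{betacondition1}), I enumerate the $2$-cells $e_i\times e(u)$: when $e_i\in D_v$ the value is at least $f_1(v)+f_1(u)+2$ by the ``$+2$'' boost on deleted edges, and when $e_i\in T_v$ the value strictly exceeds $f_1(v)+f_1(u)$ because $v$ is the smaller-labelled endpoint of $e_i$. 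Only $e_i=e(v)$, where $\bar f_2(\alpha)=f_1(v)+f_1(u)+1$ by Fact~\ref{fact3}, can possibly realise the $\leq$ inequality.

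The main obstacle is precisely this last case: the same Step~1 modification both incremented $\tilde f_2(\alpha)$ and possibly incremented $\tilde f_2(\beta)$, so the two increments interact. The key observation that closes the argument is that the pairing dictated by Fact~\ref{fact3} produces a consistent count: if Step~1 incremented $\beta$ then $\bar f_2(\beta)=\bar f_2(\alpha)$ and $\alpha$ is the unique $2$-cell contributing to (\ref{betacondition1}), while if instead Step~1 incremented $u\times e(v)$ then $\bar f_2(\alpha)>\bar f_2(\beta)$ and no $2$-cell contributes. In either scenario the Morse inequality is satisfied with equality at most once, so both (\ref{betacondition1}) and (\ref{betacondition2}) hold at $\beta$.
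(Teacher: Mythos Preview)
Your proposal is correct and follows essentially the same approach as the paper's own proof: the same three-class enumeration of the $2$-cells containing $\beta$, the same observation that $\bar f_2(\beta)$ takes one of two values depending on which choice was made in Fact~\ref{fact3}, and the same case split for verifying (\ref{betacondition1}). Your final paragraph articulating how the two Step~1 increments interact is exactly the content of the paper's closing sentence, stated perhaps a bit more explicitly.
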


\noindent Proof. Let us first calculate $\bar{f}_2(\beta)$. To this end we have to check if $\beta$ was modified in step 1. Notice that every $2$-cell which has $\beta$ in its boundary is one of the following forms:
\begin{enumerate}
    \item $e(v)\times e(u)$\label{ccc1}
    \item $e\times e(u)$ with $e\in D_v$\label{ccc2}
    \item $e\times e(u)$ with $e\in T_v$\label{ccc3}

\end{enumerate}
For any $2$-cell belonging to (\ref{2}) the value of $\tilde{f}_2$ was not modified on the boundary of $e\times e(u)$ (see fact \ref{fact2}). For the $2$-cells belonging to (\ref{3}) the value of $\tilde{f}_2$ was modified on the boundary of $e\times e(u)$ but not on the cell $\beta$ (see fact 3). Finally for the $2$-cell $e(v)\times e(u)$ the value of $\tilde{f}_2$ was modified on the boundary of $e(v)\times e(u)$ and by fact \ref{fact3} it might be the case that it was modified on $\beta$. Hence $\bar{f}_2(v\times e(u))=\tilde{f}_2(v\times e(u))=f_1(v)+f_1(e(u))=f_1(v)+f_1(u)$ or $\bar{f}_2(v\times e(u))=f_1(v)+f_1(u)+1$. Let us now verify condition (\ref{betacondition2}). The $1$-cell $\beta$ is adjacent to exactly two $0$-cells, namely $v\times u$ and $v\times \tau(u)$. We have $\bar{f}_2(v\times u)=\tilde{f}_2(v\times u)=f_1(v)+f_1(u)$ and  $\bar{f}_2(v\times \tau(u))=\tilde{f}_2(v\times \tau(u))=f_1(v)+f_1(\tau(u))$. Now since $f_1(\tau(u))<f_1(u)$ condition (\ref{betacondition2}) is satisfied. For condition (\ref{betacondition1}) we have to examine $2$-cells from (\ref{ccc1}), (\ref{2}) and (\ref{3}) (listed above). In case when $\bar{f}_2(v\times e(u))=f_1(v)+f_1(u)$ it is easy to see that $\bar{f}_2(e\times e(u))>\bar{f}_2(v\times e(u))$ for $e\in D_v,T_v$ and $\bar{f}_2(e(v)\times e(u))>\bar{f}_2(v\times e(u))$. For $\bar{f}_2(v\times e(u))=f_1(v)+f_1(u)+1$ we still have $\bar{f}_2(e\times e(u))>\bar{f}_2(v\times e(u))$ for $e\in D_v,T_v$ and $\bar{f}_2(e(v)\times e(u))=\bar{f}_2(v\times e(u))$. Hence condition (\ref{betacondition1}) is satisfied in both cases.

\begin{Fact}\label{fact7}
For the $1$-cells $\beta=v\times e$, where $e\notin T$ and $e(v)\cap e=\emptyset$ conditions (\ref{betacondition1}, \ref{betacondition2}) are satisfied.
%In case when $v=1$ the cell $\beta$ is critical and when $v\neq 1$ the $\{\beta,e(v)\times e\}$ is the pair of noncritical cells.
\end{Fact}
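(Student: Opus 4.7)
The plan is to mirror the argument patterns of Facts \ref{fact4}--\ref{fact6}. First I would pin down the value $\bar{f}_2(\beta) = \bar{f}_2(v\times e)$ by listing the $2$-cells that contain $\beta$ as a face and checking whether any Step~1 modification could have touched $\beta$. Those $2$-cells all have the form $e_i\times e$ where $e_i$ is an edge of $\Gamma$ adjacent to $v$ and disjoint from $e$, and they split into three groups: (i) $e(v)\times e$, (ii) $e_i\times e$ with $e_i\in D_v$, and (iii) $e_i\times e$ with $e_i\in T_v$. In group (ii), Fact \ref{fact1} shows these are critical and untouched in Step~1; in groups (i) and (iii), Fact \ref{fact2} applies and, crucially, Step~1 only adjusts $2$-cells of the shape $e(u)\times e(v)$ with both edges in $T$, so $\beta$ receives no increment. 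Hence $\bar{f}_2(\beta)=\tilde{f}_2(\beta)=f_1(v)+f_1(e)$.

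For condition (\ref{betacondition2}), write $e=(j,k)$ with $j>k$. The two $0$-cells in $\partial\beta$ are $v\times j$ and $v\times k$, with $\bar{f}_2$-values $f_1(v)+f_1(j)$ and $f_1(v)+f_1(k)$. Since $e\notin T$, the one-particle rule (\ref{eq:f1-1}) gives $f_1(e)=\max(f_1(j),f_1(k))+2$, strictly greater than both. So no $0$-cell face of $\beta$ attains $\bar{f}_2(\beta)$, and (\ref{betacondition2}) holds trivially.

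For condition (\ref{betacondition1}), I compare $\bar{f}_2(\beta)$ with the values on the three groups of $2$-cells listed above. In group (i), $f_1(e(v))=f_1(v)$ by the construction of $f_1$, so $\bar{f}_2(e(v)\times e)=f_1(v)+f_1(e)=\bar{f}_2(\beta)$. In group (ii), $e_i\notin T$ forces $f_1(e_i)\geq f_1(v)+2$, giving a strictly larger value. In group (iii), $e_i\in T_v$ means (by the labeling convention for $T$) that $e_i=e(w)$ for some $w>v$ with $\tau(w)=v$, so $f_1(e_i)=f_1(w)>f_1(v)$ and again the value strictly exceeds $\bar{f}_2(\beta)$. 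Therefore exactly one $2$-cell containing $\beta$, namely $e(v)\times e$, achieves a value $\leq \bar{f}_2(\beta)$, so (\ref{betacondition1}) holds and $\{v\times e,\ e(v)\times e\}$ is the noncritical pair, consistent with Fact \ref{fact2}.

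The only delicate point I expect is group (iii): one must invoke the labeling procedure from Section \ref{sec:One-particle-graph}, which guarantees that every tree edge is realised as $e(w)$ for its larger-labeled endpoint, to conclude $f_1(e_i)>f_1(v)$ strictly. Once that is spelled out, the rest is routine bookkeeping.
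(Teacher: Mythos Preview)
Your proposal is correct and follows essentially the same approach as the paper's own proof: the same three-way split of $2$-cells containing $\beta$, the same verification that Step~1 leaves $\beta$ unmodified, and the same computations for conditions (\ref{betacondition1}) and (\ref{betacondition2}). Your treatment of group~(iii) is in fact more explicit than the paper's, which simply asserts $\bar f_2(e_i\times e)>\bar f_2(v\times e)$ for $e_i\in T_v$ without spelling out why $f_1(e_i)>f_1(v)$.
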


\noindent Proof. Let us first calculate $\bar{f}_2(\beta)$. To this end we have to check if $\beta$ was modified in step 1. Notice that every $2$-cell which has $\beta$ in its boundary is one of the following forms:
\begin{enumerate}
    \item $e(v)\times e$\label{cc1}
    \item $e_i\times e$ with $e_i\in D_v$\label{cc2}
    \item $e_i\times e$ with $e_i\in T_v$\label{cc3}

\end{enumerate}
For any $2$-cell belonging to (\ref{cc1}), (\ref{cc2}) and (\ref{cc3}) the value of $\tilde{f}_2$ was not modified on the boundary of an appropriate $2$-cell (see fact \ref{fact2} and \ref{fact3}). Hence $\bar{f}_2(v\times e)=\tilde{f}_2(v\times e)=f_1(v)+f_1(e)$. Let us now verify condition (\ref{betacondition2}). To this end assume that $e=(j,k)$ with $j>k$. The $1$-cell $\beta$ is adjacent to exactly two $0$-cells, namely $v\times j$ and $v\times k$. We have $\bar{f}_2(v\times j)=\tilde{f}_2(v\times j)=f_1(v)+f_1(j)$ and  $\bar{f}_2(v\times k)=\tilde{f}_2(v\times k)=f_1(v)+f_1(k)$. Now since $f_1(e)=\mathrm{max}(f_1(j),f_1(k))+2$ condition (\ref{betacondition2}) is satisfied. For condition (\ref{betacondition1}) we have to examine $2$-cells form (\ref{cc1}), (\ref{cc2}) and (\ref{cc3}) (listed above). It is easy to see that  $\bar{f}_2(e_i\times e)>\bar{f}_2(v\times e)$ for $e_i\in D_v,\,T_v$ and $\bar{f}_2(e(v)\times e)=\bar{f}_2(v\times e)$.

\begin{Fact}\label{fact8}
For the $0$-cell $\kappa=u\times v$ such that $e(v)\cap e(u)\neq\emptyset$ with the terminal vertex $\tau(v)$ of $e(v)$ equal to $u$, condition (\ref{betacond}) is satisfied.
%Moreover, if  $\kappa=1\times 2$ then $\kappa$ is the critical $0$-cell. Otherwise, $\{u\times v,v\times e(u)\}$ is the pair of noncritical cells.
\end{Fact}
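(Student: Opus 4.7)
The plan is to enumerate all $1$-cells containing the $0$-cell $\kappa=u\times v$ and, using the explicit form of the perfect Morse function $f_1$ on $\Gamma$ from~(\ref{eq:f1-1}), identify those whose $\bar{f}_2$-value does not exceed $\bar{f}_2(\kappa)$. Since $0$-cells are never altered in Step~1 of the fixing procedure, $\bar{f}_2(\kappa)=\tilde{f}_2(\kappa)=f_1(u)+f_1(v)$. The $1$-cells $\beta\supset\kappa$ split into two families: (A) those of the form $e\times v$ where $e$ is an edge of $\Gamma$ incident to $u$ and disjoint from $v$, and (B) those of the form $u\times e'$ where $e'$ is an edge incident to $v$ and disjoint from $u$. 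Note that $\beta_0:=e(u)\times v$ is a bona fide $1$-cell, since $e(u)=(u,\tau(u))$ with $\tau(u)<u<v$ guarantees $v\notin e(u)$.

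For family (A), the recipe~(\ref{eq:f1-1}) gives $f_1(e)=f_1(u)$ precisely when $e=e(u)$, whereas $f_1(e)>f_1(u)$ in every other case (either $e=e(w)\in T$ with $w>u$, so $f_1(e)=f_1(w)$, or $e\notin T$, where the extra $+2$ appears). Thus the only member of (A) with $\tilde{f}_2(e\times v)\le f_1(u)+f_1(v)$ is $\beta_0=e(u)\times v$. For family (B), the hypothesis $\tau(v)=u$ forces $e(v)=(v,u)$, so $e(v)$ contains $u$ and is excluded by the disjointness requirement; every other edge $e'$ incident to $v$ satisfies $f_1(e')>f_1(v)$ by the same dichotomy, so (B) contributes no candidates.

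It remains to check that $\bar{f}_2(\beta_0)$ is not raised above $\tilde{f}_2(\beta_0)$ by Step~1. The only $1$-cells that Step~1 increments are, via Fact~\ref{fact3}, those of the form (vertex)$\times$(tree edge) paired with a $2$-cell $e(a)\times e(b)$ satisfying $e(a)\cap e(b)=\emptyset$. For $\beta_0=v\times e(u)$ the only such $2$-cell that could incur a modification on this particular $1$-cell is $e(v)\times e(u)$; but the hypothesis gives $u\in e(v)\cap e(u)$, so this $2$-cell is absent from $\mathcal{D}^2(\Gamma)$. Hence $\bar{f}_2(\beta_0)=\tilde{f}_2(\beta_0)=f_1(u)+f_1(v)=\bar{f}_2(\kappa)$, while for every other $\beta\supset\kappa$ we have $\bar{f}_2(\beta)\ge\tilde{f}_2(\beta)>\bar{f}_2(\kappa)$ since Step~1 can only raise values. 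Exactly one $1$-cell therefore achieves $\bar{f}_2(\beta)\le\bar{f}_2(\kappa)$, verifying~(\ref{betacond}); moreover $\kappa$ is noncritically matched with $\beta_0=e(u)\times v$, matching the intuition that the particle at $u$ slides toward the root along $e(u)$. The only mildly delicate point in the argument is the disjointness bookkeeping that rules out the 2-cell $e(v)\times e(u)$, which is precisely where the hypothesis $\tau(v)=u$ is used.
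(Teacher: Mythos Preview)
Your proof is correct and follows essentially the same approach as the paper's: identify $\beta_0=v\times e(u)$ as the unique $1$-cell adjacent to $\kappa$ with $\bar f_2$-value not exceeding $\bar f_2(\kappa)$, and observe that $\kappa$ and $\beta_0$ form a noncritical pair. The paper's argument is terser and leans on a figure, whereas you have made explicit the case split into the two families of adjacent $1$-cells, the reason $e(v)$ is excluded from family (B), and the verification that Step~1 does not alter $\beta_0$ (because the would-be $2$-cell $e(v)\times e(u)$ is disqualified by $u\in e(v)\cap e(u)$); these details are implicit in the paper but nowhere spelled out, so your version is a genuine expansion rather than a different method.
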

Proof.
The situation when $e(v)\cap e(u)\neq\emptyset$ and terminal vertex
$\tau(v)$ of $e(v)$ is equal to $u$ is presented in the figure
\ref{figure13}. For the $0$-cell $v\times u$ we have $\bar{f}_2=\tilde{f}_{2}(v\times u)=f_{1}(v)+f_{1}(u)$.
Notice that there is exactly one edge $v\times e(u)$ for which $\bar{f}_{2}\left(v\times e(u))\right)=\bar{f}_{2}(v\times u)$.
The function $\bar{f}_{2}$ on the other edges adjacent to $v\times u$
have a value greater than $\bar{f}_{2}(v\times u)$ and hence $v\times u$
and $v\times e(u)$ constitute a pair of noncritical cells.
\begin{figure}[H]
~~~~~~~~~~~~~~~~~~~~~~~~~~~~~~\includegraphics[scale=0.6]{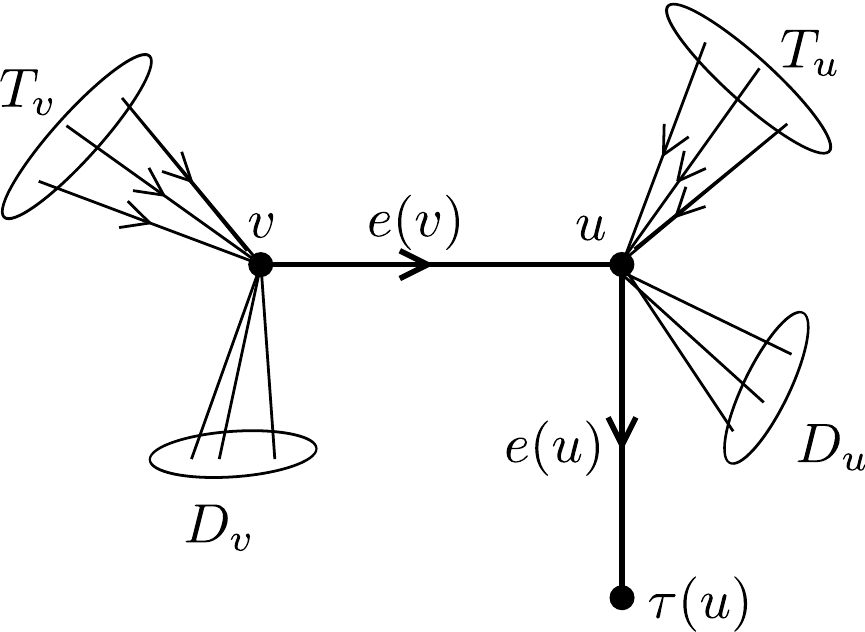}~~~~~~~

\caption{$e(v)\cap e(u)\neq\emptyset$ and $\tau(v)=u$}\label{figure13}

\end{figure}

\begin{Fact}\label{fact9}
For the $0$-cell $\kappa=u\times v$ such that $e(v)\cap e(u)\neq\emptyset$ with the terminal vertex $\tau(u)$ of $e(u)$ equal to the terminal vertex $\tau(v)$ of $e(v)$ condition (\ref{betacond}) is not satisfied. There are exactly two $1$-cells $\beta_1,\beta_2\supset\kappa$ such that $\bar{f}_2(\beta_1)=\bar{f}_2(\kappa)=\bar{f}_2(\beta_2)$. They are of the form $\beta_1=u\times e(v)$ and $\beta_2=v\times e(u)$. The function $\bar{f}_2$ can be fixed in two ways. We put $f_2(\beta_1):=\bar{f}_2(\beta_1)+1$ or $f_2(\beta_2):=\bar{f}_2(\beta_2)+1$. \end{Fact}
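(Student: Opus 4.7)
The plan is to enumerate the $1$-cells incident to $\kappa = u \times v$, pick out exactly those on which $\bar f_2$ matches $\bar f_2(\kappa)$, and then verify that incrementing either one of them by $1$ produces a function that is still locally Morse at every cell processed in Steps 1 and 2. First I would observe that no modification from Step 1 affects $0$-cells, so $\bar f_2(\kappa) = \tilde f_2(\kappa) = f_1(u) + f_1(v)$. The $1$-cells containing $\kappa$ split into two families: cells $u \times e$ with $e$ an edge of $\Gamma$ incident to $v$, and cells $v \times e$ with $e$ incident to $u$. Using the definition (\ref{1-particle-perfect}) of $f_1$ one checks immediately that the unique edge $e$ incident to $v$ with $f_1(e) = f_1(v)$ is $e(v)$; for any other edge incident to $v$ (whether in $T$ or in $\Gamma\setminus T$) one has $f_1(e) \geq f_1(v)+2$, so that $\bar f_2(u\times e) > \bar f_2(\kappa)$. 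The symmetric statement holds for edges at $u$. This reduces the analysis to the two candidates $\beta_1 = u \times e(v)$ and $\beta_2 = v \times e(u)$.

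Next I would verify that neither $\beta_1$ nor $\beta_2$ had its value changed during Step 1. The only Step 1 modifications to $1$-cells of these shapes come from Fact~\ref{fact3}, applied to $2$-cells of the form $e(u) \times e(v)$. Under the present hypothesis, $\tau(u) = \tau(v) = w$ is a common vertex of $e(u)$ and $e(v)$, so $e(u) \cap e(v) \ni w \neq \emptyset$; consequently the product $e(u) \times e(v)$ lies in $\tilde\Delta$ and is not a $2$-cell of $\mathcal{D}^2(\Gamma)$. Hence Fact~\ref{fact3} is vacuous here and $\bar f_2(\beta_i) = \tilde f_2(\beta_i) = f_1(u) + f_1(v)$ for $i = 1,2$. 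This shows that exactly two $1$-cells of the boundary star of $\kappa$ meet the $\bar f_2$-value at $\kappa$, so (\ref{betacond}) fails with multiplicity two and a fix is required. Setting $f_2(\beta_1) := \bar f_2(\beta_1) + 1$ (or symmetrically for $\beta_2$) reduces the count to one and produces the desired noncritical pair $\{\kappa,\beta_2\}$ (resp. $\{\kappa,\beta_1\}$), leaving the other edge as a critical $1$-cell.

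Finally I would verify that the $+1$ bump does not corrupt any inequality previously established at a higher-dimensional neighbour of $\beta_1$ or at the other endpoint of $\beta_1$. The condition at the other $0$-cell $u \times w \subset \beta_1$ only becomes easier to satisfy, since incrementing $\bar f_2(\beta_1)$ can only decrease the count in (\ref{betacond}) at $u\times w$. For the $2$-cells $\alpha = e \times e(v)$ containing $\beta_1$, where $e$ is any edge incident to $u$ disjoint from $e(v)$, one computes $\bar f_2(\alpha) - \bar f_2(\beta_1)$ and checks it remains at least $2$ after the bump: if $e \notin T$ then $f_1(e) \geq f_1(u)+2$ so the gap is $\geq 2$; if $e \in T\setminus\{e(u)\}$ then $e = e(w')$ for some $w'$ with $f_1(w') \geq f_1(u) + 2$, giving the same bound (with an extra $+1$ if Fact~\ref{fact3} applied to $\alpha$); the case $e = e(u)$ is excluded because $e(u) \cap e(v) \ni w$. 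An identical argument handles the $2$-cells containing $\beta_2$. The main obstacle is this last verification: one must track all of the Step~1 bookkeeping case-by-case to rule out the possibility that the bump at $\beta_i$ creates a new failure of Definition~\ref{Morse-fuction} at a $2$-cell already processed, but the parity of $f_1$ values on vertices keeps all the relevant gaps at least $2$, which is enough.
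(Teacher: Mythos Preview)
Your proof is correct and follows essentially the same approach as the paper's own proof: compute $\bar f_2(\kappa)$, identify the two offending $1$-cells $u\times e(v)$ and $v\times e(u)$, bump one of them, and check that the bump does not disturb any previously established inequality. The paper's version is terser (relying on a figure and ``it is easy to see''), while you make the key points explicit --- in particular the observation that $e(u)\times e(v)$ is absent from $\mathcal{D}^2(\Gamma)$ because $e(u)\cap e(v)\ni\tau(u)=\tau(v)$, so Fact~\ref{fact3} never touched $\beta_1$ or $\beta_2$, and the parity argument showing the gap to every neighbouring $2$-cell is at least $2$.
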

%In the first case $\{u\times v,v\times e(u)\}$ is the pair of noncritical cells and $1$-cell $u\times e(v)$ is critical. In the second case $\{u\times v,u\times e(v)\}$ is the %pair of noncritical cells and $1$-cell $v\times e(u)$ is critical.
\noindent Proof.
The situation when $e(v)\cap e(u)\neq\emptyset$ and terminal vertex
$\tau(u)$ of $e(u)$ is equal to terminal vertex $\tau(v)$ of $e(v)$
is presented in the figure \ref{figure14}(a),(b). For the $0$-cell $v\times u$ we
have $\bar{f}_2(v\times u)=f_{1}(v)+f_{1}(u)$. There are two
edges $v\times e(u)$ and $u\times e(v)$ such that $\bar{f}_{2}(v\times e(u))=\bar{f}_{2}(v\times u)=\bar{f}_{2}(u\times e(v))$.
It is easy to see that the value of $\bar{f}_{2}$ on the other edges adjacent to $v\times u$
is greater than $\bar{f}_{2}(v\times u)$. So the function $\bar{f}_{2}$
does not satisfy condition (\ref{betacond}) and there are two possibilities \ref{figure14}(c),(d) to fix this problem. Either we put
$\bar{f}_{2}(v\times e(u))=\bar{f}_{2}(v\times u)+1$ or
$\bar{f}_{2}(u\times e(v))=\bar{f}_{2}(v\times u)+1$. They
both yield that the vertex $v\times u$ is non-critical. Notice finally that by the definitions of $f_1$ and $\tilde{f}_2$, increasing the value of $\bar{f}_2(\beta_i)$ by one does not influence $2$-cells containing $\beta_i$ in their boundary.
% Notice finally
%that the value of $\tilde{f}_{2}$ on any two cell adjacent to either
%edge $v\times(u,\tau(u))$ or $u\times(v,\tau(v))$ is greater than
%$\tilde{f}_{2}(v\times u)+1$. Notice also that this change of $\tilde{f}_{2}$ does not affect criticality
%or non criticality of cells discussed in the previous facts.
\begin{figure}[h]
~~~~\includegraphics[scale=0.6]{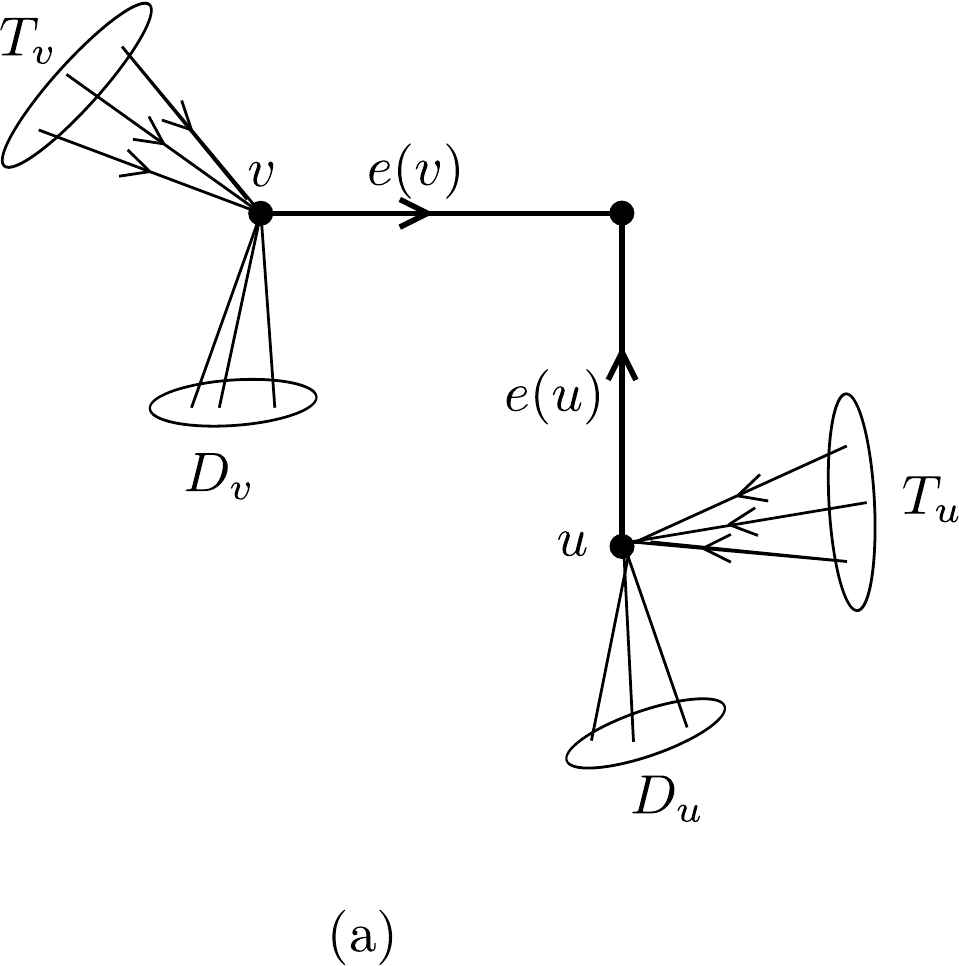}~~~~\includegraphics[scale=0.5]{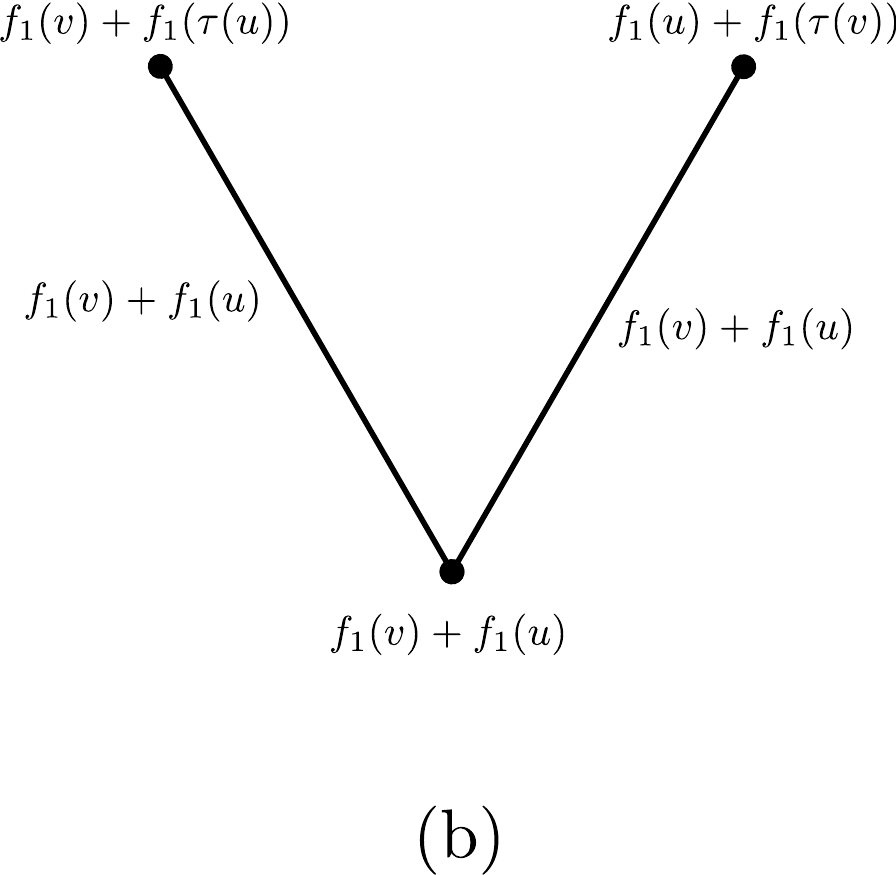}

\bigskip{}

~~~~\includegraphics[scale=0.5]{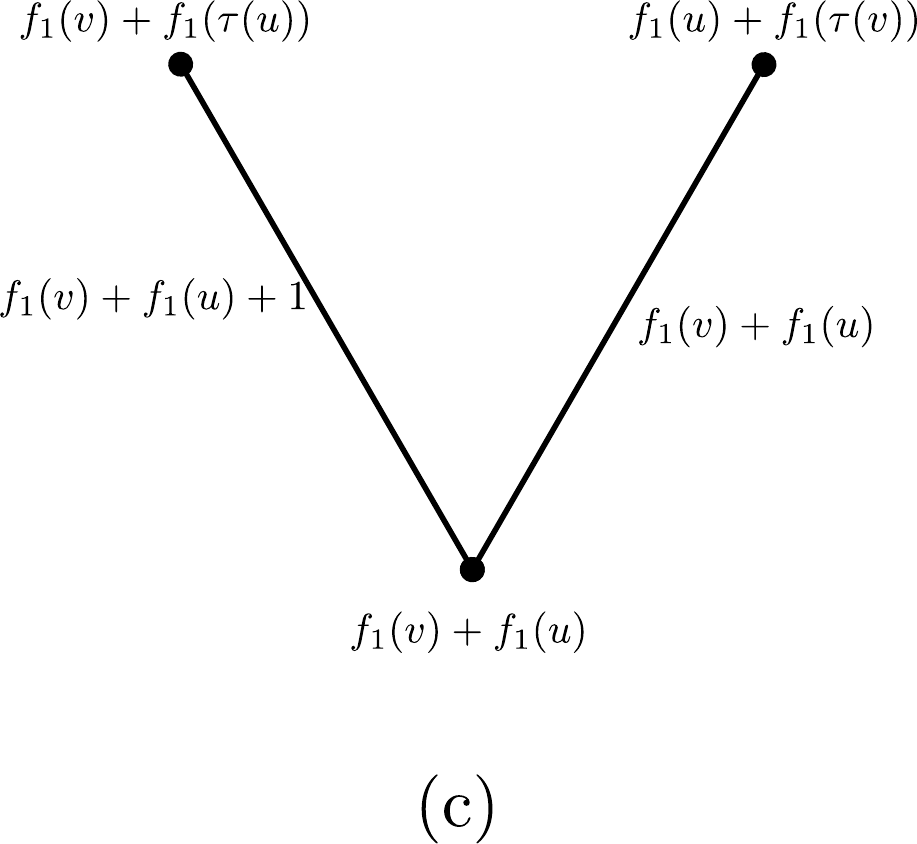}~~~~~\includegraphics[scale=0.5]{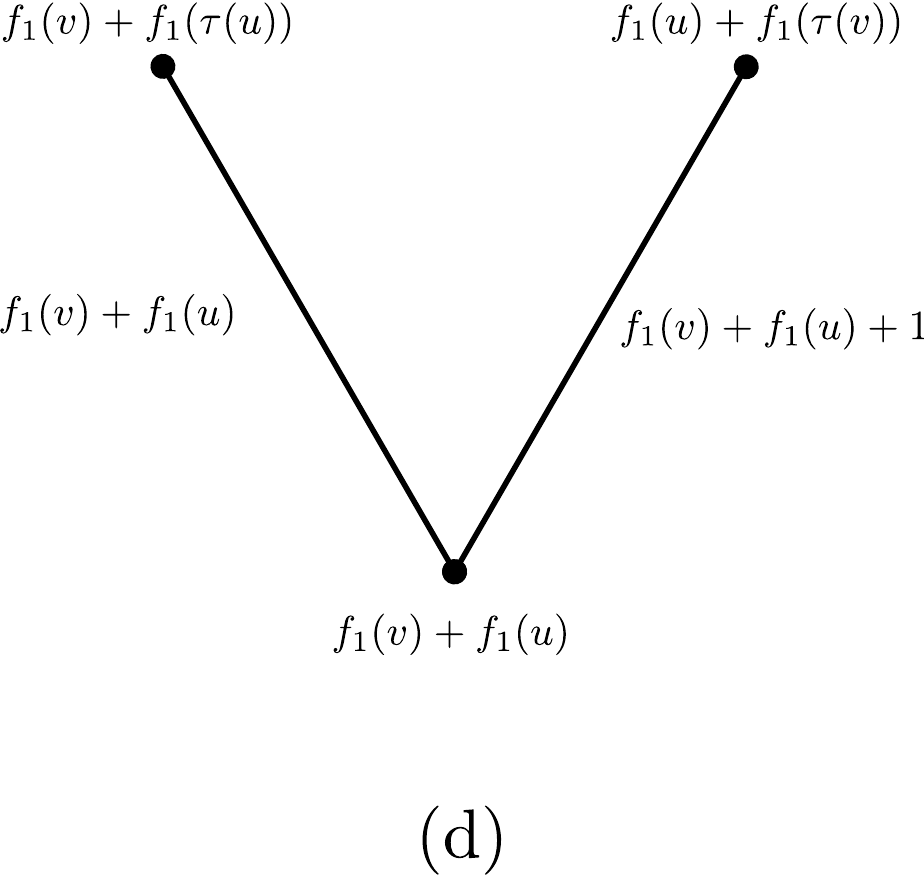}

\caption{(a) Two edges of $T$ with $e(v)\cap e(u)\neq\emptyset$, (b) The problem of $1$-cells $v\times (u,\tau(u))$ and $u\times (v,\tau(v))$ (c),(d) The two possible fixings of $\bar{f}_{2}$}\label{figure14}
\end{figure}

\begin{Fact}\label{fact10}
For the $0$-cell $\kappa=u\times v$ such that $e(v)\cap e(u)=\emptyset$ condition  (\ref{betacond}) is satisfied.
\end{Fact}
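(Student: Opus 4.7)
The plan is to enumerate every $1$-cell $\beta$ containing $\kappa = u \times v$, compute $\bar{f}_2(\beta)$, and compare with $\bar{f}_2(\kappa) = \tilde{f}_2(\kappa) = f_1(u) + f_1(v)$ (no $0$-cell values are altered in Step~1). Since $\kappa$ is in class (c), we have $u, v \neq 1$, both tree edges $e(u), e(v)$ exist, and the disjointness hypothesis $e(u) \cap e(v) = \emptyset$ guarantees that the two $1$-cells $u \times e(v)$ and $v \times e(u)$ are legitimate cells of $\mathcal{D}^2(\Gamma)$ (the two particles do not collide).

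First I would discard all $1$-cells $\beta \supset \kappa$ other than these two distinguished candidates. Any other such $\beta$ has the form $v \times e$ with $e$ incident to $u$ and $e \neq e(u)$ (so $e \in T_u \cup D_u$), or symmetrically $u \times e'$ with $e' \in T_v \cup D_v$. Facts~\ref{fact5} and~\ref{fact7}, applied with the roles of the two particles interchanged, already establish $\bar{f}_2(v \times e), \bar{f}_2(u \times e') > \bar{f}_2(\kappa)$ in each of these cases, because the defining formulas (\ref{1-particle-perfect})--(\ref{eq:f1-1}) force $f_1(e) > f_1(u)$ and $f_1(e') > f_1(v)$ and no Step~1 correction brings these values back down. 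Thus only $v \times e(u)$ and $u \times e(v)$ survive as candidates for violating (\ref{betacond}); fact~\ref{fact6} tells us that $\bar{f}_2$ on each of them is either $f_1(u)+f_1(v)$ or $f_1(u)+f_1(v)+1$, depending on whether Step~1 incremented it.

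The decisive step is to invoke fact~\ref{fact3} on the $2$-cell $\alpha := e(u)\times e(v)$, which exists precisely because both edges lie in $T$ and are disjoint. Fact~\ref{fact3} mandates that Step~1 increments $\tilde{f}_2(\alpha)$ together with exactly one of $\tilde{f}_2(u\times e(v))$, $\tilde{f}_2(v\times e(u))$. I would then verify that no other Step~1 modification can touch either of the two distinguished $1$-cells: the only $2$-cell of the form $e(x)\times e(v)$ having $u\times e(v)$ in its boundary is the one with $x=u$ (if instead $u=\tau(x)$ for some other vertex $x$, the modification prescribed by fact~\ref{fact3} targets $x\times e(v)$, not $u\times e(v)$), and symmetrically for $v\times e(u)$. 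Hence after Step~1 exactly one of the two distinguished cells retains $\bar{f}_2 = f_1(u)+f_1(v) = \bar{f}_2(\kappa)$ while the other satisfies $\bar{f}_2 = f_1(u)+f_1(v)+1 > \bar{f}_2(\kappa)$. Condition (\ref{betacond}) therefore holds with exactly one equality, so $\kappa$ is noncritical and pairs with whichever of $u\times e(v)$, $v\times e(u)$ was not incremented.

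The main obstacle, such as it is, lies in the bookkeeping of the previous paragraph: one has to recognize that the single Step~1 choice attached to the $2$-cell $e(u)\times e(v)$ is the unique mechanism affecting these two $1$-cells, so that precisely one of the two candidates survives as the noncritical partner of $\kappa$. Once this tracing of modifications across cells of different dimension is in place, the conclusion is an immediate comparison of values.
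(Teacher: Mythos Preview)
Your proof is correct and follows essentially the same idea as the paper's own proof, which is the single sentence ``This is a direct consequence of the modification made for the $2$-cell $\alpha=e(v)\times e(u)$ in step~1.'' You have simply made explicit what the paper leaves to the reader: that all $1$-cells at $\kappa$ other than $u\times e(v)$ and $v\times e(u)$ have strictly larger $\bar f_2$-value, and that the Step~1 fix attached to the $2$-cell $e(u)\times e(v)$ increments exactly one of the two remaining candidates, leaving precisely one with $\bar f_2 = \bar f_2(\kappa)$. One minor remark: your citation of Facts~\ref{fact5} and~\ref{fact7} covers only the deleted-edge case $e\in D_u$; for tree edges $e\in T_u$ the relevant analogues are Facts~\ref{fact4} and~\ref{fact6}, though the inequality $f_1(e)>f_1(u)$ you actually use is correct in either case.
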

\noindent Proof. This is a direct consequence of the modification made for the $2$-cell $\alpha=e(v)\times e(u)$ in step 1. Moreover, $\kappa$ is noncritical.

\begin{Fact}\label{fact11}
For the $0$-cell $\kappa=1\times u$ condition  (\ref{betacond}) is satisfied.
\end{Fact}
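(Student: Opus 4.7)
The plan is to verify that for $\kappa = 1 \times u$, at most one $1$-cell $\beta \supset \kappa$ satisfies $\bar{f}_2(\beta) \leq \bar{f}_2(\kappa)$, and to identify precisely which cases give zero such cells versus exactly one. The key structural observation is that vertex $1$ is the root of the spanning tree $T$ (chosen to be valency $1$ in $T$), so $e(1)$ is undefined; hence none of the modifications described in Theorem \ref{theorem1} (which touch only $1$-cells of the form $u \times e(v)$ or $e(u) \times v$ attached to either a $2$-cell $e(u)\times e(v)$ or a $0$-cell $u \times v$ with $\tau(e(u))=\tau(e(v))$) can apply to any $1$-cell of the form $1 \times e'$. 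Therefore $\bar{f}_2(1\times u) = \tilde{f}_2(1\times u) = f_1(u)$ and similarly $\bar{f}_2(1 \times e') = \tilde{f}_2(1\times e') = f_1(e')$ for every edge $e'$ adjacent to $u$ with $1\notin e'$.

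First, I would enumerate the $1$-cells containing $\kappa$ and split them into two families: those of the form $e \times u$ where $e$ is an edge of $\Gamma$ adjacent to $1$ with $u \notin e$, and those of the form $1 \times e'$ where $e'$ is an edge adjacent to $u$ with $1 \notin e'$. For the first family, every edge $e$ adjacent to vertex $1$ satisfies $f_1(e) \geq f_1(2) = 2 > 0 = f_1(1)$, so $\bar{f}_2(e \times u) \geq f_1(u) + 2 > \bar{f}_2(\kappa)$, even after any possible $+1$ modification. So no edge of the first family contributes.

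For the second family, write $e' = (u,k)$ with $k \neq 1$. If $e' \notin T$ then $f_1(e') \geq \max(f_1(u), f_1(k)) + 2$, hence $\bar{f}_2(1 \times e') > f_1(u) = \bar{f}_2(\kappa)$. If $e' \in T$ but $e' \neq e(u)$, then $e' = e(k)$ for some $k > u$, so $f_1(e') = f_1(k) > f_1(u)$. The only remaining candidate is $e' = e(u)$, which is a valid $1$-cell $1 \times e(u)$ precisely when $e(u)$ does not contain vertex $1$, i.e.\ when $u > 2$; in that case $\bar{f}_2(1 \times e(u)) = f_1(u) = \bar{f}_2(\kappa)$, giving exactly one $1$-cell achieving equality. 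When $u = 2$, $e(u) = (1,2)$ contains vertex $1$, so no $1$-cell $1 \times e(u)$ exists, and the count is zero, recovering the fact that $1 \times 2$ is the unique critical $0$-cell (consistent with Theorem \ref{theorem2}).

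The only point that requires genuine attention—and is likely the main obstacle—is ruling out that $\bar{f}_2(1 \times e(u))$ was shifted upward by $+1$ via either modification in Theorem \ref{theorem1}. This will be handled by noting that both modification rules trigger only on $1$-cells of the form $w \times e(w')$ with $e(w)$ well-defined (for rule~1, the $2$-cell $e(w)\times e(w')$ must exist; for rule~2, $\tau(e(w))=\tau(e(w'))$ must be checked, which again requires $e(w)$ to exist). Since $e(1)$ is undefined, the $1$-cell $1 \times e(u)$ is never incremented, and the equality $\bar{f}_2(1 \times e(u)) = \bar{f}_2(\kappa)$ survives, proving $(\ref{betacond})$ with the count $\leq 1$ in both the $u = 2$ and $u > 2$ cases.
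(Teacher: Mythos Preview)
Your proof is correct and follows essentially the same approach as the paper's own argument: both identify $1\times e(u)$ as the unique $1$-cell with $\bar f_2$-value equal to $\bar f_2(\kappa)$, verify that all other adjacent $1$-cells have strictly larger value, and distinguish the cases $u>2$ (noncritical pair) and $u=2$ (critical $0$-cell). Your version is considerably more detailed than the paper's terse proof---in particular, your explicit justification that neither modification rule in Theorem~\ref{theorem1} can touch a cell of the form $1\times e'$ (because $e(1)$ is undefined) fills in a step the paper leaves implicit.
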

\noindent Proof.  For the $0$-cell $1\times u$ we have $\bar{f}_2=\tilde{f}_{2}(v\times u)=f_{1}(u)$.
Notice that there is exactly one edge $1\times e(u)$ for which $\bar{f}_{2}\left(1\times e(u)\right)=\bar{f}_{2}(1\times u)$.
The function $\bar{f}_{2}$ on the other edges adjacent to $1\times u$
have a value greater than $\bar{f}_{2}(1\times u)$. Hence if $u\neq 2$ the $0$-cell  $1\times u$
and the $1$-cell $1\times e(u)$ constitute a pair of noncritical cells. Otherwise $\kappa$ is a critical $0$-cell.

\chapter{Summary and outlook}

In this thesis we developed a new set
of ideas and methods which gave a full characterization of all possible
abelian quantum statistics on graphs. Our approach enabled
identification of the key topological determinants of the quantum
statistics: 
\begin{enumerate}
\item  the connectivity of a graph,
\item the first homology
group $H_{1}(C_{n}(\Gamma))=\mathbb{Z}^{\beta_{1}}\oplus A$, where
$\beta_{1}$ is the number of independent cycles in $\Gamma$ and
$A$ determines quantum statistics 
\item for $1$-connected graphs number
of anyon phases depends on the number of particles,
\item for $2$-connected
graphs quantum statistics stabilizes with respect to the number of
particles $H_{1}(C_{n}(\Gamma))=H_{1}(C_{2}(\Gamma))$,
\item  for $3$-connected
non-planar graphs $A=\mathbb{Z}_{2}$, i.e. the usual bosonic/fermionic
statistics is the only possibility whereas planar $3$-connected graphs
support one anyon phase, $A=\mathbb{Z}$. Thus, from the quantum statistics
perspective, one can say that $3$-connected graphs mimic $\mathbb{R}^{2}$
when they are planar and $\mathbb{R}^{3}$ when not.
\end{enumerate}

It seems that the following problems can be approached using the methods developed in this thesis.

\noindent\textbf{ Problem 1}. It was noticed by V. I. Arnold in the
late 1960's \cite{Arnold1, Arnold2}, and then generalized to some classes of manifolds, that
the cohomology groups of the $C_{n}(\mathbb{R}^{2})$ possess three
basic properties:
\begin{enumerate}
\item finiteness: $H^{i}(C_{n}(\mathbb{R}^{2}))$
are finite except $H^{0}(C_{n}(\mathbb{R}^{2}))=\mathbb{Z}$, $H^{1}(C_{n}(\mathbb{R}^{2}))=\mathbb{Z}$
for $n\geq2$; also $H^{i}(C_{n}(\mathbb{R}^{2}))=0$ for $i\geq n$,
\item recurrence: $H^{i}(C_{2n+1}(\mathbb{R}^{2}))=H^{i}(C_{2n}(\mathbb{R}^{2}))$,
\item stabilization: $H^{i}(C_{n}(\mathbb{R}^{2}))=H^{i}(C_{2i-2}(\mathbb{R}^{2}))$
for $n\geq2i-2$. 
\end{enumerate}
These raises the following questions in graph's context: 
\begin{itemize}
\item what is
the minimal connectivity of $\Gamma$ that gives stabilization of
$H_{i}(C_{n}(\mathbb{R}^{2}))$ for planar and non-planar graphs?,
\item  Is 'quantum statistics' components of $H_{i}(C_{n}(\Gamma))$
given by the torsion part of $H_{i}(C_{n}(\Gamma))$, for $i>1$,
\item what is the minimal connectivity of $\Gamma$ for which 'quantum
statistics' components of $H_{i}(C_{n}(\Gamma))$ up to the given $i$
are the same as for $\mathbb{R}^{2}$ and $\mathbb{R}^{3}$, i.e.
when planar graphs mimic $\mathbb{R}^{2}$ up to $H_{i}(C_{n}(\mathbb{R}^{2})$
and non-planar graphs mimic $\mathbb{R}^{3}$ up $H_{i}(C_{n}(\mathbb{R}^{3})$
for given $i$.

\end{itemize}

\noindent\textbf{ Problem 2.} The aim is to lay the foundations for
the understanding of the influence of complex topology, which gives
rise to generalized anyon statistics, on many-particle transport properties
of complex networks. The principal attraction of quantum graphs is
that they provide mathematically tractable models of complex physical
systems. The fact that anyon statistics is present for many-particle
graph configuration spaces gives at least \textit{a priori} various
possible applications of this model. Using graph models one can investigate
topological signatures and effects of quantum statistics in many-particle
generalizations of a single-particle transport on networks. This should
provide models and variants of the quantum Hall effect extending to
many-particle quantum systems the transport theory for networks developed
by Avron (see for example  \cite{Avron1, Avron2, Avron3}).

\noindent\textbf{ Problem 3.} The importance of topology and geometry in quantum information theory is present on both foundational and application levels. Of course the Holy Grail in this area of research is still the construction of a quantum computer. One of the difficulties in building a many-qubit quantum computer is quantum decoherence. Physical systems typically remain in a coherent superposition of states for a very short time because generic interactions with the environment will decohere them, destroying the information encoded in quantum states. Recently, a new approach based on topology has been proposed to overcome some of the difficulties of this kind \cite{K103}. In simple words the idea is motivated by the fact that topological invariants are very robust. So if information is encoded in topology it is hard to destroy it as it is immune to a large class of perturbations. More precisely, topological quantum computing is based on the concept of anyons, and in particular, non-abelian anyons \cite{K103}. One of the most profound examples of these ideas is the celebrated Kitaev toric code, which is a realization of topological quantum error correcting code on a two-dimensional spin lattice \cite{K297}. The excitations for this model were proved to be of anyon type \cite{K297}. It is therefore natural to expect that the anyon statistics which are present on graph configuration spaces might be related to these ideas. One of the explicit tasks would be to construct a spin graph model for which excitations behave exactly like anyons corresponding to many-particle graph configuration spaces. It is also believed that the fractional Quantum Hall States are promising candidates for physical realization of topological computing \cite{N08}. So the study of transport properties described in the previous paragraphs is inevitably related to these concepts

\noindent\textbf{ Problem 4.} The entanglement of integer and fractional Quantum Hall States has recently been studied by several authors (see for example \cite{RS09,LBSH10}). An interesting problem would be to calculate the entanglement of eigenstates of many-particle graph configuration spaces with the topological gauge potential supporting anyon quantum statistics. When the topological gauge potential vanishes the Hamiltonian of the system is a non-interacting fermionic Hamiltonian. When this Hamiltonian has a non-degenerate spectrum, its eigenstates are given in terms of Slater determinants. Otherwise the topological gauge potential introduces an interaction to the system, so that eigenstates of Hamiltonian might be entangled. It seems interesting to understand how the degree of entanglement for these states is related to the topological invariants of the one-particle graph, e.g. its connectivity and planarity.

%\bibliography{filenameofyourbibfile}{}

\bibliographystyle{plain}

%\appendix
%\chapter{Vitae}
\end{document}